\documentclass{article}

\usepackage[english]{babel}

\usepackage[letterpaper,top=2cm,bottom=2cm,left=3cm,right=3cm,marginparwidth=1.75cm]{geometry}
\usepackage{subcaption}
\usepackage{amsmath}
\usepackage{graphicx}

\usepackage{amsthm,amsmath,amssymb,amsfonts,amssymb,mathtools}
\usepackage{color}
\usepackage{xcolor}
\usepackage{empheq}
\usepackage{dsfont}
\usepackage{mathrsfs}
\usepackage{graphicx}
\usepackage{enumitem}
\usepackage[T1]{fontenc}
\usepackage{microtype}
\usepackage{xspace}  
\usepackage[most]{tcolorbox}
\usepackage{bm}
\usepackage{tikz}
\usetikzlibrary{arrows.meta,calc,shapes.geometric}

\usepackage{ragged2e}

\usepackage{soul}

\usepackage{tikz}
\usetikzlibrary{patterns}
\usetikzlibrary{arrows,shapes,automata,backgrounds,petri,positioning}
\usetikzlibrary{shadows}
\usetikzlibrary{calc}
\usetikzlibrary{spy}
\usepackage{pgf,pgfplots}
\usetikzlibrary{angles, quotes}
\usepackage{pgfmath,pgffor}

\usepackage{framed}
\usepackage{multirow}

\usepackage{algorithm}

\usepackage[noend]{algpseudocode}

\algnewcommand{\LineComment}[1]{\Statex $\triangleright$ #1}

\usepackage[colorlinks,citecolor=blue,linkcolor=blue,bookmarks=true]{hyperref}

\usepackage[nameinlink]{cleveref}
\crefname{ineq}{Inequality}{Inequality}
\creflabelformat{ineq}{#2{\upshape(#1)}#3}
\crefname{sub}{Subsection}{Subsection}
\creflabelformat{Subsection}{#2{\upshape(#1)}#3}
\crefname{sdp}{SDP}{SDP}
\creflabelformat{sdp}{#2{\upshape(#1)}#3}
\crefname{lp}{LP}{LP}
\creflabelformat{lp}{#2{\upshape(#1)}#3}

\def\colorful{0}

\newcommand\blfootnotea[1]{\begingroup
  \renewcommand\thefootnote{}\footnote{#1}\endgroup
}

\ifnum\colorful=1
\newcommand{\new}[1]{{\color{red} #1}}

\else
\newcommand{\new}[1]{{#1}}

\fi

\usepackage{thm-restate}

\newtheorem{theorem}{Theorem}[section]
\newtheorem{assumption}{Assumption}

\newtheorem{lemma}[theorem]{Lemma}
\newtheorem{informal theorem}[theorem]{Theorem (informal statement)}

\newtheorem{proposition}[theorem]{Proposition}

\newtheorem{claim}[theorem]{Claim}
\newtheorem{fact}[theorem]{Fact}

\newtheorem{remark}[theorem]{Remark}

\newtheorem{definition}[theorem]{Definition}
\newcommand{\eqdef}{\coloneqq}
\crefname{question}{question}{questions}

\newcommand{\Ey}{\E_{(\x,y)\sim \D}}

\newcommand{\CS}{Cauchy--Schwarz\xspace}

\newcommand{\spanned}{\mathrm{Span}}

\newcommand\norm[1]{\left\| #1 \right\|}

\renewcommand\vec[1]{\mathbf{#1}}
\DeclareMathOperator*{\pr}{\mathbf{Pr}}
\DeclareMathOperator*{\E}{\mathbf{E}}
\newcommand{\proj}{\mathrm{proj}}

\newcommand{\normal}{\mathcal{N}}

\newcommand{\spd}{{\mathbb{S}^{d-1}}}
\newcommand{\ot}{\otimes}

\DeclareMathOperator*{\argmin}{argmin}

\newcommand{\bx}{\mathbf{x}}
\newcommand{\by}{\mathbf{y}}
\newcommand{\bv}{\mathbf{v}}
\newcommand{\bu}{\mathbf{u}}
\newcommand{\bz}{\mathbf{z}}
\newcommand{\bs}{{\mathbf{s}}}
\newcommand{\bw}{\mathbf{w}}
\newcommand{\br}{\mathbf{r}}
\newcommand{\bp}{\mathbf{p}}

\newcommand{\e}{\mathbf{e}}
\newcommand{\Sp}{\mathbb{S}}
\newcommand{\bh}{\mathbf{h}}
\newcommand{\bt}{\mathbf{t}}

\newcommand{\x}{\mathbf{x}}
\newcommand{\w}{\mathbf{w}}
\newcommand{\g}{\mathbf{g}}

\newcommand{\evt}{\mathcal{E}}
 
\newcommand{\he}{\mathbf{He}}
\newcommand{\hep}{\mathrm{he}}

\newcommand{\mM}{\mathbf{M}}
\newcommand{\mP}{\mathbf{P}}
\newcommand{\mQ}{\mathbf{Q}}
\newcommand{\mA}{\mathbf{A}}
\newcommand{\mB}{\mathbf{B}}

\newcommand{\mH}{\mathbf{H}}

\newcommand{\mI}{\mathbf{I}}
\newcommand{\mR}{\mathbf{R}}
\newcommand{\mF}{\mathbf{F}}
\newcommand{\mW}{\mathbf{W}}
\newcommand{\mT}{\mathbf{T}}
\newcommand{\mS}{\mathbf{S}}
\newcommand{\mU}{\mathbf{U}}
\newcommand{\mV}{\mathbf{V}}

\newcommand{\Real}{\mathrm{Re}}

\newcommand{\R}{\mathbb{R}}

\newcommand{\Z}{\mathbb{Z}}

\newcommand{\eps}{\epsilon}

\newcommand{\poly}{\mathrm{poly}}

\newcommand{\calN}{{\cal N}}

\newcommand{\la}{\langle}
\newcommand{\ra}{\rangle}

\newcommand{\D}{\mathcal{D}}

\newcommand{\Ind}{\mathds{1}}
\newcommand{\1}{\Ind}

\newcommand{\littlesum}{\mathop{\textstyle \sum}}

\newcommand{\wt}{\widetilde}
\newcommand{\wh}{\widehat}
\newcommand{\bq}{\mathbf{q}}
\newcommand{\bg}{\mathbf{g}}

\newcommand{\Exy}{\E_{(\x,y)\sim \D}}

\newcommand{\inner}[2]{\la#1,\, #2\ra}

\newcommand*\diff{\mathop{}\!\mathrm{d}}

\def\vzero{{\bm{0}}}

\def\bu{{\mathbf{u}}}

\def\vx{{\mathbf{x}}}

\def\cA{{\mathcal{A}}}

\def\cC{{\mathcal{C}}}
\def\cD{{\mathcal{D}}}
\def\cE{{\mathcal{E}}}

\def\cI{{\mathcal{I}}}

\def\cL{{\mathcal{L}}}

\def\cN{{\mathcal{N}}}

\def\cU{{\mathcal{U}}}
\def\cV{{\mathcal{V}}}

\def\sF{{\mathbb{F}}}

\def\sU{{\mathbb{U}}}
\def\sV{{\mathbb{V}}}
\def\sW{{\mathbb{W}}}

\def\sZ{{\mathbb{Z}}}

\def\sfF{{\mathsf{F}}}

\def\sfS{{\mathsf{S}}}

\newcommand{\sat}{{\mathsf{Test}}}

\newcommand{\init}{\mathsf{CDR}}
\newcommand{\test}{\mathsf{Test}}

\newcommand{\tx}{{\wt{\x}}}
\newcommand{\im}{{\mathrm{i}}}

\title{Learning Positive Linear Combinations of ReLUs\\ 
in Fully Polynomial Time\blfootnotea{Authors are listed in alphabetical order.}}
\author{
Ilias Diakonikolas\thanks{Supported by NSF Medium Award CCF-2107079, 
ONR Award number N00014-25-1-2268, an H.I. Romnes Faculty Fellowship 
and a Guggenheim Fellowship.}\\
UW Madison\\
{\tt ilias@cs.wisc.edu}\\
\and Jelena Diakonikolas\thanks{Supported in part by the AFOSR YIP award FA9550-24-1-0076, by the NSF MFAI Award DMS-2502282, and by the NSF CAREER Award CCF-2440563. Any opinions, findings and conclusions or recommendations expressed in this material are those of the author(s) and do not necessarily reflect the views of the U.S. Department of Defense.}\\
UW Madison\\
{\tt jelena@cs.wisc.edu}\\
\and Puqian Wang\thanks{Supported in part by the AFOSR YIP award FA9550-24-1-0076.} \\
UW Madison\\
{\tt pwang333@wisc.edu}\\
\and
Nikos Zarifis\thanks{Supported by MIT METEOR fellowship. Some of this research was done while the author was at UW Madison, supported by  NSF Medium Award CCF-2107079.} 
\\
CSAIL, MIT\\
{\tt zarifis@mit.edu}
}
\date{}

\begin{document}

\maketitle

\begin{abstract}
We study the fundamental problem of PAC learning positive linear combinations of $k$ 
arbitrary ReLU activations on $\R^d$ with respect to the Gaussian distribution, in $L_2$ 
loss. We give the first fully-polynomial time algorithm for this learning task. 
Specifically, our algorithm draws $\mathrm{poly}(d, k, 1/\epsilon)$ samples, 
runs in $\mathrm{poly}(d, k, 1/\epsilon)$ time, and outputs a hypothesis that is $\epsilon$-close to the 
target concept with high probability. All prior algorithms for this problem require 
super-polynomial time in $k$ or $1/\epsilon$, or leverage strong well-conditioned assumptions on the weight matrix.

On a technical level, our approach develops a novel boosting procedure of 
a ``weak'' learner that recovers the hidden weight vectors to constant error, 
in a list-decodable manner. To develop this ``weak'' learning procedure, we leverage 
implicit tensor computation techniques to construct a low-degree filtering procedure 
based on Hermite tensors of the target function. 
This procedure amplifies separation among 
(clusters of) weight vectors to learn each of them to constant error. 
We then argue that the same procedure can be used for improving the estimation 
of hidden weight vectors, using gradient descent-like updates on the Euclidean sphere. 
A crucial ingredient of our algorithm is a pruning approach that ensures 
only relevant vectors that approximate the hidden weight vectors at target accuracy 
are maintained by the algorithm.   
\end{abstract}

\thispagestyle{empty}

\newpage

\tableofcontents

\thispagestyle{empty}

\newpage

\setcounter{page}{1}

\section{Introduction}\label{sec:intro}

Understanding the computational complexity of learning neural networks
is a central question in the foundations of machine learning.
Over the past decade, a substantial body of work has developed provable
learning algorithms under various structural and distributional
assumptions; see, e.g.,~\cite{Janz15, ZhangLJ16, ZhongS0BD17, GeLM18,
GoelK19, CKM22,  WZDD23, ZWDD25, DIKR25, DLB25}. Despite this progress, a basic question has remained open even for
networks with a single hidden layer: Can positive linear combinations
of ReLUs be learned in polynomial time under Gaussian inputs, without strong 
structural assumptions on the hidden weights?
This longstanding question is explicitly highlighted, for example,
by Chen, Klivans, and Meka~\cite{CKM22}
(see also Chen's thesis~\cite{Chen-thesis}):
\begin{quote}
``Ideally, we would like an algorithm with sample complexity and
running time that is polynomial in all the relevant parameters.
Even for learning arbitrary sums of ReLUs, [\ldots],
it remains \textbf{a major open question} to obtain a
polynomial-time algorithm [\ldots].''
\end{quote}
Positive linear combinations of ReLUs are arguably the simplest
extension of learning a single neuron---the setting of generalized
linear models. A fully polynomial-time algorithm for this basic class 
is an important step towards understanding the learnability of more
general networks. Moreover, the techniques developed for this problem 
may also be useful for learning other latent variable models.

In this paper, we study the task of PAC learning positive linear combinations
of at most $k$ ReLUs on $\R^d$ under the standard Gaussian distribution.
Specifically, the learner has access to independent labeled examples from an unknown 
function in the concept class, and its goal is to output a hypothesis that approximates 
the target in $L_2$ norm.
The relevant parameters are the dimension $d$, the number of hidden
units $k$, and the inverse error $1/\eps$.
It is well-known that polynomially many samples 
information-theoretically suffice for this learning task, ignoring computational complexity.
The challenge is to develop a computationally efficient learning algorithm: 
despite a sequence of algorithmic advances, the best previous running-time bounds 
were super-polynomial in at least one of these parameters.

The work of~\cite{DKKZ20} gave the first learner for this problem
with running time polynomial in $d$ and $1/\eps$ for every fixed $k$.
Their algorithm uses $\widetilde{O}(d)\poly(k/\eps)$ samples and runs
in time $\poly(d/\eps)+(k/\eps)^{O(k^2)}$.
The work of~\cite{diakonikolas2020small} improved the running time
using a structural theorem on small covers for near-zero sets of
polynomials. Its sample and computational complexities are $\poly(dk/\eps)+(k/\eps)^{O(\log^2 k)}$, i.e., 
quasi-polynomial in $k$ even for constant error. 
More recently, \cite{diakonikolas2025implicit} developed an algorithm
based on implicit high-order moment estimation with sample and
computational complexity $\poly(d,k)\,2^{\poly(1/\eps)}$.
That is, while this achieves polynomial dependence on $d$ and $k$, 
the dependence on $1/\eps$ is exponential.
In summary, neither of these results gives a fully-polynomial time algorithm.

The above discussion motivates the following question:
\begin{center}
\begin{minipage}{0.92\linewidth}
\centering\em
Can positive linear combinations of $k$ ReLUs on $\R^d$ be learned
under Gaussian inputs\\ with sample and computational complexity 
$\poly(d,k,1/\eps)$?
\end{minipage}
\end{center}

{\em As our main contribution, we answer this question in the affirmative.}
Specifically, we give a learning algorithm with sample and computational complexity 
$\poly(d,k,1/\eps)$ that works for arbitrary positive linear combinations of ReLUs. 
\new{Our algorithm outputs a simple interpretable hypothesis. Moreover, 
the technical tools we develop in the process may be useful for learning other latent variable models.}

\subsection{Problem Formulation and Main Result}
\label{ssec:main-result}

Before stating our main result, we define the underlying concept class 
and the PAC learning task.

\begin{definition}[Positive linear combinations of ReLUs]
\label{def:positive-relu-class}
Let $\sigma(t)\eqdef\max\{0,t\}$ denote the ReLU activation.
For $d,k\in\mathbb{Z}_{+}$, let $\mathcal{C}_{d,k}$ denote the class
of functions $f:\R^d\to\R$ of the form
$f(\x)=\sum_{i=1}^k\beta_i\sigma(\vec w_i\cdot\x)$,
where $\beta_i\in[0,1]$ and $\|\vec w_i\|_2=1$ for all $i\in[k]$.
We call the $\vec w_i$'s the \emph{hidden weight vectors} and the
$\beta_i$'s the \emph{coefficients}.
\end{definition}

By positive homogeneity of the ReLU, the norm of a nonzero hidden
weight vector can be absorbed into its coefficient.
The constraint $\beta_i\leq1$ fixes the scale (without loss of generality); see
Remark~\ref{rem:normalization}.
For a function $g:\R^d\to\R$, we write
$\|g\|_{L_2}\eqdef
(\mathbf{E}_{\x\sim\calN(\vec 0,\mI_d)}[g(\x)^2])^{1/2}$.

\begin{definition}[PAC learning $\mathcal{C}_{d,k}$ under Gaussian marginals]
\label{def:gaussian-pac-learning}
The learner is given accuracy and confidence parameters
$\eps,\delta\in(0,1)$ and access to independent examples
$(\x,y)\sim\D$, where $\D$ is a distribution supported on
$\R^d\times\R$ such that $\x\sim\calN(\vec 0,\mI_d)$ and
$y=f(\x)$ for an unknown $f\in\mathcal{C}_{d,k}$.
The goal of the learner is to output a hypothesis $h:\R^d\to\R$ that, with
probability at least $1-\delta$, satisfies
\[
    \|h-f\|_{L_2}\leq\eps\|f\|_{L_2}.
\]
The hypothesis need not belong to $\mathcal{C}_{d,k}$.
The learner is called \emph{proper} if it outputs a hypothesis
in $\mathcal{C}_{d,k}$.
\end{definition}

Our main result is summarized in the following theorem 
\new{(see \Cref{thm:ultimate-main} for a detailed full version)}. 

\begin{theorem}[Main result, informal]
\label{thm:main-informal}
There is an algorithm that, given $d,k\in\mathbb{Z}_{+}$,
$\eps\in(0,1)$, and access to independent examples
$(\x,f(\x))$, where $\x\sim\calN(\vec 0,\mI_d)$ and
$f\in\mathcal{C}_{d,k}$, uses $\poly(d,k,1/\eps)$ samples and time
and outputs a hypothesis $h$ such that, with probability at least $2/3$,
\[
    \|h-f\|_{L_2}\leq\eps\|f\|_{L_2}.
\]
\end{theorem}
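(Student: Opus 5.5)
We will follow the outline sketched in the abstract: a list-decodable ``weak'' learner, followed by a boosting/refinement loop with pruning, and finally a convex fit of the coefficients. \textbf{Step 0 (moment structure).} For $f=\sum_i\beta_i\sigma(\w_i\cdot\x)$, the degree-$m$ Hermite tensor satisfies $T_m\eqdef\E[f(\x)\he_m(\x)]=c_m\sum_i\beta_i\w_i^{\ot m}$. Here $c_m\neq0$ for $m=1$ and for even $m$, and $|c_m|\ge m^{-O(1)}$. Positivity of the $\beta_i$ is what makes these even-order tensors free of cancellation: $T_{2m}(\bu,\dots,\bu)=c_{2m}\sum_i\beta_i(\w_i\cdot\bu)^{2m}\ge0$. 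We never form $T_m$ explicitly. Instead we access it implicitly through the empirical estimates $\E[f(\x)\,\hep_m(\bu\cdot\x)]$ and through contractions $T_m(\bu,\dots,\bu,\cdot,\cdot)$. Each such quantity can be estimated from $\poly(d,k,1/\eps)$ samples when $m=O(\log(k/\eps))$, or even constant. Crucially, we use only constant or logarithmic degree, so that $d^{m}$ never enters the complexity.

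\textbf{Step 1 (weak list-decoding).} We build a list $L$ of $\poly(k)$ unit vectors with the following guarantee: every $\w_i$ that carries non-negligible mass (say $\beta_i\ge\eps^{O(1)}/k$) has some $\bu\in L$ with $\|\bu-\w_i\|\le c$, for a small constant $c$. The construction is a filtering procedure. Start from a random $\bu$, or a top eigenvector of the flattened $T_2$. Then repeatedly reweight by the low-degree filter $p(\x)=(\bu\cdot\x)^{2m}$-like Hermite combinations, forming the matrix $M_\bu=T_{2m+2}(\bu^{\ot 2m},\cdot,\cdot)=\sum_i\beta_i(\w_i\cdot\bu)^{2m}\w_i\w_i^\top$. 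The filter amplifies the weights of the $\w_i$ that are close to $\bu$ relative to far ones, by a factor $\exp(-\Omega(m\,\theta^2))$. Taking the top eigenvectors of $M_\bu$ and iterating, as in a tensor power method, concentrates on one cluster. Randomizing the start and collecting all outcomes yields list-decoding.

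\textbf{Step 2 (boosting to accuracy $\eps$).} Given $\bu$ at constant distance from a cluster, we run Riemannian gradient-like updates $\bu\leftarrow\proj_{\spd}(M_\bu\bu)$, now with the degree $m$ growing as the cluster tightens. We show that the angle to the cluster center contracts geometrically. Once the angle is below $\eps^{O(1)}$, the surviving vectors approximate each relevant cluster, in the sense that merging $\w_i$'s within angle $\eps^{O(1)}$ changes $f$ by at most $\eps\|f\|_{L_2}$ in $L_2$. This uses the Lipschitzness of $\w\mapsto\sigma(\w\cdot\x)$ in $L_2$. \textbf{Pruning.} After each round, we discard list elements whose filtered mass $T_{2m}(\bu,\dots,\bu)$ is below threshold, and we merge near-duplicates. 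This keeps $|L|\le\poly(k/\eps)$, and the test is itself estimated from samples.

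\textbf{Step 3 (output).} With the final list $\{\bu_j\}$, we solve the nonnegative least squares problem $\min_{\gamma\ge0}\E[(y-\sum_j\gamma_j\sigma(\bu_j\cdot\x))^2]$ empirically. Its optimum is at most $\eps^2\|f\|^2_{L_2}$, because $\gamma$ can be chosen to aggregate the $\beta_i$ of each cluster. Standard uniform convergence for this $\poly$-dimensional convex class gives the generalization bound. We also use $\|f\|_{L_2}\gtrsim\sum_i\beta_i$, which holds by positivity, to turn additive error into relative error.

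The main obstacle is Step 1/2 in the presence of clusters at intermediate separation. Several $\w_i$ at angles of order $\theta$ can interfere, and the filter then converges to a mixture direction rather than to any single $\w_i$. Our plan is to treat these clusters at scale: we do not claim convergence to an individual $\w_i$, only to a cluster center, whose radius shrinks as $m$ increases. The delicate part is to show that the per-round degree increase is affordable, i.e.\ that only $O(\log(k/\eps))$ degree is needed in total, by an amplification lemma for the ratio of near to far weights. This is also where the pruning must guarantee that no relevant cluster is dropped.
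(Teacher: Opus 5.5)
Your architecture (a filter matrix $\sum_i\beta_i(\w_i\cdot\bu)^{2m}\w_i\w_i^\top$ for coarse recovery, a filtered-mass test, separation pruning, a final convex fit) matches the paper's. However, the two steps that carry the difficulty are each missing the idea that makes them work.

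\textbf{Step~1.} You assert that $T_{2m+2}(\bu^{\otimes 2m},\cdot,\cdot)$ can be estimated from $\poly(d,k,1/\eps)$ samples when $m=O(\log(k/\eps))$. This holds only if the signal $\sum_i\beta_i(\w_i\cdot\bu)^{2m}$ is inverse-polynomial, i.e.\ if $\bu$ already has constant correlation with some $\w_i$, which is exactly what Step~1 is supposed to produce. To isolate a direction from a generic start, the filter needs a multiplicative gap $|\w_{i^*}\cdot\bu|\ge(1+\xi)|\w_j\cdot\bu|$ against far components, hence degree $m\approx\log(k/\lambda)/\xi$. A random unit $\bu$ in a span of dimension $\Theta(k)$ has correlations of order $\sqrt{\log k/k}$; larger ones occur only with superpolynomially small probability. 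So the signal is $k^{-\Theta(\log(k/\lambda))}$, while the direct estimator has variance $\poly(d,k)2^{O(m)}$. This is precisely the quasi-polynomial algorithm of \Cref{sec:quasi-poly}. An unnormalized real Gaussian $\bu$ makes large correlations likely, but it inflates the contracted error by factors like $\|\bu\|_2^{2m}\approx d^{m}$ (cf.\ \Cref{lem:real-complex-gaussian-comparison}). The tensor-power-style iteration you fall back on has no convergence guarantee for non-orthogonal, possibly overcomplete components.

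The paper's fix has two parts. First, it uses complex Gaussian probes $\bu=\bp+\im\bq$ with filter $\bu^{\otimes p}\otimes\bar{\bu}^{\otimes p}$ normalized by $p!$. For these, the correlation-plus-separation event has probability $\poly(\lambda/k)$, and $\E|\la\bu^{\otimes p},\mW\ra|^2\le 2^p p!\,\|\mW\|_F^2$ with no dependence on $d$. Second, it uses implicit tensor computation: recursive pseudo-projections, learned from STC estimates of the even moments, compress $\E[y\he_{2p+2}(\x)]$ to a $k\times k\times d\times d$ tensor with polynomially small error before it is contracted with the compressed filter.

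\textbf{Step~2.} Running one power-iteration trajectory $\bu\leftarrow\proj_{\spd}(M_\bu\bu)$ per coarse vector cannot ensure that \emph{every} heavy $\w_i$ ends within $\eps^{O(1)}$ of an output, because one coarse vector may be the only witness for several components. Take two equal-weight components at angle $0.1$. Their normalized bisector has correlation above $0.99$ with both, so Step~1 may return only it. It is a fixed point of your update (and the top eigenvector of $M_\bu$) at every degree. Even a trajectory that eventually leaves it follows only one component, so nothing resolves the other. The bisector fits $\sigma(\w_1\cdot\x)+\sigma(\w_2\cdot\x)$ only up to an $\eps$-independent $L_2$ error, so your Step~3 bound fails. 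Merging is harmless only at scale $\eps$, and reaching that scale requires resolving every component through all intermediate scales. Incidentally, that needs filter degree $\gtrsim\log(k/\eps)/\theta^2$ at scale $\theta$, not $O(\log(k/\eps))$ in total.

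The paper supplies the missing branching through localization. Rejection sampling with acceptance probability $\exp(-(\gamma^{-2}-1)(\bv\cdot\x)^2/2)$ turns $\D$ into a new Gaussian sum of ReLUs whose hidden directions are $\approx\widehat{\w_i^{\perp\bv}}$. The coarse-recovery routine then list-decodes good update directions $\br$ for all heavy components at once. Branching over $\proj_{\spd}(\bv\pm\eta\br)$ contracts the angle to each $\w_i$ individually by a constant factor. Finally, the one-dimensional tester $\sum_i\beta_i\cos^p\theta_\pm(\bv,\w_i)$ (degree $\Theta(\log(k/\eps)/r^2)$, still polynomially estimable since $\cos^p r\ge e^{-pr^2}$), followed by greedy separation pruning, keeps at most $k$ candidates per round.
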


A few remarks are in order.
While our learner is not proper, its hypothesis has a simple form:
a linear function plus a linear combination of at most $k$
absolute-value functions, i.e., 
$h(\x)=\vec g\cdot\x+ \sum_{i=1}^k \alpha_i|\vec v_i\cdot\x|$. 
Equivalently, using $|t|=\sigma(t)+\sigma(-t)$, it is a linear
function plus a linear combination of at most $2k$ ReLUs.
Thus, the hypothesis is itself a small network with $O(k)$ nonlinear
units.
For comparison, the improper learner from~\cite{diakonikolas2020small}
outputs a positive linear combination of 
quasi-polynomially many ReLUs, while the learner 
from~\cite{diakonikolas2025implicit} outputs a polynomial represented
through compressed Hermite coefficients.
For these representations, the bounds on the time needed to even \emph{evaluate} 
the hypothesis on a single new example are also super-polynomial
in $k$ or $1/\eps$.

\begin{remark}[Additive label noise] 
{\em More generally, one may consider examples $(\x,y)$ with
$\x\sim\calN(\vec 0,\mI_d)$ and $y=f(\x)+\xi$, where $\xi$ is
independent of $\x$, has mean zero, and satisfies
$\mathbf{E}[\xi^2]\leq\sigma_\xi^2$.
Following the convention of~\cite{DKKZ20}, the 
objective in this setting is to output a hypothesis $h$ satisfying 
\[
    \|h-f\|_{L_2}^2
    \leq\eps^2\bigl(\|f\|_{L_2}^2+\sigma_\xi^2\bigr)
\]
with high probability.
The realizable setting of \Cref{def:gaussian-pac-learning} 
corresponds to $\sigma_\xi=0$. 
\new{Our algorithm straightforwardly extends
to this setting with a $\poly(\sigma_\xi)$ overhead in its complexity.}}
\end{remark}

\begin{remark}[Normalization]\label{rem:normalization}
{\em Our result applies to arbitrary positive linear combinations
of at most $k$ ReLUs, without bounds on the coefficients or the
norms of the hidden weight vectors.
By positive homogeneity, we can absorb the norm of each 
hidden weight vector into its coefficient.
Writing $b$ for the resulting coefficient sum, we have
$\mathbf{E}[f(\x)]=\frac{b}{\sqrt{2\pi}}$ and 
$\mathbf{E}[f(\x)^2]\leq\frac{b^2}{2}$.  
By Chebyshev's inequality, the empirical mean $\overline y$ of
a sufficiently large constant number of labels gives an 
estimate $B=2\sqrt{2\pi}\,\overline y$ satisfying
$b\leq B\leq3b$ with high constant probability.
For $B>0$, we can apply our learner to fresh examples with labels
divided by $B$ and multiply its output by $B$.
The normalized target has unit hidden weight vectors and coefficients
in $[0,1]$, and rescaling preserves the relative-error guarantee.
If $B=0$, we return the zero hypothesis.
Thus, our normalization is without loss of generality and requires
no prior knowledge of the target's scale.}
\end{remark}

\new{
\noindent {\bf Terminology.} Throughout this paper, 
we will say that a distribution $(\x,y)\sim\D$ supported on 
$\R^d \times \R$ is a {\em Gaussian sum of $k$ ReLUs}  
if the $\x$-marginals are distributed according to the standard normal distribution, 
i.e., $\x\sim\calN(\vec 0,\mI_d)$, and if 
$y(\x) = \sum_{i=1}^k \beta_i\sigma(\vec w_i\cdot\x)$, where $\sigma$ 
is the ReLU function, $\new{\beta}_i\in \new{[} 0,1]$, and 
$\|\vec w_i\|_2 = 1$, for all $i \in [k]$. 
}

\subsection{Technical Overview}\label{sec:tech-overview}

To explain the main ideas behind our results, we begin with the following observations. 
First, low-coefficient ReLU components in the sum have little influence on the predicted label: 
in particular, if the coefficient $\beta_i$ associated with component $i$ is smaller than $\sqrt{\eps}/k,$ 
then zeroing out all such components can contribute at most an additive $\eps$ to the $L_2^2$ loss 
we aim to minimize. Thus, setting a threshold $\eps_0 \approx \sqrt{\eps}/k$ and focusing on 
approximating the ReLU components with coefficients above this threshold, 
comes at no loss of generality. In the remainder of the technical overview, 
we only consider such nontrivial components with coefficients $\beta_i \ge \eps_0$, 
\new{and refer to them as ``$\eps_0$-heavy'' in the rest of the paper}.

Second, we claim that it suffices to recover the target vectors up to their orientation 
(i.e., up to their signs). The reason is the decomposition
\(  \sigma(z)-\frac{z}{2}=\frac{|z|}{2}.\) 
Specifically, if we calculate the first-moment parameter 
$\g=\E[y\x]=\frac12\sum_{i=1}^k\beta_i\w_i$, we get 
\begin{equation}\label{eq:abs-sum}
    y-\g\cdot\x=\frac12\sum_{i=1}^k\beta_i|\w_i\cdot\x| \;.
\end{equation}
Thus, after subtracting the linear part, the remaining function is a sum
of absolute values. Its terms are unchanged when $\w_i$ is replaced by
$-\w_i$, so the directional part of the algorithm only needs to recover
one of these two orientations. We measure the error of the angle using
$\theta_\pm(\bv,\w_i)=\min\{\theta(\bv,\w_i),\pi-\theta(\bv,\w_i)\}$,
rather than by the ordinary angle for this reason. 

The birds-eye view of our algorithm is the following. 
The algorithm operates in rounds. 
Each round $t = \{1, 2, \dots, T\}$ of computation maintains the following invariant: 
\begin{tcolorbox}[colback=gray!10, colframe=white, boxrule=0pt]{\bf Algorithm Invariant.}
    {There is a set of vectors $\cV_t$, $|\cV_t| \leq k,$ maintained by the algorithm 
    such that for every \new{$\eps_0$-heavy} hidden vector $\w_i,$ 
    there exists a vector $\bv_i \in \cV_t$ that is close to it; 
    in particular, $\theta_\pm(\bv_i,\w_i) \leq R_t,$ $R_t > 0.$}
\end{tcolorbox} 
Note that this mapping of vectors need not be one-to-one: 
 one vector $\bv$ can be close to multiple $\w_i$'s. 
 Initially, $R_0 \in (0, 1)$ is a constant, and this is achieved 
 using the Coarse Direction Recovery ($\mathsf{CDR}$) subroutine described 
 in \Cref{sec:tech-overview-CDR} below. We argue that we can maintain the stated invariant with $R_{t+1} \leq \rho R_t$ 
for a constant $\rho \in (0, 1)$ and all $t \geq 0,$ 
so that the algorithm approximates all \new{$\eps_0$-heavy} 
hidden weight vectors $\w_i$ with error $\theta_\pm(\bv_i,\w_i) \leq \sqrt{\eps}/k,$ 
after $T = O(\log_\rho(k/\epsilon))$ rounds. The last step is then to estimate 
the \new{coefficients} for the absolute-sum representation \eqref{eq:abs-sum}. 
It is not hard to see that this part boils down to solving a linear regression problem of the form
\[
\min_{\bm{\alpha} \in [0, 1]^m} \Exy\Big[\Big( y- \g \cdot\x-\frac{1}{2}\sum_{j=1}^m\alpha_j|\bv_j\cdot\x|\Big)^2\Big] \;, 
\]
where $m = |\cV_T| \leq k.$ This problem is \new{convex} 
and it is not hard to argue that it is efficiently solvable 
in terms of both sample and computational complexity.

The technical crux of the approach thus lies in implementing the rounds of the above high-level algorithm, which relies upon multiple core technical components described below.

{\subsubsection{Coarse Direction Recovery}\label{sec:tech-overview-CDR} 
A key technical tool in our algorithmic approach is a procedure that, 
given a failure probability $\delta_\init \in (0, 1)$ and access 
to i.i.d.\ examples $(\vx, y(\vx)),$ where $\vx \sim \cN(\vzero, \mI_d)$ and $y(\vx) = f(\vx)$
for a fixed but unknown $f \in \cC_{d, k}$, outputs a polynomially-sized list 
in $k, 1/\delta_\init$ with the following property: 
with probability at least $1- \delta_\init$, for each $\eps_0$-heavy hidden weight vector $\w_i$ 
defining the labeling function, there exists at least one list element $\br_i$ 
such that $|\br_i \cdot \w_i| \geq 0.99$. Crucially, this procedure, to which we refer 
as the Coarse Direction Recovery ($\init$) can be implemented using polynomially 
many labeled examples (in $\new{d},  k, 1/\eps_0 ,1/\delta_\init$), and 
in time polynomial in $d$ and the number of samples. 
$\init$ can be understood as a particularly ``weak''\footnote{\new{Here, ``weak'' is in quotes to distinguish it from the standard notion in PAC learning theory, where weak learning is understood as slightly better than random, whereas we require constant correlation of vectors.}} learner for our problem, 
where each $\eps_0$-heavy hidden weight vector is learned to \emph{constant} error, 
in a \emph{list-decodable} manner. The latter means that the output provided 
by $\init$ is a polynomially-sized list containing useful, constant-error outputs, 
but without any information on how to separate useful outputs from the rest of the list. 
It is not hard to argue that a stronger learner, which would only output vectors 
$\br_i$ with sufficiently strong correlation $|\br_i \cdot \w_i| \geq 1 - \Theta(\sqrt{\eps}/k)$, 
would solve our learning problem stated in \Cref{def:gaussian-pac-learning}. 
Yet, with the techniques described below, such a stronger requirement would lead only 
to a quasi-polynomial algorithm, at best. }

The first insight for developing the $\init$ subroutine comes from the properties 
of Hermite tensors of positive linear combinations of ReLUs 
(see \Cref{sec:prelims} for relevant definitions); 
namely, that Hermite moments of a linear combination of activations can be expressed
as a sum of rank-one tensors, an observation used extensively
in prior work~\cite{DKKZ20,  ChenDGKM23, diakonikolas2025implicit}. 
More specifically, for a positive linear combination of ReLUs, we have
\begin{equation}
    \new{\vec Q_{2p+2}} := \E_{(\x,y)\sim\D}[y\he_{2 p+2}(\x)]\propto 
    \sum_{i=1}^k\beta_i \vec w_i^{\otimes 2p+2}\;, \label{eq:low-degree-tensor-form}
\end{equation}
where $\he_{2 p+2}(\x)$ is the Hermite tensor of order $2p + 2,$ for $p \geq 0.$ The moment formulation here is what enables 
estimation from samples; the fact that it is proportional to the sum of 
rank-one tensors $\sum_{i=1}^k\beta_i \vec w_i^{\otimes 2p+2}$ 
is what we use to obtain the guarantee of $\init.$

Directly recovering the individual rank-one components of a
high-order tensor is, in general, a computationally challenging problem. 
Instead, a useful approach is to contract the tensor along
all but two indices, obtaining a matrix whose leading eigenvector 
can reveal one of the hidden directions. 
To be precise: taking the inner product of \Cref{eq:low-degree-tensor-form} with a tensor $\vec u^{\otimes 2p}$ and reshaping the result as a matrix, we get that 
\begin{equation}
   \vec Q_2(\vec u)= \left\la \vec Q_{2p+2} ,\vec u^{\otimes 2p} \right\ra\propto \sum_{i=1}^k \beta_i |\w_i\cdot\bu|^{2p}\w_i\w_i^{\top} \;. \label{eq:low-degree-matrix-form}
\end{equation}
The key observation is that the coefficients of the rank-one
matrices in \Cref{eq:low-degree-matrix-form} depend on 
\emph{powers of the correlations}, $|\w_i\cdot\bu|^{2p}$.
Consequently, even a small multiplicative advantage in the
correlation with one hidden weight vector can be amplified by
choosing a sufficiently large degree $p$. 
To illustrate this idea, suppose that $\w_1$ is well-separated 
from the remaining hidden vectors
$\w_2, \dots, \w_k$ 
(e.g., $\max_{i\neq 1}|\vec w_1\cdot \vec w_i|\leq 1/2$). 
Then, we can write $\vec Q_2(\vec u)$ as a sum of 
two matrices $  \vec Q_2(\vec u)\propto \beta_1 |\w_1 \cdot \vec u|^{2p}\w_1\w_1^\top + \vec Q_\eps(\vec u)$.
If a vector $\vec u$ is such that the spectral norm of 
$\beta_1 |\w_1 \cdot \vec u|^{2p}\w_1\w_1^\top$ is much larger than the spectral norm of 
$\vec Q_\eps(\vec u)$, then the top eigenvector of $\vec Q_2(\vec u)$ must have constant 
correlation with $\w_1$. A sufficient condition for this to hold 
is to have $\beta_i |\w_i\cdot\bu|^{2p}\gg \sum_{j\neq i}^k \beta_j |\w_j\cdot\bu|^{2p}$. 
Given that we can choose $p$ to be a (sufficiently) large power, 
it suffices to find a $\vec u$ so that for all $i\neq 1$, we have 
$|\vec u\cdot\w_1|\geq (1+\xi)|\vec u\cdot\w_i|$ for some $\xi>0$ 
(for simplicity of this exposition, assume $\xi=1$). 
Then, by choosing $p\gg \log(k \max_i \beta_i/\beta_1)$, we have that the spectral norm of 
$ \beta_1 |\w_1 \cdot \vec u|^{2p}\w_1\w_1^\top$ is at least two times larger  
than the spectral norm of $\vec Q_\eps(\vec u)$. 
Hence, the top eigenvector of $\vec Q_2(\vec u)$ is almost parallel to $\w_1$. 
Importantly, this intuition extends beyond the well-separated case.
Indeed, directions $\w_i$ that are close to $\pm\w_1$, i.e.,
$    |\w_i\cdot\w_1|
    \gg 0,$ can be grouped together with $\w_1$ while retaining the same properties, since
 their contributions align with the desired
rank-one matrix. Therefore, we only need to suppress directions that are
far from $\pm\w_1$, rather than isolate a single component. The hope here is that by sampling $\bu$ from an appropriately selected probability distribution, the above event would happen for all $\w_i,$ with some nontrivial probability.  
This is the main intuition for why the stated guarantee of $\init$ should be possible.

There are at least two main technical challenges with the above argument. 
First, since $\bu$ is chosen from a probability distribution, 
to ensure that we can produce a polynomially-sized list containing 
at least one vector that aligns well with some $\w_i$, 
the event ensuring separation in alignment among hidden weight vectors with $\bu$ 
described above must happen with probability no smaller than an inverse polynomial 
in the problem parameters. 
(Ensuring that such an event is satisfied for all $\w_i,$ $i \in [k]$, 
for at least one of the drawn probing vectors $\bu$ follows by the union bound.) 
We argue that this is indeed the case for the simple distribution where $\bu$ 
is chosen uniformly at random from the $k$-dimensional unit Euclidean sphere, 
defined over the space containing the space spanned by the targets $\w_i,$ $i \in [k]$. 
The same can also be argued for other related distributions\new{, such as standard Gaussians}.

Second, as hinted before \new{and more importantly}, we need to argue that the matrix in \eqref{eq:low-degree-matrix-form} 
can be estimated efficiently from labeled examples with bounded error in the operator norm, 
so the top eigenvector is preserved (up to constant error). For example, when $\bu$ is sampled uniformly from a
$k$-dimensional sphere, its typical correlation with a
fixed unit vector is of the order $1/\sqrt{k}$. 
Raising this correlation to degree $2p$ produces a factor
of order $k^{-p}$, which is quasipolynomially small
when $p=\Theta(\log(k/\eps))$.
Estimating the resulting matrix may therefore require
quasipolynomially small accuracy, and this, as a result, leads to quasipolynomial sample complexity. \new{We provide full technical details for this approach in \Cref{sec:quasi-poly}.}

To obtain a fully polynomial $\init$ implementation, the main challenge is to ensure that the 
estimation task can be implemented efficiently, in terms of both samples and time. 
A major obstacle when sampling $\bu$ uniformly from the sphere 
or from the standard Gaussian distribution is that the variance of the empirical estimate 
of $\mQ_2 = \left\la \vec Q_{2p+2} ,\vec u^{\otimes 2p} \right\ra$ is large 
because the variance of the tensor $\vec u^{\otimes 2p}$ is large for large degree $p$, 
which leads to error amplification. 
\new{
To illustrate this point, consider $p=1$ and decompose the estimation tensor 
into $\mQ_{4}\in (\R^d)^{\ot 4}$ and an error tensor 
$\zeta \mH,$ $\wh{\mQ}_{4} = \mQ_{4} + \zeta \mH$. 
For the sake of discussion, suppose $\mH = \mI_d\ot\mI_d$.
Then the error term after tensor contraction with the filter $\bu^{\ot 2}$ 
becomes $\la\mH,\bu^{\ot 2}\ra = (\bu^\top\bu)\mI_d$. 
If $\bu$ is selected from a real Gaussian distribution, then the second moment of the spectral norm of this error matrix is $\E[\|\la\mH, \bu^{\ot 2}\ra\|_2^2] = \E[\|\bu\|_2^4] = O(d^2)$. 
On the other hand, if $\bu = \bp + \im \bq$ is a complex Gaussian vector 
where $\bp, \bq$ are i.i.d.\ standard real Gaussians, then the second order moment becomes $\E[\|\la\mH, \bu^{\ot 2}\ra\|_2^2] = O(d).$ 
}
This is the main intuition for why complex vectors are more helpful for keeping 
the estimation error small. The advantage of complex vectors leading to lower growth 
of the error also persists for higher moments; this is formally established 
for real and complex Gaussian vectors in \Cref{lem:moment-bound-query-tensor-after-projection} 
and \Cref{lem:real-complex-gaussian-comparison}, in \Cref{sec:PCA}.  
The choice of Gaussian vectors in our analysis is for technical convenience.
Full technical details for this polynomial sample and time implementation based 
on complex Gaussians are  included in \Cref{sec:poly-time-initialization}.

\subsubsection{Boosting Coarse Direction Recovery}

The $\init$ procedure described in \Cref{sec:tech-overview-CDR} gives a concrete approach for constructing initial, constant-correlation candidate vectors. However, several questions remain on how to make this procedure useful for constructing a hypothesis with arbitrarily small target error $\eps > 0.$ Most importantly, we need a way to reduce the error of candidate vectors and isolate useful candidate vectors from the rest of the produced list. We begin with providing the intuition for how to improve the error. 

\paragraph{Ideal (and nearly-ideal) update directions.} 
Fix one candidate vector $\bv \in \spd$ and one target $\w_i \in \spd.$ 
If we knew $\w_i^{\perp \bv},$ \new{the component of $\w_i$ orthogonal to $\bv$,} 
then we would immediately know $\w_i$, since 
$\w_i = \sqrt{1 - \norm{\w_i^{\perp \bv}}_2^2}\bv + \w_i^{\perp \bv},$ 
so one update of $\bv$ would suffice. 
Notice that $\w_i^{\perp \bv}$ corresponds to the ``shortest path'' direction 
for rotating $\bv$ towards $\w_i$: rotating $\bv$ in the direction of $\w_i^{\perp \bv}$ 
leads it to $\w_i,$ following the geodesic that connects $\bv$ and $\w_i.$ 
Of course, such an ideal update direction is not available to the algorithm. 
However, it is also unnecessary for contracting the distance between $\bv$ and $\w_i.$ 
If, instead, we had access to a unit vector $\br$ that correlates well with $\w_i^{\perp \bv}$, 
in the sense that $\br \cdot \frac{\w_i^{\perp \bv}}{\norm{\w_i^{\perp \bv}}_2} \geq 0.9,$ 
then $\br$ would be a ``nearly-ideal'' update direction: 
although not moving $\bv$ on the geodesic connecting it to $\w_i$, 
it would allow for a large rotation that contracts the angle between $\bv$ and $\w_i$ 
by a constant factor; this is the primary idea underlying efficient optimization on the Euclidean sphere \cite{wang2026efficient}. Of course, the question remains of how to construct such update directions.

\new{Observe that the ``constant correlation'' requirement $\br \cdot \frac{\w_i^{\perp \bv}}{\norm{\w_i^{\perp \bv}}_2} \geq 0.9$ is similar to what we had stated for $\init.$ In particular, if we had access to samples from a problem instance with $\wh{\w_i^{\perp \bv}} := \frac{\w_i^{\perp \bv}}{\norm{\w_i^{\perp \bv}}_2}$ as the hidden weight vectors, $\init$ would apply directly. We argue that this is indeed possible; the small loss in correlation (0.99 vs 0.9) is a small error slack needed for the technical argument. Methodologically, the use of moments beyond the first to construct update directions that enable contracting the  distance to target vectors is related to \cite{wang2025robustly}; in the present work, the use of higher moments in \eqref{eq:low-degree-tensor-form} is crucial to updating a set of $k \geq 1$ vectors \emph{simultaneously} (in \cite{wang2025robustly}, $k = 1$, and $p = 0$ in \eqref{eq:low-degree-tensor-form} suffices).}

To argue that $\init$ could likewise apply to estimating $\wh{\w_i^{\perp \bv}}$, in \Cref{subsubsec:boosting-the-directions},
we show that a simple rejection sampling-based procedure applied 
to labeled examples drawn from the input distribution leads to a new sampling distribution 
that itself is a Gaussian sum of ReLUs with hidden weight vectors 
$\widetilde\w_i= \overline\w_i/\|\overline\w_i\|_2,$ 
where $\overline\w_i = \w_i^{\perp\bv} +\gamma(\bv\cdot\w_i)\bv$ and $\gamma \in (0, 1)$ 
is a parameter of the rejection sampling procedure that we can choose. 
Then, choosing $\gamma$ sufficiently small 
\new{(relative to the relevant accuracy/heaviness threshold)}, 
the contribution of $\gamma(\bv\cdot\w_i)\bv$ in $\widetilde\w_i$ becomes negligible, 
and we can ensure that constant-alignment estimates for $\widetilde\w_i$ 
are also constant-alignment estimates for $\w_i^{\perp\bv}$, 
with just a slightly smaller alignment constant (this is where the loss from 0.99 to 0.9 occurs). Thus, to obtain vectors 
with constant-factor alignment with $\w_i^{\perp\bv}$s, it suffices to apply $\init$ to the samples 
from $\cD$ retained after the rejection sampling. 

Now, having access to a procedure that can produce near-ideal update directions, we discuss how the algorithm can maintain the invariant stated at the beginning of technical overview. The process of generating candidate updates and pruning the generated list is illustrated in \Cref{fig:one_step-intro}.

\paragraph{Constructing a large list of candidate updates.}

The process of constructing new candidate vectors is now as follows. 
In each round of computation $t,$ we maintain a list $\cV_t$, which contains vectors 
$\bv_i$ that approximate targets $\w_i.$ At the beginning of each round $t,$ 
for each $\bv \in \cV_t,$ we call the $\textsf{Refine}$ procedure, which performs the following operations. We generate a polynomially-sized list $\cL(\bv)$ 
that contains near-ideal update directions for all $\w_i$, $i\in [k]$ 
(and possibly many other vectors with no specific guarantee). 
For each vector in $\br \in \cL(\bv),$ we perform candidate projected gradient descent-style  
updates $\proj_{\spd}(\bu + \eta_1 \br),\, \proj_{\spd}(\bu - \eta_1 \br),$ 
and add both obtained vectors to $\cL(\bv),$ 
where $\eta_1$ is an appropriately chosen step size. 
For technical reasons (namely, to ensure sufficient contraction),
we repeat this procedure three times.
As a result, we obtain a large but still polynomially sized list $\cV_t^{\rm new}$ of vectors, 
with the following property: for each hidden weight vector $\w_i,$ $i \in [k],$ 
the list $\cV_t^{\rm new}$ contains at least one vector $\bv_i$ such that 
$\theta_\pm(\bv_i,\w_i) \leq c R_{t+1} = c \rho R_t,$ $\rho \in (0, 1),$ and $c \in (0, 1)$ 
is a constant that provides slack for introducing small errors when pruning the list. That is, there are new approximate vectors $\bv_i$ within this large list $\cV_t^{\rm new}$ 
that ensure constant contraction $c\rho$ of distance to the target vectors $\w_i.$  
However, as should be clear, at this point these ``useful'' vectors $\bv_i$ 
are like a few needles in a large haystack. The next step thus involves identifying 
the useful vectors and pruning the rest. 

\paragraph{Testing and pruning.} 
As a first step in removing irrelevant vectors from $\cV_t^{\rm new},$ in \Cref{subsubsec:tester}, 
we develop a testing procedure that can be used to remove all irrelevant vectors 
in $\cV_t^{\rm new};$ namely, all vectors with alignment to all $\w_i$ larger 
than a small constant $1+\tau$, $\tau > 0,$ times $c R_{t+1}.$ 
This testing procedure is based on using properties of Hermite coefficients of ReLUs, with similar intuition as in construction of $\init$ discussed in \Cref{sec:tech-overview-CDR}. 
In particular, it is possible to argue that for \new{even} integer $p$,  
we have the following identity:
\begin{equation}\label{eq:tester-population-moment-intro}
 a_p(\bv):=\frac1{\wh{\sigma}_p}\E[y\hep_p(\bv\cdot\x)]
     =\sum_i\beta_i\cos^p\theta_\pm(\bv,\w_i),
\end{equation}
where $\hep_p(\bv\cdot\x)$ is the normalized probabilist's Hermite polynomial 
of order $p$ (see \Cref{sec:prelims}). 
The first term $(1/{\wh{\sigma}_p})\E[y\hep_p(\bv\cdot\x)]$ tells us that $a_p$ 
can be estimated from samples; the second term $\sum_i\beta_i\cos^p\theta_\pm(\bv,\w_i)$ 
that for sufficiently large degree $p$
($p \approx \log(k/\eps)/(cR_{t+1})^2$), we can determine if $\bv$ correlates nontrivially 
with any target direction $\w_i$ (as measured by $\theta_\pm(\bv,\w_i)$) 
or if it can be eliminated. More specifically, \new{(with high probability)},
our tester returns $\mathsf{FALSE}$ 
for all $\bv \in \cL_t$ such that  $\theta_\pm(\bv,\w_i) > (1+\tau)c R_{t+1}$ 
and $\mathsf{TRUE}$ for all $\bv \in \cL_t$ such that $\theta_\pm(\bv,\w_i) \leq c R_{t+1}$. 
As a result, the tester can be used as a rough pruning method that 
removes all $\bv \in \cL_t$ that do not approximate any of the target directions within angle $(1+\tau)c R_{t+1}$. 

Ideally, the tester alone would suffice to move on to the next round of computation. 
However, it is possible that one $\w_i$ is approximated by many vectors $\bv_i$ in the list, 
and if all were retained in $\cV_{t+1}$, the size of $\cV_T$ would not be guaranteed 
to be polynomial, as the algorithm takes $T = \Theta(\log(k/\eps))$ rounds and each call to $\mathsf{Refine}$ (for each vector in the list) produces a polynomially-sized list. 
Thus, the final step, in \Cref{subsec:cover-refinement-pruning}, is to prune the list $\cV_t^{\rm new}$ 
 down to at most $k$ vectors (and assign it to $\cV_{t+1}$). 
The reason we can do this is geometric: since the previous steps ensured sufficient 
contraction with leeway for additional error, we make the following observation. 
If multiple vectors $\bv$ are close to the same target $\w_i$, 
then they must be close to each other. The final pruning step is thus the following: pick any 
candidate vector $\bv$ in the pruned list and add it to $\cV_{t+1}$. The testing procedure ensures 
that this vector is within the target distance of at least one hidden weight vector. Now, iterate 
through the remaining vectors $\bv$ in the pruned list $\cV_t^{\rm new}$ and add $\bv$ to $\cV_{t+1}$ 
if and only if it is at least $2(1+\new{2}\tau)c R_{t+1}$-far from all the vectors already in the 
list. We claim that for each $\w_i$, there is exactly one $\bv_i \in \cV_{t+1}$ such that 
$\theta_{\pm}(\bv_i, \w_i) \leq R_{t+1}$ (but the same $\bv_i$ can satisfy 
$\theta_{\pm}(\bv_i, \w_i) \leq R_{t+1}$ for multiple $\w_i$), thus $|\cV_{t+1}| \leq k.$  Fix one 
$\w_i$ and consider all $\bv$ such that $\theta_{\pm}(\bv, \w_i) \leq (1+\tau)c R_{t+1};$ these are 
the candidates $\bv$ that survive the testing prune, \new{and among them, there is at least one 
vector $\bv_i^*$ such that $\theta_{\pm}(\bv_i^*,\w_i)\leq c R_{t+1}$, as guaranteed by the 
refinement procedure}. Because, by the triangle inequality, all such $\bv$ are within 
$2(1+\tau)c R_{t+1}$ of each other, at most one can get added to $\cV_{t+1}.$ 
If exactly \new{$\bv_i^*$} is added, we are done, so suppose \new{$\bv_i^*$}
is \new{not} added to $\cV_{t+1}.$ This can only happen if 
there exists $\bv' \in \cV_{t+1}$ such that 
$\theta_{\pm}(\new{\bv_i^*}, \bv') \leq 2(1+\new{2}\tau)c R_{t+1}.$ 
Thus, again by the triangle inequality, 
$\theta_{\pm}(\w_i, \bv') \leq \new{(2(1 + \new{2}\tau) + 1)}c R_{t+1}$. 
Parameters $c, \tau$ are chosen so that $(2(1+\new{2}\tau)+1) c \leq 1,$ 
and thus we have that each $\w_i$ can be mapped 
to one $\bv_i \in \cV_{t+1}$ such that 
$\theta_{\pm}(\bv_i, \w_i) \leq R_{t+1}.$

\begin{figure}[H]
    \centering
    \begin{subfigure}{\textwidth}
        \centering
        \includegraphics[width=0.7\linewidth]{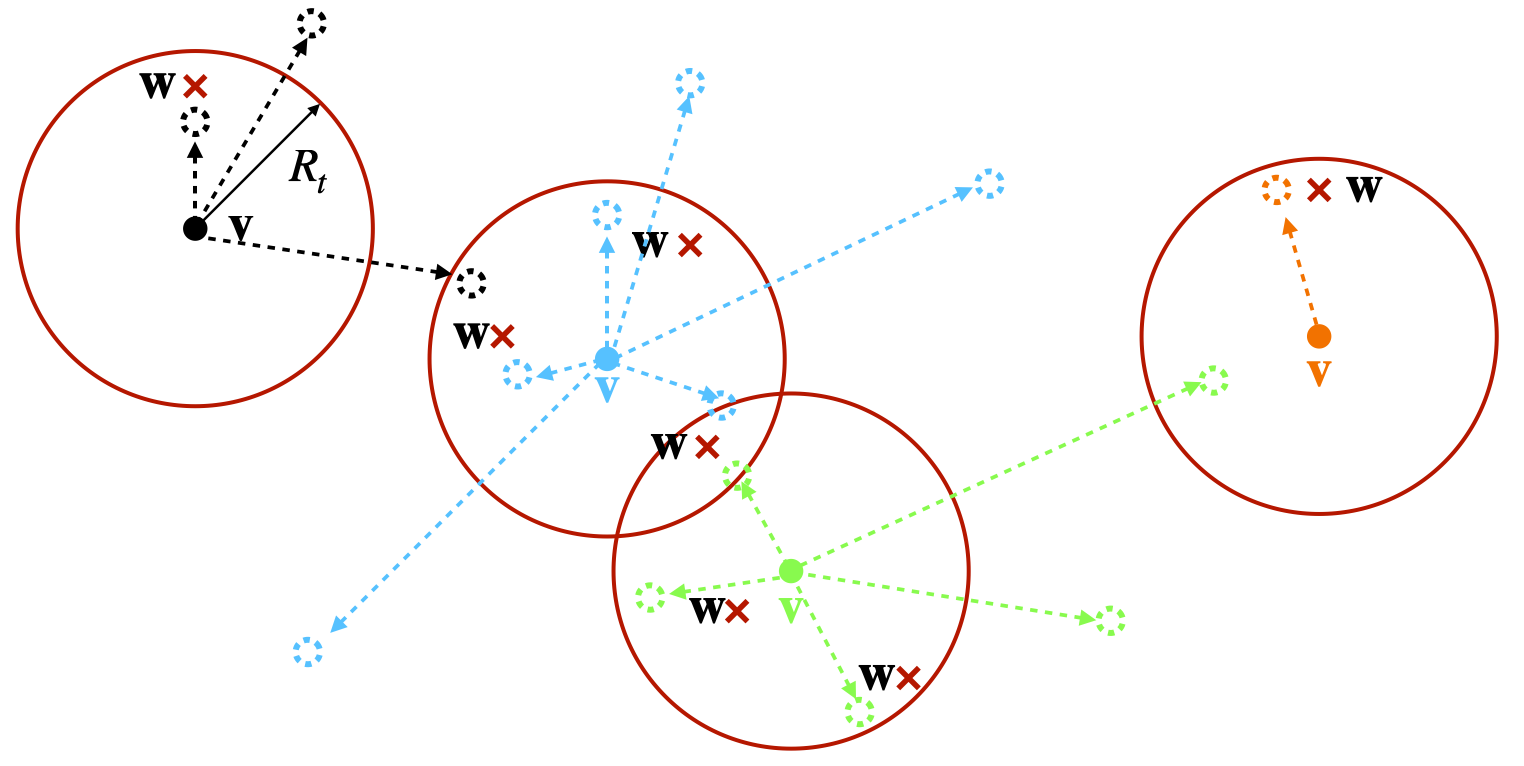}
        \caption{After the $\mathsf{Refine}$ procedure, each candidate produces a list of vectors (marked in a dotted circle), of which some are closer to the hidden weight vectors. But there are also potentially many other redundant vectors.}
\end{subfigure}

    \begin{subfigure}{\textwidth}
        \centering
        \includegraphics[width=0.7\linewidth]{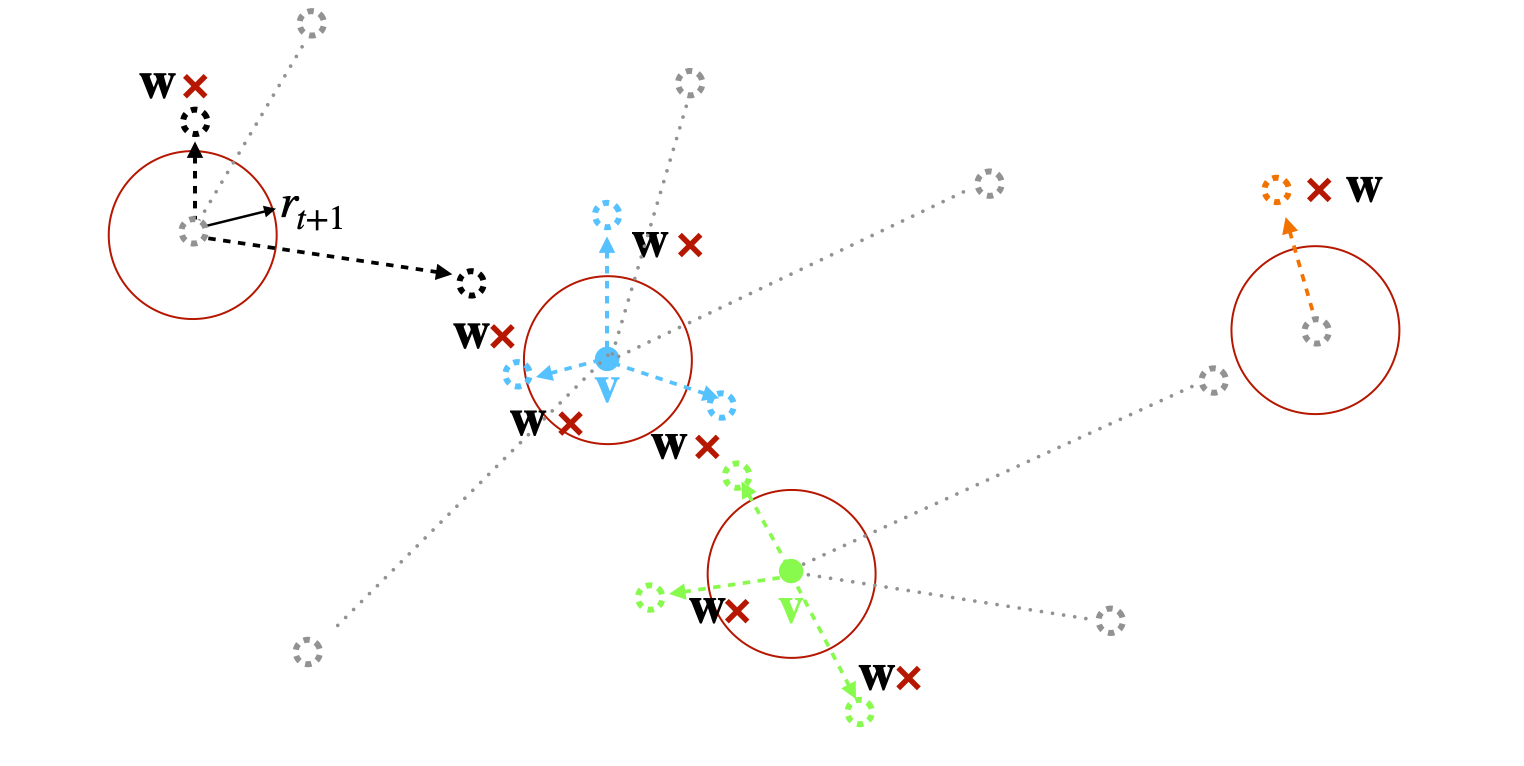}
        \caption{After the $\sat$ procedure, those vectors that do not ``cover'' hidden weight vectors at a target contracted distance are discarded (marked in gray).}
\end{subfigure}

    \begin{subfigure}{\textwidth}
        \centering
        \includegraphics[width=0.7\linewidth]{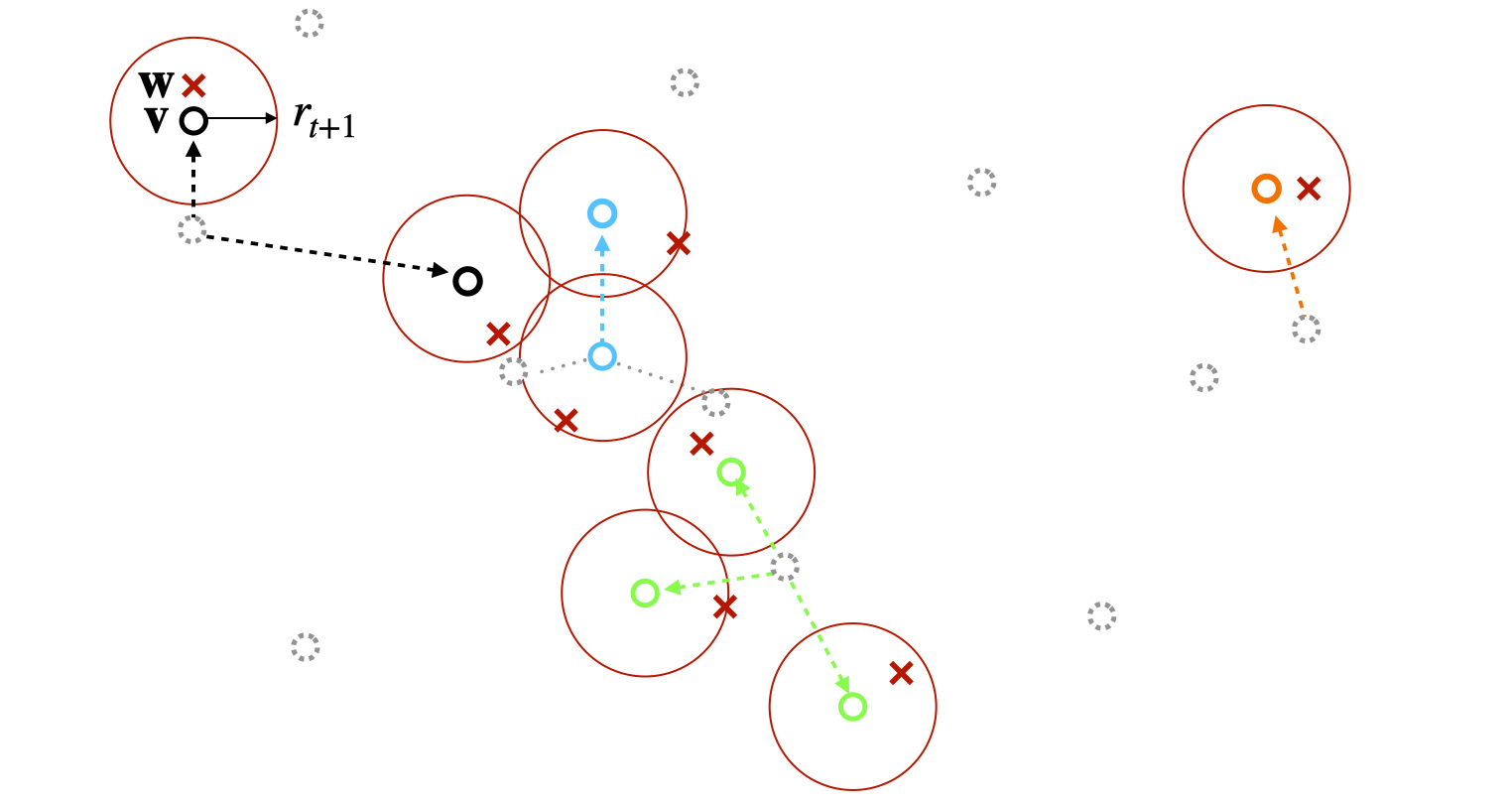}
        \caption{There might be many candidate vectors that are close to each other, covering the same hidden weight vector. After the $\mathsf{Prune}$ procedure, the candidate vectors are separated (the pruned vector is also marked in gray). Yet, we guarantee that each hidden weight vector is still covered by the candidate vectors after pruning.}
\end{subfigure}

    \caption{The update procedure of the \nameref{alg:cover-and-prune} algorithm. $\times$ denotes the hidden weight, $\bullet$ denotes the candidate vector at step $t$, $\circ$ denotes the vectors generated after step $t$. Only the hidden weight vectors with not-too-small coefficients are displayed.}
    \label{fig:one_step-intro}
\end{figure}

\subsection{Related Work}\label{ssec:related}

\new{Our work belongs to the literature on learning structured multi-index models, and we review related work within this context. }

\paragraph{Learning positive linear combinations of ReLUs.}
The most directly related works are
\cite{DKKZ20, diakonikolas2020small, diakonikolas2025implicit},
whose complexity guarantees were discussed earlier in the introduction.
The algorithm of~\cite{DKKZ20} uses the degree-$2$ Chow parameters
of the target to find a low-dimensional subspace sufficient for
prediction, and then searches over candidate networks in that subspace.
The work of~\cite{diakonikolas2020small} uses higher-degree moments
to find a vector space of polynomials that nearly vanish on the
hidden weight vectors.
Its structural result gives a small cover for the near-zero set
of these polynomials, which is then used to construct a hypothesis.
The work of~\cite{diakonikolas2025implicit} estimates high-order
moment tensors implicitly, without writing down their entries,
and uses these estimates to approximate the target's Hermite expansion.
This generalizes the implicit moment-estimation technique of
Li and Liu~\cite{LL22}, developed for clustering separated mixtures
of spherical Gaussians.

\paragraph{Learning general linear combinations of ReLUs.}
PAC learning general linear combinations of ReLUs, with both positive
and negative coefficients, has also been studied.
Without structural assumptions on the hidden weight vectors,
\cite{ChenDGKM23} gave the first algorithm with running time
polynomial in $d$ and $1/\eps$ for every fixed $k$.
Subsequent work improved the running time to
$(dk/\eps)^{O(k^2)}$~\cite{ChenN24} and
$(dk/\eps)^{O(k)}$~\cite{DK23-schur}.
These algorithms handle a broader class of functions than the one studied here,
but their complexity scales exponentially in the number of hidden units.\footnote{
Here we focus on learning from random labeled examples.  
If query access to the function is allowed, \cite{ChenKM21} gives a learner 
for one-hidden-layer ReLU networks with general coefficients and biases 
whose query complexity and running time are polynomial in dimension, the number of hidden
units, inverse accuracy, and bounds on the weights and biases.}

\paragraph{Learning multi-index models and deeper networks.}
The work of~\cite{CKM22} gives fixed-parameter learning algorithms
for homogeneous ReLU networks of arbitrary depth under Gaussian
inputs, with polynomial dependence on the ambient dimension but
exponential dependence on network size and inverse accuracy.
Recent work on multi-index models gives learning guarantees for
more general classes of functions that depend only on an unknown
low-dimensional subspace~\cite{DIKZ25, DIKR25, DLB25}. 
Similarly, the latter algorithms have polynomial dependence on the ambient dimension 
but exponential dependence on the other relevant parameters.

\paragraph{Parameter learning under nondegeneracy assumptions.}
A number of earlier works focus on parameter learning, namely recovering
the parameters of the data-generating network.
Examples include tensor-based methods~\cite{Janz15} and algorithms
combining initialization with local optimization~\cite{ZhongS0BD17}.
Related tensor methods were developed for mixtures of generalized
linear models in~\cite{SedghiJA16}.
For positive linear combinations of ReLUs under Gaussian inputs,
\cite{GeLM18} gives a parameter-learning algorithm assuming a
full-rank hidden-weight matrix, with complexity depending polynomially
on its condition number.
\cite{BakshiJW19, GeKLW19} give parameter-recovery
algorithms for vector-valued two-layer networks under rank and
related nondegeneracy assumptions.
The work of~\cite{ATV21} develops tensor-based algorithms for
scalar-output ReLU networks with biases and general coefficients.
In contrast, we focus on small prediction error rather than parameter
recovery, with complexity independent of separation, rank, 
condition-number\new{, or any other parameters requiring strong assumptions about the configuration of the hidden weight vectors}.

\paragraph{Computational lower bounds.}
For general linear combinations of ReLUs,
\cite{DKKZ20, GoelGJKK20} establish super-polynomial lower bounds
in the correlational statistical-query (CSQ) model~\cite{BshoutyFeldman:02,Feldman08}.
This is a restriction of the Statistical Query (SQ) model~\cite{Kearns93}, 
in which the learner accesses the target through
estimates of correlations with query functions; see
also~\cite{BFJ+:94, VempalaW19}.
For networks with two hidden layers, \cite{ChenGKM22} establishes
general SQ and cryptographic hardness
even in the noiseless setting.
More recently, \cite{LZZ24} proves cryptographic hardness for
one-hidden-layer ReLU networks with both positive and negative
coefficients and biases, in the presence of small independent
Gaussian label noise.
Whether general linear combinations of ReLUs without biases can be
learned in fully-polynomial time under Gaussian inputs remains open.
Since our algorithm can be implemented in the CSQ model, 
the CSQ lower bounds above rule out an immediate 
extension to general coefficients.

\subsection{Concurrent Work and AI Use}
All ideas and technical arguments in this paper are the intellectual ownership of the authors, who have worked on developing them over the past 2+ years. In the process of writing the paper, OpenAI GPT models were used to improve the presentation (including generating Figures 1 and 2), audit the proofs, and simplify some of the statements. We note that 
a very recent paper \cite{zheng2026polynomial} appeared on the arXiv, 
claiming essentially the same main result. 
Rather curiously, that work is quoting some of the exact same core techniques as our work, 
but is lacking in most of the proof details. The paper \cite{zheng2026polynomial} further 
uses some of the same non-standard terminology appearing in our work 
(such as ``refine,'' ``prune,'' ``cover,'' etc.) and even some of the same universal 
constants chosen lazily by the authors to be smaller than needed. 
As a result, we were forced to rush posting of this work, before we were entirely 
satisfied with its presentation, but with complete technical details included. 

\subsection{Paper Organization}
The paper is structured as follows:
In \Cref{sec:prelims}, we record the notation and background required in our technical sections.
\Cref{sec:main-alg,sec:opt,sec:PCA} provide a detailed description and analysis of our algorithm.

\section{Notation and Preliminaries}\label{sec:prelims}

\paragraph{Basic Notation.} 
For $n \in \Z_+$, we denote $[n] \eqdef \{1, \ldots, n\}$. We use bold lowercase letters for vectors
and bold uppercase letters for matrices and tensors. For $\bx \in \R^d$ and $i \in [d]$, $(\x)_i$ denotes the
$i^{\text{th}}$ coordinate of $\bx$, and $\|\bx\|_2 \eqdef (\littlesum_{i=1}^d (\x)_i^2)^{1/2}$ denotes the
$\ell_2$ norm of $\bx$.
We write $\bx \cdot \by$ for the standard inner product of $\bx,\by\in\R^d$ 
and $\theta(\bx,\by):= \arccos(\bx \cdot \by/(\norm{\bx}_2\norm{\by}_2))$ 
when $\bx \neq \vzero$ and $\by \neq \vzero$.
We denote $\theta_\pm(\bx,\by)\eqdef \min\{\theta(\bx,\by), \pi - \theta(\bx,\by)\}$. 

For unit vectors $\bu,\bv$, we use $\bu^{\perp \bv}$ to denote the component of $\bu$ 
orthogonal to $\bv$, i.e., 
$\bu^{\perp \bv} = (\vec I - \bv\bv^\top)\bu$. For any vector $\bu\neq \vec 0$, 
we use $\wh{\bu}$ to denote its normalization to unit $\ell_2$ norm. 
Given $\bv\in\R^d$, we use $\bv^{\perp}$ to denote the linear subspace of vectors 
orthogonal to $\bv$, i.e.,  $\bv^{\perp} = \{\bu\in\R^d:\bu\perp\bv\}$. 
We denote the projection from $\R^d$ to $\bv^{\perp}$ by $\mP_{\perp\bv} = \mI - \bv\bv^\top$.
More generally, for a linear subspace $\sF$ of $\R^d$, 
the projection from $\R^d$ to $\sF$ is denoted by $\mP_{\sF}$. 

Throughout the paper, $\Sp^{d-1}$ denotes the unit sphere in $\R^d$. 
We use $\1\{\cE\}$ to denote the indicator of an event $\cE$.
$\sigma(z) \eqdef \max\{0,z\}$.  
\new{
We write $E\gtrsim F$
for two nonnegative expressions $E$ and $F$ to denote that
there exists some universal constant $c>0$ (independent
of the variables or parameters on which $E$ and $F$ depend)
such that $E \geq c F$.
The expression $E\lesssim F$ is similarly defined.
We use the notation $g(t) \leq \poly(t)$ for a quantity $t \geq 1$
to indicate that there exist constants $c,C>0$ such that
$g(t) \leq Ct^c$.
Similarly, we use $g(t) \geq \poly(t)$ for a quantity $t<1$
to denote that there exist constants $c,C>0$ such that
$g(t) \geq Ct^c$.
} Finally, we use standard asymptotic notation $O(\cdot)$, $\Theta(\cdot)$, $\Omega(\cdot)$, 
and $\widetilde O(\cdot)$, $\widetilde \Omega(\cdot)$ 
(the latter two suppress polylogarithmic  factors in the argument).

\paragraph{Complex Space.} 
We use $\mathbb{C}^d$ to denote the $d$-dimensional complex vector space. 
For $\bz\in\mathbb{C}^d$, $\bz = \x + \im\by$ where $\im$ is the imaginary unit, 
and $\bar{\bz}= \x - \im\by$ is the complex conjugate of $\bz$. We use $\Real(\bz)$ to denote the real part of $\bz$ and $\mathrm{Im}(\bz)$ to denote its imaginary part.
For a complex number $z = a+\im b$, recall that $|z|^2 = a^2 + b^2 = z\bar{z}$.

\paragraph{Matrices and Tensors.}
For a matrix $\mM\in\R^{d\times d}$, we use $\|\mM\|_2$ to denote its 
spectral/operator norm. For a symmetric matrix $\mM$, $\lambda_i(\mM)$ 
denotes its $i^{\text{th}}$ largest eigenvalue and $\br_i(\mM)$ denotes the corresponding 
(unit) eigenvector, $i\in[d]$. \new{$\mathrm{tr}(\mM)$} denotes the trace of a matrix. Our analysis makes use of the following standard fact.
\begin{fact}[Wedin's Theorem]\label{fct:wedin}
    For any pair of symmetric matrices $\mM_1,\,\mM_2 \in \R^{d\times d}$, if $\lambda_1(\mM_1) - \lambda_2(\mM_1) - \|\mM_1 - \mM_2\|_2> 0$, then
    \begin{align*}
        \sin(\theta(\br_1(\mM_1), \br_1(\mM_2)))\leq \frac{\|\mM_1 - \mM_2\|_2}{\lambda_1(\mM_1) - \lambda_2(\mM_1) - \|\mM_1 - \mM_2\|_2}.
    \end{align*}
\end{fact}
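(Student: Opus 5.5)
The statement to prove is Wedin's Theorem (\Cref{fct:wedin}) for symmetric matrices. It bounds the angle between the top eigenvectors of $\mM_1$ and $\mM_2$. The plan is a direct eigen-perturbation argument in the Davis--Kahan style, specialized to the top eigenvector.

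Set $\mE = \mM_2 - \mM_1$, $\lambda_i = \lambda_i(\mM_1)$, $\br_1 = \br_1(\mM_1)$ and $\bu = \br_1(\mM_2)$, with $\mu = \lambda_1(\mM_2)$. Decompose $\bu = \cos\theta\,\br_1 + \sin\theta\,\bs$, where $\bs$ is a unit vector orthogonal to $\br_1$ and $\theta=\theta(\br_1,\bu)$; choose the sign of $\bu$ so that $\cos\theta\ge 0$.

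First, control $\mu$ from below. Since $\mu$ is the top eigenvalue of $\mM_2$, Weyl's inequality (the variational characterization) gives
\[
\mu \ge \br_1^\top\mM_2\br_1 \ge \lambda_1-\|\mE\|_2 .
\]

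Second, project the eigen-equation $\mM_2\bu=\mu\bu$ onto $\br_1^{\perp}$, using $\mP = \mI-\br_1\br_1^\top$. Because $\mM_1$ is symmetric and commutes with $\mP$, we get
\[
(\mu\mI - \mM_1)\mP\bu = \mP\mE\bu .
\]
On the range of $\mP$, the matrix $\mM_1$ has eigenvalues at most $\lambda_2$. Hence $\mu\mI-\mM_1$ restricted there has smallest singular value at least $\mu-\lambda_2 \ge \lambda_1-\lambda_2-\|\mE\|_2$, which is positive by hypothesis.

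Combining the two steps,
\[
\sin\theta=\|\mP\bu\|_2 \le \frac{\|\mP\mE\bu\|_2}{\mu-\lambda_2}\le \frac{\|\mE\|_2}{\lambda_1-\lambda_2-\|\mE\|_2}.
\]
This is exactly the claimed bound, with $\|\mE\|_2=\|\mM_1-\mM_2\|_2$.

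The main point needing care is the lower bound on $\mu-\lambda_2$, i.e., the Weyl step and the restriction to $\br_1^{\perp}$. If $\lambda_1$ has multiplicity greater than one, then $\lambda_1=\lambda_2$ and the hypothesis fails, so $\br_1$ is well defined whenever the bound applies. If $\lambda_2$ is negative with larger absolute value, the argument still goes through, because we only need $\mu-\lambda_i\ge \mu-\lambda_2$ for all $i\ge2$, which holds by the ordering of the eigenvalues.
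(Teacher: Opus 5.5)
The paper gives no proof of this statement; it is quoted as a standard fact, so there is no argument of the paper's to compare with. Your proposal supplies a correct, self-contained proof: the Davis--Kahan $\sin\theta$ argument specialized to the top eigenvector.

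All three steps check out:
\begin{itemize}
\item the Rayleigh-quotient bound $\lambda_1(\mM_2)\ge \br_1(\mM_1)^\top\mM_2\,\br_1(\mM_1)\ge\lambda_1(\mM_1)-\|\mM_1-\mM_2\|_2$;
\item the commutation of $\mM_1$ with the projector onto $\br_1(\mM_1)^{\perp}$, a subspace that is $\mM_1$-invariant with spectrum $\lambda_2(\mM_1),\dots,\lambda_d(\mM_1)$;
\item the lower bound $\lambda_1(\mM_2)-\lambda_i(\mM_1)\ge\lambda_1(\mM_2)-\lambda_2(\mM_1)>0$ for all $i\ge 2$, which makes the restricted operator invertible with the singular-value bound you state.
\end{itemize}

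Your route in fact gives the slightly sharper bound $\sin\theta\le\|\mM_1-\mM_2\|_2/(\lambda_1(\mM_2)-\lambda_2(\mM_1))$. It also proves the statement exactly as written (algebraic eigenvalue ordering, possibly indefinite matrices) without passing through singular vectors, as Wedin's theorem proper does.

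One point worth making explicit: the hypothesis does not force $\lambda_1(\mM_2)$ to be simple. Weyl only gives $\lambda_1(\mM_2)-\lambda_2(\mM_2)\ge\lambda_1(\mM_1)-\lambda_2(\mM_1)-2\|\mM_1-\mM_2\|_2$, which can be nonpositive. So $\br_1(\mM_2)$ may be ambiguous. This causes no trouble, because your argument uses only $\mM_2\bu=\lambda_1(\mM_2)\bu$, so the bound holds for every unit top eigenvector of $\mM_2$.
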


A complex tensor $\mW\in(\mathbb{C}^{d})^{\ot p}$ of dimension $d$ and order $p$ 
is a multilinear map defined by a \new{$p$-index} array with complex entries $(\mW)_\alpha$,
where $\alpha = (i_1,\ldots,i_p)$ with $i_1,\dots,i_p \in [d]$.
A tensor $\mW$ is symmetric if $(\mW)_\alpha = (\mW)_{\alpha_\pi}$,
where $\alpha_\pi = (i_{\pi(1)},\ldots,i_{\pi(p)})$,
for any permutation $\pi:[p]\to[p]$.
For complex tensors $\mW,\mT$ of dimension $d$ and order $p$,
we denote by $\langle \mW,\mT\rangle$ their entrywise inner product:
\begin{align*}
    \la \mW,\mT\ra = \sum_{i_1,\dots,i_p} (\overline{\mW})_{i_1,\dots,i_p}(\mT)_{i_1,\dots,i_p} \;.
\end{align*}
For a complex tensor $\mW$, let $\|\mW\|_F^2 = \la \overline{\mW},\mW\ra = \sum_{i_1,\dots,i_p}|\mW_{i_1,\dots,i_p}|^2$
denote its Frobenius norm.
\new{Let $\sU,\sV$ be finite-dimensional complex inner-product
spaces. The Hermitian adjoint of a complex-linear map
$\mW:\sU\to\sV$ is the unique complex-linear map
$\mW^*:\sV\to\sU$ satisfying $\la\mW(\mA),\mB\ra  = \la\mA,\mW^*(\mB)\ra$ for any $\mA\in\sU$ and $\mB\in\sV$.}

We overload the notation $\la\cdot,\cdot\ra$ with partial tensor contraction: let $\mW\in(\mathbb{C}^{d})^{\ot(p+m)}$, $\mT\in(\mathbb{C}^{d})^{\ot p}$, $\la\mW,\mT\ra$ is a tensor in $(\mathbb{C}^{d})^{\ot m}$ defined by
\begin{align*}
    (\la\overline{\mW},\mT\ra)_{j_1,\dots,j_m} = \sum_{i_1,\dots,i_p} (\overline{\mW})_{i_1,\dots,i_p,j_1,\dots,j_m}(\mT)_{i_1,\dots,i_p}.
\end{align*}
Let $\mathrm{Sym}(\mW)$ be the symmetrization operator: \begin{align*}
    (\mathrm{Sym}(\mW))_{i_1,\dots,i_p} = \frac{1}{p!} \sum_{\text{permutation $\pi$ of $[p]$}}(\mW)_{i_{\pi(1)},\dots,i_{\pi(p)}}.
\end{align*}

\new{
Identify an order-$p$ tensor $\mT$ with the vector
$\bt=\operatorname{vec}(\mT)$ of its entries.
Its covariance is the Hermitian positive semidefinite matrix
    $\mathbf{Cov}(\mT)
    =
    \E[
        (\bt-\E[\bt])(\bt-\E[\bt])^*],$
and its uncentered second moment is $\E[\bt\bt^*]$.
We say that its covariance is bounded above by $V$ if $\mathbf{Cov}(\mT)\preceq V\mI$, which is equivalent to $\sup_{\|\mW\|_F = 1}\E[|\la\mW,\mT - \E[\mT]\ra|^2]\leq V$.
For real-valued tensors, these definitions reduce to the usual
real covariance and second-moment matrices after flattening.
}

\paragraph{Gaussian Space and Hermite Analysis.} 
Let $\normal(\boldsymbol\mu,\vec \Sigma)$ denote the $d$-dimensional Gaussian distribution with mean
$\boldsymbol\mu\in \R^d$ and covariance $\vec\Sigma\in\R^{d\times d}$.
We will primarily use the standard normal distribution $\normal(\vec 0,\mI_d)$, denoted simply by $\normal$.
The standard Gaussian distribution on a linear subspace $\sF$ of $\R^d$ is denoted by $\tx\sim\calN(\vec 0,\mI_{\sF})$ and  defined via $\tx = \mP_\sF\x$, $\x\sim\calN(\vec 0, \mI_d)$.
The Gaussian $L_2$ inner product is $\inner{f}{g} \eqdef \E_{\x\sim\normal}[f(\x)g(\x)]$.
For $f\in L_2(\normal)$, define $\|f\|_{L_2}\eqdef (\E_{\x\sim\normal}[f(\x)^2])^{1/2}$.

\begin{fact}[Gaussian Hypercontractivity; see, e.g.,~\cite{ODonnell:BFA}]\label{fct:gaussian-hypercontractivity}
    For any real polynomial $f(z)$ of degree at most $p$ and \new{real $q\geq 2$}, it holds that $(\E_{z\sim\calN}[|f(z)|^q])^{1/q}\leq (q-1)^{p/2} (\E_{z\sim\calN}[f(z)^2])^{1/2}$.
\end{fact}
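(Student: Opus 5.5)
The claim is a one-dimensional moment bound. I would derive it from the classical Bonami--Beckner (Ornstein--Uhlenbeck) hypercontractive inequality followed by a Hermite expansion.

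\textbf{Setup.} Write $f(z)=\sum_{j=0}^{p} c_j\,\hep_j(z)$, where $\hep_j$ are the normalized probabilist's Hermite polynomials. These form an orthonormal basis of $L_2(\calN)$, so $\E[f(z)^2]=\sum_j c_j^2$. Let $T_\rho$ denote the Ornstein--Uhlenbeck semigroup, which acts diagonally by $T_\rho \hep_j=\rho^j\hep_j$. The hypercontractive inequality states that for $1<r\le q<\infty$ and $0\le\rho\le\sqrt{(r-1)/(q-1)}$,
\[
\|T_\rho g\|_{L_q}\le\|g\|_{L_r}.
\]
I would take this as known, for instance via Nelson's theorem or by the two-point inequality plus the central limit theorem, as in~\cite{ODonnell:BFA}.

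\textbf{Main step.} Choose $r=2$ and $\rho=1/\sqrt{q-1}$, and define $g=\sum_j \rho^{-j}c_j\hep_j$, so that $T_\rho g=f$. Hypercontractivity then gives
\[
\|f\|_{L_q}\le\|g\|_{L_2}=\Bigl(\sum_{j\le p}(q-1)^{j}c_j^2\Bigr)^{1/2}.
\]
Since $q\ge 2$ we have $q-1\ge1$, so $(q-1)^j\le (q-1)^p$ for every $j\le p$. Therefore
\[
\|f\|_{L_q}\le (q-1)^{p/2}\Bigl(\sum_j c_j^2\Bigr)^{1/2}=(q-1)^{p/2}\|f\|_{L_2},
\]
which is the statement. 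The only substantive input is the hypercontractive inequality itself. If it had to be proved from scratch, I would prove the two-point (Boolean) version by induction on dimension and then pass to Gaussians through the CLT applied to $(x_1+\dots+x_n)/\sqrt n$, using that polynomials of bounded degree are uniformly integrable. That derivation is standard, so I would cite it rather than reproduce it.
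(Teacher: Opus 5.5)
Your argument is correct and is exactly the standard derivation (Nelson's hypercontractivity $\|T_{1/\sqrt{q-1}}\, g\|_{L_q}\le\|g\|_{L_2}$ applied to $g=\sum_j (q-1)^{j/2}c_j\hep_j$, then $(q-1)^j\le(q-1)^p$) found in the O'Donnell reference the paper cites; the paper gives no proof of its own and simply invokes that source. The paper actually applies this fact to multivariate polynomials, e.g.\ the degree-$t$ polynomial $a(\x,\bz)$ in $2d$ Gaussian variables, not only in one dimension. Your proof extends verbatim to that setting via the product Hermite basis: $T_\rho$ scales each degree-$|\alpha|$ basis element by $\rho^{|\alpha|}$, and hypercontractivity is dimension-free.
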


We now proceed to define Hermite polynomials and the associated orthonormal basis they define.
In one dimension, define the normalized probabilists' Hermite polynomial
\begin{equation} \label{eqn:hermite}
\hep_p(z)\;\eqdef\;\frac{1}{\sqrt{p!}}(-1)^p \exp(z^2/2)\,\frac{\diff^p}{\diff z^p}\exp(-z^2/2).
\end{equation}
The family $\{\hep_p\}_{p\ge 0}$ is an orthonormal basis for $L_2(\normal(0,1))$.
Thus, any $f:\R\to\R$ with $f\in L_2(\normal(0,1))$ has a unique Hermite expansion
\[
f(z)= \sum_{\new{p} \ge 0} \widehat f_p\,\hep_p(z),\qquad
\widehat f_p\eqdef \E_{z\sim\normal(0,1)}[f(z)\hep_p(z)].
\]

\new{We make use of a number of basic properties of Hermite polynomials summarized below.}

\begin{fact}[Basic Hermite Facts, see, e.g.,~\cite{Bog:98,ODonnell:BFA}]\label{fact:hermite-poly}
For normalized Hermite polynomials $\hep_p$, as defined in (\ref{eqn:hermite}), 
the following hold: 
\begin{enumerate}[leftmargin=*]
\item (Differentiation) $(\hep_p(z))' = \sqrt{p} \, \hep_{p-1}(z)$, for $p \geq 1$.
\item (Degree Identity) $z \hep_p(z)=\sqrt{p+1} \, \hep_{p+1}(z)+\sqrt{p} \, \hep_{p-1}(z)$, 
for $p\geq 2$.
\item (Smoothing) Let $f\in L_2(\calN(0,1))$, $\rho\in[0,1]$ and denote 
$\mathrm{T}_\rho(f)(x) \eqdef \E_{z\sim\calN(0,1)}[f(\rho x + \sqrt{1 - \rho^2}z)]$. Then the smoothed function $\mathrm{T}_\rho(f)(x)$ has Hermite expansion:
\begin{align*}
    \mathrm{T}_\rho(f)(x) = \sum_{p\geq 0} \rho^p \wh{f}_p \hep_p(x).
\end{align*}
\item (Hermite expansion in one specific \new{direction}) Let $f\in L_2(\calN)$
and 
$\bu,\bv$ be unit vectors in $\R^d$. Then:
\begin{align*}
    \E_{\x\sim\calN(\vec 0,\mI_d)}[f(\bv\cdot\x)\hep_p(\bu\cdot\x)] = \wh{f}_p \cos^p(\theta(\bv,\bu)).
\end{align*}
\end{enumerate}
\end{fact}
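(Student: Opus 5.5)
The statement just given is Fact~\ref{fact:hermite-poly}, and I would prove its four items in order. Item~1 (differentiation) is the first step. I would use the generating function $\exp(tz - t^2/2)=\sum_{p\ge0} \frac{t^p}{\sqrt{p!}}\hep_p(z)$, which follows from the Rodrigues formula~\eqref{eqn:hermite} by Taylor expanding $\exp(-(z-t)^2/2)$ in $t$. Differentiating this identity in $z$ gives $t\exp(tz-t^2/2)$. Comparing the coefficients of $t^p$ then yields $\hep_p'/\sqrt{p!}=\hep_{p-1}/\sqrt{(p-1)!}$, which is $\hep_p'=\sqrt p\,\hep_{p-1}$.

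Item~2 (degree identity) comes from the same generating function, this time differentiated in $t$. This gives $(z-t)\exp(tz-t^2/2)=\sum_p \frac{p\,t^{p-1}}{\sqrt{p!}}\hep_p$. Matching the coefficients of $t^p$ gives $z\hep_p/\sqrt{p!} - \hep_{p-1}/\sqrt{(p-1)!} = (p+1)\hep_{p+1}/\sqrt{(p+1)!}$. Multiplying through by $\sqrt{p!}$ gives $z\hep_p=\sqrt{p+1}\,\hep_{p+1}+\sqrt p\,\hep_{p-1}$. The identity in fact holds for all $p\ge1$, so in particular for the stated range $p\ge 2$.

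Item~3 (smoothing) reduces to computing the Hermite coefficients of $\mathrm T_\rho f$. By linearity and $L_2$-continuity of $\mathrm T_\rho$ (Jensen gives $\|\mathrm T_\rho f\|_{L_2}\le\|f\|_{L_2}$), it suffices to check $\mathrm T_\rho \hep_p=\rho^p\hep_p$. I would apply $\mathrm T_\rho$ to the generating function. Using $\E_z[e^{t\sqrt{1-\rho^2}z}]=e^{t^2(1-\rho^2)/2}$, this gives
\[
\mathrm T_\rho\bigl[e^{t\,\cdot\,-t^2/2}\bigr](x)=e^{t\rho x - t^2/2 + t^2(1-\rho^2)/2}=e^{(\rho t)x-(\rho t)^2/2}.
\]
Comparing the coefficients of $t^p$ gives $\mathrm T_\rho\hep_p=\rho^p\hep_p$. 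The interchange of sum and expectation is justified by absolute convergence, which is the only technical point in this item.

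Item~4 (one-dimensional projection) is the main step. Let $\rho=\cos\theta(\bv,\bu)=\bu\cdot\bv$. I would decompose $\bv\cdot\x=\rho\,(\bu\cdot\x)+\sqrt{1-\rho^2}\,z$, where $z$ is a standard Gaussian independent of $\bu\cdot\x$. Conditioning on $s=\bu\cdot\x$ gives $\E[f(\bv\cdot\x)\mid s]=\mathrm T_\rho f(s)$ when $\rho\ge0$. Therefore
\[
\E[f(\bv\cdot\x)\hep_p(\bu\cdot\x)]=\E_s[\mathrm T_\rho f(s)\hep_p(s)]=\rho^p\wh f_p,
\]
by item~3 and orthonormality. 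When $\rho<0$, I would substitute $s\to -s$ and use the parity $\hep_p(-s)=(-1)^p\hep_p(s)$, which is immediate from~\eqref{eqn:hermite}. This again gives $\rho^p\wh f_p=\wh f_p\cos^p\theta$.

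The only real obstacle is the analytic justification of swapping sums and expectations for general $f\in L_2$. I would handle it by proving each identity for polynomials and passing to the limit by density and the contraction property of $\mathrm T_\rho$.
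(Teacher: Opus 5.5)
Your proof is correct: the generating function $\exp(tz-t^2/2)=\sum_{p\ge 0} t^p\hep_p(z)/\sqrt{p!}$ gives items 1--3 by coefficient matching, and the interchange of sum and expectation in item 3 is legitimate because the series is dominated by $e^{|t||y|+t^2/2}$. For item 4, the decomposition $\bv\cdot\x=\rho\,(\bu\cdot\x)+\sqrt{1-\rho^2}\,z$ together with the parity $\hep_p(-s)=(-1)^p\hep_p(s)$ handles either sign of $\rho=\bu\cdot\bv$. The paper gives no proof of this fact and only cites standard references (Bogachev; O'Donnell); your argument is the standard textbook derivation that it relies on.
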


Throughout the paper, we denote by $\wh{\sigma}_i$ 
the $i^{\text{th}}$ Hermite coefficient of the ReLU function $\sigma$. 

\smallskip 

These Hermite coefficients admit the following characterization.

\begin{fact}[Hermite expansion of ReLU; see, e.g.,~\cite{GGK20, DK23-schur}]\label{fct:hermite-relu}
For $\sigma(z) \eqdef \max\{0, z\},$ 
we have $\sigma(z)=\sum_{p=0}^{\infty} \wh{\sigma}_p\hep_p(z)$, 
where $\wh{\sigma}_0 = 1/\sqrt{2\pi}$, $\wh{\sigma}_1 = 1/2$, and
\[
\wh{\sigma}_{2p-1}=0\quad \textrm{for all }  p\ge 2,\qquad
\wh{\sigma}_{2p}=\frac{\hep_{2p-2}(0)}{\sqrt{2\pi(2p)(2p-1)}}\quad \textrm{for all } p \ge 1.
\]
In particular, $|\wh{\sigma}_{2p}|=\Theta(p^{-5/4})$ and $\wh{\sigma}_{2p}^2=\Theta(p^{-5/2})$ for $p\ge 1$.
Moreover, if we formally differentiate the Hermite series, we obtain
$\sigma'(z)=\sum_{p\ge 0} \wh{\sigma}'_{p}\hep_p(z)$ with $\wh{\sigma}'_p=\wh{\sigma}_{p+1}\sqrt{p+1}$.
\end{fact}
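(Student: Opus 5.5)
We prove the expansion of ReLU in the normalized Hermite basis by computing $\wh{\sigma}_p=\E_{z\sim\calN(0,1)}[\sigma(z)\hep_p(z)]$ directly. We handle the low orders separately, then use the Rodrigues form \eqref{eqn:hermite} to reduce the general coefficient to a boundary term at $z=0$.

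\emph{Low orders.} For $p=0$ we have $\wh{\sigma}_0=\int_0^\infty z\phi(z)\,dz=\phi(0)=1/\sqrt{2\pi}$, where $\phi$ is the standard Gaussian density. For $p=1$, $\wh{\sigma}_1=\int_0^\infty z^2\phi(z)\,dz=1/2$ by symmetry of $z^2\phi$.

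\emph{General coefficient.} Write $\hep_p(z)\phi(z)=\frac{(-1)^p}{\sqrt{p!}}\phi^{(p)}(z)$, which is exactly \eqref{eqn:hermite}. For $p\ge2$ we get
\[
\wh{\sigma}_p=\frac{(-1)^p}{\sqrt{p!}}\int_0^\infty z\,\phi^{(p)}(z)\,dz .
\]
Integrate by parts twice, using that $\phi$ and all its derivatives vanish at infinity. The first step gives
\[
\int_0^\infty z\phi^{(p)}(z)\,dz=\bigl[z\phi^{(p-1)}\bigr]_0^\infty-\int_0^\infty\phi^{(p-1)}(z)\,dz=\phi^{(p-2)}(0).
\]
Now use $\phi^{(p-2)}(0)=(-1)^{p-2}\sqrt{(p-2)!}\,\hep_{p-2}(0)\phi(0)$. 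This yields
\[
\wh{\sigma}_p=\sqrt{\tfrac{(p-2)!}{p!}}\,\frac{\hep_{p-2}(0)}{\sqrt{2\pi}}=\frac{\hep_{p-2}(0)}{\sqrt{2\pi p(p-1)}},
\]
which is the stated formula when $p$ is even. Since $\hep_{p-2}$ has the parity of $p-2$, $\hep_{p-2}(0)=0$ for odd $p\ge3$, so $\wh{\sigma}_{2p-1}=0$ for $p\ge2$.

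\emph{Asymptotics.} This is the only nontrivial step. We use the closed form of the unnormalized Hermite polynomial at zero, $He_{2m}(0)=(-1)^m(2m-1)!!$. It follows from the recurrence $He_{n+1}(z)=zHe_n(z)-nHe_{n-1}(z)$ evaluated at $z=0$, which is the degree identity in \Cref{fact:hermite-poly} written in normalized form. Normalizing gives
\[
|\hep_{2m}(0)|=\frac{(2m-1)!!}{\sqrt{(2m)!}}=\frac{\sqrt{(2m)!}}{2^m m!}=\sqrt{\binom{2m}{m}4^{-m}}=\Theta(m^{-1/4}),
\]
where the last step is Stirling's formula, $\binom{2m}{m}4^{-m}=\Theta(m^{-1/2})$. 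Taking $m=p-1$ and dividing by $\sqrt{2\pi(2p)(2p-1)}=\Theta(p)$ gives $|\wh{\sigma}_{2p}|=\Theta(p^{-5/4})$, and squaring gives $\wh{\sigma}_{2p}^2=\Theta(p^{-5/2})$.

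\emph{Convergence and derivative.} The expansion converges in $L_2$ because $\sigma\in L_2(\calN)$ and $\{\hep_p\}$ is an orthonormal basis. For the derivative series, note $\sigma'=\1\{z>0\}\in L_2$. Its $p$-th coefficient is $\E[\sigma(z)\,\hep_p(z)\,z]-\E[\sigma(z)\,\hep_p'(z)]$, by Gaussian integration by parts $\E[g'h]=\E[g(zh-h')]$. Substitute the degree identity for $z\hep_p$ and the differentiation rule $\hep_p'=\sqrt{p}\,\hep_{p-1}$ from \Cref{fact:hermite-poly}; the $\hep_{p-1}$ terms cancel and the coefficient is $\sqrt{p+1}\,\wh{\sigma}_{p+1}$. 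The degree identity is stated only for $p\ge2$, so $p=0,1$ are checked by direct computation. This matches formal termwise differentiation of the series.
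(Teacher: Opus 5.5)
Your proof is correct and complete. The paper gives no proof of this fact; it only cites \cite{GGK20, DK23-schur}, so there is no in-paper argument to match. Your Rodrigues-formula computation with two integrations by parts is the standard self-contained derivation.

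It is also the rigorous form of the heuristic the paper uses later, in the proof of \Cref{claim:expectation-form-of-orthogonal-to-u-target}. There the paper treats $\sigma''$ as $\delta_0$ and writes $\wh{\sigma}''_i=\wh{\sigma}_{i+2}\sqrt{(i+1)(i+2)}$. Taking $i=p-2$ gives $\wh{\sigma}_p=\phi(0)\hep_{p-2}(0)/\sqrt{p(p-1)}$ immediately. Your boundary term $\phi^{(p-2)}(0)$ is exactly this delta evaluation, justified without distributions.

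Two cosmetic points, neither a gap:
\begin{itemize}
\item The recurrence you use to get $He_{2m}(0)=(-1)^m(2m-1)!!$ needs the degree identity at $n=1$. The paper states that identity only for $p\ge 2$, so check $He_2(z)=z^2-1$ directly, as you already do for $p=0,1$ in the derivative part.
\item Your $\Theta(m^{-1/4})$ with $m=p-1$ should be read as $\Theta(p^{-1/4})$ for $m\ge1$. The case $p=1$ ($m=0$) is trivial, since $\hep_0(0)=1$ and $\wh{\sigma}_2=1/\sqrt{4\pi}$.
\end{itemize}
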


Hermite polynomials can be extended to Hermite tensors in $\R^d$. 
Throughout this paper, we will use the following normalized Hermite tensor.

\begin{definition}[Normalized Probabilist's Hermite Tensor] \label{def:herm-tensor}
We define the normalized order-$p$ Hermite tensor in $d$ dimensions as follows: for any $i_1,\dots,i_p\in[d]$,
\begin{equation} \label{eqn:tensor-herm}
    (\he_p(\x))_{i_1,\dots,i_p} \eqdef \frac{1}{\sqrt{p!}}\sum_{\substack{\text{Partitions $P$ of $[p]$} \\ \text{into sets of size 1 and 2}}}\prod_{\substack{\text{all size-2 sets $\{a,b\}$} \\ \text{ in partition $P$}}} -\new{(\mI_d)_{i_a,i_b} }\prod_{\substack{\text{all size-1 sets $\{c\}$} \\ \text{ in partition $P$}}}(\x)_{i_c}\; . \end{equation}
\end{definition}

We gather here important facts about Hermite tensors that are used in our analysis.

\begin{fact}[\new{Properties of Normalized} Hermite Tensor]\label{fct:hermite-tensor}
    Let $\he_p$ be the $p^{\text{th}}$-order \new{normalized} Hermite tensor. 
    The following holds: \begin{enumerate}[leftmargin=*]
\item $\he_p(\x) = \E_{\bz\sim\calN(\vec 0,\mI_d)}[\Real((\x+\im\bz)^{\otimes p})/\sqrt{p!}]$, for all $\x\in\R^d$.
        \item \new{Let $f\in L_2(\calN(0,1))$. For any $\bv \in \spd$, $\E_{\bz\sim\calN(\vec 0,\mI_d)}[f(\bv\cdot\bz)\he_2(\bz)] =\E_{\bz\sim\calN(\vec 0,\mI_d)}[f''(\bv\cdot\bz)]\bv\bv^\top$.}
    \end{enumerate}
\end{fact}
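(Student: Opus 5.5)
The statement to prove is \Cref{fct:hermite-tensor}, which has two parts. I would prove both from the combinatorial definition in \Cref{def:herm-tensor} together with Gaussian integration by parts.

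\textbf{Part 1.} Fix indices $i_1,\dots,i_p$ and expand the product
\[
\prod_{a=1}^p\bigl((\x)_{i_a}+\im(\bz)_{i_a}\bigr)
\]
as a sum over subsets $S\subseteq[p]$. For each $S$, the term is $\im^{|S|}\prod_{a\in S}(\bz)_{i_a}\prod_{c\notin S}(\x)_{i_c}$. Taking the expectation over $\bz\sim\calN(\vec 0,\mI_d)$ and applying Isserlis/Wick's theorem, $\E[\prod_{a\in S}(\bz)_{i_a}]$ vanishes when $|S|$ is odd. When $|S|$ is even, it equals the sum over perfect matchings of $S$ of $\prod_{\{a,b\}}(\mI_d)_{i_a,i_b}$.

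For $|S|=2m$ the prefactor is $\im^{2m}=(-1)^m$, which is real. This reproduces exactly one factor $-(\mI_d)_{i_a,i_b}$ per pair. Summing over $S$ and over matchings of $S$ is the same as summing over partitions of $[p]$ into blocks of size one and two. So the expectation equals $\sqrt{p!}\,(\he_p(\x))_{i_1,\dots,i_p}$.

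The expectation is already real, so taking $\Real$ changes nothing, and dividing by $\sqrt{p!}$ gives the identity. The only care needed is the sign bookkeeping $\im^{2m}=(-1)^m$, and this is routine.

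\textbf{Part 2.} From the definition, $\he_2(\bz)=(\bz\bz^\top-\mI_d)/\sqrt{2}$. I would apply Stein's lemma twice. The first application, in coordinate $j$, gives
\[
\E[f(\bv\cdot\bz)(\bz)_i(\bz)_j]=\E\bigl[\partial_j\bigl(f(\bv\cdot\bz)(\bz)_i\bigr)\bigr]=\E[f'(\bv\cdot\bz)(\bv)_j(\bz)_i]+\delta_{ij}\,\E[f(\bv\cdot\bz)].
\]
Applying Stein's lemma again to the first term gives $\E[f''(\bv\cdot\bz)]\,(\bv)_i(\bv)_j$. The identity term cancels the $\delta_{ij}\E[f]$ contribution. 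This yields
\[
\E[f(\bv\cdot\bz)\he_2(\bz)]=\tfrac{1}{\sqrt2}\,\E[f''(\bv\cdot\bz)]\,\bv\bv^\top.
\]

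This computation produces an extra factor $1/\sqrt2$ relative to the stated identity. I would flag it as a normalization to reconcile: either the stated fact absorbs the constant, or $\E[f'']$ is meant as $\sqrt{2}\,\wh f_2$, via the 1D identity $\E[f(z)\hep_2(z)]=\E[f''(z)]/\sqrt2$.

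\textbf{Main obstacle.} For $f\in L_2$ that is not twice differentiable, such as the ReLU, $f''$ must be read distributionally; for ReLU it is the Dirac delta $\delta_0$, so $\E[f''(\bv\cdot\bz)]=\phi(0)$. I would justify the integration by parts by approximating $f$ with $\mathrm{T}_\rho f$ (\Cref{fact:hermite-poly}, smoothing) and letting $\rho\to1$.

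An alternative route that avoids derivatives is to rotate so that $\bv=\e_1$. The off-diagonal entries and the entries in directions orthogonal to $\bv$ then vanish by independence, leaving only the $(1,1)$ entry, which equals $\E[f(z)\hep_2(z)]$. This makes the identity rigorous for general $f\in L_2$, up to the constant discussed above.
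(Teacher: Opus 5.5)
The paper records this as a standard fact and gives no proof, so there is no route to compare against; your argument is correct. Part 1 is the standard Wick/Isserlis computation and is complete as written. Subsets $S$ of odd size vanish, and for $|S|=2m$ the factor $\im^{2m}=(-1)^m$ supplies exactly one $-(\mI_d)_{i_a,i_b}$ per pair. Hence $\E_{\bz}[(\x+\im\bz)^{\otimes p}]=\sqrt{p!}\,\he_p(\x)$, which is already real.

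In Part 2, the factor $1/\sqrt2$ you found is a genuine error in the statement. Your second proposed reconciliation does not remove it: $\E[f''(z)]=\sqrt2\,\wh f_2$ is simply true, so the left side is $\wh f_2\,\bv\bv^\top=\E[f'']\,\bv\bv^\top/\sqrt2$. Concretely, for $f(z)=z^2$ and $\bv=\e_1$, the left side is $\sqrt2\,\e_1\e_1^\top$ while the right side is $2\,\e_1\e_1^\top$. The identity as stated holds for the unnormalized matrix $\bz\bz^\top-\mI_d$. That is also the form the paper actually uses downstream: the estimator in Line~\ref{line:emprical-matrix-quasi} of $\mathsf{CDRviaSpectral}$ multiplies by $\x^{\perp\bu}\x^{\perp\bu\top}-\mI_{\perp\bu}$ with no $1/\sqrt2$, so the estimator itself has the intended expectation.

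Your rotation argument is the cleanest rigorous route for arbitrary $f\in L_2(\calN(0,1))$. Reduce to $\bv=\e_1$; every entry except $(1,1)$ vanishes by independence, and the $(1,1)$ entry is $\wh f_2$. Here $\E[f''(z)]$ should be understood as $\E[f(z)(z^2-1)]$, which is what Gaussian integration by parts gives and which is finite for every $f\in L_2$. For the ReLU this equals $1/\sqrt{2\pi}$, consistent with $\wh\sigma_2=1/(2\sqrt\pi)$ in \Cref{fct:hermite-relu}.
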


\section{Optimization via Coarse Direction Recovery, Refinement, and Pruning} \label{sec:opt}

In this section, we provide the full technical details of our analysis, assuming access to a coarse 
direction recovery ($\init$) procedure as an oracle; the implementation of $\init$ is presented 
in \Cref{sec:PCA}. Our oracle access to $\init$ is formally stated as the following assumption.

\begin{assumption}[Coarse Direction Recovery]
\label{assum:initialization-algorithm}
Let $d,k\in \Z_+$  and $\lambda,\delta\in(0,1)$.
\new{Given access to a Gaussian sum of $k$ ReLUs distribution on 
$\R^d \times \R$}, 
a Coarse Direction Recovery algorithm $\init(\D,\lambda,\delta)$
draws $N_{\init}(d,k,\lambda,\delta)$ i.i.d.\ samples from $\D$, runs in time $T_{\init}(d,k,\lambda,\delta)$,
and with probability at least $1-\delta$ returns a list
$\cL\subseteq\Sp^{d-1}$ of size at most $L_{\init}(d,k,\lambda,\delta)$ such that: 
for every $\lambda$-heavy component $i \in [k]$, there exists $\br_i \in\cL$ 
satisfying $|\br_i \cdot\vec w_i|\geq 0.99$.
\end{assumption}

At a high level, the $\init$ procedure postulated 
in \Cref{assum:initialization-algorithm} guarantees 
that if we had i.i.d.\ samples from a \new{Gaussian sum of $k$ ReLUs} distribution $\D$, 
then we could efficiently compute 
a not-too-large list $\mathcal L$ of ``candidate'' unit vectors 
such that for each $\lambda$-heavy hidden weight vector $\w_i$ there exists a candidate $\br \in \mathcal{L}$ 
that is ``close'' to $\w_i$ (as measured by the nontrivial correlation 
$|\br \cdot \w_i| \geq 0.99$). The list may also contain  other vectors with no specific guarantee.  
Given the existence of such an algorithm \new{that runs in polynomial time} 
(which we establish in \Cref{sec:PCA}), 
our main result in this section is a polynomial-time learner of an unknown sum of ReLUs, 
as stated in the following theorem.

\begin{theorem}[Direction Covering Algorithm]\label{prop:optimization}
Let $\eps\in(0,1)$, $k, d \in\mathbb \Z_+$, 
and $\D$ be a \new{Gaussian sum of $k$ ReLUs} distribution on $\R^d \times \R$. 
Given access to a \new{procedure} $\init$ 
satisfying \Cref{assum:initialization-algorithm}, 
there exists an algorithm (\nameref{alg:main}) that 
repeatedly calls $\init$ with parameters $\lambda = \poly(\eps/k)$ 
and $\delta=\poly(\eps/k)$ and attains the following performance guarantee: 
it draws $N = \widetilde{O}(d\,\new{\poly(k/\eps)} \, + \,L_\init^3(N_\init + L_\init^3)\poly(k/\eps))$
samples from $\D$, runs in \new{$\poly(d, N, T_\init)$} time and, with probability at least $2/3$,
outputs a hypothesis 
$\widehat y(\x)$ such that 
$ \E_{(\x,y)\sim\D}[(y-\widehat y(\x))^2] \leq\eps.$
\end{theorem}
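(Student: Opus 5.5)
We implement the round-based scheme of \Cref{sec:tech-overview}, taking $\init$ from \Cref{assum:initialization-algorithm} as a black box, and prove by induction that the Algorithm Invariant holds with geometrically shrinking radius. We first fix the heaviness threshold $\eps_0=\lambda=\poly(\eps/k)$, small enough that discarding all components with $\beta_i<\eps_0$ costs at most $\eps/4$ in $L_2^2$. Since $\E[y^2]\le k^2/2$, the per-component costs of angular and coefficient error scale with $k$, so we set the final angular accuracy to $R_T=\poly(\eps/k)$.

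Next we estimate $\g=\E[y\x]$ to accuracy $\poly(\eps/k)$ with $\widetilde O(d\,\poly(k/\eps))$ samples, and reduce to the absolute-value representation \eqref{eq:abs-sum}. The initialization calls $\init(\D,\lambda,\delta)$ and returns a list $\cL_0$ in which every $\eps_0$-heavy $\w_i$ has some $\br_i$ with $\theta_\pm(\br_i,\w_i)\le\arccos(0.99)=:R_0$. We then apply the tester and pruning to this list (below) to get $|\cV_0|\le k$.

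Each round $t$ has three parts. In \textbf{Refine}, for each $\bv\in\cV_t$ we run the rejection-sampling transformation, which reweights samples so that the conditional distribution is again a Gaussian sum of ReLUs with directions $\widetilde\w_i\propto\w_i^{\perp\bv}+\gamma(\bv\cdot\w_i)\bv$. The delicate points are showing that
\begin{itemize}
\item the transformed labels still have the form $\sum\tilde\beta_i\sigma(\widetilde\w_i\cdot\x)$, possibly after subtracting a linear part in $\bv\cdot\x$;
\item the new coefficients $\tilde\beta_i\propto\beta_i\|\overline\w_i\|_2$ keep heaviness $\poly(\eps_0 R_t)$ for every $\w_i$ with $\theta_\pm(\bv,\w_i)\in[R_t/ 4,R_t]$.
\end{itemize}
We then call $\init$ on the transformed samples. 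Choosing $\gamma\ll R_t$ makes $|\br\cdot\widehat{\w_i^{\perp\bv}}|\ge0.9$ for some list element $\br$. A sphere-geometry lemma then shows that for a suitable $\eta\asymp\theta_\pm(\bv,\w_i)$, one of $\proj_{\spd}(\bv\pm\eta\br)$ contracts the angle by a constant factor $\rho_0<1$. The step size is unknown, so we try a geometric grid of $O(\log(k/\eps))$ step sizes. We iterate this three times so that $\rho_0^3\le c\rho$. Vectors already within $R_t/4$ of a target are kept as is, which handles the case where no perpendicular signal exists.

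In \textbf{Test}, we estimate $a_p(\bv)$ from \eqref{eq:tester-population-moment-intro} with $p\asymp\log(k/\eps_0)/(cR_{t+1})^2$. Hypercontractivity (\Cref{fct:gaussian-hypercontractivity}) bounds the variance of $y\,\hep_p(\bv\cdot\x)/\wh\sigma_p$ by $(\text{const})^p\poly(p)$, which is not polynomial. So instead we threshold $\cos^p$ against a level $\eps_0(1-\Theta(\tau R^2))^p$, and we need the gap between the accept and reject levels to be at least $\poly(\eps_0/k)$ relative to the estimation error. The fix I would try first is to replace the variance bound with the $L_2$ norm identity $\E[(y\,\hep_p)^2]\le\E[y^4]^{1/2}\E[\hep_p^4]^{1/2}\le k^2 3^{p}$. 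This is too large, so I would rather test the smoothed form $\sum_q\rho^q a_q$, i.e.\ use the noise-stable kernel $\E[y\,\mathrm{T}_\rho]$ of \Cref{fact:hermite-poly}, which has bounded variance. I expect this tester design to be the main obstacle. \textbf{Prune} is the greedy separation argument from the overview, with $(2(1+2\tau)+1)c\le1$, which yields $|\cV_{t+1}|\le k$.

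After $T=O(\log(k/\eps))$ rounds we solve the convex regression over $\bm\alpha\in[0,1]^m$ by projected SGD or least squares. Standard uniform convergence for $m\le k$ bounded-norm features gives excess risk $\le\eps/4$. The approximation error of the best $\bm\alpha$ is at most $\sum_i\beta_i\||\w_i\cdot\x|-|\bv_i\cdot\x|\|_{L_2}\lesssim k R_T$, where coverage of multiple $\w_i$ by one $\bv$ is handled by summing coefficients, clipped via a looser box. Finally we union bound over all $k\,T\,\poly(L_\init)$ calls, using failure probability $\delta=\poly(\eps/k)$ for each. The number of transformed-sample calls, the list sizes, and the test samples per candidate combine to give the stated $L_\init^3(N_\init+L_\init^3)\poly(k/\eps)$ sample count, where the rejection-sampling acceptance rate contributes $\poly(k/\eps)$ overhead.
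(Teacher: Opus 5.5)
\textbf{The gap is the tester.} Outside the tester, your plan follows the paper: localization by $\mathrm{RS}_{\gamma,\bv}$, calling $\init$ on $\D^{\gamma,\bv}$ with threshold of order $\eps_0\sin(\Theta(R_t))$, three $\pm$ update steps, greedy pruning, and a final convex regression. The step-size grid and the linear-part subtraction are unnecessary but harmless: the angle is known up to a constant factor in the nontrivial case, and $\E[y_\gamma\mid\bz]$ is exactly a ReLU sum. The tester, however, is left unresolved, and both your diagnosis and your fix are wrong.

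\textbf{The variance is polynomial in $p$.} The estimator $\frac{1}{\wh\sigma_p}y\,\hep_p(\bv\cdot\x)$ does not have variance exponential in $p$. The factor $3^p$ comes only from your Cauchy--Schwarz split of $y^2$ from $\hep_p^2$, which pays for $\E[\hep_p^4]$. Compute directly instead. First, $y^2\le k\sum_i(\w_i\cdot\x)^2$. Write $\w_i\cdot\x=\cos\theta_i\,(\bv\cdot\x)+\sin\theta_i\,(\mathbf q_i\cdot\x)$ with $\mathbf q_i\perp\bv$ a unit vector. Independence and the recurrence $z\hep_p(z)=\sqrt{p+1}\,\hep_{p+1}(z)+\sqrt p\,\hep_{p-1}(z)$ give $\E[(\w_i\cdot\x)^2\hep_p(\bv\cdot\x)^2]=(2p+1)\cos^2\theta_i+\sin^2\theta_i$. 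Hence $\E[y^2\hep_p(\bv\cdot\x)^2]\le k^2(2p+1)$. Now combine this with three facts: $\wh\sigma_p^{-2}=O(p^{5/2})$; $\cos^p r\ge e^{-pr^2}$; and $pr^2=O(\log(k/\eps_0)/\tau)$ for $p=2\lceil\log(k/\eps_0)/(\tau r^2)\rceil$ with constant $\tau$, where also $p\le\poly(k/\eps_0)$ since $r\ge\poly(\eps_0/k)$. Chebyshev then shows that $\poly(k/\eps_0)/\delta$ samples estimate $a_p(\bv)$ to within $\tfrac18\eps_0\cos^p r$. That accuracy separates the completeness level $\eps_0\cos^p r$ from the soundness level $k\,e^{-p\tau r^2}\cos^p r\le\tfrac12\eps_0\cos^p r$.

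\textbf{Your smoothing fix cannot work for any $\rho$.} Write $\theta_i=\theta_\pm(\bv,\w_i)$. The even part of your smoothed statistic is $\sum_{q\text{ even}}\rho^q a_q(\bv)=\sum_i\beta_i/(1-\rho^2\cos^2\theta_i)$. Since $1-\rho^2\cos^2\theta=(1-\rho^2)+\rho^2\sin^2\theta$, the ratio of this kernel at angle $r$ to its value at angle $(1+\tau)r$ is at most $\sin^2((1+\tau)r)/\sin^2 r\le(1+\tau)^2$, for every $\rho<1$. Soundness requires that up to $k$ components at angle just above $(1+\tau)r$ cannot outweigh one $\eps_0$-heavy component at angle $r$. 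That needs a ratio of order $k/\eps_0$, so any geometrically smoothed kernel is far too flat. With it, the test loses exactly the soundness guarantee that the pruning step uses to keep $|\cV_{t+1}|\le k$. What you need is a kernel concentrated at degree $p\asymp\log(k/\eps_0)/(\tau r^2)$, which is just $\hep_p$ with the polynomial variance bound above.
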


The pseudocode for our main algorithm is displayed below.

\begin{algorithm}[h]
\caption{$\mathsf{LearnSumReLUs}(d,k,\eps;\D)$}\label{alg:main}
    \begin{algorithmic}[1]
        \State \textbf{Input: } $k,\eps>0$, and access to \new{a Gaussian sum of $k$ ReLUs distribution} $\D$.
        \State \textbf{Output:} a hypothesis $\wh{y}$ such that $\Exy[(\wh{y}(\x)-y)^2]\leq \eps$ with probability at least $2/3$.
        \medskip
\State  $\eps_0  \gets \sqrt{\eps}/(4k)$ , $\delta\gets 1/100$ 
\Comment{$\eps_0$: error threshold of Parameters; $\delta$: failure probability} 
        \State $\wh{\cV} \gets$ \nameref{alg:cover-and-prune} \Comment{Returns a candidate list by repeatedly calling the $\init$ Subroutine. }\State Let $\widehat\cV=\{\bv_1,\ldots,\bv_m\}$ where $m=|\widehat\cV|$. \Comment{Enumerate the elements}
        \State Draw $\Theta(dk^2/\eps^2)$ samples from $\D$ to construct an estimator $\widehat{\bg}$ of $ \Exy[y\x]$.
        \State Draw $\Theta(k^{\new{10}}/\eps^2)$ samples from $\D$ to formulate the empirical distribution $\widehat{\D}$.
        \State $\wh{\bm{\alpha}}\gets\argmin_{\bm{\alpha}\in\{\alpha_i\geq0: \sum \alpha_i\leq k\}}\E_{(\x,y)\sim \widehat{\D}}[(y - \widehat{\bg}\cdot\x - \sum_{i=1}^m \frac{\alpha_i}{2}|\bv_i\cdot\x|)^2]$.  
        \Statex \Comment{Convex Quadratic Program}
\State\textbf{Return: }$\wh{y}(\x) = \widehat{\bg}\cdot\x +\sum_{i\in[m]} (\wh{\alpha}_i/2)|\bv_i\cdot\x|$.
    \end{algorithmic}
\end{algorithm}

The body of this section is devoted to the proof of \Cref{prop:optimization}.
In \Cref{sec:direction-covering-algorithm}, we show that the main subroutine 
of our algorithm, \nameref{alg:cover-and-prune}, successfully covers 
all hidden weight vectors $\w_i$ with non-trivial coefficients.
Then in \Cref{subsec:proof-for-optimization}, we present the proof of \Cref{prop:optimization}.

\subsection{Learning $\lambda$-Heavy Hidden Weight Vectors}\label{sec:direction-covering-algorithm}

Learning $\lambda$-heavy (i.e., nontrivial-coefficient) hidden weight vector described in this 
subsection is the technical core of the paper. We begin this section by first arguing that the $\init$ 
procedure can be used to construct updates to candidate vectors maintained as estimates of the hidden 
vectors $\w_i.$ In particular, in \Cref{subsubsec:boosting-the-directions}, we argue that, given a 
fixed vector $\bv \in \spd$ and a simple to implement rejection sampling procedure (used for selecting 
samples from $\D$), $\init$ procedure can be used to construct a polynomially-sized list $\cL(\bv)$ 
that contains constant-correlation estimates of $\w_i^{\perp \bv}$ for $\lambda$-heavy ReLU 
components. On input $\bv,$ our \nameref{alg:refine} procedure uses $\init$ to construct a list $\cV$ 
of gradient descent-style updates to $\bv$ such that for each $\lambda$-heavy $\w_i$, there exists at 
least one updated vector that contracts the distance between $\bv$ and $\w_i$ by a small constant 
factor. At the end of \nameref{alg:refine}, only a small subset of updated candidate vectors in the 
output list $\cV$ is useful to algorithm progress, that is, to improving the estimation of hidden 
$\lambda$-heavy vectors. The next step in the algorithm is thus to prune the list, after 
\nameref{alg:refine} is applied to all candidate vectors maintained by the algorithm. As a first 
coarse pruning step, \Cref{subsubsec:tester} introduces a testing procedure that can be used to 
identify elements of $\cV$ farther from all $\lambda$-heavy $\w_i$ than a prespecified threshold. As 
a final step, \Cref{subsec:cover-refinement-pruning} introduces \nameref{alg:global-prune} procedure 
that prunes the list further so that only up to $k$ candidate vectors are retained by the algorithm. A 
complete algorithm statement for learning $\lambda$-heavy hidden vectors is then provided in 
\nameref{alg:cover-and-prune} and the main result of the subsection (\Cref{prop:cover-and-prune}) is 
proved.

\subsubsection{Boosting Hidden Vector Estimates}\label{subsubsec:boosting-the-directions}

We begin by arguing that, given any $\bv \in \spd,$ the $\init$ procedure can be called to obtain 
constant-correlation estimates of $\w_i^{\perp \bv}$, for any $\lambda$-heavy component $i \in [k].$ 
These estimates are then used to boost the estimation of $\w_i$, by treating estimates of 
$\w_i^{\perp}$ as gradient descent directions from $\bv.$ Of course, it is not immediately clear that 
$\init$ can be applied here, as \Cref{assum:initialization-algorithm}   requires samples from a 
\new{Gaussian sum of $k$ ReLUs}
distribution with 
$\widehat{\w_i^{\perp\bv}}:= \w_i^{\perp \bv}/\|\w_i^{\perp \bv}\|_2$ as the hidden directions. 

\new{For intuition on how $\init$ could be applied to estimating $\w_i^{\perp \bv}$ and boosting the estimate of $\w_i$, recall first that a sample $(\x, y)$ drawn from our distribution $\D$ satisfies 
$\x\sim\calN(\vec 0,\mI_d)$ and $y= \sum_{i=1}^k \new{\beta}_i\sigma(\vec w_i\cdot\x)$.}
Fix a unit candidate $\bv$ and a target $\w_1$. 
Given $\vec v$, we can express $\vec w_1$ as  
$\vec w_1=\cos(\theta_1) \, \vec v + \sin(\theta_1) \, \widehat{\vec w_1^{\perp \vec v}}$, 
where $\theta_1 = \theta(\w_1,\bv)$.
Consider the samples $(\vx, y) \sim \D$ that lie in a band of width $\gamma$ around the origin, 
namely satisfy $|\vec v\cdot\x|\leq \gamma$. 
Then, it is possible to argue that $\sigma(\vec w_1\cdot\x)\approx \sigma( \sin(\theta_1) \widehat{\vec w_1^{\perp \vec v}}) + O(\gamma)$. 
Thus, \new{when $\gamma \to 0$,}  for such samples we have that 
$y \approx \sum_{i=1}^k \beta_i\sin(\theta(\w_i,\bv))\sigma(\widehat{\vec w_i^{\perp \vec v}}\cdot\x)$, 
which is another sum of ReLUs with coefficients
$b_i = \beta_i \sin(\theta(\w_i,\bv))$ and with hidden weight vectors 
$\widehat{\vec w_i^{\perp \vec v}}$ lying in the subspace $\vec v^{\perp}$. 
Moreover, the component of $\x$ that lies 
in the subspace $\vec v^{\perp}$ 
follows the standard normal distribution 
(the rest is irrelevant, as $y$ depends only on 
${\vec w_i^{\perp \vec v}}\cdot\x = {\vec w_i^{\perp \vec v}}\cdot\x^{\perp \bv}$). 
This means that we can apply $\init$ to this new instance 
(obtained via conditioning) to obtain a new list of unit vectors $\mathcal L$ 
with the following guarantee: there exists $\vec r\in\mathcal L$ such that 
$|\vec r\cdot \widehat{\vec w_1^{\perp \vec v}}|\geq 0.99$. 
\new{Then, updating $\vec v$ along the direction $\br$ with a carefully chosen step-size,}
we obtain a new vector $\vec v'$ such that 
$|\vec v'\cdot\vec w_1|\gg |\vec v\cdot \vec w_1|$, boosting the estimation of $\w_1$. We now formalize this intuition.

\paragraph{Rejection Sampling.}  
\new{For technical reasons,} instead of using a ``hard'' band, i.e., 
conditioning on $|\vec v\cdot\x|\leq \gamma$ for some $\gamma >0$, 
we will use a soft-conditioning method based on rejection sampling. 
Soft-conditioning is more convenient because the marginal distribution 
remains a standard Gaussian while ensuring that the retained samples 
define an instance of \new{the same learning problem} (with different parameters). 
Formally, we use the following rejection sampling procedure:

\begin{fact}[see, e.g., Lemma 4.7 in \cite{DKS18a}]
\label{fact:centered-gaussian-rejection}
Given parameters $\bv\in\Sp^{d-1}$ and $\gamma\in(0,1)$, 
the $(\gamma, \bv)$-rejection procedure \new{$\mathrm{RS}_{\gamma,\bv}$} is defined as follows: 
A vector \new{$\x \in \R^d$} is accepted by \new{$\mathrm{RS}_{\gamma, \bv}$} 
with probability
\begin{equation*}
a_{\gamma,\bv}(\x) \eqdef \exp\left(-\frac{\gamma^{-2}-1}{2}(\bv\cdot\x)^2 \right).
\end{equation*}
Let $\x_\gamma$ denote the sample $\x$ conditioned on being accepted by $\mathrm{RS}_{\gamma, \bv}$. 
Then, \new{for $\x\sim\calN(\vec 0,\mI_d)$,}  we have that 
$\pr_{\x\sim\calN(\vec 0,\mI_d)}[\x \textrm{ is accepted}]=\gamma$
and
$ \x_\gamma \sim \calN\left(\vec 0,\mA_{\gamma,\bv}\right)$ with 
$\mA_{\gamma,\bv} \eqdef \mI_d-(1-\gamma^2)\bv\bv^\top$.  \new{Furthermore, for any $\delta>0$, 
in order to obtain $N$ i.i.d.\ samples from $\x_\gamma$ with probability at least $1-\delta$, 
we need $O(N\log(N/\delta)/\gamma)$ samples from $\normal(\vec 0,\vec I_d)$. }
\end{fact}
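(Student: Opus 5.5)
The plan is to compute the acceptance probability and the conditional density directly, and then handle the sample-count claim with a Chernoff bound. Write $\x = (\bv\cdot\x)\bv + \x^{\perp\bv}$, where $s \eqdef \bv\cdot\x \sim \calN(0,1)$ is independent of $\x^{\perp\bv}$, which is a standard Gaussian on $\bv^{\perp}$. The acceptance function $a_{\gamma,\bv}$ depends only on $s$, so the rejection step only reshapes the law of $s$ and leaves $\x^{\perp\bv}$ untouched.

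First, compute the acceptance probability as a one-dimensional Gaussian integral:
\[
\pr[\text{accept}] = \int_{\R} \frac{1}{\sqrt{2\pi}}\exp\Bigl(-\frac{s^2}{2}\Bigr)\exp\Bigl(-\frac{\gamma^{-2}-1}{2}s^2\Bigr)\diff s = \int_{\R}\frac{1}{\sqrt{2\pi}}\exp\Bigl(-\frac{s^2}{2\gamma^2}\Bigr)\diff s = \gamma .
\]
Second, by Bayes' rule the conditional density of $\x$ given acceptance is proportional to $\exp(-\|\x^{\perp\bv}\|_2^2/2)\exp(-s^2/(2\gamma^2))$. This is the density of a centered Gaussian with variance $\gamma^2$ along $\bv$ and identity covariance on $\bv^{\perp}$, that is, covariance $\gamma^2\bv\bv^\top + (\mI_d - \bv\bv^\top) = \mI_d - (1-\gamma^2)\bv\bv^\top = \mA_{\gamma,\bv}$. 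Equivalently, one can check that the precision matrix $\mI_d + (\gamma^{-2}-1)\bv\bv^\top$ is the inverse of $\mA_{\gamma,\bv}$. Since acceptance decisions use independent coins, the accepted samples are i.i.d.\ copies of $\x_\gamma$.

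Third, for the sample complexity, draw $M$ Gaussian samples, each accepted independently with probability $\gamma$. The number accepted is $\mathrm{Bin}(M,\gamma)$. Taking $M = C(N + \log(1/\delta))/\gamma$, a multiplicative Chernoff bound shows that at least $N$ samples are accepted except with probability $\exp(-\Omega(\gamma M)) \le \delta$. This already gives the claimed $O(N\log(N/\delta)/\gamma)$, since $N + \log(1/\delta) \le N\log(N/\delta)$ once $N \ge 2$ and $\delta$ is small; the remaining edge cases only need the constant adjusted.

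There is no real obstacle here. The only step that requires care is the density identification, specifically confirming that the independence of $s$ and $\x^{\perp\bv}$ survives conditioning, which holds because $a_{\gamma,\bv}$ depends on $s$ alone.
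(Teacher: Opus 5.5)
Your proof is correct. The paper does not prove this fact itself (it cites \cite{DKS18a} for the acceptance probability and the law of $\x_\gamma$), and your decomposition $\x=(\bv\cdot\x)\bv+\x^{\perp\bv}$ followed by Bayes' rule is the standard direct computation behind that citation.

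The sample-count clause is also stated in the paper without proof, and here your route differs from the one the form of the bound suggests. The $\log(N/\delta)$ factor is what you get by splitting the draws into $N$ blocks of $\lceil\log(N/\delta)/\gamma\rceil$ samples each. A block has no acceptance with probability at most $(1-\gamma)^{\log(N/\delta)/\gamma}\le\delta/N$, and a union bound over the $N$ blocks finishes the argument. Your single Chernoff bound on $\mathrm{Bin}(M,\gamma)$ is sharper, giving $O((N+\log(1/\delta))/\gamma)$. Your claim that the accepted points are i.i.d.\ copies of $\x_\gamma$ remains valid after conditioning on $\{\mathrm{Bin}(M,\gamma)\ge N\}$, since that event depends only on the acceptance pattern.

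One small inaccuracy: for $N=1$ and $\delta$ close to $1$, $1+\log(1/\delta)$ is not $O(\log(1/\delta))$, so adjusting constants does not cover that corner. There, use $(1-\gamma)^M\le e^{-\gamma M}$ directly, as in the blockwise argument.
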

For concreteness of discussion, we introduce the following notation and terminology for the conditional distribution of samples following the rejection sampling procedure from \Cref{fact:centered-gaussian-rejection}.

\begin{definition}[$(\gamma, \bv)$-Rejection Sampling Distribution]
\label{def:retained-samples-rejection}
Given a distribution $\D$ supported on $\R^d \times \R$ and samples $(\x,y)\sim \D$, consider the rejection sampling procedure from 
\Cref{fact:centered-gaussian-rejection} applied to the covariates $\x.$ 
Whenever $\x$ is accepted, accept the whole labeled example $(\x, y)$, \new{and let $\bz = \mA_{\gamma,\bv}^{-1/2}\x$.} 
The distribution of transformed accepted samples \new{$(\bz, y)$} 
is then called the $(\gamma, \bv)$-rejection sampling 
distribution obtained from $\cD$, and it is denoted by $\D^{\gamma,\vec v}$. 
\end{definition}

We now argue that $\D^{\gamma,\vec v}$ itself defines an instance of a sum of $k$ ReLUs under 
Gaussian marginals.  

\begin{fact}[Properties of Rejection Sampling] \label{fact:relu-preserved-under-centered-rejection}
Fix parameters $\bv\in\Sp^{d-1}$ and $\gamma\in(0,1)$. Let $(\x,y)\sim \cD$ be such that
   $ \x\sim\calN(\vec 0,\mI_d)$ and  
   \new{$y(\x) = \sum_{i=1}^k\beta_i\sigma(\w_i\cdot\x)$}
with $\|\w_i\|_2=1$ and $\beta_i\in \new{[} 0,1]$ for $i \in [k].$
   \new{Let $(\x_\gamma,y_\gamma)$ be the sample accepted by $\mathrm{RS}_{\gamma,\bv}$.}
Let
$
    \bz
    \eqdef
    \mA_{\gamma,\bv}^{-1/2}\x_\gamma,$ where $\mA_{\gamma,\bv}
    \eqdef
    \mI_d-(1-\gamma^2)\bv\bv^\top.$ 
Then, the following statements both hold:
\begin{enumerate}[leftmargin=*]
    \item $\bz$ follows the standard normal distribution, 
    i.e., $\bz\sim\calN(\vec 0,\mI_d)$.
    \item $\E[y_\gamma\mid\bz] =
    \sum_{i=1}^k
        \widetilde\beta_i
        \sigma(\widetilde\w_i\cdot\bz)$,  
        where  $\overline\w_i = \w_i^{\perp\bv} +\gamma(\bv\cdot\w_i)\bv$, 
        $\widetilde\w_i= \overline\w_i/\|\overline\w_i\|_2,$ and 
        $ \widetilde\beta_i= \beta_i\|\overline\w_i\|_2 \leq \beta_i.$ 
\end{enumerate} 
\end{fact}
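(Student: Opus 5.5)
The plan is to reduce both claims to \Cref{fact:centered-gaussian-rejection} plus a direct computation with $\mA_{\gamma,\bv}$. That fact already tells us that the accepted covariate $\x_\gamma$ is distributed as $\calN(\vec 0,\mA_{\gamma,\bv})$.

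For the first claim, note that $\mA_{\gamma,\bv}$ acts as $\gamma^2$ on $\bv$ and as the identity on $\bv^\perp$. It is therefore symmetric positive definite for $\gamma\in(0,1)$, with
\[
\mA_{\gamma,\bv}^{1/2}=\mI_d-(1-\gamma)\bv\bv^\top,\qquad \mA_{\gamma,\bv}^{-1/2}=\mI_d-(1-\gamma^{-1})\bv\bv^\top .
\]
A linear image of a centered Gaussian is a centered Gaussian. Hence $\bz=\mA_{\gamma,\bv}^{-1/2}\x_\gamma$ has covariance $\mA_{\gamma,\bv}^{-1/2}\mA_{\gamma,\bv}\mA_{\gamma,\bv}^{-1/2}=\mI_d$, so $\bz\sim\calN(\vec 0,\mI_d)$.

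For the second claim, the key observation is that the label is a deterministic function of the covariate: $y_\gamma=y(\x_\gamma)$. Acceptance depends only on $\x$ (plus independent randomness), so conditioning on acceptance does not change the label map. Since $\bz\mapsto\x_\gamma=\mA_{\gamma,\bv}^{1/2}\bz$ is a bijection, we get
\[
\E[y_\gamma\mid\bz]=y\bigl(\mA_{\gamma,\bv}^{1/2}\bz\bigr)=\sum_i\beta_i\sigma\bigl(\w_i\cdot \mA_{\gamma,\bv}^{1/2}\bz\bigr).
\]
In fact the conditional expectation is an exact equality. By symmetry of $\mA_{\gamma,\bv}^{1/2}$ we have $\w_i\cdot\mA_{\gamma,\bv}^{1/2}\bz=(\mA_{\gamma,\bv}^{1/2}\w_i)\cdot\bz$. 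Decomposing $\w_i=\w_i^{\perp\bv}+(\bv\cdot\w_i)\bv$ gives
\[
\mA_{\gamma,\bv}^{1/2}\w_i=\w_i^{\perp\bv}+\gamma(\bv\cdot\w_i)\bv=\overline\w_i .
\]

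It remains to normalize. If $\overline\w_i\neq\vec 0$, positive homogeneity of the ReLU gives $\beta_i\sigma(\overline\w_i\cdot\bz)=\beta_i\|\overline\w_i\|_2\,\sigma(\widetilde\w_i\cdot\bz)$. The bound $\widetilde\beta_i\le\beta_i$ follows from orthogonality of the two pieces:
\[
\|\overline\w_i\|_2^2=\|\w_i^{\perp\bv}\|_2^2+\gamma^2(\bv\cdot\w_i)^2\le\|\w_i\|_2^2=1 .
\]
The degenerate case $\overline\w_i=\vec 0$ cannot occur, because $\gamma>0$ and $\w_i\neq\vec 0$, so one of the two orthogonal pieces is nonzero.

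There is no real obstacle here. The only step needing care is the conditioning: the label is unaffected by the acceptance event because acceptance is independent of everything given $\x$.
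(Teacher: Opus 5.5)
Your proposal is correct and follows essentially the same route as the paper: write $\x_\gamma=\mA_{\gamma,\bv}^{1/2}\bz$, use symmetry to move $\mA_{\gamma,\bv}^{1/2}$ onto $\w_i$, compute $\mA_{\gamma,\bv}^{1/2}\w_i=\overline\w_i$ by decomposing along $\bv$ and $\bv^\perp$, and normalize using positive homogeneity of the ReLU. You also spell out three details the paper leaves implicit, and all of them check out: the covariance computation giving part 1 from \Cref{fact:centered-gaussian-rejection}, the explicit form of $\mA_{\gamma,\bv}^{\pm 1/2}$, and the fact that $\overline\w_i\neq\vec 0$.
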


\begin{proof}
The acceptance decision depends only on $\x$. Since
$
    \x_\gamma
    =
    \mA_{\gamma,\bv}^{1/2}\bz$,
we have
\begin{align*}
    \E[y_\gamma\mid\bz]
    =
    \sum_{i=1}^k
        \beta_i
        \sigma\Big(
            \w_i^\top
            \mA_{\gamma,\bv}^{1/2}\bz
        \Big)= \sum_{i=1}^k
        \beta_i
        \sigma\Big(
           \big( \mA_{\gamma,\bv}^{1/2}\w_i
    \big)^\top \bz
        \Big) \;.
\end{align*}
We can write $\vec w_i=  \vec w_i^{\perp \vec v} + (\vec v\cdot\vec w_i)\vec v$. Therefore, $\mA_{\gamma,\bv}^{1/2}\w_i=  \vec w_i^{\perp \vec v} + \gamma(\vec v\cdot\vec w_i)\vec v = \overline\w_i$.
Using positive homogeneity of ReLU, namely 
$
    \sigma(\overline\w_i\cdot\bz)
    =
    \|\overline\w_i\|_2
    \sigma(\widetilde\w_i\cdot\bz),$
we get
\[
    \E[y_\gamma\mid\bz]
    =
    \sum_{i=1}^k
        \beta_i \|\overline\w_i\|_2
        \sigma(\widetilde\w_i\cdot\bz)
    =
    \sum_{i=1}^k
        \widetilde\beta_i
        \sigma(\widetilde\w_i\cdot\bz).
\]
Finally, $\widetilde\beta_i \leq \beta_i$ because $ \|\overline\w_i\|_2 \leq \|\w_i\|_2,$ as $\gamma \in (0, 1).$
\end{proof}

With \Cref{fact:relu-preserved-under-centered-rejection} on hand, we now formally argue that $\init$ can be applied to estimate $\w_i^{\perp \bv}.$

\begin{proposition}[Estimating Update Directions]\label{coro:obtain-hidden-directions-using-init-assum}
    Fix a unit vector $\bv\in\R^d$ and let $\lambda,\delta\in(0,1)$. Let $\cD$ be a sum of \new{$k$} 
    ReLUs under Gaussian marginals and, given $\gamma = \lambda/100$, 
    let $\D^{\gamma,\bv}$ be the $(\gamma, \bv)$-rejection sampling distribution obtained from $\cD.$ Then, procedure $\init(\D^{\gamma,\bv},\lambda,\delta)$ that satisfies 
    \Cref{assum:initialization-algorithm} applied to i.i.d.\ samples from $\D^{\gamma,\vec v}$ returns a list $\mathcal L$ such that 
    with probability at least $1-\delta$, for any $\w_i$ with 
    $\beta_i\sin(\theta_\pm(\w_i,\bv))\geq \lambda$, 
    there exists a unit vector $\br_i\in\cL$ with 
    $|\widehat{\br_i^{\perp \bv}}\cdot\wh{\w_i^{\perp\bv}}|\geq 9/10$. 
\end{proposition}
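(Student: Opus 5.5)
By \Cref{fact:relu-preserved-under-centered-rejection}, $\D^{\gamma,\bv}$ is again a Gaussian sum of $k$ ReLUs. Its marginal is $\bz\sim\calN(\vec 0,\mI_d)$, its coefficients are $\widetilde\beta_i=\beta_i\|\overline\w_i\|_2\in[0,1]$, and its unit hidden vectors are $\widetilde\w_i=\overline\w_i/\|\overline\w_i\|_2$, with $\overline\w_i=\w_i^{\perp\bv}+\gamma(\bv\cdot\w_i)\bv$. One caveat: the fact only gives the conditional expectation $\E[y_\gamma\mid\bz]$. However, $y_\gamma=y(\mA_{\gamma,\bv}^{1/2}\bz)$ is a deterministic function of $\bz$, so the label equals $\sum_i\widetilde\beta_i\sigma(\widetilde\w_i\cdot\bz)$ exactly, and \Cref{assum:initialization-algorithm} applies verbatim. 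Invoking $\init(\D^{\gamma,\bv},\lambda,\delta)$ therefore gives, with probability $1-\delta$, a list $\cL$ such that every $i$ with $\widetilde\beta_i\ge\lambda$ has some $\br_i\in\cL$ with $|\br_i\cdot\widetilde\w_i|\ge 0.99$. It remains to show two things: the hypothesis $\beta_i\sin\theta_\pm(\w_i,\bv)\ge\lambda$ implies $\widetilde\beta_i\ge\lambda$, and $0.99$-alignment with $\widetilde\w_i$ transfers to $9/10$-alignment of the normalized projections.

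\textbf{Heaviness.} Since $\|\overline\w_i\|_2^2=\|\w_i^{\perp\bv}\|_2^2+\gamma^2(\bv\cdot\w_i)^2\ge\|\w_i^{\perp\bv}\|_2^2=\sin^2\theta_\pm(\w_i,\bv)$, we get $\widetilde\beta_i\ge\beta_i\sin\theta_\pm(\w_i,\bv)\ge\lambda$.

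\textbf{Transfer of alignment.} Write $s=\sin\theta_\pm(\w_i,\bv)=\|\w_i^{\perp\bv}\|_2$ and $c=|\bv\cdot\w_i|\le 1$. Because $\beta_i\le1$, the hypothesis gives $s\ge\lambda$. Decompose $\widetilde\w_i=a\,\wh{\w_i^{\perp\bv}}+b\,\bv$ with
\[
a=\frac{s}{\sqrt{s^2+\gamma^2c^2}},\qquad |b|=\frac{\gamma c}{\sqrt{s^2+\gamma^2c^2}}\le \frac{\gamma}{s}\le\frac{1}{100},
\]
using $\gamma=\lambda/100$; consequently $a\ge\sqrt{1-10^{-4}}$. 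Now split $\br_i=\br_i^{\perp\bv}+(\br_i\cdot\bv)\bv$. Then
\[
|\br_i\cdot\widetilde\w_i|=\bigl|a\,\br_i^{\perp\bv}\cdot\wh{\w_i^{\perp\bv}}+b\,(\br_i\cdot\bv)\bigr|\ge 0.99,
\]
which yields $|\br_i^{\perp\bv}\cdot\wh{\w_i^{\perp\bv}}|\ge(0.99-0.01)/1=0.98$. This lower bound also shows $\br_i^{\perp\bv}\neq\vec 0$, so its normalization is well defined. Finally, $\|\br_i^{\perp\bv}\|_2\le1$, so normalizing can only increase the correlation: $|\wh{\br_i^{\perp\bv}}\cdot\wh{\w_i^{\perp\bv}}|\ge0.98\ge 9/10$.

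The argument is mostly bookkeeping. The one delicate point is that the bias term $b$ must be controlled uniformly. This works because the heaviness condition with $\beta_i\le1$ forces $s\ge\lambda\gg\gamma$, which is exactly why $\gamma$ is tied to $\lambda$. The failure probability is inherited directly from $\init$, since running the rejection sampler to produce i.i.d.\ samples from $\D^{\gamma,\bv}$ is implicit in applying $\init$ to that distribution.
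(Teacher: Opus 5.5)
Your proof is correct and follows the paper's argument. Like the paper, you apply \Cref{fact:relu-preserved-under-centered-rejection} to view $\D^{\gamma,\bv}$ as a new Gaussian sum of ReLUs, bound the $\bv$-component of $\widetilde\w_i$ by $\gamma/\sin\theta_\pm(\w_i,\bv)\le\gamma/\lambda=1/100$, and transfer the $0.99$-alignment to the normalized projections; the paper uses $\|\widetilde\w_i-\widehat{\w_i^{\perp\bv}}\|_2\le 1/100$ plus Cauchy--Schwarz where you use an explicit decomposition, and your checks that $\widetilde\beta_i\ge\beta_i\sin\theta_\pm(\w_i,\bv)\ge\lambda$ and that the label is a deterministic function of $\bz$ fill in steps the paper leaves implicit.
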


\begin{proof}
Denote the projective angle
between $\w_i$ and $\bv$ by $\theta_i = \theta_\pm(\w_i,\bv)$. From \Cref{fact:relu-preserved-under-centered-rejection}, we have that 
\[
  \E_{(\x,y)\sim \D^{\gamma,\bv}}[y\mid\bx]=    \sum_{i=1}^k
        \widetilde\beta_i
        \sigma(\widetilde\w_i\cdot\bz)\;,
\]
with $\overline\w_i
    =
    \w_i^{\perp\bv}
    +\gamma(\bv\cdot\w_i)\bv$, $\widetilde\w_i= 
    \overline\w_i/\|\overline\w_i\|_2,$ and $
    \widetilde\beta_i=
    \beta_i\|\overline\w_i\|_2.$  Observe that 
    \begin{align}
           \|\widetilde\w_i-\widehat{\w_i^{\perp \vec v}}\|_2^2 &= \|\widetilde\w_i\|_2^2 + \|\widehat{\w_i^{\perp \vec v}}\|_2^2 - 2 \widetilde\w_i^\top \widehat{\w_i^{\perp \vec v}}\notag\\ 
           &=2\left(1-\frac{\|\w_i^{\perp \vec v}\|_2}{\| \w_i^{\perp\bv}
    +\gamma(\bv\cdot\w_i)\bv\|_2}\right)\notag\\
    &=2\left(1-\frac{1}{( 1
    +\gamma^2(\bv\cdot\w_i)^2/\|\w_i^{\perp\bv}\|_2^2)^{1/2}}\right)
    \leq \frac{\gamma^2(\bv\cdot\w_i)^2}{\|\w_i^{\perp \vec v}\|_2^2}\;, \label{eq:tildew_i-whw_i^perp}
    \end{align}
    where the second equality uses that both $\widetilde\w_i,\widehat{\w_i^{\perp \vec v}}$ are 
    unit vectors and applies their definitions to simplify the inner product, while the last 
    inequality uses $1/(1+z^2)^{1/2}\geq 1-z^2/2$ (this can be seen as  
    $\varphi(x) = 1/(1 + x)^{1/2}$, $x \geq 0,$ is convex, by applying 
    $\varphi(x) \geq \varphi(0) + \varphi'(0)x$ and plugging in $x = z^2$). 
    By using that $\beta_i \sin\theta_i\geq \lambda$ from our assumptions, we have that $1/\sin(\theta_i) \leq \beta_i /\lambda \leq 1/\lambda,$ 
    as $\beta_i \in (0, 1].$ Combining with \eqref{eq:tildew_i-whw_i^perp},  
    we thus get that 
    \[\|
    \widetilde\w_i-\widehat{\w_i^{\perp \vec v}}\|_2 \leq \frac{\gamma |\bv \cdot \w_i|}{\norm{\w_i^{\perp \bv}}_2} = \frac{\gamma |\cos(\theta_i)|}{|\sin(\theta_i)|}\leq \gamma/\lambda\leq 1/100.
    \] 
    Therefore, we have that $\init(\D^{\gamma,\bv},\lambda,\delta)$ 
    returns a list $\cL$ containing $\br_i\in\cL$ with $|\br_i\cdot\widetilde{\w}_i|\geq 9.1/10$ for any $\w_i$ satisfying $\beta_i\sin(\new{\theta_i})\geq \lambda$. Since, by the triangle inequality and Cauchy-Schwarz, 
    \[
    |\br_i\cdot\widetilde{\w}_i| \leq |\br_i\cdot\widehat{\w_i^{\perp \vec v}}| + \|\br_i\|_2 \|\widetilde{\w}_i-\widehat{\w_i^{\perp \vec v}}\| \leq |\br_i\cdot\widehat{\w_i^{\perp \vec v}}| + 1/100,\] 
    we get that $|\widehat{\br_i^{\perp \bv}}\cdot\widehat{\w_i^{\perp \vec v}}| \geq |\br_i\cdot\widehat{\w_i^{\perp \vec v}}| \geq 9.1/10 - 1/100 = 9/10$. 
\end{proof}

\new{
Since the rejection sampling procedure works by (approximately) localizing 
the marginal distribution of $\x$  to a thin band $|\bv\cdot\x|\leq \gamma$, 
it is typically referred to as ``localization''.
}
The localization argument in \Cref{coro:obtain-hidden-directions-using-init-assum} allows us to obtain components 
of the \new{hidden weight} vectors that our algorithm potentially has not yet discovered. 
The subroutine \nameref{alg:refine} formalizes the updating procedure to obtain vectors closer 
to the target ones. 

In Steps \eqref{alg-refine:loop-start} through \eqref{alg-refine:loop-end}, 
we guarantee that we have created a new set that contains some vectors 
that are closer than before to some of the hidden vectors, by a multiplicative factor $3/5$. 
We note that for technical reasons (\Cref{prop:cover-and-prune} requires 
a large decrease between iterations), we require the update to decrease 
the angle by a factor larger than $3/5$; for this purpose, in \nameref{alg:refine}, we boost this procedure by applying three update steps instead of one.

\begin{algorithm}[H]
\caption{$\mathsf{Refine}(\vec v,r,\eps_0,\delta_\init;\D)$}
\label{alg:refine}
\begin{algorithmic}[1]
\State \textbf{Input:} unit vector $\vec v\in\Sp^{d-1}$, radius upper bound $r \in(\eps_0, \pi/2]$,
accuracy $\eps_0>0$, 
failure probability $\delta_\init\in(0,1)$, i.i.d.\ access to $\D$.
\State \textbf{Output:} A list of vectors $\cV$ satisfying the guarantees of \Cref{lem:add-vectors}.

\medskip

\State $\chi\gets3/5$, $\cV \gets \{\vec v\}$, $\lambda_\init\gets\eps_0\sin(\chi^3 \eps_0)$, $\gamma \gets \lambda_\init/100$. 
\new{\Comment{$\lambda_\init$: coefficient threshold; $\gamma$: rejection sampling width.}} 
\For{$j=0,1,2$}
    \State $\eta_j\gets\tan(4\chi^j r /5)$. \label{alg-refine:loop-start}

    \For{{all} $\vec v\in\cV$}
\State Construct $\D^{\gamma,\bv}$ by applying the rejection sampling procedure $\mathrm{RS}_{\gamma,\bv}$ (\Cref{fact:centered-gaussian-rejection}).
        \State $\mathcal L_0(\vec v)\gets
        \init(\D^{\gamma,\bv},\lambda_\init,\delta_\init)$ 
        \Comment{Apply the $\init$ procedure that satisfies \Cref{assum:initialization-algorithm}.} \State $\mathcal L(\vec v)\gets \{\widehat{\vec q^{\perp \vec v}}:\vec q \in \mathcal L_0(\vec v)\}$. \new{\Comment{Project the vectors obtained from the initialization procedure to the tangent space of $\bv$ then normalize to unit norm (\Cref{coro:obtain-hidden-directions-using-init-assum}).}}
        \For{$\vec r\in\mathcal L(\vec v)$}
            \State Add 
            $\proj_{\Sp^{d-1}}(\vec v+\eta_j\vec r)$ and
            $\proj_{\Sp^{d-1}}(\vec v-\eta_j\vec r)$ to $\cV$.\label{alg-refine:loop-end} 
            \Statex\new{\Comment{Gradient descent/Refinement step}}
        \EndFor
    \EndFor
\EndFor
\State \textbf{Return: }$\cV$ 
\end{algorithmic}
\end{algorithm}

\new{The following lemma establishes that our \nameref{alg:refine} subroutine 
contracts the distance between the input candidate vector $\bv$ and the 
$\lambda_\init$-heavy hidden vectors $\w_i$, 
and bounds its sample and computational complexities.}

\begin{lemma}[Properties of $\mathsf{Refine}$]\label{lem:add-vectors}
Fix the input parameters to \nameref{alg:refine}: $\bv \in \spd,$ $ r\in [\eps_0, \pi/2)$, $\eps_0\in(0,1/2), \, \delta_\init\in(0,1)$, and a \new{Gaussian sum of $k$ ReLUs} 
distribution $\D$, and let all other algorithm parameters be specified as in the algorithm statement. 
Suppose that for some $i\in[k]$, 
$\beta_i\geq\eps_0$ and $\theta_{\pm}(\bv,\w_i) \leq r$. 
Then, with probability at least $1-O(L_\init^3 \delta_\init)$, 
the list $\cV$ returned by 
\nameref{alg:refine}
contains a vector $\vec v_i\in\cV$ satisfying
\[ \theta_{\pm}(\vec v_i,\w_i) \leq\chi^3 r.
\]
\new{
Moreover, \nameref{alg:refine} draws 
$N_{\mathsf{Refine}} =\widetilde{O}(L_\init^3 N_\init\log(1/\delta_\init)/\gamma)$ 
samples from $\D$,
} runs in time $d\,\poly(N_{\mathsf{Refine}},T_\init)$ and the size of the output list $\cV$ 
is bounded by $|\cV|= O(L_\init^3)$, where $L_\init, N_\init, T_\init$ are the number of vectors returned, 
the sample complexity, and the runtime of $\init(\D^{\gamma,\bv}; \lambda_\init,\delta_{\init})$, 
respectively. \end{lemma}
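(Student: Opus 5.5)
The proof tracks a single ``good'' chain of iterates $\bv^{(0)}=\bv,\bv^{(1)},\bv^{(2)},\bv^{(3)}$. The plan is to show by induction on $j\in\{0,1,2\}$ that, if $\theta_\pm(\bv^{(j)},\w_i)\leq \chi^j r$ and $\bv^{(j)}\in\cV$, then some candidate added in round $j$ satisfies $\theta_\pm(\bv^{(j+1)},\w_i)\leq \chi^{j+1} r$. Since $\cV$ only grows, $\bv^{(3)}\in\cV$ gives the claim. Replacing $\w_i$ by $-\w_i$ changes neither $\theta_\pm$ nor the lists, because the update uses both $\pm\eta_j\br$. So in each round we may assume $\theta_j:=\theta(\bv^{(j)},\w_i)\leq \chi^j r\le\pi/2$.

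\textbf{Step 1: the early-stopping case.} If already $\theta_j\leq \chi^{3} r$ (in particular if $\theta_j\le\chi^3\eps_0$), we are done, since $\bv^{(j)}\in\cV$. Otherwise $\theta_j>\chi^3 r\geq \chi^3\eps_0$, so
\[
\beta_i\sin\theta_\pm(\w_i,\bv^{(j)})\geq \eps_0\sin(\chi^3\eps_0)=\lambda_\init .
\]
Then \Cref{coro:obtain-hidden-directions-using-init-assum} applies with $\gamma=\lambda_\init/100$. With probability $1-\delta_\init$, the list $\cL(\bv^{(j)})$ contains a unit vector $\br\perp\bv^{(j)}$ with $|\br\cdot\wh{\w_i^{\perp\bv^{(j)}}}|\geq 9/10$. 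Choosing the sign so that this inner product is $\ge 9/10$ means one of $\proj_{\spd}(\bv^{(j)}\pm\eta_j\br)$ is the relevant candidate.

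\textbf{Step 2: the contraction estimate (main obstacle).} Write $\w=\cos\theta\,\bv+\sin\theta\,\bu$ with $\bu\perp\bv$ a unit vector, and $\br=a\bu+\sqrt{1-a^2}\,\bs$ with $a\ge 9/10$ and $\bs\perp\bu,\bv$. The new point is $\bv'=(\bv+\eta\br)/\sqrt{1+\eta^2}$, so
\[
\cos\theta(\bv',\w)=\frac{\cos\theta+\eta a\sin\theta}{\sqrt{1+\eta^2}} .
\]
Set $\eta=\tan\phi$ with $\phi=\tfrac45\chi^j r$. Then $\cos\theta'=\cos\theta\cos\phi+a\sin\theta\sin\phi$. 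The target is $\theta'\le \chi\cdot\chi^j r$ for every $\theta\in(\chi^3\chi^j r,\chi^j r]$.

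Compare with the ideal case $a=1$, where $\theta'=|\theta-\phi|\le \tfrac15\chi^j r$ once $\theta\ge\tfrac45\chi^j r\cdot$(something) — more precisely, $|\theta-\phi|\le \max(\tfrac15,\tfrac45-\chi^3)\chi^j r$. The loss from $a<1$ is a term of order $(1-a)\sin\theta\sin\phi$. Concretely, I will bound
\[
1-\cos\theta'\le 1-\cos(\theta-\phi)+(1-a)\sin\theta\sin\phi,
\]
and use $1-\cos x\asymp x^2/2$ on $[0,\pi/2]$. This gives $\theta'^2\lesssim(\theta-\phi)^2+0.2\,\theta\phi$, and plugging in the worst case $\theta=\chi^j r$ yields roughly
\[
\theta'\le\sqrt{1/25+0.16}\,\chi^j r\approx0.45\,\chi^j r<\chi^{j+1}r .
\]
The small-$\theta$ end (around $\chi^3\chi^j r$) needs a separate check, since there $\phi$ overshoots. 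But $\theta'\le\theta+\phi$ alone is too weak for that, so the careful part is handling the overshoot: I would use the exact formula, with $\theta'\approx\sqrt{(\phi-\theta)^2+\dots}$ and $\phi-\theta\le(0.8-0.216)\chi^j r\approx0.58\,\chi^j r<0.6\,\chi^j r$. I would also verify that the nonasymptotic trigonometric inequalities hold for $r\le\pi/2$, where the small-angle approximations are least accurate. This computation is the crux.

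\textbf{Step 3: probability and complexity.} Each round calls $\init$ once per element of the current $\cV$. The list grows by a factor $O(L_\init)$ per round, so after three rounds $|\cV|=O(L_\init^3)$, and there are $O(L_\init^3)$ calls in total. A union bound over these calls gives failure probability $O(L_\init^3\delta_\init)$. Each call needs $N_\init$ samples from $\D^{\gamma,\bv}$, which costs $O(N_\init\log(N_\init/\delta_\init)/\gamma)$ raw samples by \Cref{fact:centered-gaussian-rejection}; summing gives $N_{\mathsf{Refine}}$. The running time is that of the rejection sampling, the transformations $\mA_{\gamma,\bv}^{-1/2}\x$ (which take $O(d)$ each), the projections, and the $\init$ calls, i.e.\ $d\,\poly(N_{\mathsf{Refine}},T_\init)$.
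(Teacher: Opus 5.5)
There is a genuine gap in Step 2. Because you stop early only at the final threshold $\chi^3 r$, your induction requires the round-$j$ update to contract every angle $\theta\in(\chi^3 r,\chi^j r]$. In round $j=0$ this range includes angles just above $0.216\,R$ with $R=\chi^j r$. (With your stated range $(\chi^3\chi^j r,\chi^j r]$, it does so in every round.) At such angles the step $\phi=\tfrac45 R$ overshoots, and the bound you need, $\theta'\le\chi R$, is false there.

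To see this, note that your own identity gives $\cos\theta'=\cos(\phi-\theta)-(1-a)\sin\theta\sin\phi$. The guarantee of \Cref{coro:obtain-hidden-directions-using-init-assum} allows $a=9/10$. For small $r$ this yields $\theta'^2\approx(\phi-\theta)^2+2(1-a)\theta\phi\approx(0.341+0.035)R^2$, i.e.\ $\theta'\approx 0.613\,R>0.6\,R$. The exact formula shows the same violation near $r=\pi/2$: for $r=1.57$ and $\theta$ just above $0.216\,r$, one gets $\theta'\approx 0.956>\chi r\approx 0.942$. Your estimate $\phi-\theta\approx 0.58\,R<0.6\,R$ drops exactly the $(1-a)$ term that breaks it. So the computation you call the crux cannot be completed.

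The missing observation is that no overshoot analysis is needed. Since $\cV$ only grows, whenever $\theta_\pm(\bv^{(j)},\w_i)\le\chi^{j+1}r$ you may simply take $\bv^{(j+1)}=\bv^{(j)}$. Otherwise $\theta_\pm(\bv^{(j)},\w_i)>\chi^{j+1}r\ge\chi^3\eps_0$, so \Cref{coro:obtain-hidden-directions-using-init-assum} still applies. This is what the paper does: its one-step claim bounds the minimum over $\{\bu,\bu_+,\bu_-\}$, so contraction is only required for $\theta\in(\tfrac35 R,R]$.

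On that range the paper avoids small-angle approximations entirely:
\begin{itemize}
\item The ideal point $\bz=\cos\phi\,\bv+\sin\phi\,\bq$, with $\bq=\wh{\w^{\perp\bv}}$ (your $\bu$), satisfies $\theta(\bz,\w)=|\theta-\phi|\le R/5$.
\item We have $\|\bv'-\bz\|_2=2\sin\phi\sin(\delta/2)$ with $\delta=\theta(\br,\bq)\le\arccos(0.9)<1/2$. Concavity of $\sin$ then gives $\theta(\bv',\bz)\le\delta\phi\le 2R/5$.
\item The triangle inequality yields $\theta(\bv',\w)\le 3R/5$ uniformly for all $r<\pi/2$.
\end{itemize}
With this per-round early stop, the rest of your plan, including your Step 3 accounting of list size, union bound, and rejection-sampling cost, matches the paper's proof.
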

\begin{proof}
Recall that $\chi\eqdef{3}/{5}$. 
If $ \theta_{\pm}(\bv,\w_i)\leq\chi^3 r$, then $\bv$ already satisfies the condition. 
Otherwise, since we have that $\beta_i\geq \eps_0$ and $r\geq \eps_0$, 
we get that 
$\beta_i \sin\theta_{\pm}(\bv,\w_i)\geq \beta_i\sin(\chi^3 r)\geq \eps_0\sin(\chi^3\eps_0) = \lambda_\init$.
We argue that after each step $j \in \{0,1,2\}$, 
there exists a vector $\bv_i\in\cV$ such that $\chi^{j+1} r\geq \theta_{\pm}(\vec v_i,\w_i)$.
Before the algorithm starts, the input vector $\bv$ satisfies 
$\theta_\pm(\bv,\w_i)\leq  r$, by our assumption. 
Since $\beta_i \sin\theta_{\pm}(\bv,\w_i)\geq \lambda_\init$, 
\Cref{coro:obtain-hidden-directions-using-init-assum} implies that running 
$\init$ with $N_\init$ samples from $\D^{\gamma,\bv}$ returns a list $\mathcal L$ of unit vectors 
that contains a vector $\vec r$ such that $\vec r\perp\vec v$ and
\[
    |\vec r\cdot
    \widehat{\w_i^{\perp\vec v}}|
    \geq 0.9 \;.
\]
The main insight is that $\br$ defines a direction that can improve 
the estimate of the hidden direction $\w_i$. In particular, as argued in the next claim, 
a projected gradient descent-style update of a candidate vector $\bv$ 
along either $\br$ or $-\br$ leads to a new unit vector that is better aligned 
with (and thus closer to) $\w_i$, unless the vector $\bv$ 
was already well-aligned with (i.e., sufficiently close to) $\w_i.$   
\begin{claim}[One-step improvement in projective angle]
\label{lem:one-step-improvement}
Let
$0< r\leq\pi/2$. Let $\vec u,\vec w\in\Sp^{d-1}$ be such that
$
    \theta_\pm(\vec u,\vec w)\leq r.$
Let $\vec r\in\vec u^\perp$ be a unit vector such that
$
    |
    \vec r\cdot
    \widehat{\vec w^{\perp\vec u}}
    |
    \geq 0.9.$
Let
$
    \phi\eqdef\frac{4 r}{5},
    \eta\eqdef\tan\phi,$
and define
\[
    \vec u_+
    \eqdef
    \proj_{\Sp^{d-1}}(\vec u+\eta\vec r),
    \qquad
    \vec u_-
    \eqdef
    \proj_{\Sp^{d-1}}(\vec u-\eta\vec r).
\]
Then
$
    \min_{\vec z\in\{\vec u,\vec u_+,\vec u_-\}}
    \theta_\pm(\vec z,\vec w)
    \leq
    {3 r}/{5} = \chi r.$
\end{claim}

\begin{proof}[Proof of \Cref{lem:one-step-improvement}]
Let $\theta\eqdef\theta_\pm(\vec u,\vec w).$ 
Assume without loss of generality that $\theta(\vec u,\vec w)\leq \pi/2$.
If $\theta\leq 3 r/5$, 
then the vector $\vec u$ already satisfies the conclusion.
Therefore, we assume throughout the proof that
${3 r}/{5}<\theta\leq r.$
    
Let $\vec q= \widehat{\vec w^{\perp{\vec u}}}$
and note that $\vec q\cdot\vec u=0$ and 
$ \vec w = \cos\theta\,\vec u  + \sin\theta\,\vec q.$
Recall that
$|\vec r\cdot\vec q|\geq 0.9.$ 
Assume without loss of generality that $\vec r\cdot\vec q\geq 0.9$ 
and consider $\bu' = \proj_{\Sp^{d-1}}(\vec u+\eta\vec r)$, 
as the other case is captured by the update rule $\vec u_-$ 
with exactly the same analysis.
Letting $ \delta=\theta(\vec r,\vec q)$,
then $\delta \leq \arccos(0.9)<1/2.$
Consider the ideal update, i.e., the update we could take if we knew exactly 
the unknown part $\vec q$: 
\[
    \vec z
    \eqdef
    \cos\phi\,{\vec u}
    +
    \sin\phi\,\vec q \;.
\]
Note that the angle between $\vec u$ and $\vec z$ is $\phi$, and the
angle between $\vec u$ and $\vec w$ is $\theta$. Therefore
\begin{equation}\label{eq:theta-z-w}
    \theta(\vec z,\vec w)=|\theta-\phi|\leq \left|\theta-{4 r}/{5}\right|
    \leq
    { r}/{5}\;, 
\end{equation}
where we used that $\phi=4 r/5$ and $ r\geq\theta\geq 3 r/5$.
Recall that $\vec u'$, the output of the update, was defined by 
\(
    \vec u'
    =
    \proj_{\Sp^{d-1}}
    (\vec u+\eta\vec r) \;,
\)
and observe that 
\[
    \vec u'
    =
    \frac{{\vec u}+\tan\phi\vec r}
         {\sqrt{1+\tan^2\phi}}
    =
    \cos\phi\,{\vec u}
    +
    \sin\phi\,\vec r \;.
\]
Next we bound
$\theta(\vec u',\vec w)$. For that, we need to bound $\theta(\vec u',\vec z)$.
Note that $\|\vec u'-\vec z\|_2=2\sin(\theta(\vec u',\vec z)/2) $. 
Furthermore, by construction, 
$\|\vec u'-\vec z\|_2 = \sin\phi \|\vec q-\vec r\|_2=2\sin\phi \sin(\delta/2)$. 
Therefore, we have that $\sin(\theta(\vec u',\vec z)/2)\leq  \sin\phi \sin(\delta/2)$. 
{Noting that the sine function is concave on $(0,\pi/2)$, 
from Jensen's inequality we have 
$\sin((\delta/2)\, \phi)\geq (\delta/2)\sin\phi + (1-\delta/2)\sin(0)$. 
Using that $\sin(\delta/2)\leq \delta/2$, we get that 
$ \sin\phi \sin(\delta/2)\leq \delta/2\sin(\phi )\leq\sin(\phi \delta/2)$.} 
Therefore, from monotonicity, we get 
that 
\begin{equation}\label{eq:theta-u'-z}
    \theta(\vec u',\vec z)\leq \delta\phi \leq 2 r/5 \;.
\end{equation} 
Since $\theta(\bu,\bv)$ is a metric on the unit sphere, by the triangle inequality, and using \eqref{eq:theta-z-w} and \eqref{eq:theta-u'-z}, we get
\[
    \theta(\vec u',\vec w)
    \leq
    \theta(\vec u',\vec z)
    +
    \theta(\vec z,\vec w)
    \leq
    \frac{2 r}{5}
    +
    \frac{ r}{5}
    =
    \frac{3 r}{5}\;.
\]
Since $\vec u' \in \{\vec u_+,\vec u_-\}$, this gives
\(
    \min_{\vec z\in\{\vec u,\vec u_+,\vec u_-\}}
    \theta_\pm(\vec z,\vec w)
    \leq
    \frac{3 r}{5} \;,
\)
completing the proof of ~\Cref{lem:one-step-improvement}.
\end{proof}

By \Cref{lem:one-step-improvement}, the list $\cV$ contains
a vector $\vec v_i$ such that $ \chi r\geq \theta_{\pm}(\vec v_i,\w_i)$. 
Hence, after step $j=0$, we have $ \chi r\geq \theta_{\pm}(\vec v_i,\w_i)$ 
for some vector $\bv_i\in\cV$. This is the base case of the induction argument.

Now at step $j = \{1,2\}$, by induction, we know that before updating, 
there exists a vector $\bv_i\in\cV$ such that $ \chi^j r\geq \theta_{\pm}(\vec v_i,\w_i)$. 
Suppose $\theta_{\pm}(\vec v_i,\w_i)\geq \chi^{j+1} r$, 
which implies $\beta_i \sin\theta_{\pm}(\bv_i,\w_i)\geq \lambda_\init$. 
Then, using \Cref{coro:obtain-hidden-directions-using-init-assum} again, 
the \new{$\init$ procedure} returns a list $\mathcal L(\bv_i)$ of unit vectors 
that contains a vector $\vec r$ such that $\vec r\perp\vec v_i$ and 
$|\vec r\cdot \widehat{\w_i^{\perp\vec v_i}}| \geq 0.9.$
Then \Cref{lem:one-step-improvement} implies that after updating $\bv_i$, 
inside the list $\mathcal V$ there is a vector $\vec v_i'$ 
so that $ \chi^{j+1} r\geq \theta_{\pm}(\vec v_i',\w_i)$.
Therefore, at step $j=2$, the algorithm produces a vector $\bv\in\cV$ 
such that $\theta_{\pm}(\vec v,\w_i) \leq \chi^3 r$.

\new{At each call, $\init$ uses $N_\init$ samples from $\D^{\gamma,\bv}$. 
By \Cref{fact:centered-gaussian-rejection}, with probability at least $1 - \delta_\init$, 
these $N_\init$ samples from $\D^{\gamma,\bv}$ can be obtained by drawing 
${O}(N_\init\log(N_\init/\delta_\init)/\gamma)$ samples from the original distribution $\D$, 
and passing through the $(\gamma,\bv)$-rejection procedure $\mathrm{RS}_{\gamma,\bv}$.
Conditioned on successfully obtaining these $N_\init$ from $\D^{\gamma,\bv}$,}
the $\init$ procedure 
succeeds with probability 
at least $1 - \delta_\init$, by \Cref{assum:initialization-algorithm} and \Cref{coro:obtain-hidden-directions-using-init-assum}. \new{Therefore, the total number of samples required by $\init$ is ${O}(N_\init\log(N_\init/\delta_\init)/\gamma)$, and the probability of success for one call of $\init$ procedure is at least $(1 - \delta_\init)^2\geq 1 - 2\delta_\init$.}
Recall that for each query vector $\bv\in\cV$, $\init$  
returns a list of $|\cL(\bv)|\leq L_\init$ vectors. 
Therefore, the $\init$ gets called $T = (2|\cL|)^3 = O(L_\init^3)$ times.

Thus, in total, \nameref{alg:refine} uses 
$T N_\init = O(L_\init^3 N_\init\log(N_\init/\delta_\init)/\gamma)$ samples from $\D$, and succeeds with probability $1 - 2T \delta_\init = 1 - O(L_\init^3\delta_\init)$. 
The \new{size} of $\cV$ is $|\cV| = O(L_\init^3)$. 
This completes the proof of \Cref{lem:add-vectors}.
\end{proof}

\subsubsection{Closeness Tester}\label{subsubsec:tester}
The $\mathsf{Refine}$ procedure (\nameref{alg:refine}) returns a large list $\mathcal V$ 
of vectors with the guarantee that for each $\lambda$-heavy hidden weight vector
$\vec w_i$, 
there is at least one vector $\vec v \in \mathcal V$ that is sufficiently close, 
i.e., $\theta_{\pm}(\vec w_i,\vec v)\leq  r$ for some $ r>0$ (\Cref{lem:add-vectors}). 
However, $\mathsf{Refine}$ branches over all possible updates, 
because it has no information on which hidden weight vector the input $\bv$ approximates nor which update directions from $\init$ are useful for improving the approximation. As a result, the output list may contain many irrelevant vectors that provide no useful approximation of any of the hidden weight vectors. To eliminate irrelevant candidate vectors, we design a tester that distinguishes between $\bv \in \cV$ that are $r$-close to at least one heavy hidden weight vector versus $(1 + \tau)r$-far from all hidden weight vectors, for constant $\tau > 0.$ 

 For an even degree $p$, we use the following moment test (see \Cref{fact:hermite-poly}):  \begin{equation*}
 a_p(\bv) \eqdef \frac1{\wh{\sigma}_p}\E[y\hep_p(\bv\cdot\x)]
     =\sum_i\beta_i|\bv\cdot\w_i|^p.
\end{equation*}
\new{The main intuition behind the tester is that the small value of $a_p(\bv)$ 
identifies vectors $\bv$ that are far from all hidden directions $\w_i$. 
To see this, recall first that $|\bv\cdot\w_i| = \cos(\theta_{\pm}(\w_i, \bv)).$ 
If $\bv$ is $r$-close to at least one $\eps_0$-heavy $\w_i$ (i.e., if $\theta_{\pm}(\w_i, \bv) \leq r$), 
then $a_p(\bv) \geq \eps_0 \cos^p(r)$. On the other hand, if $\theta_{\pm}(\w_i, \bv) > (1+\tau)r$ 
for all $i \in [k],$ then $|\bv\cdot\w_i| = 
\cos(\theta_{\pm}(\w_i, \bv)) \leq (1 - \tau r^2/2)\cos(r)$ for all $i \in [k],$ 
and so $a_p(\bv) \leq k (1 - \tau r^2)^p\cos^p(r)$. Choosing $p$ sufficiently large 
($p \gg \log(k/\eps_0)/(\tau r^2)$) then ensures $a_p(\bv) \ll \eps_0 \cos^p(r),$ 
so the two cases can be separated by the tester.}

\begin{algorithm}[h]
   \caption{$\sat(\bv, r,\eps_0,\tau,\delta_\test;\D)$}\label{alg:sat} \begin{algorithmic}[1]
\State \textbf{Input: }Unit vector $\bv$, $\eps_0,\delta_\test>0$, 
$\tau>0$, upper bound $r$, sample access to $\D$. \State \textbf{Output: }$\mathsf{TRUE}$ if there exists $i\in[k]$ such that $\theta_\pm(\vec w_i,\bv)\leq r$ and $\beta_i\geq \eps_0$; $\mathsf{FALSE}$ if for \new{all} $i\in[k]$: $\theta_\pm(\vec w_i,\bv)> (1+\tau)r$.

\medskip

\State $p \gets 2\lceil(1/(\tau r^2))\log(k/\eps_0)\rceil$.
\State Draw $N = \poly(k,1/\eps_0) \new{(1/\delta_\test)}$ samples from $\D$ to formulate the empirical distribution $\wh{\D}$. 
\State $\wh{a}_p\gets (1/\wh{\sigma}_p)\E_{(\x,y)\sim\wh{\D}}[y\hep_p(\bv\cdot\x)]$.
\If{$\wh{a}_p\geq (7/8)\eps_0\cos^p(r)$}
\State \textbf{Return: }$\mathsf{TRUE}$.
\Else
\State \textbf{Return: }$\mathsf{FALSE}$.
\EndIf
\end{algorithmic}
\end{algorithm}

\begin{lemma}[Smallest Angle Tester]\label{lem:smallest-angle-tester}
    Let $\eps_0 \in (0, 1/2],$ $ r \ge \eps_0/(5k)$, let $\tau>0$ be a small absolute constant, and define $p = 2\lceil(1/(\tau r^2))\log(k/\eps_0)\rceil$. Using $N = \poly(k,1/\eps_0) \new{(1/\delta_\test)}$
samples and $O(d N)$ time
with probability at least $1 - \delta_{\sat}$, 
    {\nameref{alg:sat} has the following performance guarantee:}\begin{enumerate}[leftmargin=*]
        \item (Completeness) If there exists $i\in[k]$ such that $\theta_\pm(\w_i,\bv)\leq  r$ and $\beta_i\geq \eps_0$, then \nameref{alg:sat} returns $\mathsf{TRUE}$. 
\item (Soundness) If $\min_{i\in[k]}\theta_\pm(\w_i,\bv) > (1 + \tau) r$, then \nameref{alg:sat} returns $\mathsf{FALSE}$.
\end{enumerate}
\end{lemma}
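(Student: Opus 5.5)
We will prove the population identity, separate the two cases at the population level with a margin, and then bound the sampling error of $\wh{a}_p$ by Chebyshev. By Fact~\ref{fact:hermite-poly} (the Hermite expansion in one direction) and linearity, for even $p$ we have $\E[y\hep_p(\bv\cdot\x)]=\wh{\sigma}_p\sum_i\beta_i(\bv\cdot\w_i)^p$, so $a_p(\bv)=\sum_i\beta_i\cos^p\theta_\pm(\bv,\w_i)$. In the completeness case $a_p(\bv)\ge\eps_0\cos^p(r)$, since all terms are nonnegative.

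In the soundness case every term satisfies $\cos\theta_\pm(\bv,\w_i)\le\cos((1+\tau)r)$. We will show $\cos((1+\tau)r)/\cos(r)\le 1-c\tau r^2$ for $r\le\pi/2$ and an absolute $c>0$; it may be convenient to cap $r$ away from $\pi/2$ or handle $\cos(r)$ small separately. This follows from $\log\cos$ being concave with second derivative $-\sec^2\le-1$: it gives $\log\cos((1+\tau)r)-\log\cos r\le -\tau r\tan r\le-\tau r^2$. Hence $a_p(\bv)\le k\cos^p(r)e^{-\tau r^2p}\le k\cos^p(r)(\eps_0/k)^2\le\eps_0\cos^p(r)/8$, using $p\ge(2/(\tau r^2))\log(k/\eps_0)$ and $\eps_0\le 1/2$. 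The threshold $(7/8)\eps_0\cos^p(r)$ therefore sits with margin $\eps_0\cos^p(r)/8$ on each side. It suffices that $|\wh{a}_p-a_p|\le\eps_0\cos^p(r)/8$ with probability $1-\delta_\test$.

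For concentration, the variance of $y\hep_p(\bv\cdot\x)/\wh{\sigma}_p$ is at most $\E[y^4]^{1/2}\E[\hep_p^4]^{1/2}/\wh{\sigma}_p^2$. Here $\E[y^4]\le k^4\,O(1)$, hypercontractivity (Fact~\ref{fct:gaussian-hypercontractivity}) gives $\E[\hep_p^4]^{1/2}\le 3^{p}$, and $\wh{\sigma}_p^{-2}=O(p^{5/2})$ by Fact~\ref{fct:hermite-relu}. Chebyshev then needs
\[
N\gtrsim \frac{k^2\,3^p\,p^{5/2}}{\delta_\test\,\eps_0^2\cos^{2p}(r)}.
\]
The time bound $O(dN)$ is immediate, since each sample requires one inner product $\bv\cdot\x$ and one evaluation of $\hep_p$ via the three-term recurrence.

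The main obstacle is making this sample bound polynomial. The factor $3^p$, with $p\approx\log(k/\eps_0)/(\tau r^2)$ and $r$ possibly as small as $\eps_0/(5k)$, is exponential in $\poly(k/\eps_0)$, and $\cos^{-2p}(r)$ is only harmless when $r$ is small. The crude hypercontractive bound therefore fails. I would instead exploit the structure of the tail. Writing $y\hep_p(\bv\cdot\x)=\sum_i\beta_i\sigma(\w_i\cdot\x)\hep_p(\bv\cdot\x)$, bound $\E[\sigma(\w_i\cdot\x)^2\hep_p(\bv\cdot\x)^2]$ directly via Cauchy--Schwarz with $\E[\hep_p^2(\bv\cdot\x)(\w_i\cdot\x)^2]$. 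This is computable exactly using the degree identity of Fact~\ref{fact:hermite-poly}, and gives $O(p)$, which removes the $3^p$ factor. The variance is then $O(k^2p\cdot p^{5/2})$. If $r$ is bounded away from $0$ (say $r\le\pi/4$ after capping), $\cos^{-2p}(r)\le e^{O(pr^2)}=(k/\eps_0)^{O(1/\tau)}$, which is polynomial for constant $\tau$. Larger $r$ can be reduced to this case because the invariant radii are small constants. This yields $N=\poly(k,1/\eps_0)/\delta_\test$ samples.
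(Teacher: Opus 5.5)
Your final argument is essentially the paper's proof. It uses the same identity $a_p(\bv)=\sum_i\beta_i\cos^p\theta_\pm(\bv,\w_i)$, the same factor $e^{-p\tau r^2}$ in the soundness case, and the same variance bound $O(k^2p)/\wh{\sigma}_p^2$ from $y^2\le k\sum_i(\w_i\cdot\x)^2$ and the degree identity (which removes the $3^p$ loss), followed by Chebyshev and $\cos^{-2p}(r)\le e^{O(pr^2)}=(k/\eps_0)^{O(1/\tau)}$.

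Two small fixes are needed. First, $k(\eps_0/k)^2\le\eps_0/8$ requires $\eps_0\le k/8$; use $\eps_0^2/k\le\eps_0/2$ as the paper does, which gives $\wh{a}_p\le(5/8)\eps_0\cos^p(r)$, still below the threshold. Second, you mean $r$ bounded away from $\pi/2$, not $0$. The paper's bound $\cos\theta\ge e^{-\theta^2}$ silently needs the same restriction, and it is harmless because every radius passed to $\mathsf{Test}$ is at most $\arccos(9/10)$.
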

\begin{proof}
Denote $\theta_i\eqdef \theta_\pm(\w_i,\bv)$
and suppose without loss of generality that $\theta_1 = \min_{i \in [k]}\theta_i$ 
\new{and $\beta_1\geq \eps_0$}.
For even degree $p$, by \Cref{fact:hermite-poly}, we have that 
\begin{align*}
    a_p\eqdef\frac{1}{\wh{\sigma}_p}\Exy[y\hep_p(\bv\cdot\x)] = \frac{1}{\wh{\sigma}_p}\sum_{i=1}^k \E_{\x\sim\calN}[\beta_i\sigma(\w_i\cdot\x)\hep_p(\bv\cdot\x)] = \sum_{i=1}^k \beta_i\cos^p\theta_i,
\end{align*}
where $\wh{\sigma}_p$ is the $p^{\text{th}}$ Hermite coefficient of the activation $\sigma$. 
We first prove the following simple claim, establishing a bound between the empirical estimate $\wh{a}_p$ and the population mean $a_p$. 
\new{Note that in \Cref{clm:empirical-conc}, the number of samples required for our tester has polynomial dependence on $1/\delta_\test$. This can be replaced by logarithmic dependence 
using median-of-means estimation~\cite{Nemirovsky1983Problem} 
instead of the simple empirical estimator used in \nameref{alg:sat}. 
Yet, for our purposes, a polynomial dependence is sufficient.}

\begin{claim} \label{clm:empirical-conc}
With probability at least $1-\delta_\test$, the empirical estimator $\wh{a}_p$ of $a_p$ that uses $N=\poly(k,1/\eps_0)\new{(1/\delta_\test)}$ samples 
satisfies $|\wh{a}_p - a_p|\leq (1/8)\eps_0\cos^p(r)$.
\end{claim}

\begin{proof}[Proof of \Cref{clm:empirical-conc}]
\new{Note that $\wh{a}_p$ is an unbiased estimator of $a_p$. 
The proof of the claim follows by bounding the variance of the estimate 
and applying Chebyshev's inequality.}

We start by noting that the variance of $y\hep_p(\bv\cdot\x)$ is bounded by:
\begin{align}\label{eq:smallest-angle-tester-1}
    \E[y^2\hep_p(\bv\cdot\x)^2]\leq k\sum_{i=1}^k\E_{\x\sim \normal}[\sigma(\w_i\cdot\x)^2\hep_p(\bv\cdot\x)^2]&\leq k\sum_{i=1}^k\E_{\x\sim \normal}[(\w_i\cdot\x)^2\hep_p(\bv\cdot\x)^2]\;.
    \end{align}
Using the standard Hermite identity $z \hep_p(z)=\sqrt{p+1}\hep_{p+1}(z)+\sqrt{p}\hep_{p-1}(z)$ 
(see \Cref{fact:hermite-poly}) and the orthogonality of Hermite polynomials, we get that
$$\E_{\x\sim\calN}[(\vec v\cdot\x)^2\hep_p(\bv\cdot\x)^2] = (p+1)\E_{\x\sim\calN}[\hep_{p+1}(\bv\cdot\x)^2]+p\E_{\x\sim\calN}[\hep_{p-1}(\bv\cdot\x)^2] = 2p+1.$$
Therefore, by \eqref{eq:smallest-angle-tester-1}, since $p\geq 2$, we have:\new{
\begin{align*}
    \E[y^2\hep_p(\bv\cdot\x)^2] &\leq k\sum_{i=1}^k \E_{\x\sim\calN}[(\cos\theta(\w_i,\bv)\bv\cdot\x + \sin\theta(\w_i,\bv)\wh{\w_i^{\perp\bv}}\cdot\x)^2\hep_p(\bv\cdot\x)^2]\\
    &\leq k\sum_{i=1}^k (2\E_{\x\sim\calN}[(\bv\cdot\x)^2\hep_p(\bv\cdot\x)^2] + 2\E_{\x\sim\calN}[(\wh{\w_i^{\perp\bv}}\cdot\x)^2]\E_{\x\sim\calN}[\hep_p(\bv\cdot\x)^2])\\
    &=2k\sum_{i=1}^k ((2p+1) + 1) \leq 6 k^2 p. \end{align*}}
Since $p$ is an even integer, $\wh{\sigma}_p\neq 0$. From Chebyshev's inequality, we get
\begin{align*}
    \pr[|\wh{a}_p - \Exy[(1/\wh{\sigma}_p)y\hep_p(\bv\cdot\x)]|\geq t]\leq \frac{\Exy[y^2\hep_p(\bv\cdot\x)^2]}{\wh{\sigma}_p^2 Nt^2}\leq \frac{6k^2 p}{N \wh{\sigma}_p^2 t^2}.
\end{align*}
Choosing $t = (1/8)\eps_0\cos^p(r)$ and 
$N =(6k^2p)/(\wh{\sigma}_p^2 t^2 \delta_\test) \lesssim k^2 p/(\wh{\sigma}_p^2 \eps_0^2\cos^{2p}(r) \delta_\test)$ then leads to  
$|\wh{a}_p - a_p|\leq (1/8)\eps_0\cos^p(r)$ with probability at least $1-\delta_\test$.

Recall that $\wh{\sigma}_p^2 \gtrsim p^{-3}$ (see \Cref{fct:hermite-relu}) 
and the elementary identity 
$\cos\theta = \exp(-\int_0^\theta \tan(s)\diff{s})\geq \exp(-\theta^2)$.
Therefore, when $\tau$ is chosen as an absolute positive constant, 
plugging in our choice of $p = 2\lceil(1/(\tau r^2))\log(k/\eps_0)\rceil$, 
and the assumption $ r\geq \eps_0/(5k)$, we get that \begin{align*}
    N &\lesssim \bigg(\frac{1}{\delta_\test}
    \bigg)\bigg(\frac{k}{\eps_0}\bigg)^2 \frac{p}{\wh{\sigma}_p^2} \exp(2p r^2) \lesssim\bigg(\frac{1}{\delta_\test}
    \bigg)p^4\bigg(\frac{k}{\eps_0}\bigg)^{2+4/\tau}\lesssim \new{\bigg(\frac{\log^4(k/\eps_0)}{\delta_\test}
    \bigg)\bigg(\frac{k}{\eps_0}\bigg)^{10+4/\tau}}\;.
\end{align*}
This completes the proof of \Cref{clm:empirical-conc}.
\end{proof}

\new{We henceforth condition on the event of \Cref{clm:empirical-conc} and}
\new{we are ready to prove the two statements of the lemma, 
i.e., the completeness and the soundness of the tester.}
Note that a direct calculation yields for $\theta_i > r$:
    \begin{align*}
        \frac{\cos^p\theta_i}{\cos^p(r)} =\exp\bigg(-p\int_r^{\theta_i} \tan(s)\diff{s}\bigg)  &\leq \exp\bigg(-\frac{p(\theta_i^2 -  r^2)}{2}\bigg).
    \end{align*}

\paragraph{Completeness.} 
    Assume first that $\theta_1\leq  r$ and $\beta_1\geq \eps_0$. Then, we have:
    \begin{align*}
        a_p = \sum_{i=1}^k \beta_i\cos^p\theta_i \geq \beta_1\cos^p\theta_1\geq \eps_0\cos^p(r).
    \end{align*}
    \new{By \Cref{clm:empirical-conc},} this implies $\wh{a}_p\geq (7/8)\eps_0\cos^p(r)$.
    Thus, when $\theta_1\leq  r$ and $\beta_1\geq \eps_0$, \Cref{alg:sat} would return $\mathsf{TRUE}$.   

\paragraph{Soundness. } 
Now assume $(1 + \tau) r < \theta_1\leq\theta_i$ for all $i\in[k]$. Then, we have
    \begin{align*}
        a_p &= \sum_{i=1}^k \beta_i\cos^p\theta_i\leq \sum_{i=1}^k\cos^p\theta_i\leq ke^{-p(\theta_i^2 -  r^2)/2}\cos^p(r) \leq ke^{-p(\tau r^2)}\cos^p(r)\leq \frac{1}{2}\eps_0\cos^p(r),
    \end{align*}
    where the last inequality is due to our choice of $p = 2\lceil(1/(\tau r^2))\log(k/\eps_0)\rceil$.
    \new{Similarly, by \Cref{clm:empirical-conc},} this implies $\wh{a}_p\leq (5/8)\eps_0\cos^p(r)$.
    Thus, when no angle $\theta_i$ is smaller than $(1 + \tau) r$, \Cref{alg:sat} would return $\mathsf{FALSE}$.
\end{proof}

\subsubsection{Covering and Pruning}
\label{subsec:cover-refinement-pruning}

\begingroup
\ifdefined\CoverPruneMonochrome
  \definecolor{cpInk}{gray}{0.14}
  \definecolor{cpActive}{gray}{0.12}
  \definecolor{cpPruned}{gray}{0.37}
  \definecolor{cpMuted}{gray}{0.53}
  \definecolor{cpGuide}{gray}{0.70}
  \definecolor{cpFrame}{gray}{0.83}
\else
  \definecolor{cpInk}{RGB}{39,48,58}
  \definecolor{cpActive}{RGB}{26,92,123}
  \definecolor{cpPruned}{RGB}{166,103,30}
  \definecolor{cpMuted}{RGB}{139,148,158}
  \definecolor{cpGuide}{RGB}{163,177,187}
  \definecolor{cpFrame}{RGB}{212,220,225}
\fi
\tikzset{
 cp/base/.style={font=\fontsize{8.5}{10.5}\selectfont,
                 text=cpInk,inner sep=1.2pt},
 cp/title/.style={cp/base,font=\fontsize{10.4}{12.5}\selectfont\bfseries},
 cp/header/.style={cp/base,font=\fontsize{9.2}{11.2}\selectfont\bfseries},
 cp/small/.style={cp/base,font=\fontsize{7.9}{9.5}\selectfont},
 cp/depthlabel/.style={cp/small,anchor=east,align=right},
 cp/active/.style={circle,draw=cpActive,fill=cpActive,
                   minimum size=6.1pt,inner sep=0pt,line width=.7pt},
 cp/history/.style={circle,draw=cpMuted,fill=cpMuted,
                   minimum size=4.8pt,inner sep=0pt,line width=.6pt},
 cp/candidate/.style={circle,draw=cpActive,fill=white,
                      minimum size=5.7pt,inner sep=0pt,line width=.8pt},
 cp/witness/.style={cp/candidate,line width=1.1pt,minimum size=6.3pt},
 cp/pruned/.style={diamond,draw=cpPruned,fill=white,
                   minimum size=6.1pt,inner sep=0pt,line width=.85pt},
 cp/branch/.style={draw=cpMuted!80,line width=.6pt},
 cp/faded/.style={draw=cpMuted!40,line width=.55pt},
 cp/track/.style={draw=cpActive,line width=1.05pt},
 cp/replacement/.style={draw=cpPruned,line width=.85pt,
              dash pattern=on 2.3pt off 1.8pt,
              -{Stealth[length=1.55mm,width=1.15mm]},
              shorten <=1.5pt,shorten >=1.5pt},
 cp/depth/.style={draw=cpGuide,line width=.5pt,
                 dash pattern=on .45pt off 2.2pt,line cap=round},
 cp/rule/.style={draw=cpFrame,line width=.45pt},
 cp/explain/.style={cp/small,text width=4.16cm,align=center,anchor=north},
}
\begin{figure}

\begin{tikzpicture}[x=1cm,y=1cm,cp/base]
\path[use as bounding box] (0,-1.97) rectangle (15.34,6.25);

\foreach \yy in {4.30,3.05,1.40}{
  \draw[cp/depth] (1.58,\yy)--(15.29,\yy);
}
\node[cp/depthlabel] at (1.39,4.30)
  {Depth $t-1$\\[-1pt]$\mathcal V_{t-1}$};
\node[cp/depthlabel] at (1.39,3.05)
  {Depth $t$\\[-1pt]$\mathcal V_t$};
\node[cp/depthlabel] at (1.39,1.40)
  {Depth $t+1$\\[-1pt]next level};

\draw[cp/rule] (6.08,5.43)--(6.08,-.58);
\draw[cp/rule] (10.65,5.43)--(10.65,-.58);
\node[cp/header] at (3.83,5.43) {(a) After \textsf{Refine}};
\node[cp/base] at (3.83,5.02) {$\mathcal V_t^{\mathrm{new}}$};
\node[cp/header] at (8.40,5.43) {(b) After \textsf{Test}};
\node[cp/base] at (8.40,5.02)
  {$\mathcal V_t^{\mathrm{closer}}$\quad(radius $r$)};
\node[cp/header] at (12.97,5.43) {(c) After \textsf{Prune}};
\node[cp/base] at (12.97,5.02)
  {$\mathcal V_{t+1}$\quad(separation $2gr$)};
\draw[cpMuted,-{Stealth[length=1.6mm,width=1.1mm]},line width=.7pt]
  (5.84,5.78)--(6.32,5.78);
\draw[cpMuted,-{Stealth[length=1.6mm,width=1.1mm]},line width=.7pt]
  (10.41,5.78)--(10.89,5.78);

\foreach \panel/\xx in {1/1.70,2/6.27,3/10.84}{
 \begin{scope}[xshift=\xx cm]

  \coordinate (cp-\panel-olda) at (1.15,4.30);
  \coordinate (cp-\panel-oldb) at (3.15,4.30);
  \foreach \xpos/\id in {.68/1,2.05/2,3.48/3}{
   \coordinate (cp-\panel-parent\id) at (\xpos,3.05);
  }
  \draw[cp/faded] (cp-\panel-olda)--(cp-\panel-parent1);
  \draw[cp/faded] (cp-\panel-olda)--(cp-\panel-parent2);
  \draw[cp/faded] (cp-\panel-oldb)--(cp-\panel-parent3);
  \node[cp/history] at (cp-\panel-olda) {};
  \node[cp/history] at (cp-\panel-oldb) {};
  \foreach \id in {1,2,3}{
   \node[cp/active] at (cp-\panel-parent\id) {};
  }
  \node[cp/base,anchor=south,fill=white,inner sep=1pt]
    at (2.05,3.21) {$\mathbf v_i^{(t)}$};

  \foreach \xpos/\leaf in {.25/1,.68/2,1.11/3,1.62/4,2.05/5,2.48/6,3.05/7,3.48/8,3.91/9}{
   \coordinate (cp-\panel-leaf\leaf) at (\xpos,1.40);
  }

  \ifnum\panel=1

   \foreach \parent/\leaf in {1/1,1/2,1/3,2/4,2/5,2/6,3/7,3/8,3/9}{
    \draw[cp/branch] (cp-\panel-parent\parent)--(cp-\panel-leaf\leaf);
   }
   \draw[cp/track] (cp-\panel-parent2)--(cp-\panel-leaf6);
   \foreach \leaf in {1,2,3,4,5,7,8,9}{
    \node[cp/candidate] at (cp-\panel-leaf\leaf) {};
   }
   \node[cp/witness] at (cp-\panel-leaf6) {};
   \node[cp/base,anchor=north] at (2.48,1.18) {$\mathbf q_i$};
  \fi

  \ifnum\panel=2

   \foreach \parent/\leaf in {1/1,1/2,2/4,2/5,2/6,3/8}{
    \draw[cp/branch] (cp-\panel-parent\parent)--(cp-\panel-leaf\leaf);
   }
   \draw[cp/track] (cp-\panel-parent2)--(cp-\panel-leaf6);
   \foreach \parent/\leaf in {1/3,3/7,3/9}{
    \draw[cp/faded] (cp-\panel-parent\parent)--(cp-\panel-leaf\leaf);
    \draw[white,line width=3.0pt]
      ($(cp-\panel-leaf\leaf)+(-.065,-.065)$)--($(cp-\panel-leaf\leaf)+(.065,.065)$)
      ($(cp-\panel-leaf\leaf)+(-.065,.065)$)--($(cp-\panel-leaf\leaf)+(.065,-.065)$);
    \draw[cpMuted,line width=1.05pt]
      ($(cp-\panel-leaf\leaf)+(-.065,-.065)$)--($(cp-\panel-leaf\leaf)+(.065,.065)$)
      ($(cp-\panel-leaf\leaf)+(-.065,.065)$)--($(cp-\panel-leaf\leaf)+(.065,-.065)$);
   }
   \foreach \leaf in {1,2,4,5,8}{
    \node[cp/candidate] at (cp-\panel-leaf\leaf) {};
   }
   \node[cp/witness] at (cp-\panel-leaf6) {};
   \node[cp/base,anchor=north] at (2.48,1.18) {$\mathbf q_i$};
  \fi

  \ifnum\panel=3

   \foreach \parent/\leaf in {1/1,2/4,2/6}{
    \draw[cp/faded] (cp-\panel-parent\parent)--(cp-\panel-leaf\leaf);
   }
   \foreach \parent/\leaf in {1/2,2/5,3/8}{
    \draw[cp/branch] (cp-\panel-parent\parent)--(cp-\panel-leaf\leaf);
   }
   \draw[cp/track] (cp-\panel-parent3)--(cp-\panel-leaf8);
   \draw[cp/replacement] (cp-\panel-leaf1)
      .. controls (.43,1.74) and (.56,1.74) .. (cp-\panel-leaf2);
   \draw[cp/replacement] (cp-\panel-leaf4)
      .. controls (1.80,1.74) and (1.93,1.74) .. (cp-\panel-leaf5);
   \draw[cp/replacement] (cp-\panel-leaf6)
      .. controls (2.70,1.98) and (3.26,1.98) .. (cp-\panel-leaf8);
   \foreach \leaf in {1,4,6}{\node[cp/pruned] at (cp-\panel-leaf\leaf) {};}
   \foreach \leaf in {2,5,8}{\node[cp/active] at (cp-\panel-leaf\leaf) {};}
   \node[cp/base,anchor=north,text=cpPruned] at (2.48,1.18)
     {$\mathbf q_i$};
   \node[cp/base,anchor=north] at (3.48,1.18)
     {$\mathbf z_i$};
  \fi
 \end{scope}
}

\node[cp/explain] at (3.83,.42)
 {\textsf{Refine} call.\\
  $\theta_\pm(\mathbf q_i,\mathbf w_i)\leq r$\\[-1pt]
  for each heavy target $i$.};
\node[cp/explain] at (8.40,.42)
 {Tested at radius $r$.\\
  Survived vectors lie within $(1+\tau)r$\\[-1pt]
  of some target (possibly light).};
\node[cp/explain] at (12.97,.42)
 {
  $R_{t+1}=(2g+1)r$.\\[-1pt]
  Only these leaves branch next.};

\draw[cp/rule] (.04,-.92)--(15.29,-.92);
\node[cp/candidate] at (.15,-1.23) {};
\node[cp/small,anchor=west] at (.32,-1.23) {candidate};
\draw[cpMuted,line width=1.05pt]
 (2.42,-1.295)--(2.55,-1.165) (2.42,-1.165)--(2.55,-1.295);
\node[cp/small,anchor=west] at (2.70,-1.23) {rejected by test};
\node[cp/pruned] at (5.78,-1.23) {};
\node[cp/small,anchor=west] at (5.97,-1.23) {removed by pruning};
\node[cp/active] at (9.45,-1.23) {};
\node[cp/small,anchor=west] at (9.63,-1.23) {retained};
\draw[cp/replacement] (11.71,-1.23)--(12.20,-1.23);
\node[cp/small,anchor=west] at (12.33,-1.23) {neighbour within $2gr$};
\end{tikzpicture}
    \caption{\textbf{List of candidate vectors over the different stages of $\mathsf{CoverAndPrune}$}. \textbf{Left}: after $\mathsf{Refine}$ is called, each $\bv \in \cV_t$ produces a large list of new candidate vectors; all produced lists combined contain, for each $\lambda$-heavy $\w_i,$ at least one candidate $\bv_i$ with $\theta_{\pm}(\bv_i, \w_i) \leq r$. \textbf{Center}: $\mathsf{Test}$ procedure is used to prune all candidates $\bv$ far from all $\w_i,$ that is, all $\bv$ such that $\theta_{\pm}(\bv, \w_i) > (1+\tau)r$, $\forall i \in [k].$ \textbf{Right}: $\mathsf{Prune}$ procedure makes a single iterative pass through the candidate vectors that survive after $\mathsf{Test},$ forming the output list $\cV_{t+1};$ it starts with an empty list and adds a new iterate if and only if it is at least $r$-far from all the candidates already in the list. } \label{fig:one_step}
\end{figure}
\endgroup

\new{The approach described thus far, based on the $\mathsf{Refine}$ subroutine 
and the $\mathsf{Test}$ procedure, provides a concrete way of constructing lists 
of candidate vectors that progressively improve approximations of $\eps_0$-heavy hidden weight vectors. 
In particular, suppose that the algorithm starts with a list of candidates $\cV_t$ 
such that for each $\eps_0$-heavy $\w_i,$ $i \in [k],$ there exists 
at least one $\bv_i \in \cV_t$ such that $\theta_{\pm}(\bv_i, \w_i) \leq R_t.$ 
We say in this case that $\w_i$ is covered by $\bv_i$ at resolution $R_t.$} 

\new{A call to $\mathsf{Refine}$ for each $\bv_i \in \cV_t$ creates a new list of candidate vectors, 
within which only one candidate vector is guaranteed to land significantly closer to $\w_i,$ with 
$\theta_{\pm}(\bv_i, \w_i) \leq \chi^3 R_t$. However, 
it is unclear how to identify such a candidate. 
What is possible to identify are the candidate vectors that are far 
from all hidden weight vectors; this is where our $\mathsf{Test}$ procedure is used. 
Thus, $\mathsf{Refine}$ and $\mathsf{Test}$ can be used jointly to construct 
a list $\cV_t'$ such that all $\eps_0$-heavy $\w_i,$ $i \in [k],$ 
are covered at resolution slightly larger than $\chi^3 R_t$ 
(as testing increases the resolution slightly), but still a constant factor smaller than $R_t.$ 
Since the resolution improves by a constant factor, repeating this combination 
of $\mathsf{Refine}$ and $\mathsf{Test}$ geometrically reduces the resolution 
and leads to covering all $\eps_0$-heavy hidden weight vectors 
at resolution $\eps/k$ within order-$\log(k/\epsilon)$ repetitions. 
Unfortunately, the described approach does not run in polynomial-time, 
since there is no guarantee that the number of candidate vectors in our list 
covering the heavy hidden weight vectors 
is small; as a result, as described, the list $\cV_t$ could grow super-polynomially.} 

\new{To prevent such a size blow up, we additionally introduce a simple pruning procedure 
(\nameref{alg:global-prune}). The procedure takes as input the list $\cV_t'$ produced 
after $\mathsf{Refine}$ and $\mathsf{Test}$ are used. 
It then creates $\cV_{t+1}$, starting from an empty list as follows.  
It traverses $\cV_t'$ element by element, and adds the traversed element to $\cV_{t+1}$ 
if and only if it is sufficiently far from all elements already in $\cV_{t+1}$. 
The intuition behind this method is that if two vectors $\vec v,\vec u$ are very close to each other, 
then both of them have similar correlation with the hidden weight vectors.  
Thus, we can remove one of them without substantially altering 
the coverage of the hidden weight vectors.  The whole sequence of steps following $\mathsf{Refine},$ $\mathsf{Test},$ and $\mathsf{Prune}$ is illustrated in \Cref{fig:one_step}.} 

\new{We argue that the last pruning step (\nameref{alg:global-prune}) ensures that all heavy hidden weight vectors remain covered by vectors in $\cV_{t+1}$ at a slightly larger resolution $R_{t+1} \leq \rho R_t,$ $\rho \in (0, 1)$, but only up to $k$ candidate vectors are added to $\cV_{t+1}.$} \new{More broadly, the main procedure of this section,} 
\nameref{alg:cover-and-prune}, guarantees the following: in each iteration $t \geq 0$ of $\mathsf{CoverAndPrune}$, the following invariant holds: \begin{equation}\tag{Invariant\textsubscript{$t$}}\label{eq:induction-cond}
\begin{gathered}
{\text{for all } } \eps_0\text{-heavy } \w_i,\
\text{there exists}\  \vec v_i^{(t)}\in \cV_t
\ \text{such that}\
\theta_\pm(\vec v_i^{(t)},\bw_i)\le R_t, \text{ and}
\\
|\cV_t|\le k.
\end{gathered}
\end{equation}

\begin{algorithm}[H]
\caption{$\mathsf{Prune}(\cV,r)$}
\label{alg:global-prune}
\begin{algorithmic}[1]
\State \textbf{Input: } A finite set $\cV\subseteq\spd$ and a separation radius $r>0$.
\State \textbf{Output: } A finite set $\cV_{\rm output}\subseteq\spd$ such that every pair of elements is separated by $r$ in $\theta_\pm$.

\medskip

\State $\cV_{\rm output}\gets\emptyset$.
\For{$\bv\in \cV$}
    \If{$\theta_\pm(\bv,\bv')>r$ for every $\bv'\in \cV_{\rm output}$}
        \State $\cV_{\rm output}\gets \cV_{\rm output}\cup\{\bv\}$.
    \EndIf
\EndFor
\State \textbf{Return: } $\cV_{\rm output}$.
\end{algorithmic}
\end{algorithm}

\begin{algorithm}[H]
\caption{$\mathsf{CoverAndPrune}(\eps_0,\delta;\D)$}
\label{alg:cover-and-prune}
\begin{algorithmic}[1]
\State \textbf{Input: } \new{Coefficient} threshold $\eps_0>0$, failure probability $\delta\in(0,1)$, sample access to $\D$.
\State \textbf{Output: }A list $\widehat \cV$ of \new{unit} vectors satisfying 
\Cref{prop:cover-and-prune}.

\medskip

\State Tester accuracy $\tau\gets 1/200$, ratio $g\gets 1 + 2\tau$,
radius shrinking rate \new{$\chi\gets (3/5)$}. 
\State Initial radius $r_{0}\gets\arccos(9/10)$, 
error threshold $\lambda_\init \gets \eps_0\new{\sin(\chi^3\eps_0)}$.
\State Relaxed initial radius $R_0\gets(2g+1)r_{0}$,
relaxed radius shrinking rate $\rho\gets(2g+1)\new{\chi^3}$.
\State 
Number of iterations $
T\gets
\lceil\frac{\log(R_0/\eps_0)}{\log(1/\rho)}\rceil.$
\State Failure probability of immediate steps $\delta'\gets O(\delta/(\new{T} k L_\init^3) )$
\State $\mathcal L\gets
        \init(\D,\lambda_\init,\delta')$. \Comment{Apply the Coarse Direction Recovery procedure.}
\State $\cL'\gets
\{\bv\in \cL:
\sat(\bv,r_{0},\eps_0,\tau,\delta';\D)=\mathsf{TRUE}\}$. \label{Line:cover-prune-test-1}
\State $\cV_0\gets\mathsf{Prune}(\cL',2g r_{0})$.\Comment{Initialized vector set satisfying the induction  \eqref{eq:induction-cond}.}\label{Line:cover-prune-prune-1}
\For{$t=0,1,\ldots,T-1$}
    \State $r_{t+1}\gets \new{\chi^3} R_t$ and $\cV_t^{\rm new}\gets \cV_t$.
    \For{$\vec v\in \cV_t$}
        \State $\cV_t^{\rm new}\gets \cV_t^{\rm new}\cup
        \mathsf{Refine}(\vec v,\new{R_t},\eps_0,\delta';\D)$. 
        \Comment{``Gradient Descent/Refinement'' step; Vectors in $\cV_t^{\rm new}$ 
        have smaller distance to the hidden weight vectors compared to $\cV_t$}\label{Line:cover-prune-refinement-step} \EndFor
    \State $\cV_t^{\rm closer}\gets\{\bv\in \cV_t^{\rm new}:
    \sat(\bv,r_{t+1},\eps_0,\tau,\delta';\D)=\mathsf{TRUE}\}$. 
    \label{Line:cover-prune-test-2}
    \Statex \Comment{Discard candidate vectors that do not cover hidden weight vectors.}
    \State $\cV_{t+1}\gets\mathsf{Prune}(\cV_t^{\rm closer},2g r_{t+1})$. \label{Line:cover-prune-prune-2}\Comment{Discard candidate vectors covering the same hidden weight vectors.}
    \State $R_{t+1}\gets(2g+1)r_{t+1}$. \Comment{$\cV_{t+1}$ satisfies \eqref{eq:induction-cond} at $t+1$ with parameter $R_{t+1}$.}
\EndFor
\State \textbf{Return: } $\widehat \cV\gets \cV_T$.
\end{algorithmic}
\end{algorithm}

\new{We briefly discuss how \nameref{alg:cover-and-prune} preserves the induction invariant at each step.
The refinement subroutine in Line~\ref{Line:cover-prune-refinement-step} generates new vectors 
that are closer to the hidden weights, but also produces many redundant and unwanted candidates.
In order to guarantee that the vectors we keep cover the hidden weight vectors 
within $R_t$, we use our testing routine from the previous subsection (\nameref{alg:sat}). 
This procedure, applied in Line~\ref{Line:cover-prune-test-2}, 
allows us to discard candidate vectors that 
do not have any hidden weight vector within radius $R_t$ (in $\theta_\pm$). 
Then in Line~\ref{Line:cover-prune-prune-2}, the Pruning subroutine removes redundant candidates, 
since two well-separated vectors cannot both be
near the same hidden weight vector.
By the triangle inequality, after the pruning step, we guarantee that each candidate vector covers 
hidden weight vectors within a relaxed radius, $R_{t+1} = (2g+1)r_{t+1} = \rho R_t$, with $\rho<1$.
Thus, as a result, 
at most $k$ candidates remain in $\cV_{t+1}$, and every hidden weight with non-negligible coefficient 
is covered by a candidate vector within radius $R_{t+1}$, 
i.e., the invariant \eqref{eq:induction-cond} is satisfied after each update.
}

\begin{proposition}[Covering and Pruning]
\label{prop:cover-and-prune}
Assume access to an algorithm $\init$ that satisfies \Cref{assum:initialization-algorithm}. 
Given parameters $\eps_0,\delta>0$,
\nameref{alg:cover-and-prune} draws
$$N = \widetilde O(L_\init^3(L_\init^3 + N_\init)) \, \poly(k, 1/\eps_0, 1/\delta)$$
i.i.d.\ samples from $\D$,
runs in $\poly(d,N,T_\init)$ time, and returns a list $\wh{\cV}$ such that with probability 
at least \(1-\delta\), the following holds: 
\begin{enumerate}[leftmargin=*]
    \item  $|\wh{\cV}|\le k$; and 
    \item  For all $\eps_0$-heavy $\w_i,$ there exists \(\widehat\bv\in\widehat \cV\) such
that
$
\theta_\pm(\widehat\bv,\bw_i)\le \eps_0.$
\end{enumerate}
\end{proposition}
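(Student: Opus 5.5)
The plan is to prove \eqref{eq:induction-cond} by induction on $t$, with the initialization as the base case. We first condition on a global good event $\cE$ on which every call to $\init$, $\sat$, and $\mathsf{Refine}$ (including its rejection sampling) succeeds. We will count the calls and check that, with $\delta' = O(\delta/(TkL_\init^3))$ chosen suitably small, a union bound gives $\pr[\cE]\ge 1-\delta$. On $\cE$, everything below is deterministic geometry.

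\emph{Base case.} By \Cref{assum:initialization-algorithm} with threshold $\lambda_\init\le\eps_0$, every $\eps_0$-heavy $\w_i$ has some $\br_i\in\cL$ with $|\br_i\cdot\w_i|\ge 0.99\ge 9/10$, so $\theta_\pm(\br_i,\w_i)\le r_0$. By the completeness part of \Cref{lem:smallest-angle-tester}, $\br_i$ survives to $\cL'$. By soundness, every $\bv\in\cL'$ satisfies $\theta_\pm(\bv,\w_j)\le(1+\tau)r_0$ for some $j$; this $j$ may be light, which is harmless.

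\emph{Pruning step.} This same geometric argument is reused at every level. Take radius $r$ and separation $2gr$ with $g=1+2\tau$.
\begin{itemize}
\item \emph{Coverage.} If the witness $\bv_i^*$ (with $\theta_\pm(\bv_i^*,\w_i)\le r$) is discarded by $\mathsf{Prune}$, some retained $\bv'$ has $\theta_\pm(\bv',\bv_i^*)\le 2gr$. The triangle inequality for $\theta_\pm$ (a metric on projective space) then gives $\theta_\pm(\bv',\w_i)\le(2g+1)r=R$.
\item \emph{Size.} Each retained vector is within $(1+\tau)r$ of some $\w_j$. Two retained vectors near the same $\w_j$ would be within $2(1+\tau)r<2gr$ of each other, contradicting the separation. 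So the map from retained vectors to indices $j$ is injective, and $|\cV|\le k$.
\end{itemize}
This yields \eqref{eq:induction-cond} at $t=0$ with $R_0=(2g+1)r_0$.

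\emph{Inductive step.} Assume \eqref{eq:induction-cond} holds at $t$. For each heavy $\w_i$, apply \Cref{lem:add-vectors} to $\bv_i^{(t)}$ with radius $R_t$. We must check $R_t\in[\eps_0,\pi/2)$: $R_0=(2g+1)\arccos(0.9)\approx 3.02\cdot 0.451<\pi/2$, and $R_t\ge\eps_0$ as long as $t<T$, or we stop once that threshold is reached. The lemma produces $\bv\in\cV_t^{\rm new}$ with $\theta_\pm(\bv,\w_i)\le\chi^3R_t=r_{t+1}$.

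To apply \Cref{lem:smallest-angle-tester} at radius $r_{t+1}$ we need $r_{t+1}\ge\eps_0/(5k)$. This holds because $r_{t+1}=R_{t+1}/(2g+1)$ and, by the choice of $T$, $R_{t+1}\gtrsim\rho\,\eps_0$ for $t+1\le T$. Then the test-and-prune argument above gives the invariant at $t+1$ with $R_{t+1}=\rho R_t$, where $\rho=(2g+1)(3/5)^3\approx 0.65<1$. After $T=\lceil\log(R_0/\eps_0)/\log(1/\rho)\rceil$ rounds, $R_T\le\eps_0$, which proves item 2; item 1 is the size bound.

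\emph{Complexity.} Per round there are at most $k$ calls to $\mathsf{Refine}$, each using $\widetilde O(L_\init^3N_\init/\gamma)$ samples with $1/\gamma=\poly(1/\eps_0)$. This produces $O(kL_\init^3)$ candidates, each tested with $\poly(k,1/\eps_0)/\delta'$ samples. Since $1/\delta'=\poly(k,1/\eps_0,1/\delta)\,L_\init^3$ and $T=O(\log(1/\eps_0))$, the total is $\widetilde O(L_\init^3(N_\init+L_\init^3))\poly(k,1/\eps_0,1/\delta)$. The running time is polynomial in the number of samples, in $T_\init$, and in $d$; $\mathsf{Prune}$ costs $O(|\cV|^2d)$.

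The main obstacle is bookkeeping rather than ideas. We need the radii to stay in the admissible ranges of \Cref{lem:add-vectors} and \Cref{lem:smallest-angle-tester} in the last round, where we may need to clip at $\eps_0$ and tolerate a constant-factor slack in $r\ge\eps_0/(5k)$. We also need the union bound over the $O(TkL_\init^3)$ random calls to close with $\delta'$.
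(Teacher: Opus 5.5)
Your proposal is correct and follows the paper's proof essentially step for step. The base case runs $\init$ at threshold $\lambda_\init\le\eps_0$, tests at $r_0$, and prunes at $2gr_0$; each inductive step runs $\mathsf{Refine}$ at radius $R_t$, tests at $r_{t+1}=\chi^3R_t$, and prunes at $2gr_{t+1}$, with coverage from the triangle inequality for $\theta_\pm$, the bound $|\cV_{t+1}|\le k$ from the injectivity argument, and the same union bound and sample count. Your explicit parameter-range checks, which the paper leaves implicit, go through without any clipping: since $T-1<\log(R_0/\eps_0)/\log(1/\rho)$ we get $R_t>\eps_0$ for all $t\le T-1$, hence $r_{t+1}>\chi^3\eps_0\approx 0.216\,\eps_0\ge\eps_0/(5k)$, and $R_0\approx 1.36<\pi/2$.
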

\begin{proof}
Let $g,\chi,r_0,R_0$, and $\rho$ be defined as in \nameref{alg:cover-and-prune}. For iteration $t$,  let 
    \[
    R_{t+1}=\rho R_t = (2g+1)\chi^3 R_t = (2g+1)r_{t+1}.
    \]
By \Cref{assum:initialization-algorithm}, we have that running the 
$\init$ routine on $\D$ returns a list of unit 
    vectors $\cL$ 
    such that for each \new{heavy} $\w_i$, $i\in[k],$ there exists $\bv_i\in\cL$ 
    satisfying $\theta_\pm(\bv_i,\w_i)\leq \arccos(9/10) = r_0$.
After \Cref{Line:cover-prune-test-1}, the algorithm \nameref{alg:cover-and-prune} keeps only the vectors that pass 
the smallest-angle test. These are the candidate vectors that 
cover the \new{heavy} hidden weight vectors within radius $r_{0}$, i.e.,
\[
\cL'
\eqdef
\left\{
\bv\in \cL:
\mathsf{Test}(\bv,r_{0},\eps_0,\tau,\delta_\test;\D)=\mathsf{TRUE}
\right\},
\]
and then in \Cref{Line:cover-prune-prune-1} prunes them to obtain 
the initial list $\cV_0$:
\[
\cV_0\eqdef \mathsf{Prune}(\cL', 2g r_{0}) \;.
\]
Our goal is to show that for each $t\in[T]$, \eqref{eq:induction-cond} holds. We prove this claim by induction on $t.$

\begin{claim}[Induction Base] \label{clm:ind-basis}
At $t=0$,  \eqref{eq:induction-cond} condition is satisfied.
\end{claim}
\begin{proof}
By \Cref{coro:obtain-hidden-directions-using-init-assum}, the call to $\init$ in \new{Line 8} 
gives a list $\cL$ such that for every $\eps_0$-heavy $\w_i$, 
there exists \(\bv_i\in \cL\) with the property that
\[
\theta_\pm(\bv_i,\bw_i)\le r_{0}.
\]
Further, after the testing step in \Cref{Line:cover-prune-test-1}, 
for any $\eps_0$-heavy $\w_i$, any vector $\bv_i\in\cL$ such that 
$\theta_\pm(\bv_i,\w_i)\leq r_0$ will be kept in $\cL'$, 
by \Cref{lem:smallest-angle-tester}.
If \(\bv_i\) survives pruning in \Cref{Line:cover-prune-prune-1}, it covers
\(\bw_i\) with radius $r_0$ (i.e., $\theta_{\pm}(\bv_i,\w_i)\leq r_0$). If, on the other hand, $\bv_i$ is discarded, then there must exist some other vector \(\bv_i'\in \cV_0\) that satisfies
\[
\theta_\pm(\bv_i',\bv_i)\le 2g r_{0}.
\]
Hence, by the triangle inequality, we get that
\[
\theta_\pm(\bv_i',\bw_i)
\le\theta_\pm(\bv_i',\bv_i) + \theta_\pm(\bv_i,\bw_i)\leq
2gr_{0} + r_{0}
=
R_0.
\]
This establishes the second part of the claim. 
It remains to bound the size of the list.

Every vector in \(\cV_0\) passed the smallest-angle test 
with cutoff
\(r_{0}\). By \Cref{lem:smallest-angle-tester}, every such vector is 
within
\((1+\tau) r_{0}\) of some hidden weight vector \(\bw_j\) with respect to $\theta_{\pm}$. 
If two retained vectors $\vec v_j,\vec v_j'$ cover the same
\(\bw_j\) with radius $(1+\tau)r_0$, then, since $g = 1+2\tau$, we would have 
\[
\theta_\pm(\bv_j',\bv_j)\leq \theta_\pm(\bv_j',\bw_j)+\theta_\pm(\bw_j,\bv_j)\leq 2 (1+\tau)r_{0}<2g r_{0}
\]
contradicting the pruning rule 
in \Cref{Line:cover-prune-prune-1}. Therefore, we have
$|\cV_0|\le k$, completing the proof of \Cref{clm:ind-basis}.
\end{proof}

Now it remains to show that the invariant holds for all $t>0$.
This is done in the following claim. 

\begin{claim}[Inductive Step] \label{clm:induction}
Assume that \eqref{eq:induction-cond} holds at step $t$.
Then, \new{conditioned on successful $\mathsf{Refine}$ and $\mathsf{Test}$ procedures in \Cref{Line:cover-prune-refinement-step,Line:cover-prune-test-2}}, it also holds at step $t+1$.
\end{claim}
\begin{proof}
    Fix an $\eps_0$-heavy $\w_i$ and choose
\(\vec v\in \cV_t\) such that
$\theta_\pm(\vec v,\bw_i)\le R_t$. 
Such a $\vec v$ is guaranteed to exist 
because \eqref{eq:induction-cond} holds for step $t$.
In \Cref{Line:cover-prune-refinement-step} of \nameref{alg:cover-and-prune}, 
the 
$\mathsf{Refine}$ subroutine (\nameref{alg:refine}) is called with radius $R_t$. Therefore, by \Cref{lem:add-vectors},  it gives a list $\cV_t^{\rm new}$ that 
contains a unit 
vector $\vec v'\in \cV_t^{\rm new}$ such that 
$\theta_\pm(\vec v',\bw_i)\leq \chi^3 R_t = r_{t+1}$ (since $\w_i$ is $\eps_0$-heavy, $\beta_i \ge \eps_0$). Thus, by \Cref{lem:smallest-angle-tester}, 
$\bv'$ will pass the test in \Cref{Line:cover-prune-test-2}, i.e.,
$
\sat(\bv,r_{t+1},\eps_0,\tau,\delta';\D)
$, 
and $\vec v' \in \cV_t^{\rm closer}$.

It remains to show that after pruning (\Cref{Line:cover-prune-prune-2}), $\cV_{t+1}$ covers all $\eps_0$-heavy $\vec w_i$'s at resolution $R_{t+1}.$ 
\new{Recall that, in \Cref{Line:cover-prune-prune-2}, \nameref{alg:global-prune} is applied to $\cV = \cV_{t}^{\mathrm{closer}}$ with separation radius $r = 2g r_{t+1}$.}
Fix any $\eps_0$-heavy $\w_i$ and let $\vec v_i\in \cV_t^{\rm closer}$ be 
such that $\theta_\pm(\bv_i,\bw_i)\le r_{t+1} = \chi^3 R_{t}$ 
(we have argued above that this vector will pass the smallest-angle test).
If \(\bv_i\in \cV_{t+1}\), it covers \(\bw_i\) 
at scale \(\chi^3 R_{t}\leq R_{t+1}\). 
If it is discarded, then some retained vector 
\(\bv_i'\in \cV_{t+1}\) satisfies
\[
\theta_\pm(\bv_i',\bv_i)\le 2g r_{t+1}.
\]
Hence,
\[
\theta_\pm(\bv_i',\bw_i)
\le \theta_\pm(\bv_i',\bz_i)+\theta_\pm(\bv_i,\bw_i)
\le (2g+1)r_{t+1}
= R_{t+1} \;.
\]

It remains to bound \(|\cV_{t+1}|\). Similar to the base argument, 
since every vector in $\cV_{t+1}$ passed the smallest-angle test 
in \Cref{Line:cover-prune-test-2}, 
by \Cref{lem:smallest-angle-tester}, 
for every $\bv_j\in\cV_{t+1}$ there is a hidden weight vector 
$\w_j$ such that $\theta_{\pm}(\bv_j,\w_j)\leq (1+\tau)r_{t+1}$. 
\new{For the sake of contradiction,} suppose there were 
two distinct vectors $\bv_j, \bv_j'$ both covering $\w_j$ 
retained in $\cV_{t+1}$. Then, by the triangle inequality, 
$\theta_\pm(\bv_j',\bv_j)\leq 2(1+\tau)r_{t+1}<2g r_{t+1}$, 
indicating that one of them will not pass the pruning step in \Cref{Line:cover-prune-prune-2}. 
\new{This provides the desired contradiction and establishes that 
\(|\cV_{t+1}| \leq k\).}
\end{proof}

\new{To complete the proof of \Cref{prop:cover-and-prune}, 
it remains to bound the sample and computational complexities.}
Note that
$
R_t=\rho^t R_0$. Choosing
\[
T=
\left\lceil
\frac{\log(R_0/\eps_0)}{\log(1/\rho)}
\right\rceil,
\]
we have $
R_T\le \eps_0.$
Thus, for all $\eps_0$-heavy $\w_i$, 
there exists \(\widehat\bv_i\in \cV_T\) such that
$\theta_\pm(\widehat\bv_i,\bw_i)\le \eps_0.$

 Let $L_\init$ be the uniform upper bound on the length of the list 
 returned by the $\init$ subroutine, $N_\init$ be the {corresponding }sample complexity, 
 and $\delta_\init$ be the $\init$ subroutine's failure probability.

 At each iteration $t\in [T]$, the smallest angle tester (\nameref{alg:sat}) 
 is called 
 $$|\cV_t^{\rm new}| =|\cV^t|\times |\text{number of vectors returned by $\mathsf{Refine}$}| = O(kL_\init^3)$$ many times, where 
 we used the fact that 
 $$\text{number of vectors returned by $\mathsf{Refine}$} = O(L_\init^3)$$
 from \Cref{lem:add-vectors}. Therefore, by a union bound over the success events of the tester routine,
the probability that all $\sat$ calls succeed at step $t$ is 
 $1 - O(k L_\init^3\delta_\test)$. 

Meanwhile, at step $t$, the $\mathsf{Refine}$ subroutine 
is called $|\cV_t|\leq k$ times. 
Therefore, by a union bound over all the calls of $\mathsf{Refine}$ for all $T$ rounds and by \Cref{lem:add-vectors}, with probability at least 
$1 - O(TkL_\init^3\delta_\init)$, the $\mathsf{Refine}$ routine successfully
returns a list $\cV_t^{\rm new}$ containing new candidate vectors 
that are closer to the hidden weight vectors compared to the ones in $\cV_t$, at each step $t\in[T]$. 
Thus, by setting $\delta_\init =\delta'= O(\delta/(T k L_\init^3))$ and 
$\delta_\test =\delta'= O(\delta/(T k L_\init^3))$, the subroutine \nameref{alg:cover-and-prune}
succeeds with probability at least 
$1 - O(TkL_\init^3(\delta_\init+\delta_\test)) \geq 1-\delta$.

By \Cref{lem:smallest-angle-tester}, it follows that the tester 
$\mathsf{Test}$ succeeds with 
$(1/\delta_\test) \poly(k/\eps_0) = \poly(k/\eps_0) TkL_\init^3/\delta$ samples for each run. By \Cref{lem:add-vectors}, the $\mathsf{Refine}$ 
subroutine requires $\widetilde{O}(L_\init^3 N_\init\log(1/\delta_\init)/\gamma)$ samples from $\D$,  
\new{where $\gamma = \lambda_\init /100 = \Theta(\eps_0^2)$}. 
Therefore, in total, we need\begin{align*}
    N &\gtrsim \text{ number of $\sat$ runs}\times N_\sat + \text{number of $\mathsf{Refine}$ runs }\times N_{\mathsf{Refine}}\\
    &\gtrsim TkL_\init^3\times \poly(k/\eps_0) TkL_\init^3/\delta + Tk\times \poly(1/\eps_0)L_\init^3 N_\init \log(1/\delta_\init) \\
    &\gtrsim \log^2(1/\eps_0)\poly(k/\eps_0,1/\delta)L_\init^3(L_\init^3 + N_\init) \end{align*}
samples. \new{By \Cref{lem:add-vectors}, $\mathsf{Refine}$ has runtime $\poly(d,1/\eps_0, L_\init, N_\init,T_\init)$ and $\mathsf{Test}$ has runtime $O(dN_\sat)$. Thus, in summary, \nameref{alg:cover-and-prune} has runtime $\poly(d,k/\eps_0,1/\delta,L_\init, N_\init,T_\init) = \poly(d,N,T_\init)$. This completes the proof of \Cref{prop:cover-and-prune}.} \end{proof}

\subsection{Proof of \Cref{prop:optimization}}\label{subsec:proof-for-optimization}

We are now ready to prove the main result of this section.

\begin{proof}[Proof of \Cref{prop:optimization}]
Let $\eps_0=\sqrt\eps/(4k).$
Run \nameref{alg:cover-and-prune} with failure probability $1/9$. 
By \Cref{prop:cover-and-prune}, with
probability at least $8/9$, it returns
$\widehat \cV=\{\bv_1,\ldots,\bv_m\}$, where $m\leq k$, such that for every
$i$ with $\beta_i\geq\eps_0$ there exists $\vec v\in\widehat \cV$ such that
$\theta_{\pm}(\w_i,\vec v)\leq\eps_0.$
We condition on this event. Assume without loss of generality that all the weights 
$\beta_i\geq \eps_0$, as otherwise, \new{ignoring these ReLUs will only increase the $L_2^2$ error by $\eps$. 
Indeed, we have the following sequence of inequalities:
\begin{align*}
    \Exy[(y - \wh{y}(\x))^2] &= \E_{\x\sim\calN(\vec 0,\mI_d)}\bigg[\bigg(\sum_{i:\beta_i\geq \eps_0}\beta_i\sigma(\w_i\cdot\x) - \wh{y}(\x) + \sum_{i:\beta_i\leq \eps_0}\beta_i\sigma(\w_i\cdot\x)\bigg)^2\bigg]\\
    &\leq 2\E_{\x\sim\calN(\vec 0,\mI_d)}\bigg[\bigg(\sum_{i:\beta_i\geq \eps_0}\beta_i\sigma(\w_i\cdot\x) - \wh{y}(\x)\bigg)^2\bigg] + 2\E\bigg[\bigg(\sum_{i:\beta_i\leq \eps_0}\beta_i\sigma(\w_i\cdot\x)\bigg)^2\bigg]\\
    &\leq 2\E_{\x\sim\calN(\vec 0,\mI_d)}[(y'(\x) - \wh{y}(\x))^2] + 2k\sum_{i:\beta_i\leq \eps_0}^k \beta_i^2\E[(\sigma(\w_i\cdot\x))^2]\\
    &\leq 2\E_{\x\sim\calN(\vec 0,\mI_d)}[(y'(\x) - \wh{y}(\x))^2] + \eps \;,
\end{align*}
where $y'(\bx) = \sum_{i:\beta_i\geq \eps_0}\beta_i\sigma(\w_i\cdot\x).$ 
}

Observe that, by standard concentration arguments, with $N_{\vec g} = O(d\poly(k)/\eps^2)$ samples we can get an estimate $\widehat{\vec g}$ of the
vector $\vec g=\E[y\x]$ up to \new{$\ell_2$} error $\eps/16$ with probability at least $8/9$. Since
$\E[\sigma(\w\cdot\x)\x]=\w/2$ for every unit vector $\w$, we have that 
\[
    y-\vec g\cdot\x=\frac12\sum_{i=1}^k\beta_i|\w_i\cdot\x|.
\]
Let $
    \sigma_{\rm abs}(z)=
    \sigma(z)-z/2
    =|z|/2$. Then
\begin{align*}
   \Big(\Exy [(y-\widehat{\vec g}\cdot\x-\sum_{j=1}^k\beta_j\sigma_{\rm abs}(\bv_j\cdot\x))^2]\Big)^{1/2}
    \leq
     \frac{\sqrt{\eps}}{4}+12\sum
       \beta_i\min_{s_i\in\{\pm 1\}}\|\w_i-s_i\bv_i\|_2
    \leq\frac{\sqrt{\eps}}{4}+
    \frac{k\eps_0}{4}
    \le \frac{\sqrt\eps}{2}.
\end{align*}
For $\bm{\alpha}\in [0, k]^m$ with $\sum \alpha_i\leq k$, write
\[
    \mathcal R(\bm{\alpha})
    =\Exy\bigg[\bigg( y- \wh{\vec g}\cdot\x-\sum_{j=1}^m\alpha_j\sigma_{\rm abs}(\bv_j\cdot\x)\bigg)^2\bigg]
\]
Setting 
$\vec Z=(\new{y(\x) - \wh{\vec g}\cdot \bx},\sigma_{\rm abs}(\bv_1\cdot\x),\ldots,\sigma_{\rm abs}(\bv_\new{m}\cdot\x))$, 
we have that for any coordinate of $\vec Z \vec Z^\top$,  $\E[(\vec Z \vec Z^\top)_{i,j}^2]\lesssim k^4$. 
Construct an empirical distribution $\widehat{\D}$ of $\D$ using $N_{\mathrm{quad}}=k^{10}/(\eps^2\delta)$ 
samples. \new{Then, by Chebyshev's inequality and a union bound over all indices $i,j$}, 
we have that
\[
 \pr\left(\max_{i,j}|(\E_{\widehat{\D}}[\vec Z\vec Z^\top]-\E [\vec Z\vec Z^\top])_{i,j}|>
        \frac{\eps}{(k+1)^2}\right)
 \lesssim\frac{k^6(k+1)^4}{N_{\mathrm{quad}}\eps^2}\lesssim\delta.
\]
Let $\overline{\bm{\alpha}} = (1,\alpha_1,\dots,\alpha_m)$.
Therefore, if the event inside the probability above is satisfied, we have that 
\new{
\begin{align*}
    |\widehat{\mathcal R}(\bm{\alpha})-\mathcal R(\bm{\alpha})| &= |\overline{\bm{\alpha}}^\top (\E_{\widehat{\D}}[\vec Z\vec Z^\top]-\E [\vec Z\vec Z^\top])\overline{\bm{\alpha}}|\\
    &\leq \max_{i,j}|(\E_{\widehat{\D}}[\vec Z\vec Z^\top]-\E [\vec Z\vec Z^\top])_{i,j}|\|\overline{\bm{\alpha}}\|_1^2\leq \frac{\eps}{(k+1)^2}(m+1)^2\leq \eps.
\end{align*}
}
Minimizing over $\bm{\alpha}$ is a convex program and the empirical minimizer using $N_{\mathrm{quad}} + N_{\vec g} = d\,\poly(k/\eps)$ samples satisfies \[
    \mathcal R(\wh{\bm{\alpha}})
    \lesssim
    \mathcal R(\bm{\alpha}^*)+\eps
    \lesssim\eps\;.
\]
\new{By \Cref{prop:cover-and-prune}, the  procedure \nameref{alg:cover-and-prune} uses 
$N_{\mathrm{opt}} = \widetilde{O}(\poly(k/\eps)L_\init^3(L_\init^3 + N_\init))$ samples 
(recall that we have set $\eps_0 = \sqrt{\eps}/(4k)$ and $\delta = 1/9$). 
Thus, the total number of samples required for \nameref{alg:main} is 
$N = \widetilde{O}(d\,\poly(k/\eps) + \poly(k/\eps)L_\init^3(L_\init^3 + N_\init))$. 
The runtime of the algorithm is $\poly(d, N, T_\init)$.}
This completes the proof.
\end{proof}

\newpage

\section{Efficient Coarse Direction Recovery}\label{sec:PCA}

In this section, we show that the $\init$ procedure postulated in 
\Cref{assum:initialization-algorithm} can be implemented in polynomial time. 
At a high-level, an algorithm that satisfies \Cref{assum:initialization-algorithm} 
needs to guarantee the following:  Given sample access to 
a distribution $(\x,y)\sim{\D}$ such that (i) the $\x$-marginal is 
the standard Gaussian, i.e., $\x\sim\calN(\vec 0,\mI)$, and 
(ii) $y(\x) = \sum_{i=1}^k \beta_i\sigma(\w_i\cdot\x)$, with $\sigma$ being the ReLU activation, $b_i\in(0,1]$ and $\|\w_i\|_2 = 1$, the algorithm returns a list $\mathcal L$ of unit vectors 
such that for each target weight vector $\vec w_i$, there exists 
$\vec v\in \mathcal L$ that correlates non-trivially with $\vec w_i$, 
i.e., $|\vec w_i\cdot\vec v| = 1 - c$, for a small constant $c > 0$. 
To this end, we prove that the top eigenvector of a carefully selected matrix (see Equation~\eqref{eq:target-matrix-M(u)}) 
aligns well with one of the directions $\w_i$.

Our main result in this section is the following polynomial-time implementation of the $\init$ 
subroutine. The proof of \Cref{prop:poly-time-initialization-via-implicit-tensor} is deferred to \Cref{sec:poly-time-initialization}.

\begin{restatable}[Polynomial-Time $\init$]{proposition}{ImplicitTensorPolyTimeInitialization}
\label{prop:poly-time-initialization-via-implicit-tensor}
Let $k, d\in \Z_+$ and \new{$\lambda,\delta\in (0,1)$.} 
Let $\D$ be a Gaussian sum of $k$ ReLUs.
Then, there is an algorithm (\nameref{alg:implicit-tensor-approximate-matrix}) 
that draws 
$N=\poly(d, k, 1/\lambda, 1/\delta)$ 
i.i.d.\ samples from $\D$, runs in $T = \poly(d,N)$ time, 
and outputs a list $\cL$ of unit vectors of size 
$|\cL| = \poly(k/\lambda)\log(1/\delta)$ such that
with probability at least $1 - \delta$, for any $\w_i$, $i\in[k]$, with $\beta_i\geq \lambda$, 
there exists a vector $\br_i\in\cL$ such that $|\br_i\cdot\w_i|\geq 0.99$. 
\end{restatable}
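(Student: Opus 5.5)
The plan follows the overview in \Cref{sec:tech-overview-CDR}. Fix an even degree $2p+2$ with $p=\Theta(\log(k/\lambda))$. The tensor $\mQ_{2p+2}=\Exy[y\he_{2p+2}(\x)]=\wh{\sigma}_{2p+2}\sum_i\beta_i\w_i^{\ot 2p+2}$ is never formed explicitly. For a complex Gaussian probe $\bu=\bp+\im\bq$ with $\bp,\bq\sim\calN(\vec 0,\mI_d)$ i.i.d., the target matrix is
\begin{equation*}
\mM(\bu)=\la\mQ_{2p+2},\bu^{\ot p}\ot\bar{\bu}^{\ot p}\ra\propto\sum_i\beta_i|\w_i\cdot\bu|^{2p}\w_i\w_i^\top,
\end{equation*}
which is real PSD. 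The algorithm samples $M=\poly(k/\lambda)\log(1/\delta)$ probes, estimates each $\mM(\bu)$ implicitly, and outputs the top eigenvector of each estimate.

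The implicit estimate comes from \Cref{fct:hermite-tensor}(1). We have $\he_{2p+2}(\x)=\E_{\bz}[\Real((\x+\im\bz)^{\ot 2p+2})]/\sqrt{(2p+2)!}$. Contracting with $\bu^{\ot p}\ot\bar\bu^{\ot p}$ gives an unbiased estimator that is a product of scalars $((\x+\im\bz)\cdot\bu)^p\,\overline{((\x+\im\bz)\cdot\bar\bu)}^{\,p}$ times the matrix $(\x+\im\bz)(\x+\im\bz)^{\top}$. One sample plus one auxiliary $\bz$ therefore costs $O(d^2)$ time, and we average over $N$ samples.

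The proof has three steps.

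\textbf{Step 1 (probabilistic separation).} Fix a heavy $\w_1$ and let $S=\{i:|\w_i\cdot\w_1|\geq 0.999\}$ be its cluster. Restrict attention to the span of the $\w_i$, which has dimension at most $k$; the components $\bu\cdot\w_i$ are jointly complex Gaussian. We need, with probability $\geq 1/\poly(k/\lambda)$, the event
\begin{equation*}
|\bu\cdot\w_1|\geq(1+\xi)\max_{i\notin S}|\bu\cdot\w_i|
\end{equation*}
for a constant $\xi$. The plan is to condition on $|\bu\cdot\w_1|$ being about $\sqrt{C\log k}$ times its typical size, an event of probability $k^{-O(C)}$. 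Each $\bu\cdot\w_i$ with $i\notin S$ has a residual part independent of $\bu\cdot\w_1$, with variance bounded below by a constant, and a union bound over those $k$ residuals gives the event. With $p\gtrsim\log(k/\lambda)/\xi$, the contribution of $i\notin S$ to $\|\mM(\bu)\|_2$ is then at most a small fraction of $\beta_1|\bu\cdot\w_1|^{2p}$. All summands are PSD, so the part from $S$ is PSD with its mass concentrated near $\w_1\w_1^\top$. This yields an eigengap, and the top eigenvector has correlation $\geq 0.995$ with $\w_1$.

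\textbf{Step 2 (estimation error).} Normalize by $|\bu\cdot\w_1|^{2p}$ using the empirical scale, and bound the operator-norm error via \Cref{fct:wedin}. The danger is that the typical size $(\log k)^{p}$ is quasi-polynomial. Because the relevant amplitude under conditioning is $\poly(k/\lambda)^{O(1)}$ only when $p\log\log k$ terms cancel, I would instead pick the conditioning level so that $|\bu\cdot\w_1|^2\approx 1+\xi'$ relative to competitors. Then $p=\Theta(\log(k/\lambda))$ gives $(1+\xi')^{p}=\poly(k/\lambda)$, which keeps the ratio between signal and the variance of the estimator polynomial.

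The variance bound is the lemma the overview attributes to complex Gaussians (\Cref{lem:moment-bound-query-tensor-after-projection}, \Cref{lem:real-complex-gaussian-comparison}). Its content is that $\E[|\la\mH,\bu^{\ot p}\ot\bar\bu^{\ot p}\ra|^2]$ grows like $p!\,d^{p}$-type quantities without the extra $\|\bu\|^{4p}$ blowup. To neutralize the $d$-dependence, project $\bu$ first onto a $\poly(k)$-dimensional subspace containing all $\w_i$. That subspace can be estimated from the degree-2 Chow matrix $\E[y\he_2(\x)]\propto\sum_i\beta_i\w_i\w_i^\top$, keeping eigenvalues above $\poly(\lambda/k)$. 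Hypercontractivity (\Cref{fct:gaussian-hypercontractivity}) then bounds moments of $y\,\he$ contractions, and Chebyshev or a matrix Bernstein bound gives $N=\poly(d,k,1/\lambda,1/\delta)$.

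\textbf{Step 3 (assembly).} Repeat over $M$ probes, so that each heavy $\w_i$ has a good probe with probability $1-\delta/k$, and take a union bound over $i$. This gives list size $\poly(k/\lambda)\log(1/\delta)$.

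I expect Step 2 to be the main obstacle. The conditioning event must simultaneously be likely enough, give an amplification of $(1+\xi)^{2p}$, and keep the second moment of the implicit estimator bounded by a polynomial multiple of the squared signal. The delicate part is the interaction between the Hermite-tensor noise and the high power of $\bu$.
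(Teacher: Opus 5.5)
\textbf{Verdict: there is a genuine gap in Step 2.} Your Steps 1 and 3 are fine. They essentially coincide with the paper's \Cref{lem:angle-separation-random}, \Cref{prop:low-degree-separates-direction}, and the final union bound over probes.

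\textbf{Why the direct estimator fails.} In Step 2 you contract each sample tensor $y\,\Real((\x+\im\bz)^{\ot 2p+2})$ with $\bu^{\ot p}\ot\bar\bu^{\ot p}$, i.e., you multiply the scalars $((\x+\im\bz)\cdot\bu)^p((\x+\im\bz)\cdot\bar\bu)^p$. This estimator is unbiased and cheap, but its variance is not polynomial.
\begin{itemize}
\item For a fixed probe, $(\x+\im\bz)\cdot\bu$ is a complex Gaussian with variance of order $\|\bu\|_2^2$.
\item Hence the ratio of the per-sample standard deviation to the signal $\beta_1|\w_1\cdot\bu|^{2p}$ is of order $(\|\bu\|_2^2/|\w_1\cdot\bu|^2)^{p}$, up to $2^{O(p)}\poly(k,p)/\beta_1$ factors.
\item On your conditioning event, $|\w_1\cdot\bu|^2=\Theta(\log(k/\lambda))$, while $\|\bu\|_2^2\approx 2m$, where $m$ is the dimension of the space carrying $\bu$.
\item Restricting $\bu$ to the Chow subspace lowers $m$ from $d$ to at most $k$. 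But $(k/\log(k/\lambda))^{\Theta(\log(k/\lambda))}$ is still quasi-polynomial. This is exactly the warm-up of \Cref{sec:quasi-poly}.
\end{itemize}

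\textbf{Why your proposed fixes do not help.}
\begin{itemize}
\item Conditioning so that $|\bu\cdot\w_1|^2$ beats the competitors by a factor $1+\xi'$ controls the bias from the other $\w_i$, which Step 1 already handles. It does not control the statistical noise, which is driven by all $m$ directions of $\bu$. Making noise comparable to signal would require $|\w_1\cdot\bu|\gtrsim\|\bu\|_2$, an event of probability $e^{-\Omega(m)}$.
\item The complex-Gaussian lemma does not remove this factor either. \Cref{lem:moment-bound-query-tensor-after-projection} bounds $\E_{\bu}|\la\mF_p(\bu),\mW\ra|^2\le 2^{O(p)}\|\mW\|_F^2$ for a fixed $\mW$. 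For the uncompressed empirical error tensor, $\|\mW\|_F^2$ itself is quasi-polynomially large, since it has $m^{2p+2}$ noisy coordinates.
\item Even the ``$d^p$-type growth'' you cite is quasi-polynomial when $p=\Theta(\log(k/\lambda))$, and becomes $k^{\Theta(p)}$ after projection.
\end{itemize}

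\textbf{What is missing: the paper's compression step.}
\begin{itemize}
\item \nameref{alg:implicit-tensor-computation-adapted} learns a recursive pseudo-projection $\wh{\Phi}_p:(\R^d)^{\ot p}\to\R^k$ with $\wh{\Phi}_p^\top\wh{\Phi}_p(\w_i^{\ot p})\approx\w_i^{\ot p}$ for heavy $i$ (\Cref{thm:implict tensor adapted}). It does so from the STC tensors $\mS_{2r}$, whose covariance is $k^2 2^{O(r)}/\wh{\sigma}_{2r}^2$ by \Cref{lem:tensor-variance-S}.
\item The paper then estimates $\la(\wh{\Phi}_p\ot\wh{\Phi}_p\ot\mI_d\ot\mI_d)\wh{\mS}_{2p+2},\,\wh{\Phi}_p(\mF_p(\bu))\ot\wh{\Phi}_p(\mF_p(\bar\bu))\ra$.
\item The empirical noise now lives in a $k^2d^2$-dimensional space, not a $d^{2p+2}$-dimensional one.
\item The complex-Gaussian moment bound is applied only to the fixed unit-norm tensors $\wh{\Phi}_p^\top\e_i$. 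This gives $\E\|\wh{\Phi}_p(\mF_p(\bu))\|_2^4\le k^2 2^{O(p)}$.
\item The price is a bias term from $\wh{\Phi}_p^\top\wh{\Phi}_p$ not fixing $\w_i^{\ot p}$ exactly, which \Cref{thm:implict tensor adapted} controls (see \Cref{prop:estimate-matrix-via-implicit-tensor}).
\end{itemize}

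Without some such dimension reduction of the $p$-fold tensor space, your Step 2 yields only $k^{O(\log(k/\lambda))}$ samples, not $\poly(d,k,1/\lambda,1/\delta)$.
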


We begin this section by formally arguing that the guarantee of the $\init$ 
subroutine can be reduced to finding the top eigenvector of a carefully chosen matrix 
\eqref{eq:target-matrix-M(u)} (\Cref{sec:Low-Degree-Filters Reveal-Hidden-Directions}). 
Then, in \Cref{sec:quasi-poly}, we give a simple  estimation procedure for this matrix 
that uses a quasi-polynomial number of samples from $\D$. \Cref{sec:quasi-poly} highlights 
why more sophisticated techniques are needed for estimation and motivates the techniques 
based on implicit tensor computation that we employ in \Cref{sec:poly-time-initialization}. 
Specifically, in \Cref{sec:poly-time-initialization}, we present an algorithm (that uses 
techniques from implicit tensor computation) to estimate the matrix 
\eqref{eq:target-matrix-M(u)} with polynomially many samples 
and in polynomial time. Using this algorithm, we then 
prove \Cref{prop:poly-time-initialization-via-implicit-tensor}. \new{Throughout this section, we take without loss of generality that $\lambda$ is sufficiently smaller than one (namely, $\lambda \le \sqrt{\xi}/10$, where $\xi$ as absolute constant). This only impacts the constants suppressed by the big-Oh notation.}

\subsection{Low-Degree Filtering: Revealing Hidden Directions}\label{sec:Low-Degree-Filters Reveal-Hidden-Directions}

As discussed in \Cref{sec:tech-overview-CDR}, our implementation of $\init$ is based on the properties of Hermite tensors of ReLU functions; namely, the fact that degree-$p$ moments of the ReLU sums are proportional to rank-tensors (recall \eqref{eq:low-degree-tensor-form}). 
For technical reasons, instead of working with the tensor $\vec Q_{2p}$ introduced in \eqref{eq:low-degree-matrix-form} in \Cref{sec:tech-overview-CDR}, we use a closely related matrix that is likewise proportional to $ \sum_{i=1}^k \beta_i |\w_i\cdot\bu|^{2p}\w_i\w_i^{\top}$
\begin{align}\label{eq:target-matrix-M(u)}
    \mM_{2p+2}(\bu) &\overset{(i)}{:=} \new{\frac{1}{Z_p(\cU)}\bigg\la\frac{1}{\wh{\sigma}_{2p+2}}\E_{(\x,y)\sim\D}[y\he_{2p+2}(\x)],\, \bu^{\ot p}\ot \bar{\bu}^{\ot p} \bigg\ra} \nonumber\\
    &\overset{(ii)}{=} \frac{1}{Z_p(\cU)} \sum_{i=1}^k \beta_i |\w_i\cdot\bu|^{2p}\w_i\w_i^{\top} \;, \tag{M} \end{align}
where $Z_p(\cU )>0$ is a fixed parameter depending on the sampling distribution $\cU $ from which $\bu$ is drawn and on the degree $p$. 
\new{
The probing vector $\bu$ can be chosen from a complex distribution, 
where $\bar{\bu}$ is its complex conjugate. 
When $\bu$ is a real vector, we have $\bar{\bu} = \bu$, and in this case, 
the tensor contraction in $(i)$ becomes $\la\E[y\he_{2p+2}(\x)], \bu^{\ot 2p}\ra$.
Intuitively, the tensor $\bu^{\ot p}\ot \bar{\bu}^{\ot p}\in(\mathbb{C}^d)^{\ot 2p}$ 
can be viewed as a ``filter'' that projects the signal tensor 
$\E[y\he_{2p+2}(\x)]$ to the direction of $\bu$. 
Choosing the order $p$ and the distribution $\cU$ properly, 
we can guarantee that with large probability, only the target vector $\w_i$, $i\in[k]$ 
will come through the filter---as discussed in \Cref{sec:tech-overview-CDR}.
To see why $(ii)$ is true, recall that by the definition of Hermite tensors 
and the definition of our learning problem (\Cref{def:gaussian-pac-learning}), 
we have $\E[y\he_{2p+2}(\x)] = \sum_{i=1}^k \wh{\sigma}_{2p+2}\beta_i\w_i^{\ot 2p+2}$,
which yields $(ii)$ after contraction with the filter.
The matrix \ref{eq:target-matrix-M(u)} can be estimated via samples from $\D$. 
Naively, one can estimate the population tensor $\E[y\he_{2p+2}(\x)]$ using $d^{O(p)}$ samples, 
then perform contraction with the filter. 
However, since the target matrix only concerns the projection to the probing vector $\bu$, 
estimating the tensor $\E[y\he_{2p+2}(\x)]$ accurately in all the directions is unnecessary.
In \Cref{sec:quasi-poly}, we show that \ref{eq:target-matrix-M(u)} can be estimated approximately 
via a polynomial; then in \Cref{sec:poly-time-initialization} we give an algorithm 
that estimates the contraction 
$\la\E_{(\x,y)\sim\D}[y\he_{2p+2}(\x)],\, \bu^{\ot p}\ot \bar{\bu}^{\ot p} \ra$ 
as a whole using polynomially many samples, 
avoiding a separate estimation of the tensor $\E[y\he_{2p+2}(\x)]$. 
}

{In the following, we show that the $\init$ subroutine can be reduced 
to estimating the top eigenvector of the empirical estimator 
of the matrix \ref{eq:target-matrix-M(u)}.}

\begin{lemma}\label{prop:low-degree-separates-direction}
Let $\xi \in (0, 1/300]$ be a small absolute constant.
Let $\w_i$ be unit vectors in $\R^d$, $\beta_i\in[0,1]$, $i\in[k]$, and  
$\lambda\in(0, \sqrt{\xi}/10]$ be less than a sufficiently small constant. 
Let $p = \lceil\log(k/\lambda)/\log(1+\xi)\rceil$.
Consider any $i^*\in[k]$ such that $\beta_{i^*}\geq \lambda$, 
and denote $\cA_{i^*}(\xi) =\{i\in[k]:|\w_i\cdot\w_{i^*}|\leq 1 - \xi\}$.
Draw and fix $\bu$ from the distribution $\cU$. 
Let $\mM_{2p+2}(\bu)$ be the matrix defined in \eqref{eq:target-matrix-M(u)}. 
Define the following ``correlation'' and ``separation'' events on $\cU$:
\begin{align}\label{eq:cor-sep-events}
    \evt_{i^*}^{\mathrm{Cor}} = \{\bu: |\w_{i^*}\cdot\bu|\geq (Z_p(\cU))^{1/(2p)}\},\;\evt_{i^*}^{\mathrm{Sep}} = \{\bu: |\w_{i^*}\cdot\bu|\geq (1  +\xi)|\w_j\cdot\bu|,\,\forall j\in \cA_{i^*}(\xi)\}.
\end{align}
Suppose $\bu\in\evt_{i^*}^{\mathrm{Cor}}\cap\evt_{i^*}^{\mathrm{Sep}}$.
Let $\wh{\mM}_{2p+2}(\bu)$ be any real matrix such that $\|\wh{\mM}_{2p+2}(\bu) - \mM_{2p+2}(\bu)\|_2\leq (\sqrt{\xi}/3)\|\mM_{2p+2}(\bu)\|_2$. Then, the top eigenvector $\wh{\br}_1$ of $\wh{\mM}_{2p+2}(\bu)$ satisfies $|\wh{\br}_1\cdot\w_{i^*}|\geq 0.99$.
\end{lemma}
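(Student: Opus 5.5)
The plan is to split $\mM := \mM_{2p+2}(\bu)$ into a ``signal'' part from the cluster near $\pm\w_{i^*}$ and a ``noise'' part from $\cA_{i^*}(\xi)$. Then I would show that the Rayleigh quotient of $\wh{\mM}$ at its top eigenvector forces large correlation with $\w_{i^*}$. Write $c_i := \beta_i|\w_i\cdot\bu|^{2p}/Z_p(\cU)\ge 0$, so that $\mM=\sum_i c_i\w_i\w_i^\top$ is PSD. Split it as $\mM = \mM_S + \mM_N$, where $\mM_S=\sum_{i\notin\cA_{i^*}(\xi)} c_i\w_i\w_i^\top$ and $\mM_N=\sum_{i\in\cA_{i^*}(\xi)}c_i\w_i\w_i^\top$.

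First, bound the noise. On $\evt^{\mathrm{Cor}}_{i^*}$ we have $c_{i^*}\ge\beta_{i^*}\ge\lambda$. On $\evt^{\mathrm{Sep}}_{i^*}$, for each $j\in\cA_{i^*}(\xi)$,
\[
c_j\le (1+\xi)^{-2p}c_{i^*}/\beta_{i^*}\le (\lambda/k)^2 c_{i^*}/\lambda ,
\]
using $\beta_j\le1$ and the choice of $p$. Hence $\|\mM_N\|_2\le\mathrm{tr}(\mM_N)\le k\cdot(\lambda/k)^2 c_{i^*}/\lambda=(\lambda/k)c_{i^*}\le\lambda\, c_{i^*}$, which is tiny relative to $c_{i^*}$ because $\lambda\le\sqrt\xi/10$.

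Second, lower-bound the signal in direction $\w_{i^*}$: $\w_{i^*}^\top\mM\w_{i^*}\ge \sum_{i\notin\cA}c_i(\w_i\cdot\w_{i^*})^2\ge(1-\xi)^2\,\mathrm{tr}(\mM_S)$, since every $i\notin\cA_{i^*}(\xi)$ has $|\w_i\cdot\w_{i^*}|>1-\xi$. Also $\|\mM\|_2\le\mathrm{tr}(\mM_S)+\|\mM_N\|_2$ and $\mathrm{tr}(\mM_S)\ge c_{i^*}$. Combining these gives $\|\mM\|_2\le(1+\lambda)\mathrm{tr}(\mM_S)$.

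Third, transfer to $\wh{\mM}$. Let $\Delta=\wh{\mM}-\mM$ with $\|\Delta\|_2\le(\sqrt\xi/3)\|\mM\|_2$, and let $\wh{\br}_1$ be a top eigenvector of $\wh\mM$ (taken in the sense of largest eigenvalue; since $\mM\succeq0$ and $\Delta$ is small, this coincides with largest-magnitude). Then
\[
\wh\br_1^\top\mM\wh\br_1\ge\w_{i^*}^\top\mM\w_{i^*}-2\|\Delta\|_2\ge\mathrm{tr}(\mM_S)\bigl[(1-\xi)^2-\tfrac{2\sqrt\xi}{3}(1+\lambda)\bigr].
\]
On the other hand, $\wh\br_1^\top\mM\wh\br_1\le\sum_{i\notin\cA}c_i(\w_i\cdot\wh\br_1)^2+\lambda c_{i^*}$.

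The main point is converting this into correlation with $\w_{i^*}$. Each $\w_i$ with $i\notin\cA$ is close to $\pm\w_{i^*}$, with $\|\w_i\mp\w_{i^*}\|_2\le\sqrt{2\xi}$. So $(\w_i\cdot\wh\br_1)^2\le(|\w_{i^*}\cdot\wh\br_1|+\sqrt{2\xi})^2$, and therefore
\[
(|\w_{i^*}\cdot\wh\br_1|+\sqrt{2\xi})^2\ge(1-\xi)^2-\sqrt\xi-\lambda\ge 1-O(\sqrt\xi).
\]
This yields $|\w_{i^*}\cdot\wh\br_1|\ge1-O(\sqrt\xi)$. The constants must be checked: with $\xi\le1/300$ one gets $\sqrt\xi\approx0.058$, and the crude bound $1-\sqrt{2\xi}-O(\sqrt\xi)$ may fall short of $0.99$. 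I therefore expect the bookkeeping here to be the obstacle. If needed, I would tighten it by working directly with $1-(\w_i\cdot\wh\br_1)^2\le$ a bound in terms of $\sin^2$ of angles, which gives errors of order $\xi$ rather than $\sqrt\xi$. Alternatively, I would apply Wedin's theorem (\Cref{fct:wedin}) to $\mM_S$ versus the rank-one matrix $\mathrm{tr}(\mM_S)\w_{i^*}\w_{i^*}^\top$, whose difference has norm $O(\xi)\,\mathrm{tr}(\mM_S)$. That reduces the question to $\sin\theta\lesssim(\xi+\sqrt\xi/3+\lambda)/(1-O(\sqrt\xi))$, which still needs a careful constant check.
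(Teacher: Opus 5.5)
\textbf{Gap in the final step.} Your structural estimates are correct and match the paper. You have $\mathrm{tr}(\mM_N)\le(\lambda/k)c_{i^*}$, $\w_{i^*}^\top\mM\w_{i^*}\ge(1-\xi)^2\,\mathrm{tr}(\mM_S)$, and $\|\mM\|_2\le(1+\lambda/k)\,\mathrm{tr}(\mM_S)$; your $\mathrm{tr}(\mM_S)$ is the paper's $W_B$. The problem is the final step, and it is more than bookkeeping of the cluster spread. The comparison $\wh{\br}_1^\top\mM\wh{\br}_1\ge\w_{i^*}^\top\mM\w_{i^*}-2\|\Delta\|_2$ bounds the two quadratic forms of $\Delta$ separately. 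Combined with $\wh{\br}_1^\top\mM\wh{\br}_1\lesssim\mathrm{tr}(\mM_S)\,a^2$, where $a=|\w_{i^*}\cdot\wh{\br}_1|$, it can only give $1-a^2\lesssim 2\|\Delta\|_2/\mathrm{tr}(\mM_S)$. That is, a relative error $\eta$ yields $\sin\theta\lesssim\sqrt{2\eta}$ instead of the gap-based $\sin\theta\lesssim\eta$.

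Take the best case of zero spread, where $\mM_S$ is exactly a multiple of $\w_{i^*}\w_{i^*}^\top$. At $\xi=1/300$ your chain then certifies at best $a^2\ge 1-\frac{2\sqrt{\xi}}{3}\approx 0.96$, i.e.\ $a\approx 0.98$. But $0.99$ requires $1-a^2\le 1-0.99^2\approx 0.0199$. Your first fix, sharpening the cluster term to order $\xi$, therefore cannot close the gap. The Rayleigh loss alone forces $\xi\lesssim 9\times 10^{-4}$, which excludes most of the stated range $\xi\le 1/300$.

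\textbf{The second fallback.} This is exactly the paper's proof. Set $\mM^B=W_B\w_{i^*}\w_{i^*}^\top$ and apply Wedin's theorem (Fact~\ref{fct:wedin}) with $\mM_1=\mM^B$, whose eigengap is $W_B$.

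However, your estimate $\|\mM_S-W_B\w_{i^*}\w_{i^*}^\top\|_2=O(\xi)\,W_B$ is false. One has $\|\w_i\w_i^\top-\w_{i^*}\w_{i^*}^\top\|_2=\sqrt{1-(\w_i\cdot\w_{i^*})^2}$, which can be as large as $\sqrt{2\xi-\xi^2}$, because the off-diagonal part $\sin\theta_i\cos\theta_i$ is first order in $\theta_i$. This is attained, for example, when $\beta_{i^*}=\lambda$ and a neighbor $\w_j\notin\cA_{i^*}(\xi)$ with $\beta_j=1$ dominates $\mM_S$.

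The correct accounting is $\|\wh{\mM}-\mM^B\|_2\le\bigl(\frac{\lambda}{k}+\sqrt{2\xi}+\frac{\sqrt{\xi}}{3}(1+\frac{\lambda}{k})\bigr)W_B\approx 0.107\,W_B$ at $\xi=1/300$. Wedin then gives $\sin\theta(\wh{\br}_1,\w_{i^*})\le 0.107/0.893\approx 0.12<\sqrt{1-0.99^2}\approx 0.141$. That verification, which holds only with a moderate margin, is the real content of the lemma, and your proposal leaves it undone.
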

\begin{proof}
The high-level idea of the proof is that under the assumptions of \Cref{prop:low-degree-separates-direction}, 
the target matrix $\mM_{2p+2}(\bu)$ is approximately rank-$1$ and 
proportional to $\w_{i^*}\w_{i^*}^\top$. 
Let us define
\begin{align}\label{eq:weights-W_A-W_B}
    W_A = \frac{1}{Z_p(\cU)}\sum_{i\in \cA_{i^*}(\xi)} \beta_i|\w_i\cdot\bu|^{2p},\; W_B = \frac{1}{Z_p(\cU)}\sum_{i\notin \cA_{i^*}(\xi)} \beta_i|\w_i\cdot\bu|^{2p},\;\text{and }\mM^B_{2p+2}(\bu) = W_B \w_{i^*}\w_{i^*}^\top.
\end{align}
Observe that {$\mathrm{tr}(\mM_{2p+2}(\bu)) = W_A + W_B$.}
We establish the following: 
\begin{claim}\label{claim:close-to-rank-1-matrix}
Under the assumptions of \Cref{prop:low-degree-separates-direction}, it holds:
\begin{align}\label{eq:||M-M_b||_2}
      W_A\leq (\lambda/k)W_B\,\text{and }\,\|\mM_{2p+2}(\bu) - \mM^B_{2p+2}(\bu)\|_2\leq (\lambda/k + \sqrt{2\xi})\|\mM^B_{2p+2}(\bu)\|_2.
  \end{align}
\end{claim}
\begin{proof}[Proof of~\Cref{claim:close-to-rank-1-matrix}]
Since $i^*\notin \cA_{i^*}(\xi)$, $\beta_{i^*}\geq \lambda$, 
and $|\w_{i^*}\cdot\bu|^{2p}\geq Z_p(\cU)$ for $\bu\in\evt_{i^*}^{\mathrm{Cor}}$, 
the weight $W_B$ is large, namely 
$W_B\geq  \beta_{i^*}|\w_{i^*}\cdot\bu|^{2p}/Z_p(\cU)\geq \lambda$. 
On the other hand, for any $i\in \cA_{i^*}(\xi)$, 
the probing vector $\bu$ satisfies 
$|\w_i\cdot\bu|\leq |\w_{i^*}\cdot\bu|/(1+\xi)$ (because $\bu\in \evt_{i^*}^{\mathrm{Sep}}$). 
Hence, using the assumption $\beta_i\in(0,1]$ and our choice of $p$, 
the weight $W_A$ is small. Specifically, we have 
\begin{align*}
    W_A&\leq \frac{1}{Z_p(\cU)}\sum_{i\in \cA_{i^*}(\xi)}\frac{|\w_{i^*}\cdot\bu|^{2p}}{(1+\xi)^{2p}}\leq \frac{|\w_{i^*}\cdot\bu|^{2p}}{Z_p(\cU)} k\exp(-(2p)\log(1+\xi))\\
    &\leq \frac{\lambda^2}{k}\frac{|\w_{i^*}\cdot\bu|^{2p}}{Z_p(\cU)}\leq \frac{\lambda}{k}\frac{\beta_{i^*}|\w_{i^*}\cdot\bu|^{2p}}{Z_p(\cU)}\leq \frac{\lambda}{k} W_B.
\end{align*}
Therefore, we have \begin{align*}
&\quad \|\mM_{2p+2}(\bu) - \mM^B_{2p+2}(\bu)\|_2\\
&= \bigg\|\frac{1}{Z_p(\cU)}\sum_{i\in \cA_{i^*}(\xi)} \beta_i|\w_i\cdot\bu|^{2p}\w_i\w_i^\top + \frac{1}{Z_p(\cU)}\sum_{i\notin \cA_{i^*}(\xi)} \beta_i|\w_i\cdot\bu|^{2p}(\w_i\w_i^\top - \w_{i^*}\w_{i^*}^\top) \bigg\|_2\\
    &\leq \frac{1}{Z_p(\cU)}\sum_{i\in \cA_{i^*}(\xi)} \beta_i|\w_i\cdot\bu|^{2p}\|\w_i\w_i^\top\|_2 + \frac{1}{Z_p(\cU)}\sum_{i\notin \cA_{i^*}(\xi)} \beta_i|\w_i\cdot\bu|^{2p}\|\w_i\w_i^\top - \w_{i^*}\w_{i^*}^\top \|_2 \end{align*}
Observe that \new{$\|\bv\bv^\top - \w\w^\top\|_2\leq \sqrt{1 - (\bv\cdot\w)^2}$ for any unit vectors $\bv,\w$.
By the definition of the index set $\cA_{i^*}(\xi)$, 
for any $i\notin \cA_{i^*}(\xi)$, it holds $|\w_i\cdot\w_{i^*}|\geq 1 - \xi$, therefore we have $\|\w_i\w_i^\top - \w_{i^*}\w_{i^*}^\top \|_2\leq \sqrt{2\xi}$.}
Thus, we conclude that:
\begin{align*}
    \|\mM_{2p+2}(\bu) - \mM^B_{2p+2}(\bu)\|_2\leq W_A + \sqrt{2\xi}W_B\leq (\lambda/k + \sqrt{2\xi})W_B \;,
\end{align*}
completing the proof of \Cref{claim:close-to-rank-1-matrix}.
\end{proof}

It remains to prove that $\wh{\mM}_{2p+2}(\bu)$ is close to $\mM^B_{2p+2}(\bu)$ in spectral norm, which would imply that the top-eigenvector of $\wh{\mM}_{2p+2}(\bu)$ aligns with $\w_{i^*}$. 
\new{As $\lambda\in(0,\sqrt{\xi}/10]$, it holds that $\lambda/k\leq \sqrt{\xi}/10$.}
By the triangle inequality and \Cref{claim:close-to-rank-1-matrix}, we have
\begin{align*}
    \|\wh{\mM}_{2p+2}(\bu) - \mM^B_{2p+2}(\bu)\|_2 &\leq \|\wh{\mM}_{2p+2}(\bu) - \mM_{2p+2}(\bu)\|_2 + \|{\mM}_{2p+2}(\bu) - \mM^B_{2p+2}(\bu)\|_2\\
&\leq \bigg(\frac{\sqrt{\xi}}{10} + \sqrt{2\xi}\bigg)W_B + \frac{\sqrt{\xi}}{3}\|{\mM}_{2p+2}(\bu)\|_2.
\end{align*}
In addition, since $\|{\mM}_{2p+2}(\bu)\|_2\leq W_A + W_B$ and $W_A\leq (\lambda/k) W_B\leq (\sqrt{\xi}/10)W_B$, we have $\|{\mM}_{2p+2}(\bu)\|_2\leq (1 + (\sqrt{\xi}/10))W_B$.
Thus, we conclude that
\begin{align*}
    \|\wh{\mM}_{2p+2}(\bu) - \mM^B_{2p+2}(\bu)\|_2 &\leq \bigg(\frac{\sqrt{\xi}}{10} + \sqrt{2\xi} + \frac{\sqrt{\xi}}{3}\bigg(1 + \frac{\sqrt{\xi}}{10}\bigg)\bigg)W_B.
\end{align*}
Denote the top (unit) eigenvector of $\wh{\mM}_{2p+2}$ by $\wh{\br}_1$ 
and the top eigenvector of $\mM^B_{2p+2}(\bu)$ by $\br_1$; 
note that $\br_1 =\w_{i^*}$.
Then, using Wedin's inequality (\Cref{fct:wedin}), with $\xi\leq 1/300$, we have 
\begin{align*}
\sin(\theta(\wh{\br}_1, \br_1))&\leq 
\frac{\|\wh{\mM}_{2p+2}(\bu) - \mM^B_{2p+2}(\bu)\|_2}{\|\mM^B_{2p+2}(\bu)\|_2 - \|\wh{\mM}_{2p+2}(\bu) - \mM^B_{2p+2}(\bu)\|_2} \\
&\leq \frac{(\frac{\sqrt{\xi}}{10} + \sqrt{2\xi} + \frac{\sqrt{\xi}}{3}(1 + \frac{\sqrt{\xi}}{10}))W_B}{W_B - (\frac{\sqrt{\xi}}{10} + \sqrt{2\xi} + \frac{\sqrt{\xi}}{3}(1 + \frac{\sqrt{\xi}}{10}))W_B}
\leq\sqrt{1 - 0.99^2} \;.
\end{align*}
Thus, $|\wh{\br}_1\cdot\w_{i^*}|\geq 0.99$, 
completing the proof of \Cref{prop:low-degree-separates-direction}.
\end{proof}

What remains are the following steps: (1) pick a suitable distribution of $\bu$ such that the event $\bu\in\evt_{i^*}^{\mathrm{Cor}}\cap\evt_{i^*}^{\mathrm{Sep}}$ happens with probability at least $\poly(\lambda/k)$; (2) approximate the matrix $\mM_{2p+2}(\bu)$ efficiently using samples from $(\x,y)\sim\D$. In \Cref{sec:quasi-poly}, we choose $\cU$ to be the uniform distribution on the sphere, obtaining quasi-polynomial sample complexity; then in \Cref{sec:poly-time-initialization}, we choose $\cU$ to be a complex distribution, and by applying implicit tensor computation techniques, we argue that  $\mM_{2p+2}$ can be estimated efficiently, using $\poly(k/\lambda)$ samples.

\subsection{Warm-up: Quasi-Polynomial Time $\init$ Algorithm}\label{sec:quasi-poly}

Given a distribution $(\x,y)\sim \D$ such that 
$y(\x) = \sum_{i=1}^k \beta_i\sigma(\w_i\cdot\x)$ and $\x \sin \calN(\vzero, \mI_d)$, a simple choice of the probing vector is to 
sample from the \new{uniform distribution on the} unit sphere in a subspace containing $\spanned(\w_1,\dots,\w_k)$. The latter can be estimated efficiently via the following standard result. 
\begin{fact}[see, e.g.,~Lemma 3.4 of~\cite{DKKZ20}]\label{fact:subspace-approximate}
    Let $(\x,y)\sim\D$ be the distribution on $\R^d \times \R$ such that 
    $\x\sim\calN(\vec 0,\mI_d)$ and $y(\x) = \sum_{i=1}^k 
    \beta_i\sigma(\w_i\cdot\x)$, $\beta_i\in \new{[} 0,1]$, 
    $\|\w_i\|_2 = 1$, and $\sigma$ is the ReLU activation. Let $\wh{\sigma}_2$be the second-
    order Hermite coefficient of $\sigma$. Let $\eta > 0$.
    Then, using $N = \poly(d,k,1/\eta)$ samples 
    \new{$(\x^{(i)}, y^{(i)})$ drawn} from $\D$, it holds \new{with high probability} that: \begin{align*}
        \bigg\|\frac{1}{\sqrt{2}N}\sum_{i=1}^N y^{(i)}(\x^{(i)}\x^{(i)\top} - \mI_d) - \sum_{\new{j}=1}^k |\wh{\sigma}_2|  \beta_{\new{j}}\w_{\new{j}}\w_{\new{j}}^{\top}\bigg\|_2\leq\eta \;.  
    \end{align*}
\end{fact}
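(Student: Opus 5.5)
The plan is a direct concentration argument for the empirical matrix $\wh{\mM}\eqdef\frac{1}{\sqrt2 N}\sum_{i} y^{(i)}(\x^{(i)}\x^{(i)\top}-\mI_d)$. The argument has three steps.

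\textbf{Step 1: the population identity.} We first identify the population version of $\wh{\mM}$. The entries of the normalized Hermite tensor satisfy $\he_2(\x)=(\x\x^\top-\mI_d)/\sqrt2$. So $\E[\wh{\mM}]=\E[y\he_2(\x)]$. Each summand of $y$ is handled by \Cref{fct:hermite-tensor}(2), which gives $\E[\sigma(\w_j\cdot\x)\he_2(\x)]=\E[\sigma''(\w_j\cdot\x)]\w_j\w_j^\top$. Here $\sigma''=\delta_0$ in the distributional sense, so the scalar is $\varphi(0)=1/\sqrt{2\pi}$. Equivalently, by \Cref{fact:hermite-poly}(4) and \Cref{fct:hermite-relu}, the scalar is $\wh{\sigma}_2$, which is positive. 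Linearity then yields $\E[\wh{\mM}]=\sum_j|\wh{\sigma}_2|\beta_j\w_j\w_j^\top$.

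\textbf{Step 2: second moment of one summand.} Let $\mathbf Y=y\he_2(\x)$ and bound $\E\|\mathbf Y\|_F^2$.
\begin{itemize}
\item By Cauchy--Schwarz, $\E\|\mathbf Y\|_F^2\le(\E y^4)^{1/2}(\E\|\he_2(\x)\|_F^4)^{1/2}$.
\item We have $|y|\le\sum_j|\w_j\cdot\x|$, so $\E y^4\lesssim k^4$.
\item Next, $\|\x\x^\top-\mI\|_F^2=\|\x\|_2^4-2\|\x\|_2^2+d$, so $\E\|\he_2\|_F^4=O(d^4)$.
\end{itemize}
Hence $\E\|\mathbf Y-\E\mathbf Y\|_F^2\le\E\|\mathbf Y\|_F^2=O(k^2d^2)$.

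\textbf{Step 3: concentration.} The samples are i.i.d., so $\E\|\wh{\mM}-\E\wh{\mM}\|_F^2\le O(k^2d^2)/N$. Since $\|\cdot\|_2\le\|\cdot\|_F$, Chebyshev's inequality gives
\[
\pr\big[\|\wh{\mM}-\E\wh{\mM}\|_2>\eta\big]\le O(k^2d^2)/(N\eta^2).
\]
Choosing $N=\Theta(k^2d^2/(\eta^2\delta))$ makes this at most $\delta$, and this is $\poly(d,k,1/\eta)$ for constant $\delta$.

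To get $\log(1/\delta)$ dependence, I would instead use median-of-means (as noted after \Cref{lem:smallest-angle-tester}). Alternatively, I would apply a matrix Bernstein inequality after truncating $\|\x\|_2\lesssim\sqrt{d\log(N)}$, which tightens the $d$-dependence.

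The only delicate point is Step 1: justifying the second-derivative identity for the nonsmooth ReLU. I would do this via the Hermite route, using the expansion of $\sigma$ in \Cref{fct:hermite-relu} together with \Cref{fact:hermite-poly}(4) applied along the directions $\w_j$, rather than through distributional derivatives. Everything else is routine moment bookkeeping.
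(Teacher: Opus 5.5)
Your proposal is correct and follows the standard argument. The paper does not reprove this fact (it only cites DKKZ20), but its own proof of the analogous degree-$2p$ estimate in \Cref{lem:sample-complexity-quasi} uses exactly your template: the population identity from the Hermite expansion, a fourth-moment Cauchy--Schwarz bound, and Chebyshev in Frobenius norm. Your choice $N=O(k^2d^2/(\eta^2\delta))$ is enough for the $\poly(d,k,1/\eta)$ claim.

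One correction is needed in Step~1. Since $\he_2(\x)=(\x\x^\top-\mI_d)/\sqrt2$, a direct Gaussian integration-by-parts computation gives $\E[\sigma(\w_j\cdot\x)\he_2(\x)]=\tfrac{1}{\sqrt2}\varphi(0)\,\w_j\w_j^\top=\tfrac{1}{2\sqrt\pi}\,\w_j\w_j^\top=\wh{\sigma}_2\w_j\w_j^\top$. So the two scalars you call equivalent, $\varphi(0)=1/\sqrt{2\pi}$ and $\wh{\sigma}_2$, differ by a factor $\sqrt2$; the statement of \Cref{fct:hermite-tensor}(2) for the normalized tensor is missing this $1/\sqrt2$. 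Your preferred route gives the right constant: pair with $\bu\bu^\top$ for every unit $\bu$ and apply \Cref{fact:hermite-poly}(4).
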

\new{
Let $\wh{\sW}$ be the subspace spanned by the top-$k$ vectors of the empirical matrix constructed in \Cref{fact:subspace-approximate}. Then,
\Cref{fact:subspace-approximate} implies that for any $\w_i$, $i\in[k]$, it holds $\|\w_i - \mP_{\wh{\sW}}\w_i\|_2 \leq\sqrt{2\eta/\beta_i}$ (with high probability, otherwise, ascribe to the failure event of the algorithm). For any $\lambda$-heavy weights $\w_{i^*}$ with coefficients $\beta_{i^*}\geq \lambda$, choosing $\eta = \poly(\lambda/k)$ yields that
$$\mathrm{dist}(\wh{\sW}, \spanned(\{\w_{i^*}: \beta_{i^*}\geq \lambda\}))\leq \lambda. $$ 
Therefore, for a probing vector $\bu$ sampled uniformly from $\wh{\sW}\cap\spd$, the probability that it satisfies the correlation event  $\evt_{i^*}^{\mathrm{Cor}}$ and the separation $\evt_{i^*}^{\mathrm{Sep}}$ is almost the same as if sampled from ${\sW}\cap\spd$, with only differences in the absolute constants.
Thus, to simplify the analysis, in the remainder of this section, we assume access to the 
subspace
$\sW=\spanned(\w_1,\dots,\w_k)$ with dimension $m$ and we sample the probing vectors $\bu$ from $\cU = \mathrm{Unif}(\sW\cap\spd)$.}

\begin{remark}[Range of parameters.]\label{rmk:range-parameters}
{\em Throughout this section, we assume that 
the dimension $m$ of the subspace $\sW = \spanned(\w_1,\dots,\w_k)$ satisfies $m\gtrsim \log(k)$. 
Indeed, the $\init$ procedure
for small $m$ admits a simpler approach.
More generally, consider the regime
$m\lesssim \log(k)$.
In this case, we can obtain the desired $\init$
guarantee by sampling random directions from the
$m$-dimensional subspace $\sW$.
For any fixed unit vector in $\sW$, a uniformly random
unit vector has correlation at least $0.99$ with probability
$2^{-O(m)}$. By taking $q=2^{O(m)}\log(k/\delta)\leq\poly(k,1/\eps)\log(1/\delta)$ random vectors $\vec u_1,\ldots,\vec u_q$, with probability at least $1-\delta$, we have that $\max_{j\in[q]}|\vec u_j \cdot \w_i|\geq 0.99$ for all $i\in[k]$.}
\end{remark}

The pseudocode of our algorithm (\nameref{alg:initialization-via-spectral}) 
is given below.

\begin{algorithm}[H]
\caption{$\mathsf{CDRviaSpectral}(\D,\lambda,\delta)$}
\label{alg:initialization-via-spectral}
\begin{algorithmic}[1]
\State\textbf{Input:} A subspace $\sW$, access to a distribution of a sum of \new{$k$} ReLUs with Gaussian marginals $(\x,y)\sim\D$,
positive parameters $\lambda,\delta\in(0,1)$.
\State \textbf{Output:} A list of vectors satisfying \Cref{assum:initialization-algorithm}.

\medskip

\State Fix an absolute constant $\xi\in(0,1/300]$, 
let $p \gets \lceil\log(k/\lambda)/\log(1+\xi)\rceil$\new{, $\lambda \gets \min\{\lambda, \sqrt{\xi}/10\}$}.

\State Let $\wh{\sigma}_{2p+2}$ be the $(2p+2)^{\text{th}}$ Hermite coefficient of ReLU and $\wh{\sigma}''_{2p}\gets \wh{\sigma}_{2p+2}\sqrt{(2p+1)(2p+2)}$. 

\State $Z_p(\cU) \gets k^{-p}$, $\cL\gets\emptyset$. \State Draw a set $\wh{\cU}$ of $\poly(k)\log(\new{1}/\delta)$ probing vectors from $\cU = \mathrm{Unif}(\spd\cap\sW)$.
\For{$\bu\in\wh{\cU}$}
\State Draw $N = \new{d^2\,\poly(p)k^{O(\log(k/\lambda))}}$ samples from $\D$ and construct empirical distribution $\wh{\D}$. \State $\wh{\mM}_{2p+2}(\bu) \gets (1/(\wh{\sigma}''_{2p} Z_p(\cU)))\E_{(\x,y)\sim\wh{\D}}[y\hep_{2p}(\bu\cdot\x)(\x^{\perp\bu}\x^{\perp\bu\top} - \mI_{\perp\bu})]$. \label{line:emprical-matrix-quasi} \State Find the top eigenvector of $\br_1 = \br_1(\wh{\mM}_{2p+2}(\bu))$, $\cL\gets\cL\cup\{\pm \br_1\}$.
\EndFor
\State\textbf{Return: }$\cL$.
\end{algorithmic}
\end{algorithm}

Our main result for this subsection is the following:

\begin{restatable}[Quasi-Polynomial-Time Implementation of $\init$]{proposition}{QuasiPolyTimeInitialization}\label{prop:quasi-poly-time-initialization-via-spectral}
Let $\delta\in(0,1)$, \new{$\xi \in (0,1/300]$ be an absolute constant and let $\lambda\in(0,\sqrt{\xi}/10]$}. 
Let $(\x,y)\sim \D$ be a distribution such that 
$y(\x) = \sum_{i=1}^k \beta_i\sigma(\w_i\cdot\x)$ where $\sigma$ is the ReLU activation, $\beta_i\in[0,1]$, 
$\|\w_i\|_2 = 1$, and $\x \sim \calN(\vzero, \mI_d)$. 
Let  $p \geq \lceil \log(k/\lambda)/\log(1+\xi)\rceil$. Then, \nameref{alg:initialization-via-spectral}  
draws $N= d^2\,k^{O(\log(k/\lambda))}\log(1/\delta)$ samples from $\D$, runs in $T = \poly(d,N)$ time, 
and outputs a list $\cL$ of unit vectors of size 
$|\cL| = \poly(k)\log(1/\delta)$ that with probability at least $1 - \delta$ 
satisfies the following guarantee: 
for any $\w_i$, $i\in[k]$ with $\beta_i\geq \lambda$, 
there exists $\br_i\in\cL$ such that $|\br_i\cdot\w_i|\geq 0.99$. 
\end{restatable}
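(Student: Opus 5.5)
The plan is to reduce everything to \Cref{prop:low-degree-separates-direction}, which leaves two tasks. First, for $\bu\sim\mathrm{Unif}(\sW\cap\spd)$ with $Z_p(\cU)=k^{-p}$, show that for each fixed $\lambda$-heavy $i^*$ the event $\evt_{i^*}^{\mathrm{Cor}}\cap\evt_{i^*}^{\mathrm{Sep}}$ has probability at least $1/\poly(k)$. Second, show that the empirical matrix on Line~\ref{line:emprical-matrix-quasi} is within $(\sqrt{\xi}/3)\|\mM_{2p+2}(\bu)\|_2$ of $\mM_{2p+2}(\bu)$ in operator norm. Given both, we draw $\poly(k)\log(k/\delta)$ probing vectors. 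A union bound over the at most $k$ heavy indices then shows that, with probability $1-\delta/2$, every heavy $i^*$ has some $\bu$ in its good event. Adding $\pm\br_1$ for every $\bu$ gives the list size $\poly(k)\log(1/\delta)$. Finally, \Cref{prop:low-degree-separates-direction} yields $|\br_1\cdot\w_{i^*}|\ge 0.99$.

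\textbf{Identifying the estimator.} By \Cref{fct:hermite-tensor}(2), applied on $\bu^\perp$ after conditioning on $\bu\cdot\x$, and by \Cref{fact:hermite-poly}(4), the quantity $\E[\sigma(\w\cdot\x)\hep_{2p}(\bu\cdot\x)(\x^{\perp\bu}\x^{\perp\bu\top}-\mI_{\perp\bu})]$ equals $\wh\sigma''_{2p}(\w\cdot\bu)^{2p}\,\w^{\perp\bu}\w^{\perp\bu\top}$. So the estimator targets the projection of $\mM_{2p+2}(\bu)$ onto $\bu^\perp$. Under $\evt^{\mathrm{Cor}}$ the vector $\bu$ has only $O(1/\sqrt{k})$-type correlation with $\w_{i^*}$ (indeed $|\w_{i^*}\cdot\bu|\ge k^{-1/2}$, and typically it is of that order). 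Hence projecting off $\bu$ changes the dominant rank-one part by a small relative amount, and a slight variant of \Cref{claim:close-to-rank-1-matrix} still applies. If needed, I would instead restrict to $\bu$ with $|\w_{i^*}\cdot\bu|\le 1/2$, which costs only a constant in probability.

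\textbf{Probability of the good event.} Write $\bu=\bg/\|\bg\|$ with $\bg\sim\calN(\vec 0,\mI_m)$ on $\sW$, where $m\le k$ and, by \Cref{rmk:range-parameters}, $m\gtrsim\log k$. Set $a=\w_{i^*}\cdot\bg$. For $j\in\cA_{i^*}(\xi)$, decompose $\w_j\cdot\bg=c_ja+\sqrt{1-c_j^2}\,b_j$, where $|c_j|\le 1-\xi$ and $b_j$ is a standard Gaussian independent of $a$. Condition on $|a|\ge t:=C\sqrt{\log k}$, an event of probability $k^{-O(C^2)}$. On this event $\evt^{\mathrm{Sep}}$ fails only if some $|b_j|\gtrsim \xi^{1/2}t$, since $(1+\xi)(1-\xi)<1-\xi^2$ leaves a margin of order $\xi t$. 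A union bound over $k$ indices, with $C$ chosen large depending on $\xi$, makes this failure probability at most $1/2$. For $\evt^{\mathrm{Cor}}$ we need $|a|/\|\bg\|\ge k^{-1/2}$; since $\|\bg\|\le 2\sqrt{m}\le 2\sqrt k$ with high probability and $t\ge 2$, this holds. The overall probability is therefore $k^{-O_\xi(1)}=1/\poly(k)$. I expect this anti-concentration and separation computation to be the main obstacle: the constants must interact correctly with $p=\Theta(\log(k/\lambda))$, and one must check that $1+\xi$ versus $1-\xi$ really leaves a margin.

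\textbf{Concentration.} On the good event, $\|\mM_{2p+2}(\bu)\|_2\ge W_B\ge\lambda$. It therefore suffices to estimate to additive operator-norm error $\lambda\sqrt\xi/6$. Each entry of the summand has second moment bounded via hypercontractivity (\Cref{fct:gaussian-hypercontractivity}) by $\poly(k,p)\cdot O(1)$, because $\E[\hep_{2p}^2]=1$ and the quadratic factors contribute only constants. The normalization contributes $1/(\wh\sigma''_{2p}Z_p)^2=\poly(p)\,k^{2p}$. Chebyshev applied entrywise, followed by a Frobenius bound over the $d^2$ entries, then gives the required accuracy with $N=d^2\poly(p)k^{O(\log(k/\lambda))}\log(1/\delta)$ samples per probe; the $\log(1/\delta)$ comes from a median trick or from the union bound over probes. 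The runtime is clearly $\poly(d,N)$.
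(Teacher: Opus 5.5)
Your overall plan matches the paper's. You reduce to \Cref{prop:low-degree-separates-direction}, lower-bound the probability of $\evt_{i^*}^{\mathrm{Cor}}\cap\evt_{i^*}^{\mathrm{Sep}}$ by writing $\w_j\cdot\bg=c_ja+\sqrt{1-c_j^2}\,b_j$ and conditioning on $|a|\gtrsim\sqrt{\log k}$, estimate entrywise via Chebyshev, and amplify over probes. The gap is in how you dispose of the projection onto $\bu^\perp$.

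The empirical matrix on \Cref{line:emprical-matrix-quasi} is supported on $\bu^\perp$. So its top eigenvector satisfies $|\br_1\cdot\w_{i^*}|\le\sqrt{1-(\bu\cdot\w_{i^*})^2}$. The probing event must therefore force $|\bu\cdot\w_{i^*}|$ below a small absolute constant. At the very least this is $\sqrt{1-0.99^2}\approx 0.14$. Once the $\sqrt{2\xi}$ error of \Cref{claim:close-to-rank-1-matrix} and the estimation error are absorbed, it must really be of order $\sqrt{\xi}$.

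Your fallback ``$|\w_{i^*}\cdot\bu|\le 1/2$'' cannot work: at $|\w_{i^*}\cdot\bu|=1/2$ the output has correlation at most $\sqrt{3}/2<0.99$. Your primary justification is also incorrect. $\evt_{i^*}^{\mathrm{Cor}}$ is a lower bound. The event you actually build ($|a|\ge C\sqrt{\log k}$, $\|\bg\|\le 2\sqrt m$) contains no upper bound on $|a|$ and no lower bound on $\|\bg\|$. Even at typical points of that event, $|\bu\cdot\w_{i^*}|\approx C\sqrt{\log k/m}$ with $C$ large depending on $\xi$. That is $\Theta(1)$ when $m=\Theta(\log k)$, not $O(1/\sqrt{k})$.

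The missing ingredient is the paper's near-orthogonality event $\evt^{\mathrm{Ort}}=\{|\bu\cdot\w_i|\le\sqrt{\xi}/20\ \forall i\}$ from \Cref{prop:sampling}. To get it jointly with Cor and Sep at probability $\poly(1/k)$, intersect with three events:
$t\le|a|\le 2t$;
$\|\bg\|\ge\sqrt{m/2}$, which fails with probability $e^{-\Omega(m)}$;
$\max_j|\w_j\cdot\bg|\le 4t$, via a union bound.
Then require $m\ge C'(\xi)\log k$ with a large $\xi$-dependent constant, so that $4t/\sqrt{m/2}\le\sqrt{\xi}/20$. Smaller $m$ is handled by \Cref{rmk:range-parameters}.

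Controlling every $|\bu\cdot\w_j|$, not just $j=i^*$, gives $\|\mM_{2p+2}(\bu)^{\perp\bu}-\mM_{2p+2}(\bu)\|_2\le(\sqrt{\xi}/8)\|\mM_{2p+2}(\bu)\|_2$ (\Cref{claim:orthogonal-projected-matrix-is-close-to-original}). That is what lets \Cref{prop:low-degree-separates-direction} be applied verbatim. If you bound only $|\bu\cdot\w_{i^*}|$, you cannot verify that lemma's $\sqrt{\xi}/3$ hypothesis. You would instead have to redo the Wedin step directly on $\mP_{\perp\bu}\mM_{2p+2}(\bu)\mP_{\perp\bu}$; this works, but still needs the same small-constant bound.

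A smaller slip: $(1+\xi)(1-\xi)=1-\xi^2$ exactly. The separation margin is $|a|\,\xi^2/(1+\xi)$, which forces $|b_j|\ge\xi^2 t/(1+\xi)$ rather than $\xi^{1/2}t$. This is harmless, since $C$ is chosen depending on $\xi$.
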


To prove \Cref{prop:quasi-poly-time-initialization-via-spectral}, 
we need to bound from below the probability of the event $\evt_{i^*}^{\mathrm{Cor}}\cap\evt_{i^*}^{\mathrm{Sep}}$ for a fixed index $i^*\in[k]$, defined in \eqref{eq:cor-sep-events}.
\new{
Intuitively, when the probing vector $\bu$ is sampled uniformly from the sphere, with high probability the values $|\bu\cdot\w_{i^*}|$ and $|\bu\cdot\w_j|$ would be separated when the vectors $\w_{i^*}$ and $\w_j$ are away from each other. This is the separation event $\evt_{i^*}^{\mathrm{Sep}}$.
At the same time, for the correlation event, $\evt_{i^*}^{\mathrm{Cor}}$, with high probability, $|\bu\cdot\w_{i^*}|$ cannot be too small and would be concentrated near $1/\sqrt{k}$.
Therefore, the probability that $\evt_{i^*}^{\mathrm{Cor}}\cap\evt_{i^*}^{\mathrm{Sep}}$ holds is bounded from below.}
We then determine the number of samples required to estimate the target matrix \ref{eq:target-matrix-M(u)}.
\new{Finally, we prove \Cref{prop:quasi-poly-time-initialization-via-spectral} by invoking \Cref{prop:low-degree-separates-direction}, as outlined previously.}
The proof of \Cref{prop:quasi-poly-time-initialization-via-spectral} is deferred to the end of this subsection.

\paragraph{Determining the Probability of the Probing Event.} 
We first determine the probability of event $\evt_{i^*}^{\mathrm{Cor}}\cap\evt_{i^*}^{\mathrm{Sep}}$ under the distribution $\cU = \mathrm{Unif}(\sW\cap\spd)$. \new{In fact, we prove a slightly stronger result: with probability at least $\poly(1/k)$ in addition to $\bu\in\evt_{i^*}^{\mathrm{Cor}}\cap\evt_{i^*}^{\mathrm{Sep}}$, $\bu$ also satisfies a ``near-orthogonal'' event $\evt^{\mathrm{Ort}}$. The ``near-orthogonal'' event $\evt^{\mathrm{Ort}}$ states that $\bu$ satisfies $|\bu\cdot\w_i|\leq c$ for a sufficiently small absolute constant $c$ for all hidden weight vectors $\w_i$.
This event is needed in estimating the target matrix $\mM_{2p+2}$ to high accuracy, as will become apparent in the proof of \Cref{lem:quasi-poly-sample-spectral}.}

\begin{lemma}\label{prop:sampling}
\new{Let parameters $p, Z_p(\cU)$ and constant $\xi$ be as prescribed in \Cref{prop:quasi-poly-time-initialization-via-spectral}.}
Let $\w_1,\dots,\w_k$ be $k$ unit vectors and $\sW = \spanned\{\w_1,\dots,\w_k\}\subset\R^d$ be a linear subspace of dimension \new{$m\leq k$}.
Let $\bu \sim\cU$ where $\cU = \mathrm{Unif}(\sW\cap\spd)$. 
Fix (any) index $i^*\in[k]$ and let $\cA_{i^*}(\xi) =\{i\in[k]:|\w_i\cdot\w_{i^*}|\leq 1 - \xi\}$. Define  the ``correlation'' and ``separation'' events on $\cU$ as in \Cref{prop:low-degree-separates-direction}:
\begin{align*}
    \evt_{i^*}^{\mathrm{Cor}} = \{\bu: |\w_{i^*}\cdot\bu|\geq (Z_p(\cU))^{1/(2p)}\},\;\evt_{i^*}^{\mathrm{Sep}} = \{\bu: |\w_{i^*}\cdot\bu|\geq (1  +\xi)|\w_j\cdot\bu|,\,\forall j\in \cA_{i^*}(\xi)\}.
\end{align*}
\new{In addition, let the ``near-orthogonal'' event on $\cU$ be:
\begin{align*}
    \evt^{\mathrm{Ort}} = \{\bu:|\bu\cdot\w_{i}|\leq \sqrt{\xi}/20,\,\forall i\in[k]\}. \end{align*}
Then, $\pr_{\bu\sim\cU}[\evt_{i^*}^{\mathrm{Cor}}\cap\evt_{i^*}^{\mathrm{Sep}}\cap\evt^{\mathrm{Ort}}]\geq \poly(1/k)$.}
\end{lemma}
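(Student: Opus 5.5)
The plan is to realize $\bu\sim\mathrm{Unif}(\sW\cap\spd)$ as $\bu=\g/\|\g\|_2$ with $\g\sim\calN(\vec 0,\mI_{\sW})$, and to split $\g$ along $\w_{i^*}$. Since $\w_{i^*}\in\sW$, write $\g=G\,\w_{i^*}+\g'$, where $G=\g\cdot\w_{i^*}\sim\calN(0,1)$ and $\g'$ is a standard Gaussian on $\sW\cap\w_{i^*}^{\perp}$ independent of $G$. All three events are invariant under scaling $\g$, so it suffices to lower bound the probability of the corresponding events for $\g$. Note that $Z_p(\cU)^{1/(2p)}=1/\sqrt k$, so the correlation event reads $|G|\ge \|\g\|_2/\sqrt k$.

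The key choice is to force $G$ to be moderately large, namely $G\in[t,2t]$ with $t= C\sqrt{\log k}/\xi$ for a large absolute constant $C$. By the Gaussian density, $\pr[G\in[t,2t]]\gtrsim e^{-2t^2}/t$, which is $k^{-O(C^2/\xi^2)}=\poly(1/k)$ since $\xi$ is an absolute constant. Conditional on $G$, I would then require two events on the independent vector $\g'$, each of which I expect to have probability at least $1/2$.

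The first event is $\big|\g'\cdot\w_j\big|\le \xi t/4$ for all $j\in[k]$. Each $\g'\cdot\w_j$ is Gaussian with variance at most $1$, so a union bound over $j\in[k]$ gives failure probability at most $2k\exp(-\xi^2t^2/32)\le 1/4$ for $C$ large enough. The second event is $\|\g'\|_2^2\in[m/2,2m]$; standard $\chi^2$ concentration makes its failure probability small once $m\gtrsim 1$.

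On the intersection, each event of the lemma follows from a short computation.
\begin{itemize}
\item Separation: for $j\in\cA_{i^*}(\xi)$, we have $|\g\cdot\w_j|\le |\w_j\cdot\w_{i^*}|\,G+|\g'\cdot\w_j|\le (1-\xi)G+\xi G/4$. This is at most $G/(1+\xi)$ because $(1-3\xi/4)(1+\xi)\le 1$ for small $\xi$.
\item Correlation: we have $\|\g\|_2\le 2t+\sqrt{2m}\le 2t+\sqrt{2k}$, so $\|\g\|_2/\sqrt k\lesssim t/\sqrt k+1\le t\le G$ for $k$ larger than a constant. Small $k$ is trivial, since then $\poly(1/k)$ is a constant and a direct argument suffices.
\item Near-orthogonality: we have $|\g\cdot\w_i|\le G+\xi t/4\le 3t$ and $\|\g\|_2\ge\sqrt{m/2}$. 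Hence $|\bu\cdot\w_i|\le 3t\sqrt{2/m}$, and this is at most $\sqrt\xi/20$ provided $m\ge C'\log(k)/\xi^3$.
\end{itemize}

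The main obstacle is this last requirement, namely the tension between wanting $G$ large enough, of order $\sqrt{\log k}$, to beat the union bound in the separation event, and wanting $\bu$ nearly orthogonal to every $\w_i$. These are compatible only when $m\gtrsim\log k$ with a large enough constant depending on $\xi$. I would invoke \Cref{rmk:range-parameters} here: it assumes $m\gtrsim\log k$, and I would take its implicit constant to be $C'/\xi^3$. The complementary regime $m\lesssim \log(k)/\xi^3$ is handled separately by random sampling in $\sW$ as in that remark, and \Cref{prop:sampling} is only needed when $m$ is large.

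Combining the three pieces, the target probability is at least $\pr[G\in[t,2t]]\cdot(1-1/4-o(1))\ge\poly(1/k)$, which proves \Cref{prop:sampling}.
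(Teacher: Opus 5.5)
Your overall route matches the paper's. You realize $\bu$ as a normalized Gaussian on $\sW$ and pin $\g\cdot\w_{i^*}$ to $[t,2t]$ with $t\asymp\sqrt{\log k}$. You take a union bound over $j\in\cA_{i^*}(\xi)$, use $\chi^2$ concentration for the norm, and rely on $m\gtrsim\log k$ from \Cref{rmk:range-parameters}. Using the independence of $G$ and $\g'$ to get a product bound is slightly cleaner than the paper's union-bound combination.

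However, the separation step as written is false. You bound $|\g\cdot\w_j|\le(1-\xi)G+\xi G/4=(1-3\xi/4)G$ and claim this is at most $G/(1+\xi)$ because $(1-3\xi/4)(1+\xi)\le 1$ for small $\xi$. In fact $(1-3\xi/4)(1+\xi)=1+\xi/4-3\xi^2/4$, which exceeds $1$ for every $\xi<1/3$, so in particular for all $\xi\le 1/300$. The available margin is only $\frac{1}{1+\xi}-(1-\xi)=\frac{\xi^2}{1+\xi}$. A tolerance of order $\xi t$ on $|\g'\cdot\w_j|$ is therefore too large by a factor of $\xi$, and with it the separation event is not implied.

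The fix is to require $|\g'\cdot\w_j|\le \xi^2 t/4$ for all $j$. This suffices because $(1-\xi+\xi^2/4)(1+\xi)=1-3\xi^2/4+\xi^3/4\le 1$. Correspondingly, take $t=C\sqrt{\log k}/\xi^2$ so that the union bound $2k\exp(-\xi^4t^2/32)$ is still at most $1/4$. This is exactly the paper's threshold: on a violation, $|G_j|\ge |z|\xi^2/(1+\xi)$, with $C_\xi=4(1+\xi)^2/\xi^4$.

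The price is twofold. First, $\pr[G\in[t,2t]]$ becomes $k^{-O(1/\xi^4)}$, which is still $\poly(1/k)$ since $\xi$ is an absolute constant. Second, near-orthogonality now needs $m\gtrsim \log(k)/\xi^5$ rather than $\log(k)/\xi^3$, matching the paper's requirement $m\ge 2\cdot 80^2(C_\xi/\xi)\log(2k)$. Your correlation and near-orthogonality computations otherwise go through unchanged with the larger $t$, since $|\g\cdot\w_i|\le 2t+\xi^2 t/4\le 3t$ still holds.
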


\begin{proof}
Let $\bp\sim\calN(\vec 0,\mI_{\sW})$ and $Z_j \eqdef \bp\cdot\bw_j$. 
Then, by definition, $\bu = \bp/\|\bp\|_2$ and $\bu\cdot\w_i = \bp\cdot\w_i/\|\bp\|_2$. 
The roadmap of the proof is the following: we first show that $\evt_{i^*}^{\mathrm{Sep}}\cap\{t\leq |\bp\cdot\w_{i^*}|\leq 2t\}$ happens with probability at least $\poly(1/k)$ \new{for some $t\lesssim \sqrt{\log(k)}$} then we argue that $\pr[\|\bp\|_2\in(\sqrt{m/2},\sqrt{m})]\geq1 - \poly(1/m)$. \new{Since we have assumed without loss of generality (see \Cref{rmk:range-parameters}) that the dimension $m$ of the subspace $\sW$ satisfies $m\gtrsim \log(k)$,} applying a union bound on these events gives a lower bound on $\pr[\evt_{i^*}^{\mathrm{Sep}}\cap\{t/\sqrt{m}\leq |\bu\cdot\w_{i^*}|\leq 2t/\sqrt{m/2}\}]= \pr[\evt_{i^*}^{\mathrm{Cor}}\cap\evt_{i^*}^{\mathrm{Sep}}\cap\evt^{\mathrm{Ort}}]$,
completing the proof.

For each $j\in \cA_{i^*}(\xi)$, let $\rho_{{i^*},j} = \w_{i^*}\cdot\w_j = \cos(\theta(\w_{i^*},\w_j))$. By the definition of $\cA_{i^*}(\xi)$, it holds $|\rho_{{i^*},j}|\le \rho \eqdef 1 - \xi$. 
Then, we can write $Z_j$ as 
$$Z_j = \bp \cdot \cos(\theta(\w_{i^*},\w_j)) \w_{i^*} + \bp \cdot \sin(\theta(\w_{i^*},\w_j))\wh{\w_{i^*}^{\perp\w_j}} = \rho_{{i^*},j} Z_{i^*} + \sqrt{1-\rho_{{i^*},j}^2} G_j\;,$$
where $G_j$ is a standard normal random variable.
By the definition of $\bu$, it holds $\bu\cdot\w_i = \bp\cdot\w_i/\|\bp\|_2$, and hence the event $|\bu\cdot\w_j|\leq |\bu\cdot\w_{i^*}|/(1+\xi)$ is equivalent to $|Z_j|\leq |Z_{i^*}|/(1+\xi)$.
Conditional on $Z_{i^*}=z$,
consider the event that $\evt_{i^*}^{\mathrm{Sep}}$ is violated, i.e., there exists some $Z_j$, $j\in \cA_{i^*}(\xi)$, such that $|Z_j|\geq |Z_{i^*}|/(1+\xi)$. Note that $|Z_j|\leq |\rho_{{i^*},j}||z|+ |G_j|$ and using that $|\rho_{{i^*},j}|\leq 1-\xi$ and rearranging, we get that on a violation, $ |G_j|\geq |z|\xi^2/(1+\xi)$. Therefore, from a union bound, we have
\begin{align*}
    \pr[\cup_{j\in \cA_{i^*}(\xi)}\{ |Z_j|\geq|Z_{i^*}|/(1+\xi) \} \mid Z_{i^*}=z]&\leq \sum_{j\in \cA_{i^*}(\xi)}\pr[|G_j|\geq|z|\xi^2/(1+\xi)  \mid Z_{i^*}=z]
    \\&\leq k \max_{j\in \cA_{i^*}(\xi)}\pr[|G_j|\geq|z|\xi^2/(1+\xi)  \mid Z_{i^*}=z]\;.
\end{align*}
Using that $\pr[|G_j|\geq q]\leq 2\exp(-q^2/2)$, we get that
\[
 \pr[\cup_{j\in \cA_{i^*}(\xi)}\{ |Z_j|\geq|Z_{i^*}|/(1+\xi) \} \mid Z_{i^*}=z]\leq 2k \exp(-z^2\xi^4/(2(1+\xi)^2) )\;.
\] 
Let $C_{\xi}= 4(1+\xi)^2/\xi^4$.
We get that for any $z\geq t\eqdef \sqrt{C_\xi \log(2k)}
$, it holds \[\pr[\cup_{j\in \cA_{i^*}(\xi)}\{ |Z_j|\geq|Z_{i^*}|/(1+\xi) \} \mid  |Z_{i^*}|=z]\leq 1/2\;.\]
Therefore, the probability of no violation conditioned on $|Z_{i^*}| = Z\geq t$ is at least
\begin{align*}
    \pr[\cap_{j\in \cA_{i^*}(\xi)}\{ |Z_j|\leq|Z_{i^*}|/(1+\xi) \} \mid  |Z_{i^*}|=z]\geq 1/2.
\end{align*}
Furthermore, by the anti-concentration inequality for standard Gaussians, we have:
\[
\pr[2t\geq |Z_{i^*}|\ge t]\geq \frac{1}{\sqrt{2\pi}t}e^{-t^2/2}\geq e^{-C_\xi \log(2k)}=(2k)^{-C_\xi}.
\]
Therefore, we get that ($t = \sqrt{C_\xi \log(2k)}$):
\begin{align}\label{eq:lower-bound-prob}
\pr[\evt_{i^*}^{\mathrm{Sep}}\cap \{2t\geq|Z_{i^*}|\geq t\}]&=\pr[\cap_{j\in \cA_{i^*}(\xi)}\{ |Z_j|\leq|Z_{i^*}|/(1+\xi) \}\cap \{2t\geq|Z_{i^*}|\geq t\}] \nonumber\\
&\geq \pr[\cap_{j\in \cA_{i^*}(\xi)}\{ |Z_j|\leq|Z_{i^*}|/(1+\xi) \} \mid t\leq |Z_{i^*}|\leq 2t]\pr[2t\geq |Z_{i^*}| \geq t]\nonumber\\
&\geq (1/2)(2k)^{-C_\xi}\;.
\end{align}
Next observe that the random variable $\|\vec p\|_2^2$ is a sum of $m$ squares of independent Gaussians (as $\sW$ is an $m$-dimensional subspace).  
Therefore, $\|\vec p\|_2^2$ follows the $\chi_m^2$ distribution. 
From the concentration of $\chi_m^2$, we have that 
\begin{equation*}
    \pr[\|\vec p\|_2\not\in(\sqrt{m/2},\sqrt{2m})]\leq 2\exp(- m/16)\;.
\end{equation*}
It is without loss of generality that $m$ is greater than a large absolute constant (\Cref{rmk:range-parameters}), and we have $m\geq 17 C_\xi\log (2k)$.
Thus, we obtain 
\begin{equation}\label{eq:lower-bound-2}
    \pr[\|\vec p\|_2\not\in(\sqrt{m/2},\sqrt{2m})]\leq (1/8)(2k)^{-C_\xi}\;.
\end{equation}

\new{Finally, since $Z_j$, $j\in[k]$, are standard Gaussian random variables, by the Gaussian concentration inequality and a union bound, we obtain:
\begin{align}\label{eq:lower-bound-3}
    \pr \left[\max_{j\in[k]} |Z_j|\geq 4t \right] 
    \leq 2k\exp(-8t^2)= 2k\exp(-8C_\xi\log(2k)) = (2k)^{-8C_\xi + 1}\leq (1/8)(2k)^{-C_\xi}.
\end{align}
}
Combining \eqref{eq:lower-bound-prob}, \eqref{eq:lower-bound-2} and \eqref{eq:lower-bound-3}, since $\pr[A\cap B \cap C]\geq 1 - \pr[A^c] - \pr[B^c] - \pr[C^c]$,  with 
\begin{align*}
    A = \evt_{i^*}^{\mathrm{Sep}}\cap \{2t\geq|Z_{i^*}|\geq t\}, \; B = \|\vec p\|_2\in(\sqrt{m/2},\sqrt{2m}),\; C = \max_{j\in[k]} |Z_j|\geq 4t \;,
\end{align*}
we get that
\[
\pr[\evt_{i^*}^{\mathrm{Sep}}\cap \{2t\geq|Z_{i^*}|\geq t\} \cap\{\|\vec p\|_2\in(\sqrt{m/2},\sqrt{2m})\} \cap\{\max_{j\in[k]} |Z_j|\leq 4t\}]\geq  (1/8)(2k)^{-C_\xi}\;.
\]
Therefore, plugging in the value $t = \sqrt{C_\xi\log(2k)}$, on this event, we have 
\[
 |\vec u\cdot \w_{i^*}|\geq \sqrt{\frac{C_\xi\log(2k)}{2m}} \geq 1/\sqrt{k}\;,
\]
and since $m\geq 2\times 80^2 (C_\xi/\xi)\log(2k)$, we also have
\begin{align*}
    |\vec u\cdot \w_{j}|\leq \frac{4t}{\sqrt{m/2}}\leq \frac{4\sqrt{2C_\xi\log(2k)}}{\sqrt{m}}\leq \frac{\sqrt{\xi}}{20}, \,\forall j\in[k].
\end{align*}
Since $(Z_p(\cU))^{1/(2p)} = 1/\sqrt{k}$, we conclude that $\pr_{\bu\sim\cU}[\evt_{i^*}^{\mathrm{Cor}}\cap\evt_{i^*}^{\mathrm{Sep}}\cap\evt^{\mathrm{Ort}}]\geq \poly(1/k)$.
\end{proof}

\paragraph{Estimating the Target Matrix in Quasi-Polynomial Time.} 
We now show that, with quasi-polynomial sample and computational complexity,  
we can approximate the target matrix $\mM_{2p+2}(\bu)$ \eqref{eq:target-matrix-M(u)} 
in operator norm to high accuracy.

\begin{proposition}[Sample complexity for the spectral method]\label{lem:quasi-poly-sample-spectral}
Let $0<\xi\leq 1/300$ be an absolute constant and let $0<\lambda\leq\sqrt{\xi}/10$.
Let $(\x,y)\sim\D$ be a sum of \new{$k$} ReLUs under Gaussian marginals.
Let $p = \lceil\log(k/\lambda)/\log(1+\xi)\rceil $, $Z_p(\cU) = k^{-p}$ and $\wh{\sigma}''_{2p} = \wh{\sigma}_{2p+2}\sqrt{(2p+1)(2p+2)}$. Fix $i^*\in[k]$ and suppose the unit probing vector $\bu$ satisfies the ``correlation'', ``separation'' and ``near orthogonality'' events defined in \Cref{prop:sampling}: $\bu\in\evt_{i^*}^{\mathrm{Cor}}\cap\evt_{i^*}^{\mathrm{Sep}}\cap\evt^{\mathrm{Ort}}$.
Draw $N = d^2\,\poly(k,p,1/\lambda)k^{O(\log(k/\lambda))}$ i.i.d.\ samples 
    $(\x,y)\sim\D$ and construct the empirical distribution $\widehat{\D}$. Then, for
    \begin{align*}
   \wh{\mM}_{2p+2}(\bu)^{\perp\bu} = \frac{1}{\wh{\sigma}_{2p}''Z_p(\cU)}\E_{(\x,y)\sim \widehat{\D}}\left[y\hep_{2p}(\bu\cdot\x)(\x^{\perp\bu}\x^{\perp\bu\top} - \mI_{\perp\bu})\right]\;,
    \end{align*}
     with probability at least $2/3$, it holds $\|\wh{\mM}_{2p+2}(\bu)^{\perp\bu} - \mM_{2p+2}(\bu)\|_2\leq (\sqrt{\xi}/3)\|\mM_{2p+2}(\bu)\|_2$.
\end{proposition}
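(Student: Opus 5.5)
We will follow the standard route: compute the population mean of the estimator, compare it with the target matrix, and control the fluctuation by a second-moment bound and Chebyshev. \textbf{Step 1 (population mean).} Decompose $\x = (\bu\cdot\x)\bu + \x^{\perp\bu}$; these two parts are independent standard Gaussians. For each ReLU term, write $\w_i\cdot\x = (\w_i\cdot\bu)(\bu\cdot\x) + \w_i^{\perp\bu}\cdot\x^{\perp\bu}$. We compute $\E[\sigma(\w_i\cdot\x)\hep_{2p}(\bu\cdot\x)(\x^{\perp\bu}\x^{\perp\bu\top}-\mI_{\perp\bu})]$ by integrating over $\x^{\perp\bu}$ with \Cref{fct:hermite-tensor} (item 2, the Stein/second-derivative identity applied in $\bu^\perp$). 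This gives $\E[\sigma''(\w_i\cdot\x)\hep_{2p}(\bu\cdot\x)]\,\w_i^{\perp\bu}\w_i^{\perp\bu\top}$. By \Cref{fact:hermite-poly} (item 4) applied to $\sigma''$, whose Hermite coefficients are $\wh{\sigma}''_{2p}=\wh{\sigma}_{2p+2}\sqrt{(2p+1)(2p+2)}$, this equals $\wh{\sigma}''_{2p}(\w_i\cdot\bu)^{2p}\w_i^{\perp\bu}\w_i^{\perp\bu\top}$. Hence
\[
\E[\wh{\mM}_{2p+2}(\bu)^{\perp\bu}] = \frac{1}{Z_p(\cU)}\sum_i\beta_i|\w_i\cdot\bu|^{2p}\w_i^{\perp\bu}\w_i^{\perp\bu\top}.
\]
This is the projection $\mP_{\perp\bu}\mM_{2p+2}(\bu)\mP_{\perp\bu}$.

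\textbf{Step 2 (bias from projection).} This is where $\evt^{\mathrm{Ort}}$ enters. We have $\|\w_i\w_i^\top-\w_i^{\perp\bu}\w_i^{\perp\bu\top}\|_2\le 2|\w_i\cdot\bu|\le\sqrt{\xi}/10$. Summing with the weights $\beta_i|\w_i\cdot\bu|^{2p}/Z_p$, the bias is at most $(\sqrt{\xi}/10)(W_A+W_B)$. By \Cref{claim:close-to-rank-1-matrix}, $\mM_{2p+2}(\bu)$ is close to $W_B\w_{i^*}\w_{i^*}^\top$, so $\|\mM_{2p+2}(\bu)\|_2\ge(1-\lambda/k-\sqrt{2\xi})W_B$ and $W_A+W_B\le(1+\lambda/k)W_B$. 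Therefore the bias is at most $(\sqrt{\xi}/6)\|\mM_{2p+2}(\bu)\|_2$.

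\textbf{Step 3 (variance).} We bound $\E\|\cdot-\E\cdot\|_F^2$ for one sample, which is at most
\[
(\wh{\sigma}''_{2p}Z_p)^{-2}\,\E\big[y^2\hep_{2p}(\bu\cdot\x)^2\|\x^{\perp\bu}\x^{\perp\bu\top}-\mI\|_F^2\big].
\]
Use $y^2\le k\sum_i(\w_i\cdot\x)^2$ and independence of $\bu\cdot\x$ and $\x^{\perp\bu}$. Gaussian moments together with the degree identity, as in \Cref{clm:empirical-conc}, or hypercontractivity (\Cref{fct:gaussian-hypercontractivity}) for the cross term, give $O(k^2 p\,d^2)$. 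Since $\wh{\sigma}''_{2p}{}^{2}\gtrsim p^{-1/2}$ by \Cref{fct:hermite-relu}, and $Z_p^{-2}=k^{2p}=k^{O(\log(k/\lambda))}$, Chebyshev with $N$ samples gives a Frobenius (hence spectral) deviation $t$ with probability at least $2/3$ once $N\gtrsim d^2\poly(k,p)k^{2p}/t^2$. We take $t=(\sqrt{\xi}/6)\|\mM_{2p+2}(\bu)\|_2$. By Step 2 and $\evt^{\mathrm{Cor}}$, $\|\mM_{2p+2}(\bu)\|_2\gtrsim W_B\ge\beta_{i^*}|\w_{i^*}\cdot\bu|^{2p}/Z_p\ge\lambda$, so $t\gtrsim\lambda$ and $N=d^2\poly(k,p,1/\lambda)k^{O(\log(k/\lambda))}$ suffices. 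Adding the bias and the fluctuation yields the claimed $(\sqrt{\xi}/3)\|\mM_{2p+2}(\bu)\|_2$.

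The main obstacle is Step 1. It requires carefully justifying the second-derivative identity for the nonsmooth ReLU in the mixed coordinates, which involves a distributional $\sigma''=\delta_0$ and coefficients via the formal Hermite differentiation of \Cref{fct:hermite-relu}. It also requires checking that the $\bu$-direction dependence factorizes exactly to $(\w_i\cdot\bu)^{2p}$ despite $\w_i^{\perp\bu}$ having norm below one. If this identity becomes messy, the fallback is to expand $\sigma$ in Hermite tensors and contract directly, using orthogonality between $\hep_{2p}(\bu\cdot\x)$ and the degree-2 Hermite tensor in $\bu^\perp$.
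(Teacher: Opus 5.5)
Your proposal is correct and follows the paper's proof essentially step for step: the Stein identity plus the Hermite correlation fact identify the mean as $\mP_{\perp\bu}\mM_{2p+2}(\bu)\mP_{\perp\bu}$ (the $(\w_i\cdot\bu)^{2p}$ factorization is exact because the $\sin^2$ factor is carried by the unnormalized $\w_i^{\perp\bu}\w_i^{\perp\bu\top}$), the near-orthogonality event bounds the projection bias, and Chebyshev together with $\|\mM_{2p+2}(\bu)\|_2\ge\lambda$ handles the fluctuation. The only differences are cosmetic: you lower-bound $\|\mM_{2p+2}(\bu)\|_2$ through the rank-one approximation instead of via $\w_{i^*}^\top\mM_{2p+2}(\bu)\w_{i^*}\ge(1-\xi)^2W_B$, and your direct moment computation gives a per-sample second moment of $O(k^2pd^2)$ rather than the paper's $O(3^{2p}k^2d^2)$ from Cauchy--Schwarz and hypercontractivity, which is immaterial since $Z_p(\cU)^{-2}=k^{2p}$ dominates.
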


\begin{proof}

The high-level idea of the proof is that the target matrix $\mM_{2p+2}(\bu)$ 
defined in \eqref{eq:target-matrix-M(u)} can be approximately written 
as the following expectation. 

\begin{claim}\label{claim:expectation-form-of-orthogonal-to-u-target}
Given any unit vector $\bu$, define $\mM_{2p+2}(\bu)^{\perp\bu} = \mP_{\perp\bu}\mM_{2p+2}(\bu)\mP_{\perp\bu} $. Then, 
    \begin{align*}
        \mM_{2p+2}(\bu)^{\perp\bu} = \frac{1}{\wh{\sigma}_{2p}'' Z_p(\cU)}\E_{(\x,y)\sim{\D}}\bigg[y\hep_{2p}(\bu\cdot\x)\he_2^{\bu^{\perp}}(\mP_{\perp\bu}\x)\bigg].
    \end{align*}
\end{claim}
\begin{proof}[Proof of \Cref{claim:expectation-form-of-orthogonal-to-u-target}]
 By our definition, 
the distribution $(\x,y)\sim{\D}$ satisfies $y(\x) = \sum_{i=1}^k \beta_i\sigma(\w_i\cdot\x)$ 
and $\x\sim\calN(0,\mI_d)$. 
Therefore, denoting $\theta_{i\bu}:= \theta(\w_i, \bu),$ we have that for any unit vector $\bu\in\R^d$,
\begin{align}
    &\quad \E_{(\x,y)\sim{\D}}\Big[y\hep_{2p}(\bu\cdot\x)\he_2^{\bu^{\perp}}(\mP_{\perp\bu}\x)\Big] \nonumber\\
    &= \sum_{i=1}^k \beta_i\E_{\bu\cdot\x}\bigg[\E_{\x^{\perp\bu}\sim\calN(\vec 0,\mI_{\bu^{\perp}})}\Big[\sigma(\cos(\theta_{i\bu})\wh{\w_i^{\perp\bu}}\cdot\x^{\perp\bu})\he_2^{\bu^{\perp}}(\x^{\perp\bu}) \,\Big|\, \bu\cdot\x\Big] \, \hep_{2p}(\bu\cdot\x) \bigg] \nonumber\\
    &\overset{(i)}{=} \sum_{i=1}^k \beta_i\E_{\x\sim\calN(0,\mI_{d})}\Big[\sigma''(\cos(\theta_{i\bu})\bu\cdot\x + \sin(\theta_{i\bu})\wh{\w_i^{\perp\bu}}\cdot\x)\hep_{2p}(\bu\cdot\x)\Big]\sin^2(\theta_{i\bu})\wh{\w_i^{\perp\bu}}\wh{\w_i^{\perp\bu}}^{\top} \nonumber\\
    &\overset{(ii)}{=}\sum_{i=1}^k \beta_i\E_{\x\sim\calN(0,\mI_d)}\Big[\mathrm{T}_{\cos(\theta_{i\bu})}(\sigma'')(\bu\cdot\x)\hep_{2p}(\bu\cdot\x)\Big]\w_i^{\perp\bu}\w_i^{\perp\bu\top}, \label{eq:sample-compelxity-quasi-poly-1}
\end{align}
where in $(i)$ we used \Cref{fct:hermite-tensor}, by which $\E_{\bz\sim\calN}[f(\br\cdot\bz)\he_2(\bz)] =\E_{\bz\sim\calN}[f''(\br\cdot\bz)]\br\br^\top$ for any unit vector $\br$, and in $(ii)$ $\mathrm{T}_{\rho}(\sigma'')$, $\rho\in[0,1]$, 
is the smoothing operator defined in \Cref{fact:hermite-poly}.
By \Cref{fact:hermite-poly}, Part 2, we know that $\sigma''$ has Hermite expansion 
$\sigma''(z) = \sum_{i\geq 0}c_i'' \hep_i(z)$, 
where $\wh{\sigma}_i'' = \wh{\sigma}_{i+2}\sqrt{(i+1)(i+2)}$.
Thus, applying \Cref{fact:hermite-poly}, Part 3 to \eqref{eq:sample-compelxity-quasi-poly-1} 
leads to:
\begin{align*}
\frac{1}{\wh{\sigma}_{2p}''Z_p(\cU)}\E_{(\x,y)\sim{\D}}\Big[y\hep_{2p}(\bu\cdot\x)\he_2^{\bu^{\perp}}(\mP_{\perp\bu}\x)\Big] 
&=\frac{1}{Z_p(\cU)}\sum_{i=1}^k \beta_i|\w_i\cdot\bu|^{2p} \w_i^{\perp\bu}\w_i^{\perp\bu\top}\\
&= \mP_{\perp\bu}\mM_{2p+2}(\bu)\mP_{\perp\bu} =: \mM_{2p+2}(\bu)^{\perp\bu} \;,
\end{align*}
completing the proof of \Cref{claim:expectation-form-of-orthogonal-to-u-target}.
\end{proof}

We next show that when $\bu$ is a unit vector such that the event \new{$\bu\in\evt_{i^*}^{\mathrm{Cor}}\cap\evt_{i^*}^{\mathrm{Sep}}\cap\evt^{\mathrm{Ort}}$ holds} for $\lambda$-heavy hidden weight $\w_{i^*}$, then the spectral difference 
$\|\mM_{2p+2}(\bu)^{\perp\bu} - \mM_{2p+2}(\bu)\|_2$ is small.

\begin{claim}\label{claim:orthogonal-projected-matrix-is-close-to-original}
    Under the assumptions of \Cref{lem:quasi-poly-sample-spectral},
it holds
    \begin{equation*}
        \|\mM_{2p+2}(\bu)^{\perp\bu} - \mM_{2p+2}(\bu)\|_2 
        \leq \new{\frac{\sqrt{\xi}}{8}}\|\mM_{2p+2}(\bu)\|_2 \;\text{ and }\; \|\mM_{2p+2}(\bu)\|_2\geq \lambda \;.
    \end{equation*}
\end{claim}
\begin{proof}[Proof of \Cref{claim:orthogonal-projected-matrix-is-close-to-original}]
\new{
Note that for any unit vector $\w_i$, by the triangle inequality, it holds
\begin{align*}
    \|\w_i\w_i^\top - \w_i^{\perp\bu}\w_i^{\perp\bu\top}\|_2\leq \|(\w_i - \w_i^{\perp\bu})\w_i^\top\|_2 + \|\w_i^{\perp\bu}(\w_i - \w_i^{\perp\bu})\|_2\leq 2\|\w_i - \w_i^{\perp\bu}\|_2 = 2|\bu\cdot\w_i|.
\end{align*}
Since $\bu$ also satisfies the near-orthogonal event $\evt^{\mathrm{Ort}}$, i.e, for all $i\in[k]$ it holds $|\bu\cdot\w_i|\leq \sqrt{\xi}/20$, we have $\|\w_i\w_i^\top - \w_i^{\perp\bu}\w_i^{\perp\bu\top}\|_2\leq \sqrt{\xi}/10$. This implies the following inequality
\begin{align*}
    \|\mM_{2p+2}(\bu)^{\perp\bu} - \mM_{2p+2}(\bu)\|_2\leq \frac{1}{Z_p(\cU)}\sum_{i=1}^k \beta_i|\w_i\cdot\bu|^{2p}\|\w_i\w_i^\top - \w_i^{\perp\bu}\w_i^{\perp\bu\top}\|_2\leq \frac{\sqrt{\xi}}{10}\sum_{i=1}^k \beta_i|\w_i\cdot\bu|^{2p}.
\end{align*}
Recall that in \eqref{eq:weights-W_A-W_B} we defined: $\cA_{i^*}(\xi) =\{i\in[k]:|\w_i\cdot\w_{i^*}|\leq 1 - \xi\}$, and 
\begin{align*}
    W_A = \frac{1}{Z_p(\cU)}\sum_{i\in \cA_{i^*}(\xi)} \beta_i|\w_i\cdot\bu|^{2p},\; W_B = \frac{1}{Z_p(\cU)}\sum_{i\notin \cA_{i^*}(\xi)} \beta_i|\w_i\cdot\bu|^{2p}, \end{align*}
so that $W_A + W_B = ({1}/{Z_p(\cU)})\sum_{i=1}^k \beta_i|\w_i\cdot\bu|^{2p}$.
Then in \Cref{claim:close-to-rank-1-matrix}, equation \eqref{eq:||M-M_b||_2}, we proved that when $\bu\in\evt_{i^*}^{\mathrm{Cor}}\cap\evt_{i^*}^{\mathrm{Sep}}$, it holds $W_A\leq(\lambda/k) W_B$. 
Thus, we conclude that
\begin{align*}
     \|\mM_{2p+2}(\bu)^{\perp\bu} - \mM_{2p+2}(\bu)\|_2\leq\frac{\sqrt{\xi}}{10}(W_A + W_B)\leq \frac{\sqrt{\xi}(\lambda/k + 1)W_B}{10}.
\end{align*}
}\new{
On the other hand, since $\beta_i\geq 0$ for all $i\in[k]$, we have:
\begin{align*}
    \|\mM_{2p+2}(\bu)\|_2\geq \w_{i^*}^\top\mM_{2p+2}(\bu)\w_{i^*}\geq \frac{1}{Z_p(\cU)}\sum_{i\notin \cA_{i^*}(\xi)}\beta_i|\w_i\cdot\bu|^{2p} (\w_i\cdot\w_{i^*})^2\geq (1 - \xi)^2 W_B.
\end{align*}
Therefore, with the choice $\xi\in(0,1/300]$ and $\lambda\leq \sqrt{\xi}/10$, we conclude that:
\begin{align*}
    \|\mM_{2p+2}(\bu)^{\perp\bu} - \mM_{2p+2}(\bu)\|_2 \leq \frac{\sqrt{\xi}(\lambda/k + 1)}{10(1 - \xi^2)}\|\mM_{2p+2}(\bu)\|_2\leq\frac{\sqrt{\xi}}{8}\|\mM_{2p+2}(\bu)\|_2.
\end{align*}

Finally, note that $\|\mM_{2p+2}(\bu)\|_2\geq \beta_{i^*}|\w_{i^*}\cdot\bu|^{2p}/Z_p(\cU)$, since $\w_{i^*}$ is a $\lambda$-heavy hidden weight and $\bu$ satisfies $\evt_{i^*}^{\mathrm{Cor}}$, it holds that $\|\mM_{2p+2}(\bu)\|_2\geq \lambda$. This completes the proof of \Cref{claim:orthogonal-projected-matrix-is-close-to-original}.}
\end{proof}

\Cref{claim:orthogonal-projected-matrix-is-close-to-original} implies that an accurate estimation of the matrix $\mM_{2p+2}(\bu)^{\perp\bu}$ approximates the target matrix from \eqref{eq:target-matrix-M(u)} in operator norm.

The rest of the proof is devoted to determining how many samples are required to estimate $\mM_{2p+2}(\bu)^{\perp\bu}$. We show that:

\begin{lemma}\label{lem:sample-complexity-quasi}
Using
    $N \gtrsim\frac{d^2 3^{2p} k^2}{(\wh{\sigma}''_{2p})^2 \lambda^2 Z_p(\cU)^2}$ samples, with probability at least $2/3$, it holds
    \begin{align*}
        \big\|\wh{\mM}_{2p+2}(\bu)^{\perp\bu} - {\mM}_{2p+2}(\bu)^{\perp\bu}\big\|_2\leq \frac{\sqrt{\xi}}{5}\|\mM_{2p+2}(\bu)\|_2.
    \end{align*}
\end{lemma}

\begin{proof}[Proof of \Cref{lem:sample-complexity-quasi}.]
Set $Q= \wh{\sigma}''_{2p}Z_p(\cU)$, and
let $ \vec X_i=Q^{-1} y^{(i)}\hep_{2p}(\bu\cdot\x^{(i)})(\x^{(i)\perp\bu}\x^{(i)\perp\bu\top} - \mI_{\perp\bu})$ and note that $\E[\vec X_i]=\mM_{2p+2}(\bu)^{\perp\bu}$ by \Cref{claim:expectation-form-of-orthogonal-to-u-target}.
By Chebyshev inequality, we have that 
\[
\pr\Big[\Big\|\frac{1}{N}\sum \vec X_i-\mM_{2p+2}(\bu)^{\perp\bu}\Big\|_2 
\geq t \|\mM_{2p+2}(\bu)^{\perp\bu}\|_2\Big] 
\leq \frac{\E[\|\frac{1}{N}\sum \vec X_i-\mM_{2p+2}(\bu)^{\perp\bu}\|_2^2]}{t^2 \|\mM_{2p+2}(\bu)^{\perp\bu}\|_2^2} \;.
\]
By the independence of samples, 
$\E[\|\frac{1}{N}\sum \vec X_i-\mM_{2p+2}(\bu)^{\perp\bu}\|_2^2]
\leq \frac{2}{N}\E[\| \vec X_i\|_F^2]$.
In addition, by the definition of the labels, 
we know that $y = \sum_{i=1}^k \beta_i\sigma(\w_i\cdot\x) $.
Furthermore, by the independence between $\bu\cdot\x$ and $\x^{\perp\bu}$ 
and by \CS inequality, the second-order moment of $\vec X_i$ is bounded by
\begin{align*}
  \E[\| \vec X_i\|_F^2] 
  &\leq   Q^{-2} \E\left[y^2\hep_{2p}^2(\bu\cdot\x)\|(\x^{\perp\bu}\x^{\perp\bu\top} - \mI_{\perp\bu})\|_F^2\right]\\
  &=\frac{1}{Q^2}\E\Big[\Big(\sum_{i=1}^k \beta_i\sigma(\w_i\cdot\x)\Big)^2\hep_{2p}^2(\bu\cdot\x)\|(\x^{\perp\bu}\x^{\perp\bu\top} - \mI_{\perp\bu})\|_F^2\Big]\\
&\leq  \frac{1}{Q^2}\sqrt{\E\Big[\Big(\sum_{i=1}^k \beta_i\sigma(\w_i\cdot\x)\Big)^4\Big]\E[\hep_{2p}^4(\bu\cdot\x)\|(\x^{\perp\bu}\x^{\perp\bu\top} - \mI_{\perp\bu})\|_F^4 ]}\\
&=\frac{1}{Q^2}\sqrt{\E\Big[\Big(\sum_{i=1}^k \beta_i\sigma(\w_i\cdot\x)\Big)^4\Big]\E[\hep_{2p}^4(\bu\cdot\x)]\E[\|(\x^{\perp\bu}\x^{\perp\bu\top} - \mI_{\perp\bu})\|_F^4 ]} \;.
\end{align*}
The fourth moment of $y$ is bounded above by 
$\E[(\sum_{i=1}^k \beta_i\sigma(\w_i\cdot\x))^4]\lesssim k^4$.
In addition, using \Cref{fct:gaussian-hypercontractivity}, we have that $\E[\hep_{2p}^4(\bu\cdot\x)]\leq 3^{4p}$. Further,  $\E[\|(\x^{\perp\bu}\x^{\perp\bu\top} - \mI_{\perp\bu})\|_F^4]\leq d^4$.
Therefore, we have:
\begin{align*}
    \E[\| \vec X_i\|_F^2]\lesssim  \frac{d^2 3^{2p} k^2 }{ Q^2}.
\end{align*}
Plugging the upper bound on the second moment $\E[\| \vec X_i\|_2^2]$ back into Chebyshev inequality yields
\begin{align*}
    \pr\Big[\Big\|\frac{1}{N}\sum \vec X_i-\mM_{2p+2}(\bu)^{\perp\bu}\Big\|_2\geq t \|\mM_{2p+2}(\bu)^{\perp\bu}\|_2\Big]\lesssim \frac{d^2 3^{2p}\sqrt{k^4 }}{Q^2 N t^2 \|\mM_{2p+2}(\bu)^{\perp\bu}\|_2^2} \;.
\end{align*}
When  $\xi\in(0,1/300]$, by \Cref{claim:orthogonal-projected-matrix-is-close-to-original} 
and using the triangle inequality, it holds that 
$$\|\mM_{2p+2}(\bu)^{\perp\bu}\|_2\geq (1 - \sqrt{\xi}/8)\|\mM_{2p+2}(\bu)\|_2\geq (1/3)\lambda .$$
Thus, choosing $t\lesssim\sqrt{\xi}$, if the sample size satisfies\begin{align*}
    N \gtrsim \frac{d^2 3^{2p} k^2 }{Q^2 \lambda^2} = \frac{d^2 3^{2p} k^2 }{(\wh{\sigma}''_{2p})^2 \lambda^2 Z_p(\cU)^2} \;,
\end{align*}
we have that with probability at least $2/3$,
\begin{align*}
    \Big\|\wh{\mM}_{2p+2}(\bu)^{\perp\bu} - {\mM}_{2p+2}(\bu)^{\perp\bu}\Big\|_2 
    \leq \frac{\sqrt{\xi}\lambda}{10} 
    \leq \frac{\sqrt{\xi}}{10}\|\mM_{2p+2}(\bu)\|_2 \;,
\end{align*}
where in the last inequality we used the fact that conditioned on $\bu\in\evt_{i^*}^{\mathrm{Cor}}\cap\evt_{i^*}^{\mathrm{Sep}}\cap\evt^{\mathrm{Ort}}$, 
it holds $\|\mM_{2p+2}(\bu)\|_2\geq \lambda$ by 
\Cref{claim:orthogonal-projected-matrix-is-close-to-original}. 
This completes the proof of \Cref{lem:sample-complexity-quasi}.
\end{proof}

Now combining \Cref{claim:orthogonal-projected-matrix-is-close-to-original}, 
\Cref{lem:sample-complexity-quasi} and using the triangle inequality yields:
\begin{align*}
    \Big\|\wh{\mM}_{2p+2}(\bu)^{\perp\bu} - {\mM}_{2p+2}(\bu)\Big\|_2 
    &\leq \Big\|\wh{\mM}_{2p+2}(\bu)^{\perp\bu} - {\mM}_{2p+2}(\bu)^{\perp\bu}\Big\|_2 + 
    \Big\|\mM_{2p+2}(\bu)^{\perp\bu} - \mM_{2p+2}(\bu)\Big\|_2\\
    &\leq \new{\frac{\sqrt{\xi}}{3}}\Big\|\mM_{2p+2}(\bu)\Big\|_2 \;.
\end{align*}
Since $p = \lceil\log(k/\lambda)/\log(1+\xi)\rceil$ and 
$Z_p(\cU) = k^{-p} = k^{-O(\log(k/\lambda))}$, 
$\wh{\sigma}_{2p+2} \geq 1/(2p+2)^{3}$ (by \Cref{fct:hermite-relu}), 
by \Cref{lem:sample-complexity-quasi} it follows 
that the number of samples required is 
\begin{align*}
    N = d^2\,\poly(p)k^{O(\log(k/\lambda))}.
\end{align*} The probability of success is at least $2/3$, 
conditioned on the event $\bu\in\evt_{i^*}^{\mathrm{Cor}}\cap\evt_{i^*}^{\mathrm{Sep}}\cap\evt^{\mathrm{Ort}}$.
\end{proof}

\paragraph{Proof of the Main Result of the Subsection: 
Quasi-Polynomial Complexity $\init$.}

We are now ready to prove \Cref{prop:quasi-poly-time-initialization-via-spectral}, which we restate below:
\QuasiPolyTimeInitialization*

\begin{proof}[Proof of \Cref{prop:quasi-poly-time-initialization-via-spectral}]
Fix any $i^*\in[k]$ such that $\beta_{i^*}\geq \lambda$. 
By \Cref{prop:sampling}, we have that 
$\pr[\evt_{i^*}^{\mathrm{Cor}}\cap\evt_{i^*}^{\mathrm{Sep}}]\geq \pr[\evt_{i^*}^{\mathrm{Cor}}\cap\evt_{i^*}^{\mathrm{Sep}}\cap\evt^{\mathrm{Ort}}]\geq \poly(1/k)$. 
Combining \Cref{prop:low-degree-separates-direction} and 
\Cref{lem:quasi-poly-sample-spectral}, 
for a probing vector $\bu$ satisfying $\evt_{i^*}^{\mathrm{Cor}}\cap\evt_{i^*}^{\mathrm{Sep}}\cap\evt^{\mathrm{Ort}}$, 
using $N = d^2 k^{O(\log(k/\lambda))}$ samples, 
the empirical matrix $\wh{\mM}_{2p+2}(\bu)$ 
from \Cref{line:emprical-matrix-quasi} of \nameref{alg:initialization-via-spectral} satisfies 
$|\br_1(\wh{\mM}_{2p+2}(\bu))\cdot\w_{i^*}|\geq 0.99$ 
with probability at least $2/3$. 
Thus, for any fixed $i^*\in[k]$ with $\beta_{i^*}\geq \lambda$, the probability that the top eigenvector of the empirical matrix $\br_{i^*} = \br_1(\wh{\mM}_{2p+2}(\bu))$ satisfies $|\br_{i^*}\cdot\w_{i^*}|\geq 0.99$
is at least $\poly(1/k)$.
Hence, by a union bound and a Chernoff bound, 
drawing $\poly(k)\log(1/\delta)$ probing vectors 
and constructing $\poly(k)\log(1/\delta)$ empirical matrices 
ensures that with probability at least $1 - \delta$, 
for any $i^*\in[k]$ with $\beta_{i^*}\geq \lambda$, the list $\cL$ returned 
by \nameref{alg:initialization-via-spectral} contains 
a unit vector $\br_{i^*}$ such that $|\br_{i^*}\cdot\w_{i^*}|\geq 0.99$. This completes the proof. 
\end{proof}

\subsection{Polynomial Time $\init$ via Implicit Tensor Computation}\label{sec:poly-time-initialization}

\new{
In this section, we provide a $\init$ subroutine that uses polynomially 
many samples and runs in polynomial time. Similar to \Cref{sec:quasi-poly}, the core idea
is to estimate the target matrix $\mM_{2p+2}(\bu)$ \eqref{eq:target-matrix-M(u)}, constructed by projecting the moment tensor 
$\E_{(\x,y)\sim\D}[y\he_{2p+2}(\x)]$ to a low-degree filter 
$\bu^{\ot p}\ot \bar{\bu}^{\ot p}$:
\begin{align*}
    \mM_{2p+2}(\bu) &= \new{\frac{1}{Z_p(\cU)}\bigg\la\frac{1}{\wh{\sigma}_{2p+2}}\E_{(\x,y)\sim\D}[y\he_{2p+2}(\x)],\, \bu^{\ot p}\ot \bar{\bu}^{\ot p} \bigg\ra} \nonumber\\
    &= \frac{1}{Z_p(\cU)} \sum_{i=1}^k b_i |\w_i\cdot\bu|^{2p}\w_i\w_i^{\top} , \end{align*}
where $\bu$ is a random probing vector sampled from some distribution $\cU$.
Once we obtain an accurate estimation $\wh{\mM}_{2p+2}(\bu)$, we can invoke \Cref{prop:low-degree-separates-direction}, which states that the top eigenvector of $\mM_{2p+2}(\bu)$ aligns well with the target vector $\w_i$, conditions on $\bu$ that are satisfied with inverse-polynomial probability.

To ensure polynomial complexity, we take a significant departure 
from \Cref{sec:quasi-poly} in both components of the algorithm: (1) instead of using the uniform distribution on the sphere, we choose $\cU$ to be a complex Gaussian distribution: $\bu = \bp + \im\bq$, $\bp,\bq\sim\calN(\vec 0, \mI_d)$; and (2) instead of estimating the matrix $\mM_{2p+2}(\bu)$ via low-degree polynomials, we use more advanced techniques from implicit tensor computation~\cite{diakonikolas2025implicit}.
Choosing $\cU$ to be the complex Gaussian has two main advantages: first, the operator norm of the target matrix $\wh{\mM}_{2p+2}(\bu)$ becomes larger, since $|\w_i\cdot\bu|$ is no longer restricted to be smaller than one, as was the case when $\bu$ were sampled uniformly on the sphere; 
second, the covariance of $\bu^{\ot p}$ for $\bu = \bp + \im\bq$ is bounded above by $2^p p!$ (cf.\ \Cref{lem:tensor-variance-S})---when $p\lesssim k$, this is much smaller compared to $O(k^{p/2})$, the covariance of $\bu^{\ot p}$ for $\bu\sim\mathrm{Unif}(\mathbb{S}^{k-1})$.
These benefits together ensure that we can apply the implicit tensor computation techniques to estimate the target matrix accurately using only a polynomial number of samples, in polynomial time. 
Finally, while it is possible to choose other types of distributions $\cU$ with similar properties; the complex Gaussian distribution is the most convenient one for our analysis.

The pseudo-code of the polynomial-time $\init$ subroutine is given in \nameref{alg:implicit-tensor-approximate-matrix}.
}

\newcommand{\AlgSection}[1]{\Statex \Statex \texttt{#1}
    \Statex }

\begin{algorithm}[H]
\caption{$\mathsf{CDRviaImplicitTensor}(\D,\lambda,\delta)$}
\label{alg:implicit-tensor-approximate-matrix}
\begin{algorithmic}[1]
\State\textbf{Input:} Sample access to a sum of $k$ ReLUs with Gaussian marginals $(\x,y)\sim\D$, positive parameters $\lambda,\delta\in(0,1)$. 
\State {\bf Output: } A list of unit vectors satisfying \Cref{assum:initialization-algorithm}.

\medskip

\State Fix an absolute constant $\xi\in(0,1/300]$\new{, $\lambda \gets \min\{\lambda, \sqrt{\xi}/10\}$}.
\State $p \gets \lceil\log(k/\lambda)/\log(1+\xi)\rceil$. \Comment{Filter degree}\State $N_\bu\gets \poly(k/\lambda)\log(\new{1}/\delta)$. \Comment{Number of probing vectors/Low-degree filter} 
\State $\tau\gets \delta/(10 d^2 N_\bu)$. \Comment{Failure probability of the immediate steps}
\State $N_0\gets \poly(d,k/\lambda,\new{1/\tau},1/\delta)$. \Comment{Number of samples}

\State Let the distribution $\cI\sim\D(\cI)$ be $\cI = (\x,y,\bz)$, where $(\x,y)\sim\D$ and $\bz\sim\calN(\vec 0,\mI_d)$.

\State $\wh{\Phi}_p\gets$\nameref{alg:implicit-tensor-computation-adapted}\label{line:obtain-recursive pseudo-projection} 

\State $\mA_p\gets $ \nameref{alg:CompressEmpiricalMomentTensor} \label{line:projected-monent}

\State $\cL\gets\emptyset$. \For{$j = 1,\dots,N_\bu$}
\State Draw a fresh vector $\bu = \bp + \im\bq$, where $\bp,\bq\sim\calN(\vec 0,\mI_d)$ 
\State $\mB_p(\bu)\gets$\nameref{alg:CompressFilterTensor} \label{line:projected-filter}

\State $\wh{\mM}_{2p+2}(\bu) = \la\mA_p, \mB_p(\bu)\ra$ \Comment{Empirical matrix, cf.~\ref{eq:intuition-implict-3}}\label{line:matrix-estimation-after-pseudo-projection}
\State $\cL\gets\cL\cup\{\pm \br_1(\wh{\mM}_{2p+2}(\bu))\}$. \Comment{Top eigenvectors of the empirical matrix}
\EndFor
\State\textbf{Return: }$\cL$.
\end{algorithmic}
\end{algorithm}

The whole section is devoted to proving \Cref{prop:poly-time-initialization-via-implicit-tensor}, which is restated below. We defer its proof to \Cref{sec:proof-poly-initialization}, after completing all the ingredients.
\ImplicitTensorPolyTimeInitialization*

\new{
\begin{remark}{\em
Throughout this section, we assume without loss of generality 
that $k\leq d$. If $k>d$, before running our $\init$ procedure 
\nameref{alg:implicit-tensor-approximate-matrix}, set $d'=\max\{d,k\}$.
If $d'>d$, augment every example $\x$ by an independent vector
$\bz\sim\mathcal{N}(0,\mI_{d'-d})$, and use $\x'=(\x,\bz)$ with the same
label. The lifted hidden weights are $\w_i'=(\w_i,0)$, so the target
and the Gaussian model are preserved. Run the $\init$ procedure (\nameref{alg:implicit-tensor-approximate-matrix}) in
dimension $d'$. If a unit vector $\br\in\mathbb{R}^{d'}$ is returned,
retain $\mP \, \br/\|\mP \, \br\|_2$ when $\mP \, \br\neq 0$, where $\mP$ projects onto the
original $d$ coordinates; discard a zero projection. For every
original unit weight $\w_i$, it holds
\[
    \bigg|\w_i\cdot\frac{\mP \, \br}{\|\mP \, \br\|_2}\bigg|
    =
    \frac{|\w_i'\cdot \br|}{\|\mP \, \br\|_2}
    \ge
    |\w_i'\cdot \br|.
\]
Thus, a useful output retains its correlation guarantee after
projection, the output list cannot grow, and the cost remains
polynomial in $d$ and $k$. In the tensor analysis below, relabel
$d'$ as $d$, so $d\ge k$.}
\end{remark}
}

\subsubsection{Determining the Probability of the Probing Event}

Our first step is to determine the probability that the signal of the target vectors passes through the random filter---in other words, the probability that the ``separation'' and ``correlation'' events $\evt_{i^*}^{\mathrm{Sep}}$, $\evt_{i^*}^{\mathrm{Cor}}$ (defined  in \eqref{eq:cor-sep-events}) happen 
under the complex distribution $\cU$.
We prove that, with $p = \lceil\log(k/\lambda)/\log(1+\xi)\rceil$ and $Z_p(\cU) =p!$, the event $\evt_{i^*}^{\mathrm{Sep}}\cap\evt_{i^*}^{\mathrm{Cor}}$ happens with probability at least $\poly(\lambda/k)$.

\begin{lemma}[Probability of Successful Filtration]\label{lem:angle-separation-random}
Let $\bu = \bp + \im\bq,$ where $ \bp,\bq \stackrel{iid}{\sim}\calN(0,\mI_{d})$. 
Let $\xi\in(0, 1/300]$ be a small absolute 
constant, $p = \lceil\log(k/\lambda)/\log(1+\xi)\rceil$ and 
$Z_p(\cU) =p!$. Let $\w_1,\dots,\w_k$ be $k$ unit vectors.
Fixing any $i^*\in[k]$, let $\cA_{i^*}(\xi) =\{i\in[k]:|\w_i\cdot\w_{i^*}|\leq 1 - \xi\}$ 
and let the ``correlation'' and ``separation'' events on $\cU$ be defined by \eqref{eq:cor-sep-events}. Then, $\pr_{\bu\sim\cU}[\evt_{i^*}^{\mathrm{Cor}}\cap\evt_{i^*}^{\mathrm{Sep}}]\geq\poly(\lambda/k)$.
\end{lemma}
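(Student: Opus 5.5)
We would prove the lemma by reducing everything to the joint law of the complex scalars $Z_j \eqdef \bu\cdot\w_j = \bp\cdot\w_j + \im\,\bq\cdot\w_j$. Each $Z_j$ is a standard complex Gaussian: its real and imaginary parts are i.i.d.\ $\calN(0,1)$, so $|Z_j|^2\sim\chi^2_2$, i.e., $|Z_j|^2$ is exponential with mean $2$. For $j\in\cA_{i^*}(\xi)$ we decompose, as in the proof of \Cref{prop:sampling},
\[
Z_j=\rho_j Z_{i^*}+\sqrt{1-\rho_j^2}\,G_j,\qquad \rho_j=\w_j\cdot\w_{i^*},\ |\rho_j|\le 1-\xi,
\]
where $G_j=\bu\cdot\wh{\w_j^{\perp\w_{i^*}}}$ is again a standard complex Gaussian independent of $Z_{i^*}$. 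The $G_j$ may be correlated with each other, but we only use a union bound, so this does not matter.

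The correlation event requires $|Z_{i^*}|\ge (p!)^{1/(2p)}$. By Stirling, $(p!)^{1/(2p)}\le \sqrt{p}$, so it suffices to have $|Z_{i^*}|\ge s$ for some $s\ge\sqrt{p}$. We take $s=\max\{\sqrt p,\sqrt{C_\xi\log(2k)}\}$ with $C_\xi\asymp(1+\xi)^2/\xi^4$, and condition on the window $s\le|Z_{i^*}|\le 2s$.

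Conditional on $Z_{i^*}=z$ with $|z|\ge s$, a violation of separation for index $j$ forces
\[
|G_j|\ge \Big(\tfrac{1}{1+\xi}-(1-\xi)\Big)|z|\gtrsim \xi^2|z|.
\]
Since $\pr[|G_j|\ge q]=e^{-q^2/2}$, a union bound over $j\in\cA_{i^*}(\xi)$ gives conditional violation probability at most $k\,e^{-c\xi^4 s^2}\le 1/2$ by the choice of $s$. Hence
\[
\pr[\evt^{\mathrm{Cor}}_{i^*}\cap\evt^{\mathrm{Sep}}_{i^*}]\ \ge\ \tfrac12\,\pr[s\le|Z_{i^*}|\le 2s]\ \ge\ \tfrac14 e^{-s^2/2},
\]
where the last step uses the exponential law of $|Z_{i^*}|^2$ (for $s\gtrsim1$).

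The main obstacle is checking that $e^{-s^2/2}$ is polynomial rather than quasi-polynomial. If $s^2=C_\xi\log(2k)$, it equals $(2k)^{-C_\xi/2}$. If $s^2=p$, it equals $e^{-p/2}$; with $p=\lceil\log(k/\lambda)/\log(1+\xi)\rceil$ this is $(k/\lambda)^{-1/(2\log(1+\xi))}$ up to constants, and since $\xi$ is an absolute constant the exponent is an absolute constant. Either way the bound is $\poly(\lambda/k)$. This is exactly the step where the complex Gaussian helps: $|Z_{i^*}|$ is not capped at $1$, so reaching $\sqrt p$ costs only $e^{-O(p)}=\poly(\lambda/k)$, unlike the $k^{-p}$ normalization needed for the spherical case. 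We should also confirm that $(p!)^{1/(2p)}\le\sqrt p$ holds for all $p\ge1$, which follows from $p!\le p^p$.
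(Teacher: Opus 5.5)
Your proof is correct and takes essentially the same route as the paper. You write $\bu\cdot\w_j$ as a multiple of $\bu\cdot\w_{i^*}$ plus an independent complex Gaussian. You then use $(p!)^{1/(2p)}\le\sqrt{p}$ to reduce the correlation event to $|\bu\cdot\w_{i^*}|$ exceeding a threshold of order $\sqrt{\log(k/\lambda)}$, which costs only $\poly(\lambda/k)$ because the complex Gaussian is not confined to the unit sphere. Finally, conditioned on that, you union-bound the separation violations.

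The differences are cosmetic. The paper uses the single threshold $R=C_\xi\sqrt{\log(k/\lambda)}$, which already dominates $\sqrt{p}$ and drives the conditional violation probability down to $\lambda^2/k$. You take $s=\max\{\sqrt{p},\sqrt{C_\xi\log(2k)}\}$, accept a conditional failure probability of $1/2$, and use the exact exponential tail of $|\bu\cdot\w_{i^*}|^2$.
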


\begin{proof}
Without loss of generality, let $i^* =1$ and fix $\w_1$. 
Let $C_\xi = \sqrt{12}((1+\xi)/\xi^2)$ and $R=C_\xi\sqrt{\log( k/\lambda)}$. 
By our choice of parameters, we have $R^2\geq p$.
Furthermore, since $R^{2p}\geq p^p\geq p! = Z_p(\cU)$, we have
\begin{align*}
    \evt_{1}^{\mathrm{Cor}} = \{\bu: |\w_{1}\cdot\bu|\geq (Z_p(\cU))^{1/(2p)}\}\supset \{\bu: |\w_{1}\cdot\bu|\geq R\} := \evt_{1}^{\mathrm{Cor}, 2}.
\end{align*}

We proceed by bounding the probability of the event $\evt_{1}^{\mathrm{Cor}}$. For any fixed unit vector $\br\in\R^d$, we have $|\br\cdot\bu|^2 = (\bp\cdot\br)^2 + (\bq\cdot\br)^2 = X_1^2 + X_2^2$, where $X_1, X_2$ are independent standard Gaussian random variables; therefore, by the anti-concentration inequality for the standard Gaussian distribution, it holds that
\begin{align*}
    \pr[|\br\cdot\bu|\geq R] &= \pr[|\br\cdot\bu|^2\geq R^2]\geq \pr[X_1^2\geq R^2/2, X_2^2\geq R^2/2] 
    \geq \pr[|X_1|\geq R/\sqrt{2}]^2\\
    &\geq \exp(-cR^2) = \poly(\lambda/k). 
\end{align*}
Therefore, $\pr[\evt_{1}^{\mathrm{Cor}}]\geq \pr[\evt_{1}^{\mathrm{Cor}, 2}]\geq \poly(\lambda/k)$.

It remains to bound $\pr[\evt_{1}^{\mathrm{Sep}}|\evt_{1}^{\mathrm{Cor}, 2}]$.
Denote $z_i\eqdef \bu\cdot\w_i$. 
Consider the joint distribution of $z_1 = \w_1\cdot\bu$ and $z_i = \w_i\cdot\bu$ 
for any $i\in\cA_1(\xi)$. Letting $\cos\theta_i\eqdef \cos(\theta_{\pm}(\w_i,\w_1)) = |\w_i\cdot\w_1|$, 
  we have $z_i = \bu\cdot(\cos\theta_i\w_1 + \sin\theta_i\mathbf{s}_i)$, where 
  $\mathbf{s}_i = \w_i^{\perp\w_1}/\|\w_i^{\perp\w_1}\|_2$. 
  Denote for simplicity $g_i = \bu\cdot\mathbf{s}_i$. 
We condition on $\evt_{1}^{\mathrm{Cor}, 2}$, i.e., $|z_1| = Z \geq R$. 
    For any $i\in\cA_1(\xi)$, we have 
    $\cos(\theta_i) = |\w_i\cdot\w_1| \leq 1 - \xi$.
    By the triangle inequality, whenever 
    $|\w_i\cdot \vec u|\geq |\w_1\cdot \vec u|/(1+\xi)$ and $|\w_1\cdot \vec u| = Z$, it holds
    \[
       |g_i|\geq -\frac{Z\cos(\theta_i)}{\sin(\theta_i)} + 
       \frac{Z}{(1+\xi)\sin(\theta_i)}\geq 
       \frac{Z}{\sin(\theta_i)}\bigg(\frac{1}{1+\xi} - (1-\xi)\bigg)\geq \frac{Z \xi^2}{1+\xi}=\frac{\sqrt{12}Z}{C_\xi}\;,
    \]
   where we used that $\cos(\theta_i) \leq 1 - \xi$.
  Therefore, we have:
    \begin{align}\label{eq:prob-evt_2}
         \pr\bigg[|z_i|\geq \frac{|z_1|}{1+\xi}\Big|\, |z_1| = Z, Z\geq R\bigg] & \leq \pr\bigg[|g_i|\geq \frac{\sqrt{12}Z}{C_\xi} \Big|\, Z\geq R\bigg] \leq \pr\bigg[|g_i|\geq \frac{\sqrt{12}R}{C_\xi}\bigg]. \end{align}
    Recall that $|g_i|^2 = |(\bp + \im\bq)\cdot\bs_i|^2 = X_1^2 + X_2^2\sim \chi_2^2$, 
    therefore, by the standard concentration inequality of $\chi_2^2$ random variables~\cite{laurent2000adaptive},
    we have
    \begin{align*}
        \pr[|g_i|\geq t] = \pr[|g_i|^2\geq t^2]\leq \exp(-t^2/6)
    \end{align*}
    when $t^2\geq 2$. Plugging in $t = \sqrt{12}R/C_\xi = \sqrt{12\log(k/\lambda)}$, and combining with \eqref{eq:prob-evt_2}, we obtain that:
    \begin{align*}
       \pr\bigg[|z_i|\geq \frac{|z_1|}{1+\xi}\Big|\,|z_1|\geq R\bigg]\leq \exp(-2\log(k/\lambda)) = (\lambda/k)^2.
    \end{align*}
    Applying a union bound on all such events for $i\in \cA_1(\xi)$, we obtain 
    \begin{align*}
        \pr[\exists i\in \cA_1(\xi),\;|z_i|\geq |z_1|/(1+\xi)\mid|z_1| \geq R]\leq \lambda^2/k.
    \end{align*}
    Therefore, we get that $\pr[\evt_{1}^{\mathrm{Sep}}|\evt_{1}^{\mathrm{Cor}, 2}]\geq 1 -\lambda^2/k$ and hence $\pr[\evt_{1}^{\mathrm{Sep}}\cap \evt_{1}^{\mathrm{Cor}}]\geq \pr[\evt_{1}^{\mathrm{Sep}}\cap \evt_{1}^{\mathrm{Cor}, 2}]\geq \poly(\lambda/k)$, 
    completing the proof of \Cref{lem:angle-separation-random}.
\end{proof}

\subsubsection{Estimating the Target Matrix Using Implicit Tensor Computation}

In this section, we show how to use implicit tensor computation 
to estimate the target matrix \eqref{eq:target-matrix-M(u)}. 
Our main result in this context is the following:

\begin{restatable}[Matrix Estimation Using Implicit Tensor Computation]{proposition}{MatrixEstImplicitTensor}\label{prop:estimate-matrix-via-implicit-tensor}
Let $\D$ be a Gaussian sum of $k$ ReLUs.
     Let $p, N_0, N_\bu\in \Z_+$, $\xi,\lambda,\delta\in(0,1)$ be 
     the parameters prescribed in~\nameref{alg:implicit-tensor-approximate-matrix}. 
Let ${\mM}_{2p+2}(\bu)$ be the matrix defined in \eqref{eq:target-matrix-M(u)}.
Let $\wh{\cU}$ be a set of $N_\bu$ random probing vectors 
following the distribution $\bu = \bp + \im\bq$, $\bp,\bq\stackrel{iid}{\sim}\calN(0,\mI_d)$.
Then the matrices $\wh{\mM}_{2p+2}(\bu)$ computed 
in \Cref{line:matrix-estimation-after-pseudo-projection} of \nameref{alg:implicit-tensor-approximate-matrix} satisfy the following guarantee: with probability at least $1 - \delta/2$ over 
all the samples and the probing vectors,  
$\max_{\bu\in\wh{\cU}}\|\wh{\mM}_{2p+2}(\bu) - \mM_{2p+2}(\bu)\|_2 \leq \sqrt{\xi}\lambda \;.$
\end{restatable}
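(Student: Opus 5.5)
The plan is to write the target matrix as an inner product between a compressed moment tensor and a compressed filter tensor, and to control the estimation error through the variance of each piece. Concretely, by \Cref{fct:hermite-tensor}, Part 1, the Hermite tensor admits the representation $\he_{2p+2}(\x)=\E_{\bz}[\Real((\x+\im\bz)^{\ot 2p+2})]/\sqrt{(2p+2)!}$. This is why the sampling distribution $\cI=(\x,y,\bz)$ is used: the quantity $y(\x+\im\bz)^{\ot 2p+2}$ is an unbiased rank-one (implicit) estimator of the moment tensor $\wh{\sigma}_{2p+2}\sum_i\beta_i\w_i^{\ot 2p+2}$, up to the normalization and taking real parts. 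Contracting with $\bu^{\ot p}\ot\bar\bu^{\ot p}$ gives
\[
y\,\big((\x+\im\bz)\cdot\bar\bu\big)^p\big((\x+\im\bz)\cdot\bu\big)^p\,(\x+\im\bz)(\x+\im\bz)^\top ,
\]
which can be evaluated in $O(d^2)$ time without ever forming the tensor. Taken alone, however, its variance is exponential in $p$ times $d^{O(p)}$, so we follow the implicit tensor computation of \cite{diakonikolas2025implicit}.

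\textbf{Step 1: compress the moment tensor.} First estimate a low-dimensional subspace, or more generally a recursive pseudo-projection $\wh\Phi_p$, onto which $\sum_i\beta_i\w_i^{\ot p}$ approximately projects. This can be built degree by degree from the span obtained in \Cref{fact:subspace-approximate}, using tensors $\mA_j$ of polynomial size at each degree, with dimension at most $\poly(k/\lambda)$ per level. Then form the compressed empirical moment $\mA_p=\wh\Phi_p^{\ot}$ applied to the empirical average of the implicit estimator above, and the compressed filter $\mB_p(\bu)=\wh\Phi_p(\bu^{\ot p}\ot\bar\bu^{\ot p})$. Set $\wh{\mM}_{2p+2}(\bu)=\la\mA_p,\mB_p(\bu)\ra$.

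\textbf{Step 2: decompose the error.} We would split $\wh{\mM}_{2p+2}(\bu)-\mM_{2p+2}(\bu)$ into two terms. The first is a bias term from the pseudo-projection, namely $\la\E[\text{moment}],(\mI-\wh\Phi_p^*\wh\Phi_p)(\bu^{\ot p}\ot\bar\bu^{\ot p})\ra$. This is bounded by the projection error of the heavy part of $\sum_i\beta_i\w_i^{\ot p}$ times $\|\bu^{\ot p}\|$ restricted suitably. The light components contribute at most $k\lambda^{O(1)}$ after the $1/Z_p$ normalization with $Z_p=p!$. The second is a variance term, handled by Chebyshev on each operator-norm entry. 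Here we use that the covariance of the complex filter $\bu^{\ot p}$ is at most $2^pp!$ (the bound referred to as \Cref{lem:tensor-variance-S}). This cancels against $Z_p(\cU)^2=(p!)^2$, leaving only $2^p=\poly(k/\lambda)$, since $p=O(\log(k/\lambda))$. The second moment of $y(\x+\im\bz)^{\ot}$ projected onto a $\poly(k/\lambda)$-dimensional space is bounded via hypercontractivity (\Cref{fct:gaussian-hypercontractivity}) by $C^p\poly(k)$, again polynomial. Dividing by $\wh\sigma_{2p+2}^2\gtrsim p^{-3}$ (\Cref{fct:hermite-relu}) keeps everything polynomial, so $N_0=\poly(d,k/\lambda,1/\tau)$ samples give entrywise error $\sqrt\xi\lambda/d$ with failure probability $\tau$ per entry.

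\textbf{Step 3: union bound.} Apply a union bound over the $d^2$ entries and the $N_\bu$ probing vectors, using $\tau=\delta/(10d^2N_\bu)$. To make each $\|\mB_p(\bu)\|$ bounded with probability $1-\tau$, apply Markov's inequality to the second moment over the fresh $\bu$. Together these give $\max_{\bu}\|\wh{\mM}_{2p+2}(\bu)-\mM_{2p+2}(\bu)\|_2\le\sqrt\xi\lambda$ with probability at least $1-\delta/2$.

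The main obstacle is Step 1 together with the bias bound. We must construct $\wh\Phi_p$ with polynomially many samples so that the projection error, after multiplication by the filter norm of order $\sqrt{2^pp!}$ and division by $Z_p=p!$, is at most $\sqrt\xi\lambda$. The plan is to build $\wh\Phi_j$ recursively as in \cite{diakonikolas2025implicit}: at each level, keep the top singular directions of the flattened compressed degree-$j$ moment above threshold $\poly(\lambda/k)$. We then track the error accumulation across the $p=O(\log(k/\lambda))$ levels, which is only a polynomial blowup because each level loses a constant factor.
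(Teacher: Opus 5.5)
Your architecture is the paper's. You compress with the recursive pseudo-projection of \cite{diakonikolas2025implicit}, split $\wh{\mM}_{2p+2}(\bu)-\mM_{2p+2}(\bu)$ entrywise into a sampling term and a projection-bias term, bound both in second moment over fresh samples and the fresh complex-Gaussian $\bu$, and union bound over the $d^2$ entries and the $N_\bu$ probes. The step that fails as written is the bias term. A component $i$ that $\wh\Phi_p$ does not capture contributes $\beta_i|\w_i\cdot\bu|^{2p}/p!$. Since $|\w_i\cdot\bu|^2\sim\chi^2_2$, this has mean $\beta_i 2^p$, where $2^p\geq (k/\lambda)^{\ln 2/\ln(1+\xi)}$ is a polynomial of degree $\Theta(1/\xi)$ (about $208$ for $\xi=1/300$). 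The $1/Z_p$ normalization does not tame this. So if ``light'' means $\beta_i<\lambda$, your claimed $k\lambda^{O(1)}$ bound is false by a huge polynomial factor, and the target $\sqrt{\xi}\lambda$ is absolute. A heavy/light split could only work with a cutoff below roughly $\sqrt{\xi}\lambda 2^{-p}/k$, which your threshold ``$\poly(\lambda/k)$'' never pins down. Even then, components just above the cutoff are individually recovered only to error about $\sqrt{(\text{discarded mass})/\beta_i}$, which need not be small. Likewise, no norm of $\bu^{\ot p}$ may enter via Cauchy--Schwarz, not even restricted to $\spanned(\w_1,\dots,\w_k)$, where it gives $(2ek/p)^{p/2}$. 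Only directional moments over the random $\bu$ are usable.

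The paper needs no split. \Cref{thm:implict tensor adapted} gives, for every $i$, $\|\wh\Phi_p^\top\wh\Phi_p(\w_i^{\ot p})-\w_i^{\ot p}\|_F\le p\sqrt{\zeta/\beta_i}$. The bias entry is $\sum_i\beta_i\bigl(|\la\wh\Phi_p^\top\wh\Phi_p(\w_i^{\ot p}),\mF_p(\bu)\ra|^2-|\la\w_i^{\ot p},\mF_p(\bu)\ra|^2\bigr)(\w_i)_a(\w_i)_b$. Use $\bigl||x|^2-|y|^2\bigr|\le|x-y|(|x|+|y|)$, Cauchy--Schwarz, and $\E_\bu|\la\mF_p(\bu),\mW\ra|^4\le 2^{O(p)}\|\mW\|_F^4$ for arbitrary, not rank-one, $\mW$. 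Then component $i$ contributes $2^{O(p)}\beta_i^2\cdot p^2\zeta/\beta_i$, so each bias entry has second moment at most $k^2p^2 2^{O(p)}\zeta$ however small the $\beta_i$ are. Markov's inequality with $N_0=\poly(d,k/\lambda,1/\tau)$ then finishes.

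Two smaller fixes. First, you need fourth moments of the filter, not only its covariance: both in the bias term and for $\E\|\mB_p(\bu)\|_F^2=\E\|\wh\Phi_p(\mF_p(\bu))\|_2^4\le k^2 2^{O(p)}$ in the sampling term. Second, the directional covariance bound $V_{\sfS}\le k^2 2^{O(t)}/\wh\sigma_t^2$ for $\mS_t(\cI)$ does not come from hypercontractivity. It rests on the complex-Gaussian estimate $\E|\la\Real((\x+\im\bz)^{\ot t}),\mW\ra|^2\le t!\,2^{O(t)}\|\mW\|_F^2$ \cite[Lemma 4.3]{diakonikolas2025implicit}. Hypercontractivity only upgrades that $L_2$ bound to the $L_4$ bound used in Cauchy--Schwarz against $y$.
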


The rest of this subsection is devoted to the proof of \Cref{prop:estimate-matrix-via-implicit-tensor}.

\medskip

\noindent Our first observation is that $\mM_{2p+2}(\bu)$ is obtained via contracting 
the moment tensor $\E_{(\x,y)\sim{\D}}[y\he_{2p+2}(\x)]$ with $\mF_p(\bu)\ot\mF_p(\bar\bu)$, $\mF_p(\bu) = \bu^{\ot p}/\sqrt{p!}$, where we remind the reader that $\bar{\bu}$ 
is the complex conjugate of the vector $\bu$.
\begin{claim}\label{claim:explicit-expression-of-M}
     Let $\mM_{2p+2}(\bu)$ be the matrix defined by $(i)$ in \eqref{eq:target-matrix-M(u)}. Let $\mF_p(\bu) = \bu^{\ot p}/\sqrt{p!}$.
Then:
    \begin{align*}
        \mM_{2p+2}(\bu)=\frac{1}{\wh{\sigma}_{2p+2}}\bigg\la\E_{(\x,y)\sim{\D}}\bigg[y\he_{2p+2}(\x)\bigg], \mF_p(\bu)\ot\mF_p(\bar\bu)\bigg\ra. 
    \end{align*}
\end{claim}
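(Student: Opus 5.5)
The claim reduces to matching normalizations in definition $(i)$ of \eqref{eq:target-matrix-M(u)}, together with a check that no conjugation issue arises from the complex probing vector. In this subsection the probing distribution is the complex Gaussian $\bu=\bp+\im\bq$, and \Cref{lem:angle-separation-random} fixes $Z_p(\cU)=p!$. Substituting this into $(i)$ gives
\begin{align*}
\mM_{2p+2}(\bu)=\frac{1}{p!}\bigg\la\frac{1}{\wh{\sigma}_{2p+2}}\E_{(\x,y)\sim\D}[y\he_{2p+2}(\x)],\,\bu^{\ot p}\ot\bar{\bu}^{\ot p}\bigg\ra .
\end{align*}

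The first step is to write the filter in terms of $\mF_p$. By definition, $\mF_p(\bu)\ot\mF_p(\bar\bu)=\bu^{\ot p}\ot\bar\bu^{\ot p}/p!$. The (partial) contraction $\la\cdot,\cdot\ra$ conjugates its first argument and is linear in its second. Hence the factor $1/p!$ can be moved inside the second slot, and the factor $1/\wh{\sigma}_{2p+2}$ can be moved outside. This gives the claimed identity directly.

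The one point to check is that conjugating the first argument changes nothing. The first argument is $\E[y\he_{2p+2}(\x)]$, which is a real tensor because $y$ and the Hermite tensor $\he_{2p+2}(\x)$ are real. Therefore $\overline{\E[y\he_{2p+2}(\x)]}=\E[y\he_{2p+2}(\x)]$, and the pairing with the complex filter is just the plain sum over indices.

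As a consistency check, I will also verify that the expression agrees with $(ii)$. By the Hermite expansion of ReLU (\Cref{fct:hermite-relu}), we have $\E[y\he_{2p+2}(\x)]=\wh{\sigma}_{2p+2}\sum_i\beta_i\w_i^{\ot 2p+2}$. Contracting $2p$ indices of $\w_i^{\ot 2p+2}$ against $\bu^{\ot p}\ot\bar\bu^{\ot p}$ yields $(\w_i\cdot\bu)^p(\w_i\cdot\bar\bu)^p\,\w_i\w_i^\top$. Since $\w_i$ is real, $\w_i\cdot\bar\bu=\overline{\w_i\cdot\bu}$, so this equals $|\w_i\cdot\bu|^{2p}\w_i\w_i^\top$. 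The resulting matrix is real and equals $\frac{1}{p!}\sum_i\beta_i|\w_i\cdot\bu|^{2p}\w_i\w_i^\top$, which matches $(ii)$.

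There is no genuine obstacle here. The only care needed is to keep track of which slot of the inner product is conjugated and to use the fact that the moment tensor is real.
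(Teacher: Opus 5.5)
Your proof is correct and is essentially the paper's argument. With $Z_p(\cU)=p!$ from \Cref{lem:angle-separation-random}, the identity is just the rewriting $\bu^{\ot p}\ot\bar\bu^{\ot p}/p!=\mF_p(\bu)\ot\mF_p(\bar\bu)$ plus linearity of the pairing, and your consistency check via $\E[y\he_{2p+2}(\x)]=\wh{\sigma}_{2p+2}\sum_i\beta_i\w_i^{\ot 2p+2}$ and $\la\w_i^{\ot 2p},\bu^{\ot p}\ot\bar\bu^{\ot p}\ra=|\w_i\cdot\bu|^{2p}$ is exactly the computation the paper's proof consists of. One small remark: the reality of the moment tensor is needed only for that check, not for the normalization identity itself, since both sides pair the same first argument and $1/\wh{\sigma}_{2p+2}$ is a real scalar.
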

\begin{proof}
By the Hermite expansion of the ReLU activation $\sigma$, we have
\begin{align*}
    \E_{(\x,y)\sim{\D}}\bigg[y\he_{2p+2}(\x)\bigg] = \sum_{i=1}^k \beta_i \E_{(\x,y)\sim{\D}}\bigg[\sigma(\w_i\cdot\x)\he_{2p+2}(\x)\bigg] = \sum_{i=1}^k \beta_i \wh{\sigma}_{2p+2}\w_i^{\ot(2p+2)}.
\end{align*}
    Taking the tensor product and noting that $\la\w_i^{\ot 2p}, \bu^{\ot p}\ot\bar{\bu}^{\ot p}\ra = (\w_i\cdot\bu)^p(\w_i\cdot\bar{\bu})^p = |\w_i\cdot\bu|^{2p}$ gives the stated claim.
\end{proof}

Therefore, the problem boils down to estimating the contraction of two tensors $\E_{(\x,y)\sim{\D}}[y\he_{2p+2}(\x)]$ 
and $\mF_p(\bu)\ot\mF_p(\bar\bu)$ using samples from $\D$. 
To this end, we apply the implicit tensor technique 
from~\cite{diakonikolas2025implicit}.

\paragraph{Preliminaries for Implicit Tensor Computation.} 
\new{The central idea is the following: 
Recall that the target matrix $\mM_{2p+2}(\bu)$ is defined by
\begin{align*}
    \mM_{2p+2}(\bu) = \frac{1}{\wh{\sigma}_{2p+2}}\bigg\la\E_{(\x,y)\sim{\D}}\bigg[y\he_{2p+2}(\x)\bigg], \mF_p(\bu)\ot\mF_p(\bar\bu)\bigg\ra = \sum_{i=1}^k \la\w_i^{\ot p}, \mF_p(\bu)\ra \la\w_i^{\ot p}, \mF_p(\bar\bu)\ra\w_i\ot\w_i.
\end{align*}
Let $\bh_1,\dots,\bh_k\in(\R^d)^{\ot p}$ be the basis of the subspace $\sU\subset(\R^d)^{\ot p}$ spanned by ``vectors'' $\w_i^{\ot p}$, $i\in[k]$. Suppose we have the projection $\Phi_p\in\R^{k\times d^p}$, $\Phi_p^\top = [\bh_1,\dots,\bh_k]$ that maps a ``vector'' from $(\R^d)^{\ot p}$ to the subspace $\sU\subset(\R^d)^{\ot p}$ 
followed by an isometry from $\sU$ to $\R^k$, then since $\w_i^{\ot p}$ is a ``vector'' in $\sU$, it holds $\Phi_p^\top\Phi_p(\w_i^{\ot p}) = \w_i^{\ot p}$. In addition, since $\Phi_p$ is a matrix with real entries, consider its natural complex extension on a complex tensor $\mF_p(\bu)$: $\Phi_p(\mF_p(\bu)) = \Phi_p(\Real(\mF_p(\bu))) + \im\Phi_p(\mathrm{Im}(\mF_p(\bu)))$. Then its Hermitian adjoint is $\Phi^*_p = \Phi_p^\top$. Therefore, we have
\begin{align*}
    \la\w_i^{\ot p}, \mF_p(\bu)\ra = \la\Phi_p^\top\Phi_p(\w_i^{\ot p}), \mF_p(\bu)\ra = \la\Phi_p(\w_i^{\ot p}), (\Phi_p^\top)^* (\mF_p(\bu))\ra =\la\Phi_p(\w_i^{\ot p}), \Phi_p(\mF_p(\bu))\ra. 
\end{align*}
By the linearity of expectation, this implies that the target matrix $\mM_{2p+2}(\bu)$ can be written as:
\begin{align}\label{eq:intuition-implicit-1}
    \mM_{2p+2}(\bu) &= \sum_{i=1}^k \la\Phi_p(\w_i^{\ot p}), \Phi_p(\mF_p(\bu))\ra \la\Phi_p(\w_i^{\ot p}), \Phi_p(\mF_p(\bar\bu))\ra\w_i\ot\w_i \nonumber\\
    &=\sum_{i=1}^k \bigg\la(\Phi_p\ot\Phi_p\ot\mI_d\ot\mI_d)(\w_i^{\ot 2p+2}), \Phi_p(\mF_p(\bu))\ot \Phi_p(\mF_p(\bar\bu))\bigg\ra \nonumber\\
    &= \bigg\la\E_{(\x,y)\sim{\D}}\bigg[(\Phi_p\ot\Phi_p\ot\mI_d\ot\mI_d)\bigg(\frac{1}{\wh{\sigma}_{2p+2}}y\he_{2p+2}(\x)\bigg)\bigg], \Phi_p(\mF_p(\bu))\ot\Phi_p(\mF_p(\bar\bu))\bigg\ra.
\end{align}
Notably, the right-hand side of the equality above is a product of a $k\times k\times d\times d$ tensor and a $k\times k$ matrix, rather than two higher-order tensors.
This implies that the dimension of the estimation problem is greatly reduced from $d^{2p}$ to $k^2 + (dk)^2$, via the ``compressor'' $\Phi_p$.
The main idea of~\cite{diakonikolas2025implicit} is that, when $\sum_{i=1}^k \w_i^{\ot t} = \E_{\cI}[\mS_t(\cI)]$ for $t = 2,4,\dots,2p$, where $\mS_t(\cI)$ are recursively-constructed random tensors depending on the random variable $\cI$, the compressor $\Phi_p$ can be discovered and constructed recursively and efficiently from samples $\cI$.
Furthermore, $\Phi_p$ can also be efficiently applied to other recursively-constructed tensors. }

\new{
These recursively-constructed tensors are formally defined as tensors generated from {\it  Sequential Tensor Computation (STC) circuits}.
}

\begin{definition}[Sequential Tensor Computation~\cite{diakonikolas2025implicit}]\label{def:stc}
A \textit{sequential tensor computation} (STC) is an arithmetic circuit $\sfS_t$ that takes as input $\mathcal{I}$ a set of scalars and vectors in $\mathbb{C}^d$ and outputs an order-$t$ tensor of dimension $d$, denoted by $\mS_t(\mathcal{I}) \in (\mathbb{C}^d)^{\otimes t}$, for some $t \in \mathbb{Z}_+$, by applying the following operations: scalar multiplication,
addition of objects of the same order and dimension, and appending a vector by tensor product (in the last component). When $\cI$ is sampled from a distribution $\D(\cI)$, we denote $\mS_t(\mathcal{I})\sim\sfS_t(\D(\cI))$. The size of $\sfS_t$ is defined as 
the size of the underlying arithmetic circuit.
\end{definition}

\new{In short, Sequential Tensor Computation (STC) circuits capture a class of tensors that can be constructed via recursive simple operations. 
One simple example is order-$p$ rank-1 tensors, i.e., tensors of the form $\bu^{\ot p}$, since these tensors can be constructed by recursively adding a new vector in its last component via tensor product. 
These tensors are exactly the objectives we consider.
}

\begin{lemma}[Moment Tensor and Filter Tensor are STCs]\label{lem:stcs}
    Let $\D$ be a Gaussian sum of $k$ ReLUs,
    and let $\bz\sim\calN(\vec 0,\mI_d)$. 
    Let $\cI = (\x,y,\bz)$ and denote the distribution of $\cI$ by $\D(\cI) = \D\ot\calN(\vec 0,\mI_d)$. For even $t\in\sZ_+$, denote the $t^{\text{th}}$ Hermite coefficient of the ReLU activation $\sigma$ by $\wh{\sigma}_t$. 
    Then, for any even $t\in\sZ_+$, there is a real STC $\mathsf{S}_t(\D(\cI))$ of size $O(t)$ that given a vector $\cI\sim\D(\cI)$, outputs:
    \begin{align}\label{eq:stc-S}
        \mS_t(\cI) = \mS_t(\x,y,\bz)\eqdef \frac{y}{\wh{\sigma}_t\sqrt{t!}}\Real((\x + \im \bz)^{\ot t}).
    \end{align}
    The tensor $\mS_{t}(\x,y,\bz)$
 satisfies
 \begin{align*}
     \E_{\bz\sim\calN(\vec 0,\mI_d)}[\mS_t(\x,y,\bz)] = \frac{1}{\wh{\sigma}_{t}} y\he_{t}(\x),\quad \E_{\cI\sim\D(\cI)}[\mS_t(\cI)] = \sum_{i=1} \beta_i\w_i^{\ot t}\;.
 \end{align*}
 Furthermore, let $\bu \sim\cU, \bu = \bp + \im\bq$, $\bp,\bq\sim\calN(\vec 0, \mI_d)$.
    Then, for any $p\in\sZ_+$, there is a complex STC $\sfF_p(\cU)$ of size $O(p)$ that given a vector $\bu$, outputs:
\begin{align}\label{eq:stc-F}
        \mF_p(\bu) = \frac{\bu^{\ot p}}{\sqrt{p!}} \;.
    \end{align}
\end{lemma}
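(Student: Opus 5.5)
The plan is to prove the three assertions of \Cref{lem:stcs} separately: the two circuit constructions directly from \Cref{def:stc}, and the two expectation identities from \Cref{fct:hermite-tensor} (Part 1) together with the Hermite expansion of the ReLU.

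\emph{The circuit $\mathsf{S}_t$.} Form the complex vector $\x+\im\bz$ and build $(\x+\im\bz)^{\ot t}$ by starting from $\x+\im\bz$ and appending a copy of it in the last component $t-1$ times. That is $t-1$ append operations. Finally apply the scalar multiplication by $y/(\wh{\sigma}_t\sqrt{t!})$, which is legitimate because $\wh{\sigma}_t\neq 0$ for even $t$ by \Cref{fct:hermite-relu}.

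To obtain a \emph{real} STC that outputs the real part, I would not take $\Real$ inside the circuit. Instead I would use
\[
\Real((\x+\im\bz)^{\ot t})=\tfrac12\big((\x+\im\bz)^{\ot t}+(\x-\im\bz)^{\ot t}\big),
\]
which costs two chains of appends, one addition and one scalar multiplication. The size is therefore $O(t)$. Alternatively, one can note that the entries of the output are real polynomials in the real inputs, so it suffices to run the complex circuit and read off the real part. Either way the output has the form \eqref{eq:stc-S}.

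\emph{The circuit $\sfF_p$.} This is the same construction: append $\bu$ to itself $p-1$ times and scale by $1/\sqrt{p!}$. Its size is $O(p)$.

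\emph{First expectation identity.} Conditioning on $(\x,y)$ and using that $\bz$ is independent of them, linearity of expectation together with \Cref{fct:hermite-tensor} Part 1 gives
\[
\E_{\bz}\big[\Real((\x+\im\bz)^{\ot t})\big]/\sqrt{t!}=\he_t(\x).
\]
Multiplying by $y/\wh{\sigma}_t$ yields the first identity.

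\emph{Second expectation identity.} By the tower property, it remains to compute $\E_{(\x,y)\sim\D}[y\he_t(\x)]$. The claim is that $\E_{\x}[\sigma(\w\cdot\x)\he_t(\x)]=\wh{\sigma}_t\w^{\ot t}$ for a unit vector $\w$ (this is the same computation used in \Cref{claim:explicit-expression-of-M}), and summing over $i$ with weights $\beta_i$ then gives the result. To justify the claim, contract both sides with an arbitrary rank-one tensor $\bv^{\ot t}$ for a unit vector $\bv$. One checks that $\langle \he_t(\x),\bv^{\ot t}\rangle=\hep_t(\bv\cdot\x)$, using Part 1 of \Cref{fct:hermite-tensor} and the one-dimensional analogue (or directly from \Cref{def:herm-tensor}, since $\bv^\top\mI_d\bv=1$). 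Then \Cref{fact:hermite-poly} Part 4 gives $\wh{\sigma}_t(\w\cdot\bv)^t=\langle\wh{\sigma}_t\w^{\ot t},\bv^{\ot t}\rangle$. Both sides are symmetric tensors, and symmetric tensors are determined by their values on $\bv^{\ot t}$ by polarization, which concludes the claim.

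\emph{Main obstacle.} The only delicate point is the real-versus-complex bookkeeping: ensuring that $\mathsf{S}_t$ is a genuine real STC within the operations allowed by \Cref{def:stc}, and handling the $\cos^t$ versus $|\cdot|^t$ sign issue, which is harmless because $t$ is even. Everything else is routine.
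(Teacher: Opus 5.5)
Your proposal is correct and follows the paper's proof. The tensor power is built with $O(t)$ (resp.\ $O(p)$) append operations, the first identity comes from \Cref{fct:hermite-tensor} Part 1, and the second comes from the tower property together with $\E_{\x}[\sigma(\w\cdot\x)\he_t(\x)]=\wh{\sigma}_t\w^{\ot t}$, which you justify by polarization where the paper simply appeals to the Hermite expansion. The one minor difference is in how $\mathsf{S}_t$ is made real: the paper tracks real and imaginary parts via $\vec R_{j+1}=\vec R_j\otimes\x-\vec{Im}_j\otimes\bz$ and $\vec{Im}_{j+1}=\vec R_j\otimes\bz+\vec{Im}_j\otimes\x$, starting from $\vec R_0=y/(\wh{\sigma}_t\sqrt{t!})$ and $\vec{Im}_0=0$, so every intermediate node is real, whereas your conjugate-sum circuit has a real output but complex intermediate nodes; expanding your chain for $(\x+\im\bz)^{\ot t}$ into real and imaginary parts gives exactly the paper's recursion and makes the conjugate chain unnecessary.
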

\begin{proof}
    We first show that $\mathsf{S}_t(\D(\cI))$ is a real STC of size $O(t)$. Let $a= {y}/({\wh{\sigma}_t \sqrt{t!}})$. Define tensors $\vec R_j, \vec {Im}_j$ such that $\vec R_j+\im \vec {Im}_j=a (\vec x+\im \vec z)^{\ot j}$. To construct $j+1$, note that
\begin{align*}
   a (\vec x+\im \vec z)^{\ot (j+1)}=  (\vec R_j+\im \vec {Im}_j)\otimes(\vec x+\im \vec z)= (\vec R_j\otimes \vec x -  \vec {Im}_j\otimes \vec z)+ \im  (\vec R_j\otimes \vec z +  \vec {Im}_j\otimes \vec x) \;.
\end{align*}
This gives the recursive construction: $\vec R_{j+1}=\vec R_j\otimes \vec x -  \vec {Im}_j\otimes \vec z$ and $\vec {Im}_{j+1}= \vec R_j\otimes \vec z +  \vec {Im}_j\otimes \vec x$ with $\vec R_{0}=a$ and $\vec {Im}_0=0$. Each step uses an STC circuit of size $O(1)$, therefore the real STC $\mathsf{S}_t(\D(\cI))$ is of size $O(t)$.

    Next, applying \Cref{fct:hermite-tensor}, we have 
    \begin{align*}
        \E_{\bz\sim\calN(0,\mI_d)}[\mS_t(\x,y,\bz)] = \frac{y}{\wh{\sigma}_t}\E_{\bz\sim\calN(0,\mI_d)}\bigg[\frac{\Real((\x + \im \bz)^{\ot t})}{\sqrt{t!}} \bigg] = (1/\wh{\sigma}_t){y}\he_t(\x) \;.
    \end{align*} 
    Therefore, by the law of total expectation, we have $\E_{\cI\sim\D(\cI)}[\mS_t(\cI)] = (1/\wh{\sigma}_t)\E_{(\x,y)\sim\D}[y\he_t(\x)] = \sum_{i=1}^k \beta_i\w_i^{\ot t}$, using the definition of Hermite expansions. 

    For $\mF_p(\bu)$, directly by the definition of STCs, we know that it can be generated by an STC $\sfF_p(\cU)$ of size $O(p)$. This completes the proof of \Cref{lem:stcs}.
\end{proof}

\new{Combining \Cref{claim:explicit-expression-of-M} and \Cref{lem:stcs}, and using \eqref{eq:intuition-implicit-1}, the task of estimating the matrix $\mM_{2p+2}(\bu)$ is reduced to estimating the tensor contraction of these two STCs mentioned above:
\begin{align}\label{eq:intuition-implicit-2}
    \mM_{2p+2}(\bu) &= \bigg\la\E_{\cI\sim\D(\cI)}[\mS_{2p+2}(\cI)], \mF_p(\bu)\ot\mF_p(\bar\bu)\bigg\ra \nonumber\\
    &=\bigg\la\E_{\cI\sim\D(\cI)}\bigg[(\Phi_p\ot\Phi_p\ot\mI_d\ot\mI_d)\bigg(\mS_{2p+2}(\cI)\bigg)\bigg], \Phi_p(\mF_p(\bu))\ot\Phi_p(\mF_p(\bar\bu))\bigg\ra, \end{align} 
where $\Phi_p$ is the ``compressor'' we are looking for that maps from $(\R^{d})^{\ot p}$ to the subspace spanned by the hidden weight tensors $\w_i^{\ot p}$.
}
\new{
These compressors are formally called (recursive) pseudo-projections:}

\begin{definition}[Recursive pseudo-projections; Definitions 3.4 and 3.6 in \cite{diakonikolas2025implicit}]
\label{def:recursive-pseudo-projections}
A real linear map $\Phi:\sU\to\sV$ between finite-dimensional inner product spaces $\sU$ and $\sV$ is a \emph{pseudo-projection} if
$\Phi\Phi^\top = \mI_\sV$.
A $(d,t,k)$-recursive pseudo-projection $\Phi$ has dimensions
$n_0=1,n_1,\ldots,n_t=k$ and base pseudo-projections
$
\phi_j:\R^{n_{j-1}}\otimes\R^d
\to\R^{n_j}
$
such that
\[
\Phi_0=\mI_{d},\qquad
\Phi_j=\phi_j(\Phi_{j-1}\otimes \mI_d),
\quad 1\le j\le t,\qquad
\Phi=\Phi_t.
\]
Its order is $t$ and its size is $k$.
\end{definition}
\new{
\begin{remark}[Decomposition of Pseudo-Projections]
{\em As proved in \cite[Lemma 3.5]{diakonikolas2025implicit}, every pseudo-projection $\Phi$ can be written as a projection $\mP:\sU\to\sU'$ (i.e., $\mP$ is a linear map satisfying $\mP\mP^\top = \mI$) composed with an isometry $\mR$ from $\sU'$ to $\sV$; and in addition, the composition of pseudo-projections is a pseudo-projection.
    This implies the recursive map defined above is indeed a pseudo-projection, as tensorization with an identity and composition preserves the equation $\Phi_j\Phi_j^\top=\mI_{n_j}$.}
\end{remark}
}

\new{
\begin{remark}[Complex Extension of the Pseudo-Projections]{\em
    The mapping $\Phi$ is a linear map with real matrix entries. 
We also use the same notation
for its complex-linear extension, defined by
\[
    \Phi(\mW+\im \mT)=\Phi(\mW)+\im\Phi(\mT),
    \qquad \mW,\mT \text{ real tensors in real space $\sU$}.
\]
Thus, $\Phi$ can be applied to complex tensors.
Since the matrix entries of $\Phi$ are real, its Hermitian
adjoint coincides with its transpose $\Phi^* = \Phi^\top$, and $\Phi\Phi^*=\mI_\sV,
    \Phi(\overline{\mW+\im \mT})=\overline{\Phi(\mW+\im \mT)}.$
In the rest of the section, whenever we apply this real linear map $\Phi$ to a complex tensor, we are applying its natural complex extension.
}\end{remark}
}

\begin{fact}\label{fact:pseudo-projection-tensor-facts}
  Let $\Phi:\sU\to\sV$ be a real pseudo-projection. The following all hold:
  \begin{enumerate}
      \item For any $\vec W\in \sV$, $ \|\Phi^\top \vec W\|_F= \| \vec W\|_F$.\label{pseudo-prop-frob-equl}
      \item For every $\vec T\in \sU$, $ \|\Phi \vec T\|_F= \|\Phi^\top \Phi \vec T\|_F\leq \|\vec T\|_F$.\label{pseudo-prop-frob-equal-ineq}
      \item For any $\vec V\in \sV,\vec T\in \sU$, 
      $ \la \Phi \vec T ,\vec V\ra= \la  \vec T ,\vec \Phi^\top \vec V\ra$.\label{pseudo-prop-frob-change-order}
  \end{enumerate}
\end{fact}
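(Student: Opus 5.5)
The plan is to reduce all three items to the defining identity $\Phi\Phi^\top=\mI_\sV$ of a pseudo-projection (\Cref{def:recursive-pseudo-projections}), together with the fact that, since $\Phi$ has real entries, its adjoint is $\Phi^\top$. I would prove item \ref{pseudo-prop-frob-change-order} first, since the other two follow from it.

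For item \ref{pseudo-prop-frob-change-order}, flatten $\vec T$ and $\vec V$ into vectors, so that $\Phi$ is a real matrix and $\la\cdot,\cdot\ra$ is the standard (conjugate-linear in the first slot) inner product. Then
\[
\la\Phi\vec T,\vec V\ra=(\Phi\vec T)^*\vec V=\vec T^*\Phi^\top\vec V=\la\vec T,\Phi^\top\vec V\ra,
\]
using $\overline{\Phi}=\Phi$. This is exactly the statement that the Hermitian adjoint of the complex extension equals $\Phi^\top$.

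For item \ref{pseudo-prop-frob-equl}, apply item \ref{pseudo-prop-frob-change-order} with $\vec T=\Phi^\top\vec W$:
\[
\|\Phi^\top\vec W\|_F^2=\la\Phi^\top\vec W,\Phi^\top\vec W\ra=\la\Phi\Phi^\top\vec W,\vec W\ra=\|\vec W\|_F^2 .
\]
So $\Phi^\top$ is an isometry from $\sV$ into $\sU$.

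For item \ref{pseudo-prop-frob-equal-ineq}, the equality $\|\Phi\vec T\|_F=\|\Phi^\top\Phi\vec T\|_F$ is item \ref{pseudo-prop-frob-equl} applied to $\vec W=\Phi\vec T$. For the inequality, set $\mP=\Phi^\top\Phi$. Then $\mP$ is symmetric and idempotent, because $\mP^2=\Phi^\top(\Phi\Phi^\top)\Phi=\mP$, so it is an orthogonal projection. Writing $\vec T=\mP\vec T+(\mI-\mP)\vec T$, the two summands are orthogonal, since $\la\mP\vec T,(\mI-\mP)\vec T\ra=\la\vec T,(\mP-\mP^2)\vec T\ra=0$. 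Pythagoras then gives $\|\mP\vec T\|_F\le\|\vec T\|_F$.

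There is no real obstacle here. The only point needing care is bookkeeping for complex tensors: one must use the conjugate-linear inner product and the reality of $\Phi$ to justify $\Phi^*=\Phi^\top$, which the preceding remark on complex extensions already records.
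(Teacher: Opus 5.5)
Your proposal is correct and follows the paper's proof: item 1 and the equality in item 2 come from $\Phi\Phi^\top=\mI$ exactly as in the paper, and item 3, which the paper states without proof, is the one-line adjoint computation you give using that $\Phi$ is real. The only difference is the inequality in item 2: the paper cites the decomposition of a pseudo-projection into a projection followed by an isometry (Lemma 3.5 of the cited implicit-tensor work), whereas you check directly that $\Phi^\top\Phi$ is a symmetric idempotent and apply Pythagoras, which is equally valid and self-contained.
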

\begin{proof}
For \Cref{pseudo-prop-frob-equl}, we have that \[
    \|\Phi^\top \vec W\|_F^2=\la \Phi^\top \vec W, \Phi^\top \vec W \ra=\la  \vec W, \Phi\Phi^\top \vec W \ra=\la  \vec W, \vec W \ra=\|\vec W\|_F^2\;.\]
    For \Cref{pseudo-prop-frob-equal-ineq}, the first part follows by the same argument as  \Cref{pseudo-prop-frob-equl}. The second part follows from  the fact that every pseudo-projection $\Phi$ can be written as a projection $\mP:\sU\to\sU'$ (i.e., $\mP$ is a linear map satisfying $\mP\mP^\top = \mI$) composed with an isometry from $\sU'$ to $\sV$. 
\end{proof}

To obtain the pseudo-projection $\Phi_p$, we apply the results from~\cite{diakonikolas2025implicit}. 
However, the output of the main algorithm in~\cite{diakonikolas2025implicit} 
is a real-valued inner product between two projected tensors. 
In our application, 
we aim to utilize the recursive pseudo-projections explicitly.
Due to this small difference, we slightly adapt the main algorithm from~\cite{diakonikolas2025implicit} to our setting, 
as stated in \nameref{alg:implicit-tensor-computation-adapted}.
\begin{algorithm}[H]
\caption{$\mathsf{TensorCompressor}
(p,N_0,\D(\cI))$
}
\label{alg:implicit-tensor-computation-adapted}
\begin{algorithmic}[1]
\State {\bf Input:}
Parameters $k,d,p,N_0\in\mathbb Z_+$; sample access to $\D(\cI)$
\State {\bf Output:}
A recursive pseudo-projection $\wh{\Phi}_p$,
represented by $\phi_1,\ldots,\phi_p$.

\medskip
\State Construct STC circuit $\sfS_t$, $t=2,4,\dots,2p+2$, according to \Cref{lem:stcs}. \State $\wh{\Phi}_0\gets \mI_d$.
\For{$r=1,\ldots,p$}
    \State $\Psi_r \gets \wh{\Phi}_{r-1}\otimes I_d$. \Comment{$\wh{\Phi}_{r-1}:(\R^d)^{\ot (r-1)}\to\R^k$, $\Psi_r: (\R^d)^{\ot (r-1)}\ot \R^d \to\R^k\ot \R^d$}
    \State Draw $N_0$ fresh samples from $\D(\cI)$, construct empirical distribution $\widehat{\D}(\cI)$.
    \State Let $\mS_{2r}(\mathcal I)$ be the output of $\sfS_{2r}$ on sample $\cI$. \State
    $\mV_r \gets \E_{\cI\sim\widehat{\D}(\cI)}[(\Psi_r\otimes\Psi_r)\mS_{2r}(\mathcal I)]$ \Comment{Viewed as a $d\times d$ matrix for $r=1$
    and a $dk\times dk$ matrix otherwise.}
    \State $\mU_r\gets \mathsf{Top(k)Eigenvectors}[\mV_r]$
    \State
    $\phi_r \gets \mU_r^\top$;
    $\wh{\Phi}_r \gets \phi_r\Psi_r$.
\EndFor
\State \Return $\wh{\Phi}_p$.
\end{algorithmic}
\end{algorithm}

\new{\nameref{alg:implicit-tensor-computation-adapted} has the following guarantee:}

\begin{proposition}[Implicit Tensor Computation]\label{thm:implict tensor adapted}
Let $\D$ be a Gaussian sum of $k$ ReLUs and denote $\mM_{p} = \sum_{i=1}^k \beta_i\w_i^{\ot p}$. Let $\cI = (\x,y,\bz)$ where $(\x,y)\sim\D$, $\bz\sim\calN(\vec 0,\mI_d)$ and let $\sfS_t(\D(\cI))$ be the STC defined in \Cref{lem:stcs}, Equation~\eqref{eq:stc-S}. 
Let $p \in \mathbb{Z}_{+}$. Suppose that for all positive even integers $t \leq 2p$, 
the output tensor $\mS_{t}(\mathcal{I})$, $\mathcal{I} \sim \mathcal{D}_{t}(\cI)$, 
has covariance bounded above by $V_\sfS$, for some $V_\sfS > 0$.

Then \nameref{alg:implicit-tensor-computation-adapted} has the following guarantee:
the algorithm draws $pN_0$ samples from $\mathcal{D}(\cI)$, runs in $\poly(d, p, N_0)$ time, outputs
a $(d,p,k)$-recursive pseudo-projection $\wh{\Phi}_p$ from $(\R^d)^{\otimes p}$ to $\R^k$, 
such that with probability at least $1 - \tau$ (over the samples drawn from $\mathcal{D}(\cI)$), $\wh{\Phi}_p$ satisfies
    \begin{align*}
        \|\wh{\Phi}_p^\top\wh{\Phi}_p(\mM_p) - \mM_p\|_F\leq kp\sqrt{\zeta}, \;\text{and for each $i\in[k]$, }
        \|\wh{\Phi}_p^\top\wh{\Phi}_p({\w}^{\otimes p}_i) - \w_i^{\otimes p}\|_F\leq  p\sqrt{\zeta/\beta_i} \;, \end{align*}
    where $\zeta = O({pdkV_\sfS}/{\sqrt{N_0\tau}})$. 
\end{proposition}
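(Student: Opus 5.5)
The argument is an induction on the recursion level $r=1,\dots,p$ of \nameref{alg:implicit-tensor-computation-adapted}. It tracks how well $\wh{\Phi}_r^\top\wh{\Phi}_r$ preserves $\mM_r=\sum_i\beta_i\w_i^{\ot r}$ and each $\w_i^{\ot r}$. The key identity is that the population version of $\mV_r$ is
\[
\mV_r^{\ast}=(\Psi_r\ot\Psi_r)\E[\mS_{2r}(\cI)]=\sum_{i=1}^k\beta_i\,\Psi_r(\w_i^{\ot r})\Psi_r(\w_i^{\ot r})^\top,
\]
by \Cref{lem:stcs} and linearity. This is a PSD matrix of rank at most $k$ in $\R^{k}\ot\R^d$. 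Moreover $\Psi_r(\w_i^{\ot r})=(\wh{\Phi}_{r-1}\w_i^{\ot(r-1)})\ot\w_i$.

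\textbf{Sampling error.} $\mV_r$ differs from $\mV_r^\ast$ by the empirical mean of $(\Psi_r\ot\Psi_r)(\mS_{2r}-\E\mS_{2r})$. By \Cref{fact:pseudo-projection-tensor-facts}, $\Psi_r\ot\Psi_r$ does not increase Frobenius norm, so the covariance of the compressed tensor is still at most $V_\sfS$ in every direction. Summing over the $(dk)^2$ coordinates gives
\[
\E\|\mV_r-\mV_r^\ast\|_F^2\le (dk)^2V_\sfS/N_0 .
\]
Chebyshev with failure probability $\tau/p$ per level, followed by a union bound, gives $\|\mV_r-\mV_r^\ast\|_F\lesssim dkV_\sfS^{1/2}\sqrt{p/(N_0\tau)}$ at every level simultaneously. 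This is the $\zeta$ of the statement, up to the form of the $V_\sfS$ dependence.

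\textbf{One-level error.} Let $\mP_r=\mU_r\mU_r^\top$ be the projector onto the top-$k$ eigenspace of $\mV_r$. Since $\mV_r^\ast$ has rank at most $k$, a standard best-rank-$k$ argument gives a trace bound without any spectral gap:
\[
\mathrm{tr}\big((\mI-\mP_r)\mV_r^\ast\big)\le 2k\|\mV_r-\mV_r^\ast\|_2\lesssim \zeta .
\]
To see this, write $\mathrm{tr}((\mI-\mP_r)\mV_r^\ast)=\mathrm{tr}((\mI-\mP_r)\mV_r)+\mathrm{tr}((\mI-\mP_r)(\mV_r^\ast-\mV_r))$. Then compare with the projector onto the range of $\mV_r^\ast$, using Ky Fan's inequality. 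Expanding $\mV_r^\ast$ turns the trace bound into
\[
\sum_i\beta_i\,\|(\mI-\mP_r)\Psi_r(\w_i^{\ot r})\|^2\lesssim\zeta ,
\]
so each single-level loss satisfies $\|(\mI-\mP_r)\Psi_r(\w_i^{\ot r})\|\le\sqrt{\zeta/\beta_i}$.

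\textbf{Telescoping over levels.} The total error is
\[
\w_i^{\ot r}-\wh{\Phi}_r^\top\wh{\Phi}_r\w_i^{\ot r}=\big(\w_i^{\ot(r-1)}-\wh{\Phi}_{r-1}^\top\wh{\Phi}_{r-1}\w_i^{\ot(r-1)}\big)\ot\w_i+\Psi_r^\top(\mI-\mP_r)\Psi_r\w_i^{\ot r}.
\]
The first term has norm equal to the level-$(r-1)$ error, since $\|\w_i\|=1$. The second has norm at most $\sqrt{\zeta/\beta_i}$, because $\Psi_r^\top$ is an isometry. Summing over $r\le p$ gives the per-vector bound $p\sqrt{\zeta/\beta_i}$. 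For $\mM_p$, use the triangle inequality $\sum_i\beta_i\cdot p\sqrt{\zeta/\beta_i}\le p\sqrt{\zeta}\sum_i\sqrt{\beta_i}\le kp\sqrt\zeta$. Finally, the runtime is $\poly(d,p,N_0)$: each level computes the compressed STC outputs recursively without ever forming $d^r$-sized tensors, costing $\poly(dk)$ per sample, then does an eigendecomposition of a $dk\times dk$ matrix.

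\textbf{Main obstacle.} The delicate step is the one-level trace bound. It must hold without a spectral gap, since the weights $\beta_i$ and the correlations between the $\w_i$ are arbitrary. It must also rely on the covariance bound surviving compression by a data-independent $\Psi_r$. This holds because fresh samples are drawn at each level, so $\Psi_r$ is independent of the samples used at level $r$. I would follow the corresponding lemma of \cite{diakonikolas2025implicit}, adapting the notation to the explicit outputs $\phi_r$.
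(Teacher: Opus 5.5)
Your proposal is correct and is essentially the argument the paper relies on. The paper's proof only checks $\E[\mS_t(\cI)]=\sum_i\beta_i\w_i^{\ot t}$ via \Cref{lem:stcs} and then invokes Lemma~3.9 and Claim~3.10 of \cite{diakonikolas2025implicit}; your level-by-level, gap-free trace bound for the top-$k$ eigenspace of the rank-$k$ PSD matrix $\mV_r^{\ast}$, combined with telescoping through $\wh{\Phi}_r^\top\wh{\Phi}_r=\Psi_r^\top\mP_r\Psi_r$, is a self-contained version of that argument.

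The one thing to fix is the bookkeeping of $\zeta$, where you write ``$2k\|\mV_r-\mV_r^{\ast}\|_2\lesssim\zeta$''. Combining your sampling bound with the trace estimate, your argument proves the statement with $\zeta'=O\big(dk^2\sqrt{pV_\sfS/(N_0\tau)}\big)$. You can replace $dk^2$ by $dk^{3/2}$ if you bound $|\mathrm{tr}(\mP_r(\mV_r-\mV_r^{\ast}))|\le\sqrt{k}\,\|\mV_r-\mV_r^{\ast}\|_F$. The extra factor of $k$ cannot be absorbed, and $\zeta'\lesssim pdkV_\sfS/\sqrt{N_0\tau}$ holds only when $V_\sfS\gtrsim k^2/p$.

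This does not affect the paper's use of the proposition. The value supplied by \Cref{lem:tensor-variance-S}, $V_\sfS=k^2 2^{O(p)}/\wh{\sigma}_{2p+2}^2$, is $\gtrsim k^2$.

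Some such condition is also unavoidable. Fix the draws of $(\x,\bz)$ and multiply every $\beta_i$ by $c\in(0,1]$. This multiplies every $\mS_t(\cI)$ and every $\mV_r$ by $c$, so $\wh{\Phi}_p$ is unchanged, while an admissible $V_\sfS$ scales by $c^2$. A bound $p\sqrt{\zeta/\beta_i}$ with $\zeta$ linear in $V_\sfS$ would then shrink like $\sqrt{c}$. It cannot hold for all $c$, because already for $k=1$ the error is almost surely nonzero.

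Your $\sqrt{V_\sfS}$ dependence is therefore the scale-correct one. State the condition on $V_\sfS$ explicitly rather than claiming agreement with the displayed $\zeta$.
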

\begin{proof}[Proof of \Cref{thm:implict tensor adapted}]
Recall we proved that $\E_{\cI\sim\D(\cI)}[\mS_t(\cI)] = \sum_{i=1} \beta_i\w_i^{\ot t}$ in \Cref{lem:stcs}, which satisfies the prerequisites of Proposition 3.2 in~\cite{diakonikolas2025implicit}.  Then the proof follows directly from  \cite[Lemma 3.9 and Claim 3.10]{diakonikolas2025implicit}. 
\end{proof}

\Cref{thm:implict tensor adapted} states that, using $\poly(d,k,p,1/\tau,V_\sfS)$ samples, we can approximately recover the compressor $\Phi_p$ we are looking for: the recursive pseudo-projection $\wh{\Phi}_p$ will almost project to the subspace spanned by the hidden weight tensors $\w_i^{\ot p}$ whose coefficients are not too small.

\paragraph{Applying Implicit Tensor Computation for Target Matrix Approximation.}

To apply \nameref{alg:implicit-tensor-computation-adapted} and \Cref{thm:implict tensor adapted}, we need to determine the covariance of the STC output $\mS_t(\cI)$ defined in \Cref{lem:stcs}. To do so, we establish the following key lemma: 

\begin{lemma}\label{lem:tensor-variance-S}
    Let $\D$ be a Gaussian sum of $k$ ReLUs,
    and let $\bz\sim\calN(\vec 0,\mI_d)$. 
    Let $\cI = (\x,y,\bz)$ and denote the distribution of $\cI$ by $\D(\cI) = \D\ot\calN(\vec 0,\mI_d)$.
Let $\mS_t(\cI)$ be the output of the STC circuit defined in \Cref{lem:stcs}, Equation \eqref{eq:stc-S}. Then 
    the covariance of $\mS_t(\cI)$ is bounded by 
    \begin{align*}
        \sup_{\|\mW\|_F = 1}\E_{\cI\sim\D(\cI)}[|\la\mS_t(\cI) - \E_{\cI\sim\D}[\mS_t(\cI)], \mW\ra|^2]\leq V_{\sfS_t} := k^2 \new{2^{O(t)}}/\wh{\sigma}_t^2 \;. \end{align*}
\end{lemma}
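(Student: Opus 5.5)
Since the covariance is dominated by the uncentered second moment, I will bound $\E|\la\mS_t(\cI),\mW\ra|^2$ for a fixed unit-Frobenius-norm (real, or complex) tensor $\mW$, uniformly in $\mW$. Write $\mS_t(\cI)=\frac{y}{\wh{\sigma}_t\sqrt{t!}}\Real((\x+\im\bz)^{\ot t})$. Because $|\Real(a)|\le|a|$, we have $|\la\mS_t,\mW\ra|\le \frac{|y|}{\wh\sigma_t\sqrt{t!}}\bigl(|\la\Real\,\mW,(\x+\im\bz)^{\ot t}\ra|+|\la\mathrm{Im}\,\mW,(\x+\im\bz)^{\ot t}\ra|\bigr)$. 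It therefore suffices to handle a real $\mW$ with $\|\mW\|_F\le 1$; the constant factor is absorbed. By symmetry of $(\x+\im\bz)^{\ot t}$, I may also replace $\mW$ by $\mathrm{Sym}(\mW)$, which does not increase the Frobenius norm.

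Next I separate the label from the tensor via Cauchy--Schwarz:
\[
\E\bigl[y^2|\la\mW,(\x+\im\bz)^{\ot t}\ra|^2\bigr]\le \bigl(\E[y^4]\bigr)^{1/2}\bigl(\E|\la\mW,(\x+\im\bz)^{\ot t}\ra|^4\bigr)^{1/2}.
\]
The first factor is $O(k^2)$, since $|y|\le\sum_i|\w_i\cdot\x|$ and $\E[y^4]\lesssim k^4$. For the second factor, the quantity $P(\x,\bz)=\la\mW,(\x+\im\bz)^{\ot t}\ra$ splits into real and imaginary parts, each a real polynomial of degree $t$ in the $2d$ Gaussian variables $(\x,\bz)$. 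Gaussian hypercontractivity (\Cref{fct:gaussian-hypercontractivity}, applied in $2d$ dimensions) gives $\E|P|^4\le 3^{2t}\cdot O(1)\,(\E|P|^2)^2$. So the problem reduces to the key second-moment estimate
\[
\E_{\x,\bz}\bigl|\la\mW,(\x+\im\bz)^{\ot t}\ra\bigr|^2\le 2^{O(t)}\,t!\,\|\mW\|_F^2 .
\]
Combining these, the total bound is $\frac{1}{\wh\sigma_t^2 t!}\cdot k^2\cdot 3^{t}\cdot 2^{O(t)}t! = k^2 2^{O(t)}/\wh\sigma_t^2$, as claimed.

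The main obstacle is the key second-moment estimate. The plan is to expand $\E[(\x+\im\bz)^{\ot t}\ot\overline{(\x+\im\bz)}^{\ot t}]$ by Wick's theorem. Here $\x+\im\bz$ is a complex Gaussian vector with pseudo-covariance $\E[\bu\bu^\top]=0$ in the $\x,\bz$ variables and covariance $\E[\bu\bar\bu^\top]=2\mI$. The only nonzero pairings are therefore perfect matchings between the $t$ unconjugated and the $t$ conjugated slots. This yields $\E|\la\mW,\bu^{\ot t}\ra|^2=2^t\sum_{\pi\in S_t}\la\mW,\pi\mW\ra\le 2^t t!\|\mW\|_F^2$ by Cauchy--Schwarz for each permutation; for symmetric $\mW$ it is exactly $2^t t!\|\mW\|_F^2$. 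This is precisely where complex Gaussians beat real ones: real Gaussians would also produce intra-block pairings generating trace terms of size $d^{t/2}$.

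The step to verify carefully is the cancellation of the pseudo-covariance. Since $\E[(x_j+\im z_j)(x_l+\im z_l)]=\delta_{jl}-\delta_{jl}=0$, all same-side contractions vanish. With that confirmed, the bound is dimension-free and the lemma follows.
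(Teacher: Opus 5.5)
Your proposal is correct and follows the paper's proof essentially step for step: Cauchy--Schwarz to split off $\E[y^4]\lesssim k^4$, Gaussian hypercontractivity on real degree-$t$ polynomials in $(\x,\bz)$ to reduce the fourth moment to the second, and a $2^{O(t)}\,t!$ bound on the second moment of the complex-Gaussian tensor power. The only difference is that the paper cites \cite[Lemma 4.3]{diakonikolas2025implicit} for that last bound, whereas you derive it directly via Wick's theorem from the vanishing pseudo-covariance $\E[(x_j+\im z_j)(x_l+\im z_l)]=0$; this makes your argument self-contained and gives the explicit constant $2^t\,t!$.
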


\begin{proof}
        By the definition of tensor covariance and the distribution of the labels $y$, we have:
    \begin{align}\label{eq:variance-bound-0}
&\quad \sup_{\|\mW\|_F = 1}\E_{(\x,y,\bz)\sim\D(\cI)}\bigg[\bigg|\bigg\la \frac{y}{\wh{\sigma}_t \sqrt{t!}}\Real((\x + \im \bz)^{\ot t}), \mW\bigg\ra\bigg|^2\bigg] \nonumber\\
&\leq \sup_{\|\mW\|_F = 1}\frac{1}{\wh{\sigma}_t^2 t!}\sqrt{\Ey[y^4]\E_{\x,\bz\sim\calN(\vec 0,\mI_d)}\bigg[\bigg|\bigg\la\Real((\x + \im \bz)^{\ot t}), \mW\bigg\ra\bigg|^4\bigg]} \;,
\end{align}
    where the last inequality is by Cauchy-Schwarz. We prove the following claim:
\begin{claim}\label{claim:complex-gaussian-tensor-momemt-bounds}
        For any $\mW$ such that $\|\mW\|_F = 1$, it holds that 
        $$\sqrt{\E_{\x,\bz\sim\calN(\vec 0,\mI_d)}\bigg[\bigg|\bigg\la\Real((\x + \im \bz)^{\ot t}), \mW\bigg\ra\bigg|^4\bigg]} \leq 2^{O(t)} t!. $$
    \end{claim}
    \begin{proof}
    Note that \new{$|\la(\Real((\x + \im \bz)^{\ot t}),\mW\ra|^2= (\la \Real((\x + \im \bz)^{\ot t}),\Real(\mW)\ra)^2 + (\la \Real((\x + \im \bz)^{\ot t}),\mathrm{Im}(\mW)\ra)^2 $}, where  $a(\x,\bz) = \la \Real((\x + \im \bz)^{\ot t}),\Real(\mW)\ra$ and $b(\x,\bz) = \la \Real((\x + \im \bz)^{\ot t}),\mathrm{Im}(\mW)\ra$ are both degree-$t$ polynomials. 
    Therefore, by Gaussian Hypercontractivity (\Cref{fct:gaussian-hypercontractivity}), 
    we have that
\new{
\begin{align*}
    \|a(\x,\bz)^2\|_{L_2(\calN)}^{1/2} = \bigg(\E_{\x,\bz\sim\calN(\vec 0,\mI_d)}
    [a(\x,\bz)^4]\bigg)^{1/4}\leq 3^{t/2}\bigg(\E_{\x,\bz\sim\calN(\vec 0,\mI_d)}
    [a(\x,\bz)^2]\bigg)^{1/2} = 3^{t/2}\|a(\x,\bz)\|_{L_2(\calN)} \;,
\end{align*}
and a similar inequality holds for $b(\x,\bz)$.
Therefore, we obtain:
\begin{align*}
    \bigg(\E_{\x,\bz\sim\calN(\vec 0,\mI_d)}\bigg[\bigg|\bigg\la\Real((\x + \im \bz)^{\ot t}), \mW\bigg\ra\bigg|^4\bigg]\bigg)^{{1/2}}
    &=\bigg(\E_{\x,\bz\sim\calN(\vec 0,\mI_d)}\bigg[\bigg(a(\x,\bz)^2 + b(\x,\bz)^2\bigg)^2\bigg]\bigg)^{{1/2}} \\
    &= \|a(\x,\bz)^2 + b(\x,\bz)^2\|_{L_2(\calN)}\\
    &\leq \|a(\x,\bz)^2\|_{L_2(\calN)} + \|b(\x,\bz)^2\|_{L_2(\calN)}\\
    &\leq 3^t(\|a(\x,\bz)\|_{L_2(\calN)}^2 + \|b(\x,\bz)\|_{L_2(\calN)}^2) \;. 
\end{align*}
By  \cite[Lemma 4.3]{diakonikolas2025implicit}, we have that 
\begin{align*}
    \|a(\x,\bz)\|_{L_2(\calN)}^2&=\E_{\x,\bz\sim\calN(\vec 0,\mI_d)}\bigg[\bigg\la \Real((\x + \im \bz)^{\ot t}),\Real(\mW)\bigg\ra^2\bigg]\\
    &\leq \sup_{\|\wt{\mW}\|_F = 1}\E_{\x,\bz\sim\calN(\vec 0,\mI_d)}\bigg[\bigg|\bigg\la\Real((\x + \im \bz)^{\ot t}), \wt{\mW}\bigg\ra\bigg|^2\bigg]\leq t! 2^{O(t)} \;.
\end{align*}
The same inequality also holds for $b(\x,\bz)$. Thus we conclude that:} 
   \begin{align*}
   \bigg(\E_{\x,\bz\sim\calN(\vec 0,\mI_d)}\bigg[\bigg|\bigg\la\Real((\x + \im \bz)^{\ot t}), \mW\bigg\ra\bigg|^4\bigg]\bigg)^{{1/2}} &\leq 3^t\E_{\x,\bz\sim\calN(\vec 0,\mI_d)}\bigg[\bigg|\bigg\la\Real((\x + \im \bz)^{\ot t}), \mW\bigg\ra\bigg|^2\bigg]\\
   &\leq t! \, 3^t 2^{O(t)} = t!\, 2^{O(t)}\;. \end{align*}
\end{proof}
    Therefore, by plugging   $\E_{\x\sim\calN(\vec 0,\mI_d)}[(\sum_{i=1}^k \beta_i\sigma(\w_i\cdot\x))^4]\lesssim k^4$ back into \eqref{eq:variance-bound-0}, we obtain:
\begin{align*}
       \sup_{\|\mW\|_F = 1}\E_{(\x,y,\bz)\sim\D(\cI)}\bigg[\bigg|\bigg\la \frac{y}{c_t \sqrt{t!}}\Real((\x + \im \bz)^{\ot t}), \mW\bigg\ra\bigg|^2\bigg]\lesssim \frac{k^2{\new{2^{O(t)}} t!}}{\wh{\sigma}_t^2 t!} = \frac{k^2{\new{2^{O(t)}}}}{\wh{\sigma}_t^2 }\;. \end{align*}
This completes the proof of \Cref{lem:tensor-variance-S}. 
\end{proof}

Combining \Cref{lem:tensor-variance-S} with \Cref{thm:implict tensor adapted} immediately implies that when $p = O(\log(k/\lambda))$ (as was set in \nameref{alg:implicit-tensor-approximate-matrix}), using $N_0 = \poly(d,k,1/\tau)2^{O(p)} = \poly(d,k/\lambda,1/\tau)$ samples from $\D$, the pseudo-projection $\wh{\Phi}_p$ maps a tensor $O(\lambda)$-close to the space of the $\lambda$-heavy hidden weight tensors $\w_i^{\ot p}$ (i.e., $\beta_i\geq \lambda$) with probability at least $1 - \tau$.
The algorithm \nameref{alg:implicit-tensor-computation-adapted} runs efficiently, in time $\poly(d,p,N_0)$.
In short, we can efficiently approximate the compressor $\Phi_p$ using $\wh{\Phi}_p$ to reduce the problem dimension. 
Recalling from \eqref{eq:intuition-implicit-2} that the target matrix equals
$$\mM_{2p+2}(\bu) = \bigg\la\E_{\cI\sim\D(\cI)}\bigg[(\Phi_p\ot\Phi_p\ot\mI_d\ot\mI_d)\bigg(\mS_{2p+2}(\cI)\bigg)\bigg], \Phi_p(\mF_p(\bu))\ot\Phi_p(\mF_p(\bar\bu))\bigg\ra\;,$$
an estimation of $\mM_{2p+2}(\bu)$ can therefore be constructed via:
\begin{align}\label{eq:intuition-implict-3}
    \wh{\mM}_{2p+2}(\bu) = \bigg\la\E_{\cI\sim\wh{\D}(\cI)}\bigg[(\wh{\Phi}_p\ot\wh{\Phi}_p\ot\mI_d\ot\mI_d)\bigg(\mS_{2p+2}(\cI)\bigg)\bigg], \wh{\Phi}_p(\mF_p(\bu))\ot\wh{\Phi}_p(\mF_p(\bar\bu))\bigg\ra,\tag{EstM}
\end{align}
where $\wh{\D}(\cI)$ is a fresh empirical distribution of $\D(\cI)$ (independent of the samples used to construct $\wh{\Phi}_p$).
The two tensors on the right-hand side of \eqref{eq:intuition-implict-3} are computed in the following procedures: \nameref{alg:CompressEmpiricalMomentTensor} and \nameref{alg:CompressFilterTensor}.
In \nameref{alg:implicit-tensor-approximate-matrix}, \Cref{line:matrix-estimation-after-pseudo-projection}, we compute the empirical matrix $\wh{\mM}_{2p+2}(\bu)$ displayed in \eqref{eq:intuition-implict-3}, using the tensors returned by \nameref{alg:CompressEmpiricalMomentTensor} and \nameref{alg:CompressFilterTensor}.

\begin{algorithm}[H]
\caption{$\mathsf{CompressEmpMoment}(p,N_0,\wh{\Phi}_p,\D(\cI))$}
\label{alg:CompressEmpiricalMomentTensor}
    \begin{algorithmic}[1]
    \State{\bf Input:} Moment degree $p$, number of samples $N_0$, tensor compressor $\wh{\Phi}_p$, sample access to $\D(\cI)$. 
    \State{\bf Output: } A tensor $\mA_p\in\R^{k\times k\times d\times d}$
    \medskip
        \State Draw $N_0$ fresh samples $\wh{\D}(\cI) = \{\cI^{(i)}\}_{i=1}^{N_0}$ from $\D(\cI)$.
        \State Construct STC circuit $\sfS_{2p+2}$, according to \Cref{lem:stcs}. \State Compute $(\wh{\Phi}_p\ot\wh{\Phi}_p\ot\mI_d\ot\mI_d)\mS_{2p+2}(\cI^{(i)})$, 
where $\mS_{2p+2}(\cI^{(i)})$ is the output of $\sfS_{2p+2}$, using \Cref{cor:efficiently-computing-tensor-computation}.
\State $\mA_p\gets \frac{1}{N_0}\sum_{i=1}^{N_0} (\wh{\Phi}_p\ot\wh{\Phi}_p\ot\mI_d\ot\mI_d)\mS_{2p+2}(\cI^{(i)})$. \State{\bf Return: }$\mA_p$.
    \end{algorithmic}
\end{algorithm}

\begin{algorithm}[H]
\caption{$\mathsf{CompressFilter}(p,\wh{\Phi}_p,\bu)$}
\label{alg:CompressFilterTensor}
    \begin{algorithmic}[1]
    \State{\bf Input:} Moment degree $p$, tensor compressor $\wh{\Phi}_p$, probing vector $\bu$. 
    \State{\bf Output: } A matrix $\mB_p(\bu)\in\R^{k\times k}$
    \medskip
    
\State Construct STC circuits $\sfF_p$ according to \Cref{lem:stcs}. \State Compute $\wh{\Phi}_p(\mF_p(\bu))$ and $\wh{\Phi}_p(\mF_p(\overline{\bu}))$, where $\mF_p(\bu)$ 
is the output of $\sfF_p$, using \Cref{cor:efficiently-computing-tensor-computation}. \State $\mB_p(\bu)\gets \wh{\Phi}_p(\mF_p(\bu))\ot\wh{\Phi}_p(\mF_p(\overline{\bu}))$
\State{\bf Return: }$\mB_p(\bu)$.
    \end{algorithmic}
\end{algorithm}

We next show that these procedures run efficiently.

\begin{lemma}[Efficient Computation of Tensor Projections]\label{cor:efficiently-computing-tensor-computation}
     Let $\sfS_t(\D(\cI))$ and $\sfF_p(\cU)$ be the STCs defined in \Cref{lem:stcs}, equations~\eqref{eq:stc-S} and~\eqref{eq:stc-F} respectively. Let $\Phi_p$ be a $(d,p,k)$-recursive pseudo-projection. Then, for any $\mF_p(\bu)\sim\sfF_p(\cU)$, $\wh{\Phi}_p(\mF_p(\bu))$ can be computed in time $\poly(d,k,p)$; and for any $\mS_{2p+2}(\cI)\sim\sfS_{2p+2}(\D(\cI))$, the tensor $(\wh{\Phi}_p\ot\wh{\Phi}_p\ot\mI_d\ot\mI_d)(\mS_{2p+2}(\cI))$ can be computed in time $\poly(d,k,p)$.
 \end{lemma}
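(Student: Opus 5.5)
The approach is to exploit the recursive structure on both sides: $\wh{\Phi}_p$ is built from the base maps $\phi_1,\dots,\phi_p$, and both STC outputs are built by appending one vector at a time. The two interleave, so we never form a $d^p$-sized object. Write $\wh{\Phi}_j=\phi_j(\wh{\Phi}_{j-1}\ot\mI_d)$, where $\phi_j$ is stored explicitly as a $k\times(n_{j-1}d)$ matrix with $n_{j-1}\le k$. The key identity is
\[
\wh{\Phi}_j(\mT\ot\bv)=\phi_j\big(\wh{\Phi}_{j-1}(\mT)\ot\bv\big)
\]
for any order-$(j-1)$ tensor $\mT$ and vector $\bv$. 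By linearity the same identity holds for sums of such terms.

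For the filter tensor, set $\bs_0=1/\sqrt{p!}$ (a scalar) and iterate $\bs_j=\phi_j(\bs_{j-1}\ot\bu)$. Then $\bs_p=\wh{\Phi}_p(\mF_p(\bu))$. This uses the complex-linear extension of each $\phi_j$, which is legitimate because $\phi_j$ has real entries. Each step forms a vector of length at most $kd$ and multiplies it by a $k\times kd$ matrix, costing $O(k^2d)$. Over all $p$ steps the cost is $O(pk^2d)$. Running the same recursion with $\bar\bu$ gives $\wh{\Phi}_p(\mF_p(\bar\bu))$.

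For the moment tensor, I would use the recursion from the proof of \Cref{lem:stcs}: $\vec R_{j+1}=\vec R_j\ot\x-\vec{Im}_j\ot\bz$ and $\vec{Im}_{j+1}=\vec R_j\ot\bz+\vec{Im}_j\ot\x$, with $\mS_{2p+2}=\vec R_{2p+2}$. The map $\wh{\Phi}_p\ot\wh{\Phi}_p\ot\mI_d\ot\mI_d$ acts on modes $1..p$, $p+1..2p$, $2p+1$, and $2p+2$ respectively, so I would run the recursion in three phases.

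\textbf{Phase 1 (modes $1,\dots,p$).} Maintain $\wh{\Phi}_j(\vec R_j)$ and $\wh{\Phi}_j(\vec{Im}_j)$, each a vector in $\R^{n_j}$. Update them by $\wh{\Phi}_{j+1}(\vec R_{j+1})=\phi_{j+1}(\wh{\Phi}_j\vec R_j\ot\x-\wh{\Phi}_j\vec{Im}_j\ot\bz)$, and similarly for $\vec{Im}$. Each step costs $\poly(d,k)$.

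\textbf{Phase 2 (modes $p+1,\dots,2p$).} Now the tensor is (already compressed first block) $\ot$ (growing second block). Maintain elements of $\R^{k}\ot\R^{n_j}$, that is, $k\times n_j$ matrices. Apply $\mI_k\ot\phi_j$ after appending $\x$ or $\bz$. This is the same recursion with an extra inert $k$-dimensional index, so each step costs $O(k\cdot k^2 d)$.

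\textbf{Phase 3 (modes $2p+1,2p+2$).} Append $\x$ and $\bz$ twice with identity maps. This yields objects in $\R^{k\times k\times d\times d}$ at cost $O(k^2d^2)$.

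Correctness in every phase follows from the key identity together with linearity and mode-wise action of the tensor-product map. The total cost is $O(p)\cdot\poly(d,k)=\poly(d,k,p)$, with all intermediate objects of size at most $k^2d^2$.

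The main obstacle I expect is the bookkeeping in Phase 2. There one must check that $(\wh{\Phi}_p\ot\wh{\Phi}_j)$ applied to a sum of appended terms factors correctly even though $\vec R,\vec{Im}$ are not rank one across the two blocks. This is handled by applying the key identity only in the last mode and using linearity, so no rank-one structure is needed. A secondary point is that the first dimensions $n_j$ may be smaller than $k$; this only shrinks the costs.
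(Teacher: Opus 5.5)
Your proposal is correct and uses the same mechanism as the paper. The paper cites \cite[Lemma 3.8]{diakonikolas2025implicit} to assemble $\wh{\Phi}_p\ot\wh{\Phi}_p\ot\mI_d\ot\mI_d$ as a recursive pseudo-projection, whose base maps are in effect your $\phi_j$, then $\mI_k\ot\phi_j$, then identities, and cites \cite[Lemma 3.7]{diakonikolas2025implicit} to apply it to an STC by interleaving, which is your identity $\wh{\Phi}_j(\mT\ot\bv)=\phi_j(\wh{\Phi}_{j-1}(\mT)\ot\bv)$. Your unrolled version adds self-containedness, explicit $O(pk^3d+k^2d^2)$ costs, and an explicit complex-linear extension for the filter $\mF_p(\bu)$.
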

 \begin{proof}
  \cite[Lemma 3.7]{diakonikolas2025implicit} states that given an order-$t$ recursive pseudo-projection $\Psi$ along with their defining pseudo-projections $\theta_j$, and an STC $\mS$ of order $t$, dimension $d$, and size $S$, the projection $\Psi(\mS)$ can be computed in time $\poly(d,t,S)$. This immediately implies that $\wh{\Phi}_p(\mF_p(\bu))$ can be computed in $\poly(d,p)$ time.
Furthermore, according to  \cite[Lemma 3.8]{diakonikolas2025implicit}, given recursive pseudo-projections $\Phi : (\mathbb{R}^d)^{\otimes t} \to \mathbb{R}^n$ and $\Psi : (\mathbb{R}^d)^{\otimes s} \to \mathbb{R}^m$ along with their defining pseudo-projections $\phi_i, \theta_j$, one can efficiently construct the recursive pseudo-projection $\Phi \otimes \Psi : (\mathbb{R}^d)^{\otimes (t+s)} \to \mathbb{R}^{nm}$ of size $\max(\operatorname{size}(\Phi), n \operatorname{size}(\Psi))$ in time $\poly(d,t,s,\max(\operatorname{size}(\Phi), n \operatorname{size}(\Psi)))$.
     Therefore, the recursive-pseudo projection $\wh{\Phi}_p\ot\wh{\Phi}_p:(\R^{d})^{\ot p}\ot(\R^{d})^{\ot p}\to\R^k\ot\R^k$ can be efficiently constructed.
     Since $\mI_d:\R^d\to\R^d$ is also a pseudo-projection, we can similarly efficiently construct the recursive pseudo-projection $(\wh{\Phi}_p\ot\wh{\Phi}_p)\ot\mI_d$, and then construct $((\wh{\Phi}_p\ot\wh{\Phi}_p)\ot\mI_d)\ot\mI_d$. The process takes time $\poly(d,k,p)$.
     After constructing the pseudo-projection $\wh{\Phi}_p\ot\wh{\Phi}_p\ot\mI_d\ot\mI_d$, since $\mS_{2p+2}(\cI)$ is generated from a size $O(p)$ STC, the projection $(\wh{\Phi}_p\ot\wh{\Phi}_t\ot\mI_d\ot\mI_d)\mS_{2p+2}(\cI)$ can also be computed in $\poly(d,k,p)$ time.
\end{proof}

\paragraph{Main Result of this Subsection: 
Matrix Estimation Using Implicit Tensor Computation.} 
What remains is to prove that the empirical matrix estimation in \eqref{eq:intuition-implict-3} is close to the target matrix $\mM_{2p+2}(\bu)$. 
However, we need one last technical piece, which is to bound the covariance of the tensor $\mF_p(\bu)$. 
This is the place where we crucially use the property of complex Gaussian tensors: in \Cref{lem:moment-bound-query-tensor-after-projection}, we show that the covariance of the complex Gaussian filter $\mF_p(\bu)$ is bounded above by $2^{O(p)}$, whereas in \Cref{lem:real-complex-gaussian-comparison} we prove that were we using a real Gaussian filter, the covariance would be much larger.

To see why a bound on the covariance of $\mF_p(\bu)$ is required, consider the 
difference between the empirical matrix and the target matrix:
\begin{align*}
    \wh{\mM}_{2p+2}(\bu) - \mM_{2p+2}(\bu)&=\bigg\la\E_{\cI\sim\wh{\D}(\cI)}\bigg[(\wh{\Phi}_p\ot\wh{\Phi}_p\ot\mI_d\ot\mI_d)\bigg(\mS_{2p+2}(\cI)\bigg)\bigg], \wh{\Phi}_p(\mF_p(\bu))\ot\wh{\Phi}_p(\mF_p(\bar\bu))\bigg\ra\\
    &\quad - \bigg\la\E_{\cI\sim\D(\cI)}\bigg[\mS_{2p+2}(\cI)\bigg], \mF_p(\bu)\ot\mF_p(\bar\bu)\bigg\ra\\
    &=\la\Delta_1, \wh{\Phi}_p(\mF_p(\bu))\ot\wh{\Phi}_p(\mF_p(\bar\bu))\ra + \la\Delta_2, \mF_p(\bu)\ra + \la\Delta_2, \mF_p(\bar\bu)\ra \;,
\end{align*}
where $\Delta_1$, $\Delta_2$, and $\Delta_3$ are error terms (see, e.g., \eqref{eq:implicit-tensor-matrix-estimte-error-1} in the proof of \Cref{prop:estimate-matrix-via-implicit-tensor}). 
Even though $\Delta_1$, $\Delta_2$, and $\Delta_3$ can be tiny, if the spectral norm of $\mF_p(\bu)$ or $\wh{\Phi}_p(\mF_p(\bu))\ot\wh{\Phi}_p(\mF_p(\bar\bu))$ is large, the difference displayed above is not necessarily small.
To argue that $\mF_p(\bu)$ and $\wh{\Phi}_p(\mF_p(\bu))\ot\wh{\Phi}_p(\mF_p(\bar\bu))$ have small spectral norm with large probability, we bound their covariance and then apply standard concentration inequalities.

\begin{lemma}\label{lem:moment-bound-query-tensor-after-projection}
Let $\bu \sim\cU, \bu = \bp + \im\bq$, $\bp,\bq\sim\calN(\vec 0, \mI_d)$, and let $\mF_p(\bu)$ be the output of the STC $\sfF_p(\cU)$ defined in \eqref{eq:stc-F} in \Cref{lem:stcs}.
    Then, we have $\E_{\bu\sim\cU}[|\la \mF_p(\bu),\mW\ra|^2]\leq 2^{O(p)}\|\mW\|_F^2,\;\E_{\bu\sim\cU}[|\la \mF_p(\bu),\mW\ra|^4]\leq 2^{O(p)} \|\mW\|_F^4$.
In addition, let $\wh{\Phi}_p$ be the complex extension of a real-valued pseudo-projection from $(\mathbb{R}^d)^{\ot p}$ to $\mathbb{R}^k$. Then:
\begin{align*}
            \E_{\bu\sim\cU}[\|\wh{\Phi}_p(\mF_p(\bu))\otimes \wh{\Phi}_p(\mF_p(\bar{\bu}))\|_F^2]\leq k^2 2^{O(p)}.
        \end{align*}
\end{lemma}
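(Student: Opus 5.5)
The plan is to reduce all three bounds to moments of a single complex Gaussian polynomial and use Gaussian hypercontractivity. Since $\mF_p(\bu)=\bu^{\ot p}/\sqrt{p!}$ is symmetric, $\la\mF_p(\bu),\mW\ra=\la\mF_p(\bu),\mathrm{Sym}(\mW)\ra$ up to conjugation conventions. Moreover $\|\mathrm{Sym}(\mW)\|_F\le\|\mW\|_F$, so we may assume $\mW$ is symmetric. Write $f(\bu)=\frac{1}{\sqrt{p!}}\sum_{\alpha}\overline{\mW}_\alpha \bu_{i_1}\cdots\bu_{i_p}$, a holomorphic homogeneous polynomial of degree $p$ in $\bu=\bp+\im\bq$.

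\textbf{Second moment.} The key identity is the Wick/orthogonality formula for complex Gaussians. For $\bu=\bp+\im\bq$ we have $\E[\bu_i\bu_j]=0$ and $\E[\bu_i\bar\bu_j]=2\delta_{ij}$. Hence
\[
\E[\bu^{\ot p}\ot\bar\bu^{\ot p}]=2^p\sum_{\pi\in S_p}\text{(permutation pairings)}.
\]
Consequently $\E[|\la\bu^{\ot p},\mW\ra|^2]=2^p\,p!\,\|\mathrm{Sym}(\mW)\|_F^2$. Dividing by $p!$ gives the bound $2^p\|\mW\|_F^2$. Only pairings of $\bu$'s with $\bar\bu$'s contribute, because $\E[\bu_i\bu_j]=0$ kills every other pairing. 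This cancellation is exactly the advantage over real Gaussians, where the $d$-dependent trace pairings survive.

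\textbf{Fourth moment.} Split $f=a+\im b$ with $a,b$ real polynomials of degree $p$ in the $2d$ real Gaussian coordinates $(\bp,\bq)$. Then $|f|^4=(a^2+b^2)^2$. By the same computation used in the proof of \Cref{claim:complex-gaussian-tensor-momemt-bounds}, Minkowski's inequality together with \Cref{fct:gaussian-hypercontractivity} at $q=4$ gives
\[
\E[|f|^4]^{1/2}\le 3^p(\|a\|_{L_2}^2+\|b\|_{L_2}^2)=3^p\,\E|f|^2\le 6^p\|\mW\|_F^2 .
\]
Squaring yields $\E|f|^4\le 2^{O(p)}\|\mW\|_F^4$.

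\textbf{Projected product.} The Frobenius norm of the outer product factorizes:
\[
\|\wh{\Phi}_p(\mF_p(\bu))\ot\wh{\Phi}_p(\mF_p(\bar\bu))\|_F^2=\|\wh{\Phi}_p(\mF_p(\bu))\|_F^2\,\|\wh{\Phi}_p(\mF_p(\bar\bu))\|_F^2 .
\]
Since $\wh\Phi_p$ has real entries, $\wh\Phi_p(\mF_p(\bar\bu))=\overline{\wh\Phi_p(\mF_p(\bu))}$, and the two factors are equal. So the quantity is $\|\wh\Phi_p(\mF_p(\bu))\|_F^4$. Write $\wh\Phi_p=\sum_{j=1}^k \e_j\bh_j^\top$, where $\bh_j$ are the rows of $\wh\Phi_p$. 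Because $\wh\Phi_p\wh\Phi_p^\top=\mI_k$, these rows are orthonormal, so in particular $\|\bh_j\|_F=1$. Then
\[
\|\wh\Phi_p(\mF_p(\bu))\|_F^2=\sum_{j=1}^k|\la\bh_j,\mF_p(\bu)\ra|^2 .
\]
Applying $(\sum_{j=1}^k x_j)^2\le k\sum_{j=1}^k x_j^2$ and then the fourth-moment bound with $\mW=\bh_j$ gives
\[
\E\big[\|\wh\Phi_p(\mF_p(\bu))\|_F^4\big]\le k\sum_{j=1}^k\E|\la\bh_j,\mF_p(\bu)\ra|^4\le k^2 2^{O(p)} .
\]

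\textbf{Main obstacle.} The only delicate step is the exact Wick computation for the second moment. There one must check carefully that the conjugation in the paper's inner product, together with symmetrization, leaves precisely the $p!$ permutation pairings, each contributing $2^p$. Everything after that is hypercontractivity and Cauchy--Schwarz.
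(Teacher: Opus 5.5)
Your proof is correct. The final step coincides with the paper's: you factor the Frobenius norm, use $\wh{\Phi}_p(\mF_p(\bar\bu))=\overline{\wh{\Phi}_p(\mF_p(\bu))}$, expand in the orthonormal rows of $\wh{\Phi}_p$, and apply $(\sum_j x_j)^2\le k\sum_j x_j^2$. The first two bounds, however, are obtained differently.

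The paper never computes $\E|\la\mF_p(\bu),\mW\ra|^2$ directly. It argues as follows:
\begin{itemize}
\item It uses the phase rotation $\bu\mapsto e^{-\im\pi/(2p)}\bu$ to show that the real and imaginary parts of $\la\mF_p(\bu),\mW\ra$, and of $\mF_p(\bu)$, are equidistributed.
\item It reduces to real test tensors via $\Real\la\mF_p,\mW\ra=\la\Real\mF_p,\Real\mW\ra+\la\mathrm{Im}\,\mF_p,\mathrm{Im}\,\mW\ra$.
\item It imports the real-tensor second-moment bound from \cite[Lemma~4.3]{diakonikolas2025implicit}.
\item The fourth moment follows the same pattern using \Cref{claim:complex-gaussian-tensor-momemt-bounds}, which itself rests on that lemma.
\end{itemize}

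Your complex Wick computation instead gives the exact identity $\E|\la\mF_p(\bu),\mW\ra|^2=2^p\|\mathrm{Sym}(\mW)\|_F^2$. Applying hypercontractivity directly to $\Real f$ and $\mathrm{Im} f$ then yields the fourth moment with the explicit constant $36^p$.

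What each route buys:
\begin{itemize}
\item Yours is self-contained and sharper. It also isolates why the complex filter beats a real one: the $\bu$--$\bu$ pairings, which create $d$-dependent trace terms for real Gaussians, vanish because $\E[u_iu_j]=0$.
\item The paper's route reuses machinery already developed for \Cref{lem:tensor-variance-S}.
\end{itemize}

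Your identity also proves item~1 of \Cref{lem:real-complex-gaussian-comparison}, which the paper states without proof. It further shows that the equality there holds for $\mF_p^{\mathrm C}=\mF_p/\sqrt{2^p}$, not for $\mF_p$ as written.
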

\begin{proof}

Let $Z(\bu)= \la \mF_p(\bu),\mW\ra$. The random variables $Z(\bu)$ and $\im Z(\bu)$ have the same distribution; to see that, note that for any $\theta\in \R$, we have $ \exp(\im \theta)\vec u= \cos\theta \vec p - \sin\theta \vec q + \im (\cos\theta\vec q + \sin\theta \vec p)$ and both the real part and imaginary part are independent standard normal from orthogonality. Therefore, $Z(\exp(-\im \pi/(2p))\bu )=\im Z(\bu)$ and because $Z(\exp(-\im \pi/(2p))\bu )$ has the same distribution as $Z(\bu)$, it holds that $Z(\bu)$ and $\im(Z(\bu))$ have the same distribution.
This further indicates that $\Real(Z(\bu))$ and $\mathrm{Im}(Z(\bu))$ have the same distribution.
Similarly, $\Real(\mF_p(\bu))$ and $\mathrm{Im}(\mF_p(\bu))$ also have the same distribution.
Taking imaginary parts, we conclude that $\Real(Z(\bu))$
and $\mathrm{Im}(Z(\bu))$ have the same distribution.
The same rotation also shows that
$\Real(\mF_p(\bu))$ and $\mathrm{Im}(\mF_p(\bu))$
have the same distribution.
\new{
By \cite[Lemma~4.3]{diakonikolas2025implicit}, for every
real tensor $\mathbf T$ of order $p$, we have
\begin{align*}
    \E_{\bu\sim\cU}\left[
        \la \Real(\mF_p(\bu)),\mathbf T\ra^2
    \right]
    &=
    \frac{1}{p!}
    \E_{\bu\sim\cU}\left[
        \la \Real(\bu^{\ot p}),\mathbf T\ra^2
    \right] \leq
    2^{O(p)}\|\mathbf T\|_F^2.
\end{align*}
Since $\Real(\mF_p(\bu))$ and $\mathrm{Im}(\mF_p(\bu))$
have the same distribution, the same bound holds for
$\mathrm{Im}(\mF_p(\bu))$.

Since the inner product is conjugate-linear in its first
argument, we have
\[
    \Real(Z(\bu))
    =
    \la \Real(\mF_p(\bu)),\Real(\mW)\ra
    +
    \la \mathrm{Im}(\mF_p(\bu)),\mathrm{Im}(\mW)\ra.
\]
Using $(a+b)^2\leq 2a^2+2b^2$, it follows that
\begin{align*}
    \E_{\bu\sim\cU}\left[\Real(Z(\bu))^2\right]
    &\leq
    2\E_{\bu\sim\cU}\left[
        \la \Real(\mF_p(\bu)),\Real(\mW)\ra^2
    \right] +
    2\E_{\bu\sim\cU}\left[
        \la \mathrm{Im}(\mF_p(\bu)),
            \mathrm{Im}(\mW)\ra^2
    \right]\\
    &\leq
    2^{O(p)}
    \left(
        \|\Real(\mW)\|_F^2+
        \|\mathrm{Im}(\mW)\|_F^2
    \right) =
    2^{O(p)}\|\mW\|_F^2.
\end{align*}
The inequality also holds for $\mathrm{Im}(Z(\bu))$. Therefore,
\begin{align*}
    \E_{\bu\sim\cU}\left[
        |\la \mF_p(\bu),\mW\ra|^2
    \right]
    &=
    \E_{\bu\sim\cU}\left[
        \Real(Z(\bu))^2+\mathrm{Im}(Z(\bu))^2
    \right]=
    2\E_{\bu\sim\cU}\left[\Real(Z(\bu))^2\right] \leq
    2^{O(p)}\|\mW\|_F^2.
\end{align*}

For the fourth moment, we apply
\Cref{claim:complex-gaussian-tensor-momemt-bounds}.
By the definition of 
$\mF_p(\bu)=\bu^{\ot p}/\sqrt{p!}$, for every real tensor
$\mathbf T$ of order $p$, we have
\begin{align*}
    \E_{\bu\sim\cU}\left[
        |\la \Real(\mF_p(\bu)),\mathbf T\ra|^4
    \right]
    &=
    \frac{1}{(p!)^2}
    \E_{\bu\sim\cU}\left[
        |\la \Real(\bu^{\ot p}),\mathbf T\ra|^4
    \right] \leq
    2^{O(p)}\|\mathbf T\|_F^4.
\end{align*}
Since $\Real(\mF_p(\bu))$ and $\mathrm{Im}(\mF_p(\bu))$
have the same distribution, the same bound holds with
$\Real(\mF_p(\bu))$ replaced by $\mathrm{Im}(\mF_p(\bu))$.
Recalling that $\Real(Z(\bu))
    =
    \la \Real(\mF_p(\bu)),\Real(\mW)\ra
    +
    \la \mathrm{Im}(\mF_p(\bu)),\mathrm{Im}(\mW)\ra,$
and using $(a+b)^4\leq 8(a^4+b^4)$, we obtain
\begin{align*}
    \E_{\bu\sim\cU}[\Real(Z(\bu))^4]
    &\leq
    8\E_{\bu\sim\cU}[
        \la \Real(\mF_p(\bu)),\Real(\mW)\ra^4
    ] +
    8\E_{\bu\sim\cU}[
        \la \mathrm{Im}(\mF_p(\bu)),
            \mathrm{Im}(\mW)\ra^4
    ]\\
    &\leq
    2^{O(p)}
    \left(
        \|\Real(\mW)\|_F^4+
        \|\mathrm{Im}(\mW)\|_F^4
    \right) \leq
    2^{O(p)}\|\mW\|_F^4,
\end{align*}
where the last inequality uses
\[
    \|\Real(\mW)\|_F^4+\|\mathrm{Im}(\mW)\|_F^4
    \leq
    \left(
        \|\Real(\mW)\|_F^2+\|\mathrm{Im}(\mW)\|_F^2
    \right)^2
    =
    \|\mW\|_F^4.
\]
Finally, using $(a^2+b^2)^2\leq 2a^4+2b^4$ and the fact
that $\Real(Z(\bu))$ and $\mathrm{Im}(Z(\bu))$ have the
same distribution, we conclude that
\begin{align}\label{eq:4th-moment-b}
    \E_{\bu\sim\cU}[
        |\la \mF_p(\bu),\mW\ra|^4
    ]
    &=
    \E_{\bu\sim\cU}[
        (
            \Real(Z(\bu))^2+\mathrm{Im}(Z(\bu))^2
        )^2
    ]\leq
    2\E_{\bu\sim\cU}[
        \Real(Z(\bu))^4+\mathrm{Im}(Z(\bu))^4
    ] \nonumber\\
    &=
    4\E_{\bu\sim\cU}[\Real(Z(\bu))^4]\leq
    2^{O(p)}\|\mW\|_F^4.
\end{align}
}

        For the second part of the statement, note that 
        since $\wh{\Phi}_p$ is a map with real entries, we have $\wh{\Phi}_p(\mF_p(\bar{\bu})) = \overline{\wh{\Phi}_p(\mF_p(\bu))}$, therefore, it holds that 
        \begin{align*}
            \E_{\bu\sim\cU}[\|\wh{\Phi}_p(\mF_p(\bu))\otimes \wh{\Phi}_p(\mF_p(\bar{\bu}))\|_F^2] &\leq \E_{\bu\sim\cU}[\|\wh{\Phi}_p(\mF_p(\bu))\|_F^2\| \wh{\Phi}_p(\mF_p(\bar{\bu}))\|_F^2] \\
            &= \E_{\bu\sim\cU}[\|\wh{\Phi}_p(\mF_p(\bu))\|_F^2\| \overline{\wh{\Phi}_p(\mF_p(\bu))}\|_F^2] = \E_{\bu\sim\cU}[\|\wh{\Phi}_p(\mF_p(\bu))\|_F^4].
        \end{align*}
               
Recall that $\wh{\Phi}_p$ maps $\mF_p(\bu)$ to a vector in $\mathbb{C}^k$, thus we have $\|\wh{\Phi}_p(\mF_p(\bu))\|_2^2 = \sum_{i=1}^k |\la\wh{\Phi}_p(\mF_p(\bu)), \e_i\ra|^2$. Furthermore, by Jensen's inequality, it holds:
        \begin{align*}
            \E_{\bu}[\|\wh{\Phi}_p(\mF_p(\bu))\|_2^4 ]=\E_{\bu}\bigg[ \bigg( \sum_{i=1}^k |\la\wh{\Phi}_p(\mF_p(\bu)), \e_i\ra|^2\bigg)^2\bigg]&\leq \E_{\bu}\bigg[k\sum_{i=1}^k |\la\wh{\Phi}_p(\mF_p(\bu)), \e_i\ra|^4\bigg]\\
            &= k\sum_{i=1}^k \E_{\bu}[|\la\mF_p(\bu), \wh{\Phi}_p^\top(\e_i)\ra|^4]\leq k^2 2^{O(p)}\;,
        \end{align*}
        where for the last inequality we use that $\wh{\Phi}_p \wh{\Phi}_p^{\top}=\vec I_k$ (and therefore $\|\wh{\Phi}_p^{\top}\vec e_i\|_F=1$) and \eqref{eq:4th-moment-b}.
\end{proof}

For completeness, in the following lemma, we demonstrate the crucial difference between complex and real Gaussian tensors in terms of their covariance bounds, which justifies the use of complex Gaussians in our approach. 

\begin{lemma}[Real vs Complex Gaussian Tensor Powers]
\label{lem:real-complex-gaussian-comparison}
Let $d,p\ge1$ be integers, let
$\bp,\bq\sim \calN(0,\mI_d)$, and put
$\bu=\bp+\im\bq$. Define
\[
 \mF_p^{\mathrm R}(\bp)
 =\frac{\bp^{\otimes p}}{\sqrt{(2p-1)!!}},
 \qquad
 \mF_p^{\mathrm C}(\bu)
 =\frac{\bu^{\otimes p}}{\sqrt{2^pp!}}=\mF_p(\bu)/\sqrt{2^p}.
\]
Both normalizations give, for every real unit vector $\bw$,
\[
 \E\big[|\langle\mF_p^{\mathrm R}(\bp),\bw^{\otimes p}\rangle|^2\big]
 =\E[|\langle\mF_p^{\mathrm C}(\bu),\bw^{\otimes p}\rangle|^2]
 =1.
\]
Their uncentered second moments in general tensor directions
differ as follows.
\begin{enumerate}
 \item For every complex order-$p$ tensor $\mR$,
 \begin{equation*}
  \E[|\langle\mF_p(\bu),\mR\rangle|^2]
  =\|\mathrm{Sym}(\mR)\|_F^2
  \le\|\mR\|_F^2.
 \end{equation*}
\item  For every even $p\ge2$, the real tensor
 $\mR_p=(\mI_d/\sqrt d)^{\otimes p/2}$ has $\|\mR_p\|_F=1$ and
 \begin{equation*}
  \E[\big|\langle\mF_p^{\mathrm R}(\bp),\mR_p\rangle\big|^2]
  =\frac{\prod_{j=0}^{p-1}(d+2j)}
         {d^{p/2}(2p-1)!!}
  \ge\frac{d^{p/2}}{(2p)^p}.
 \end{equation*}
\end{enumerate}
\end{lemma}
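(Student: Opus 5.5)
The plan is to compute all the relevant second moments directly, using Wick's formula for real Gaussian vectors and its complex analogue for $\bu=\bp+\im\bq$.

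\textbf{The key complex moment identity.} The basic fact is
\[
\E[u_a\bar u_b]=2\delta_{ab},\qquad \E[u_au_b]=\E[\bar u_a\bar u_b]=0 .
\]
By the complex Wick (Isserlis) theorem, this gives
\[
\E\big[u_{a_1}\cdots u_{a_p}\,\bar u_{b_1}\cdots\bar u_{b_p}\big]=2^p\sum_{\pi\in S_p}\prod_j\delta_{a_j b_{\pi(j)}} .
\]
Only pairings of $u$'s with $\bar u$'s survive.

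\textbf{Normalization check.} For a real unit $\bw$, set $\bw\cdot\bu=X_1+\im X_2$. Then $|\bw\cdot\bu|^2\sim\chi^2_2$, which gives $\E|\bw\cdot\bu|^{2p}=2^pp!$. Also $\E(\bw\cdot\bp)^{2p}=(2p-1)!!$. These two facts give both normalizations immediately.

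\textbf{Part 1.} Write
\[
|\la\mF_p(\bu),\mR\ra|^2=\frac{1}{p!}\sum_{\alpha,\beta}\mR_\alpha\overline{\mR_\beta}\,\overline{u_\alpha}\,u_\beta .
\]
The conjugation convention of the inner product only swaps which factor carries the bar. Applying the identity above, the expectation equals
\[
\frac{2^p}{p!}\sum_{\pi}\sum_\alpha \mR_\alpha\overline{\mR_{\pi(\alpha)}}=2^p\,\la\mR,\mathrm{Sym}(\mR)\ra=2^p\|\mathrm{Sym}(\mR)\|_F^2 .
\]
Here I use that $\mathrm{Sym}$ is an orthogonal projection, so $\la\mR,\mathrm{Sym}\mR\ra=\|\mathrm{Sym}\mR\|_F^2\le\|\mR\|_F^2$.

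This produces a factor $2^p$ relative to the displayed statement. So the clean identity $\E|\la\mF_p(\bu),\mR\ra|^2=\|\mathrm{Sym}\mR\|_F^2$ holds exactly for the $2^p$-normalized tensor $\mF_p^{\mathrm C}$. For $\mF_p$ it holds up to the $2^p$ factor; this is consistent with the $2^{O(p)}$ bounds of Lemma~\ref{lem:moment-bound-query-tensor-after-projection}. I would write the proof for $\mF^{\mathrm C}_p$ and note the conversion. Pinning down this constant and the conjugation convention is the main bookkeeping obstacle.

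\textbf{Part 2.} First,
\[
\la\bp^{\ot p},(\mI_d/\sqrt d)^{\ot p/2}\ra=d^{-p/4}\|\bp\|_2^{p}.
\]
Hence the second moment is
\[
\frac{\E\|\bp\|_2^{2p}}{d^{p/2}(2p-1)!!},
\]
and $\E\|\bp\|^{2p}=\E(\chi^2_d)^p=\prod_{j=0}^{p-1}(d+2j)$ is the standard chi-square moment. For the lower bound I would use
\[
\prod_{j}(d+2j)\ge d^p\quad\text{and}\quad (2p-1)!!\le(2p)^p ,
\]
which give $d^{p/2}/(2p)^p$. Also $\|\mR_p\|_F^2=(\|\mI_d\|_F^2/d)^{p/2}=1$.
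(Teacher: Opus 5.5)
Your Part~2 argument is the same as the paper's: contract to $d^{-p/4}\|\bp\|_2^{p}$, use $\E\|\bp\|_2^{2p}=\prod_{j=0}^{p-1}(d+2j)$, and finish with $d+2j\ge d$ and $(2p-1)!!\le(2p)^p$. The paper derives the moment formula by integrating by parts against the radial density $r^{d-1}e^{-r^2/2}$ rather than quoting the $\chi^2_d$ moment.

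The paper's proof stops there. It gives no argument for the two normalization identities or for Part~1. Your $\chi^2_2$ moment computation supplies the normalization identities. Your complex Wick computation supplies Part~1; it uses $\E[u_a\bar u_b]=2\delta_{ab}$, $\E[u_au_b]=0$, and the fact that $\mathrm{Sym}$ is an orthogonal projection.

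You are also right about the constant. The computation gives $\E|\la\mF_p(\bu),\mR\ra|^2=2^p\|\mathrm{Sym}(\mR)\|_F^2$. Already for $p=1$ and $\mR=\e_1$ one gets $\E|u_1|^2=2\neq 1$. So the displayed identity in Part~1 holds for $\mF_p^{\mathrm{C}}$, not for $\mF_p$. This matches the normalization line, since $\|\mathrm{Sym}(\bw^{\ot p})\|_F=1$. The extra $2^p$ is harmless for the $2^{O(p)}$ bounds used in \Cref{lem:moment-bound-query-tensor-after-projection}. Proving the identity for $\mF_p^{\mathrm{C}}$ and noting the conversion, as you propose, is the right fix to what is a typo in the statement rather than a gap in your argument.
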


\begin{proof}

Let $p$ be even. Multiplicativity of the Frobenius norm
under tensor products gives $\|\mR_p\|_F=1$, and direct contraction
gives
\[
 \langle\bp^{\otimes p},\mR_p\rangle
 =\frac{\|\bp\|_2^p}{d^{p/4}}.
\]
The radial Gaussian density is proportional to
$r^{d-1}e^{-r^2/2}$. Integration by parts gives
$\E\|\bp\|_2^{2s}=(d+2s-2)\E\|\bp\|_2^{2s-2}$ for $s\ge1$,
so $\E\|\bp\|_2^{2p}=\prod_{j=0}^{p-1}(d+2j)$.
Squaring the contraction and dividing by $(2p-1)!!$ proves
the equality.
Its lower bound follows from $d+2j\ge d$ and
$(2p-1)!!\le(2p)^p$.
\end{proof}

We are now ready to prove the main result of this subsection, 
\Cref{prop:estimate-matrix-via-implicit-tensor}.

\MatrixEstImplicitTensor*

\begin{proof}[Proof of \Cref{prop:estimate-matrix-via-implicit-tensor}]
Let $\D(\cI)$ be the distribution of $\cI = (\x,y,\bz)$ where $(\x,y)\sim\D$ and $\bz\sim\calN(\vec 0,\mI_d)$.
By \Cref{thm:implict tensor adapted}, the pseudo-projection $\wh{\Phi}_p$ returned by \nameref{alg:implicit-tensor-computation-adapted} (see \Cref{line:obtain-recursive pseudo-projection} in \nameref{alg:implicit-tensor-approximate-matrix}) satisfies: with probability $1 - \tau_0$ over the samples drawn in \nameref{alg:implicit-tensor-computation-adapted}, the following event holds
\begin{align*}
            &\evt_0 := \bigg\{\text{$\wh{\Phi}_p$ satisfies }\|\wh{\Phi}_p^\top\wh{\Phi}_p(\mM_p) - \mM_p\|_F\leq kp\sqrt{\zeta},\,\, \|\wh{\Phi}_p^\top\wh{\Phi}_p({\w}^{\otimes p}_i) - \w_i^{\otimes p}\|_F\leq  p\sqrt{\frac{\zeta}{\beta_i}}, \forall i\in[k]\bigg\}.
        \end{align*}
        where $\zeta = O(pdkV_\sfS/\sqrt{N_0\tau_0})$.
        By \Cref{lem:tensor-variance-S}, we have that $V_\sfS\leq k^2 2^{O(p)}/\wh{\sigma}_{2p+2}^2$.
    From now on, we condition on this event $\evt_0$.

In \nameref{alg:CompressEmpiricalMomentTensor}, we draw $N_0$ new samples from $\D(\cI)$ and construct a new empirical distribution, which we denote by $\wh{\D}_{\mathrm{new}}(\cI)$. 
Then, using the pseudo-projection $\wh{\Phi}_p$ returned from \nameref{alg:implicit-tensor-computation-adapted}, \nameref{alg:CompressEmpiricalMomentTensor} constructs 
\begin{align*}
    \mA_p = (\wh{\Phi}_p\ot\wh{\Phi}_p\ot\mI_d\ot\mI_d)\wh{\mS}_{2p+2},\quad \wh{\mS}_{2p+2} = \E_{\cI\sim\wh{\D}_{\mathrm{new}}(\cI)}[\mS_{2p+2}(\cI)].
\end{align*}
Given a probing vector $\bu$ from $\wh{\cU}$ (of $N_\bu$ probing vectors in total), \nameref{alg:CompressFilterTensor} generates:
\begin{align*}
    \mB_p(\bu) = \wh{\Phi}_p(\mF_p(\bu))\ot\wh{\Phi}_p(\mF_p(\bar{\bu})).
\end{align*}
Then in \Cref{line:matrix-estimation-after-pseudo-projection}, we compute the empirical estimation \eqref{eq:intuition-implict-3}
    \begin{equation*}
        \wh{\mM}_{2p+2}(\bu) = \la\mA_p,\mB_p(\bu)\ra = \la(\wh{\Phi}_p\ot\wh{\Phi}_p\ot\mI_d\ot\mI_d)\wh{\mS}_{2p+2}, \wh{\Phi}_p(\mF_p(\bu))\ot\wh{\Phi}_p(\mF_p(\bar{\bu}))\ra.
    \end{equation*}
Let $a,b\in[d]$, and define
    \begin{align*}
        (\wh{\mM}_{2p+2}(\bu))_{ab} &=\bigg\la(\wh{\Phi}_p\ot\wh{\Phi}_p\ot\mI_d\ot\mI_d)(\wh{\mS}_{2p+2}), \wh{\Phi}_p(\mF_p(\bu))\ot\wh{\Phi}_p(\mF_p(\bar{\bu}))\ot\e_a\ot\e_b\bigg\ra\;, \end{align*}
    which is the $(a,b)$-entry of the empirical matrix.

    \begin{lemma}\label{lem:implicit-tensor-empirical-last-two-slices-close}
        Conditioned on $\evt_0$, with probability at least $1 - 4\tau_0$ (over the fresh $N_{0}$ samples $\wh{\cD}_{\mathrm{new}}(\cI)$ and $\bu$), for any indices $a,b\in[d]$, it holds:
    \begin{align*}
        |(\wh{\mM}_{2p+2}(\bu))_{ab} - (\mM_{2p+2}(\bu))_{ab}|\leq \frac{d k^3 p^2 2^{O(p)}}{|\wh{\sigma}_{2p+2}| \tau_0^{3/2} N_0^{1/4}}. \end{align*}
    \end{lemma}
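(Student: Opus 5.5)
We will prove the lemma by splitting the entrywise error into a term driven by the empirical fluctuation of the moment tensor and a term driven by the bias of the pseudo-projection. Each term is then controlled via second/fourth-moment bounds and Chebyshev/Markov. Write $\mT\eqdef\E_{\cI\sim\D(\cI)}[\mS_{2p+2}(\cI)]=\sum_i\beta_i\w_i^{\ot(2p+2)}$, $\Pi\eqdef\wh{\Phi}_p^\top\wh{\Phi}_p$, $\mG(\bu)\eqdef\mF_p(\bu)\ot\mF_p(\bar\bu)\ot\e_a\ot\e_b$, and let $\Pi\ot\Pi\ot\mI_d\ot\mI_d$ act on the first $2p$ slots. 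By \Cref{fact:pseudo-projection-tensor-facts}(3),
\begin{align*}
(\wh{\mM}_{2p+2}(\bu))_{ab}-(\mM_{2p+2}(\bu))_{ab}
&=\la\wh{\mS}_{2p+2}-\mT,\ (\Pi\ot\Pi\ot\mI_d\ot\mI_d)\mG(\bu)\ra\\
&\quad+\la(\Pi\ot\Pi\ot\mI_d\ot\mI_d)\mT-\mT,\ \mG(\bu)\ra
=:\Delta_1+\Delta_2 .
\end{align*}

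\textbf{Term $\Delta_1$ (sampling error).} Conditioned on $\evt_0$, on $\bu$, and on $\wh{\Phi}_p$, the fresh samples are independent of the test tensor $\mW\eqdef(\Pi\ot\Pi\ot\mI\ot\mI)\mG(\bu)$. Hence $\E|\Delta_1|^2\le V_{\sfS}\|\mW\|_F^2/N_0$, using the covariance bound $V_\sfS\le k^22^{O(p)}/\wh{\sigma}_{2p+2}^2$ of \Cref{lem:tensor-variance-S}. Since $\Pi$ is self-adjoint and $\|\Pi\mA\|_F=\|\wh{\Phi}_p\mA\|_F$ by \Cref{fact:pseudo-projection-tensor-facts}, we have $\|\mW\|_F=\|\wh{\Phi}_p(\mF_p(\bu))\|_2^2$. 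By \Cref{lem:moment-bound-query-tensor-after-projection}, $\E_\bu\|\mW\|_F^2\le k^22^{O(p)}$, so Markov's inequality gives $\|\mW\|_F^2\le k^22^{O(p)}/\tau_0$ with probability $1-\tau_0$. Chebyshev's inequality then gives $|\Delta_1|\lesssim k^22^{O(p)}/(|\wh{\sigma}_{2p+2}|\tau_0\sqrt{N_0})$ with probability $1-2\tau_0$. To make this hold for all $a,b$ simultaneously, we instead bound the whole $d\times d$ slice in Frobenius norm, $\sum_{a,b}|\Delta_1^{ab}|^2$. This costs at most a factor $d$ and stays within the claimed bound.

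\textbf{Term $\Delta_2$ (projection bias).} Expand $\mT$ and write $\mE_i\eqdef\Pi\w_i^{\ot p}-\w_i^{\ot p}$. Then
\[
(\Pi\ot\Pi)\w_i^{\ot 2p}-\w_i^{\ot 2p}=\mE_i\ot\Pi\w_i^{\ot p}+\w_i^{\ot p}\ot\mE_i .
\]
Pairing against $\mG(\bu)$, the $i$-th contribution is at most $\beta_i\bigl(|\la\mE_i,\mF_p(\bu)\ra||\la\Pi\w_i^{\ot p},\mF_p(\bar\bu)\ra|+|\la\w_i^{\ot p},\mF_p(\bu)\ra||\la\mE_i,\mF_p(\bar\bu)\ra|\bigr)$, times $|w_{ia}w_{ib}|\le1$. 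On $\evt_0$ we have $\|\mE_i\|_F\le p\sqrt{\zeta/\beta_i}$. Cauchy--Schwarz over $\bu$ together with the second-moment bound of \Cref{lem:moment-bound-query-tensor-after-projection}, applied to both $\mE_i$ and $\w_i^{\ot p}$ (the latter has norm $\le1$ after projection), gives
\[
\E_\bu|\Delta_2|\le\sum_i\beta_i\,2^{O(p)}\,p\sqrt{\zeta/\beta_i}\le k\,p\,2^{O(p)}\sqrt{\zeta}.
\]
Markov over $\bu$ gives $|\Delta_2|\le kp2^{O(p)}\sqrt{\zeta}/\tau_0$ with probability $1-\tau_0$. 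This bound is uniform in $a,b$, since only $|w_{ia}w_{ib}|\le1$ depends on the indices.

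\textbf{Combining the two terms.} Plug in $\zeta=O(pdkV_\sfS/\sqrt{N_0\tau_0})$ with $V_\sfS=k^22^{O(p)}/\wh{\sigma}_{2p+2}^2$. This gives $\sqrt\zeta\lesssim\sqrt{pd}\,k^{3/2}2^{O(p)}/(|\wh{\sigma}_{2p+2}|\tau_0^{1/4}N_0^{1/4})$, so
\[
|\Delta_2|\lesssim\frac{dk^3p^2\,2^{O(p)}}{|\wh{\sigma}_{2p+2}|\,\tau_0^{3/2}N_0^{1/4}} .
\]
The $\Delta_1$ bound decays like $N_0^{-1/2}$ and is therefore dominated. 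A union bound over the failure events (at most $4\tau_0$) finishes the proof.

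The main obstacle is $\Delta_2$. The bias of $\Pi$ is only controlled in Frobenius norm along the heavy directions, and it must be paired with a random complex filter whose norm is not bounded deterministically. This is why we need the moment bounds for $\mF_p(\bu)$ from \Cref{lem:moment-bound-query-tensor-after-projection}, rather than operator-norm arguments.
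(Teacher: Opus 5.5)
Your proposal is correct and uses the paper's decomposition and ingredients. Your $\Delta_1$ is the paper's $I_1$ after moving $\wh{\Phi}_p^\top\wh{\Phi}_p$ across the inner product, your $\Delta_2$ is $I_3-I_4$ (with $I_2=0$ implicit), and both rest on \Cref{lem:tensor-variance-S}, \Cref{lem:moment-bound-query-tensor-after-projection} and the event $\evt_0$. The only differences are in the concentration bookkeeping. For the sampling term you condition on $\bu$ and test against the single tensor $(\Pi\ot\Pi\ot\mI_d\ot\mI_d)\mathbf{G}(\bu)$. For the bias term you use a linear expansion and a first-moment Markov bound, which needs only second moments of $\la\cdot,\mF_p(\bu)\ra$, instead of the paper's inequality $||x|^2-|y|^2|\le|x-y|(|x|+|y|)$ with fourth moments. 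Both of your bounds fit within the stated one.
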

   
\begin{proof}
\newcommand{\abbP}{\Psi_{2p}}
\newcommand{\abbF}{\mB_p(\bu)} For this proof, we use the following notation:
\[
\abbP=\wh{\Phi}_p\ot\wh{\Phi}_p\ot\mI_d\ot\mI_d\qquad \abbF=\wh{\Phi}_p(\mF_p(\bu))\ot\wh{\Phi}_p(\mF_p(\bar{\bu}))\qquad 
\]

    First, note that the difference between $(\wh{\mM}_{2p+2}(\bu))_{ab}$ and $(\mM_{2p+2}(\bu))_{ab}$ can be decomposed as:
    \begin{align}\label{eq:implicit-tensor-matrix-estimte-error-1}
&\quad (\wh{\mM}_{2p+2}(\bu))_{ab} - (\mM_{2p+2}(\bu))_{ab} \nonumber\\
        &=\bigg\la\abbP(\wh{\mS}_{2p+2}), \abbF\ot\e_a\ot\e_b\bigg\ra \nonumber - \bigg\la\sum_{i=1}\beta_i\w_i^{\ot (2p+2)}, \mF_p(\bu)\ot\mF_p(\bar{\bu})\ot\e_a\ot\e_b\bigg\ra \nonumber\\
        &=\underbrace{\bigg\la\abbP\bigg(\wh{\mS}_{2p+2} - \E_{\cI\sim\D(\cI)}[\mS_{2p+2}(\cI)]\bigg), \abbF\ot\e_a\ot\e_b\bigg\ra}_{I_1}\nonumber\\
        &\quad\quad +\underbrace{\bigg\la\abbP\bigg(\E_{\cI\sim\D(\cI)}[\mS_{2p+2}(\cI)] - \sum_{i=1}\beta_i\w_i^{\ot (2p+2)}\bigg), \abbF\ot\e_a\ot\e_b\bigg\ra}_{I_2}\nonumber\\
        &\quad\quad + \underbrace{\bigg\la\abbP\bigg(\sum_{i=1}\beta_i\w_i^{\ot (2p+2)}\bigg), \abbF\ot\e_a\ot\e_b\bigg\ra}_{I_3}  - \underbrace{\bigg\la\sum_{i=1}\beta_i\w_i^{\ot (2p+2)}, \mF_p(\bu)\ot\mF_p(\bar{\bu})\ot\e_a\ot\e_b\bigg\ra}_{I_4}.
    \end{align}
    
    We bound each term respectively.

    \begin{claim}
        With probability at least $1 - \tau_\bu$ over the probing vector $\bu$, and probability at least $1 - \tau_{2p+2}$ over the samples from $\wh{\cD}_{\mathrm{new}}(\cI)$, it holds:
    \begin{align*}
        |I_1|\leq \frac{d k^2 2^{O(p)}}{|\wh{\sigma}_{2p+2}|\sqrt{\tau_{2p+2}\tau_\bu N_{0}}}. \end{align*}
    \end{claim}
    \begin{proof}
Let $\vec W=\wh{\mS}_{2p+2} - \E_{\cI\sim\D(\cI)}[\mS_{2p+2}(\cI)]$.
Note that $\abbP(\mW)\in\R^{k\times k\times d \times d},$ we have 
\begin{align*}
    \E[\|\abbP(\mW)\|_F^2] &= \sum_{i_1,i_2\in[k], i_3,i_4\in[d]}\E\bigg[\bigg|\bigg\la \mW,\, \abbP^\top(\e_{i_1}\ot\e_{i_2}\ot\e_{i_3}\ot\e_{i_4})\bigg\ra\bigg|^2\bigg].
\end{align*}
Note that by \Cref{pseudo-prop-frob-equl} from \Cref{fact:pseudo-projection-tensor-facts}, it holds $\|\abbP^\top(\e_{i_1}\ot\e_{i_2}\ot\e_{i_3}\ot\e_{i_4})\|_F = 1$.
Thus, by \Cref{lem:tensor-variance-S}, for any $i_1,i_2\in[k], i_3,i_4\in[d]$, we have:
\begin{align*}
    &\quad \E\bigg[\bigg|\bigg\la \mW, \;\abbP^\top(\e_{i_1}\ot\e_{i_2}\ot\e_{i_3}\ot\e_{i_4})\bigg\ra\bigg|^2\bigg]\\
    &\leq \frac{1}{N_{0}}\E_{\cI\sim\D(\cI)}\bigg[\bigg|\bigg\la \mS_{2p+2}(\cI) - \E_{\cI\sim\D(\cI)}[\mS_{2p+2}(\cI)], \;\abbP^\top(\e_{i_1}\ot\e_{i_2}\ot\e_{i_3}\ot\e_{i_4})\bigg\ra\bigg|^2\bigg]\\
    &\leq\frac{V_{\sfS}}{N_{0}}\leq \frac{k^2 2^{O(p)}}{\wh{\sigma}_{2p+2}^2 N_{0}}. \end{align*}
This implies $\E[\|\abbP(\mW)\|_F^2]\leq d^2 k^4 2^{O(p)}/(\wh{\sigma}_{2p+2}^2 N_{0})$.
Therefore, by Chebyshev's inequality, with probability at least $1 - \tau_{2p+2}$ over the samples $\wh{\D}_{\mathrm{new}}(\cI)$:
\begin{align*}
    \|\abbP(\mW)\|_F\leq \sqrt{\frac{k^4 d^2 2^{O(p)}}{\wh{\sigma}_{2p+2}^2 \tau_{2p+2} N_{0}}} =\frac{k^2 d  2^{O(p)}}{|\wh{\sigma}_{2p+2}|\sqrt{\tau_{2p+2}N_{0}}}. \end{align*}
Let us denote this event by $\evt_{\|\mW\|_F}$, and condition on this event:
\begin{align}\label{eq:prob-0}
    \pr_{\bu,\wh{\D}_{\mathrm{new}}(\cI)}[|I_1|\leq t]\geq \pr_{\bu,\wh{\D}_{\mathrm{new}}(\cI)}[|I_1|\leq t, \evt_{\|\mW\|_F}]\geq (1 - \tau_{2p+2})\bigg(1 - \pr_{\bu}[|I_1|\geq t \mid \evt_{\|\mW\|_F}]\bigg).
\end{align}
Using Chebyshev's inequality again with respect to the probing vector $\bu$, we have
\begin{align}\label{eq:prob-2}
    \pr_{\bu}[|I_1|\geq t \mid \evt_{\|\mW\|_F}] &=\pr_{\bu}[|\la\abbP(\mW),\, \abbF\ot\e_a\ot\e_b\ra|\geq t \mid \evt_{\|\mW\|_F}]\nonumber\\
    &\leq \frac{\E[|\la\abbP(\mW), \,\abbF\ot\e_a\ot\e_b\ra|^2 \mid \evt_{\|\mW\|_F}]}{t^2}
\end{align}
By \Cref{lem:moment-bound-query-tensor-after-projection}, it holds $\E[\|\abbF\ot\e_a\ot\e_b\|_F^2]\leq \E[\|\abbF\|_F^2]\leq k^2 2^{O(p)}$. Using the fact that $\la\mW,\abbF\ra\leq\|\mW\|_F\|\abbF\|_F$, we obtain
\begin{align}\label{eq:prob-3}
    &\quad \E\bigg[\bigg|\bigg\la\abbP(\mW), \,\abbF\ot\e_a\ot\e_b\bigg\ra\bigg|^2 \,\bigg|\, \evt_{\|\mW\|_F}\bigg]
    \nonumber\\
    &\leq\E\bigg[\bigg\|\abbP(\mW)\bigg\|_F^2\,\bigg\|\abbF\ot\e_a\ot\e_b\bigg\|_F^2 \,\bigg|\,\evt_{\|\mW\|_F}\bigg] \nonumber\\
    &\leq \frac{d^2 k^4 2^{O(p)}}{\wh{\sigma}_{2p+2}^2\tau_{2p+2} N_0}\E\bigg[\bigg\|\abbF\ot\e_a\ot\e_b\bigg\|_F^2 \bigg]\leq \frac{d^2 k^6  2^{O(p)}}{\wh{\sigma}_{2p+2}^2 \tau_{2p+2} N_{0}}.
\end{align}
Therefore, combining \eqref{eq:prob-3} and \eqref{eq:prob-2} and plugging back into \eqref{eq:prob-0}, we obtain that with probability at least $(1 - \tau_{2p+2})(1 - \tau_\bu)\geq 1 - \tau_\bu-\tau_{2p+2}$, it holds
\begin{align*}
    |I_1|\leq \frac{d k^3 2^{O(p)}}{|\wh{\sigma}_{2p+2}|\sqrt{\tau_{2p+2}\tau_\bu N_{0}}}.
\end{align*}
\end{proof}

For the term $I_2$, since $\E_{\cI\sim\D(\cI)}[\mS_{2p+2}(\cI)] = \sum_{i=1}\beta_i\w_i^{\ot (2p+2)}$ by \Cref{lem:tensor-variance-S}, we have  $I_2 = 0$.

    It remains to bound the last two terms in \eqref{eq:implicit-tensor-matrix-estimte-error-1}.
    \begin{claim}
        With probability at least $1 - \tau_\bu$ we have:
    \begin{align*}
        |I_3-I_4|\leq \sqrt{\frac{d k^5 p^3 2^{O(p)}}{\wh{\sigma}_{2p+2}^2 \tau_\bu\sqrt{N_0 \tau_0}}}. \end{align*}
    \end{claim}
    \begin{proof}
    First, recall that by the property of pseudo-projectors we know that $\wh{\Phi}_p$ is a composition of a isometry $\mR_p$ and a projection operator $\mP_p$ that maps $(\R^d)^{\ot p}$ to a subspace $\sU_p$: $\wh{\Phi}_p = \mR_p\mP_p$, therefore, for two tensors $\mT$ and $\mW$, it holds $\la\wh{\Phi}_p(\mT),\wh{\Phi}_p(\mW)\ra = \la\mR_p(\mP_p(\mT)), \mR_p(\mP_p(\mW))\ra = \la \mP_p^\top\mP_p(\mT), \mW\ra$ since $\mR_p^\top\mR_p = \mI_k$ as $\mR_p$ is an isometry.
    
    Note that the linear operator $\abbP$ acting on $\sum_{i=1}\beta_i\w_i^{\ot (2p+2)}$ is simply: 
    \begin{align*}
        (\wh{\Phi}_p\ot\wh{\Phi}_p\ot\mI_d\ot\mI_d)\bigg(\sum_{i=1}\beta_i\w_i^{\ot (2p+2)}\bigg) = \sum_{i=1}^k \beta_i\wh{\Phi}_p(\w_i^{\ot p})\ot\wh{\Phi}_p(\w_i^{\ot p})\ot\w_i\ot\w_i,
    \end{align*}
    and its tensor product with $\abbF\ot\e_a\ot\e_b$ is 
    \begin{align*}
        I_3&=\bigg\la(\wh{\Phi}_p\ot\wh{\Phi}_p\ot\mI_d\ot\mI_d)\bigg(\sum_{i=1}\beta_i\w_i^{\ot (2p+2)}\bigg),\abbF\ot\e_a\ot\e_b\bigg\ra \\
        &= \sum_{i=1}^k \beta_i \bigg\la \wh{\Phi}_p(\w_i^{\ot p})\ot\wh{\Phi}_p(\w_i^{\ot p})\ot\w_i\ot\w_i, \wh{\Phi}_p(\mF_p(\bu))\ot\wh{\Phi}_p(\mF_p(\bar{\bu}))\ot\e_a\ot\e_b\bigg\ra\\
        &=\sum_{i=1}^k \beta_i \bigg\la\wh{\Phi}_p(\w_i^{\ot p}), \wh{\Phi}_p(\mF_p(\bu))\bigg\ra\,\bigg\la \wh{\Phi}_p(\w_i^{\ot p}), \wh{\Phi}_p(\mF_p(\bar{\bu}))\bigg\ra \,(\w_i)_a(\w_i)_b\\
        &=\sum_{i=1}^k \beta_i \bigg\la\wh{\Phi}_p^\top\wh{\Phi}_p(\w_i^{\ot p}), \mF_p(\bu)\bigg\ra\,\bigg\la \wh{\Phi}_p^\top\wh{\Phi}_p(\w_i^{\ot p}), \mF_p(\bar{\bu})\bigg\ra \,(\w_i)_a(\w_i)_b
        \\&=\sum_{i=1}^k \beta_i \bigg|\la \wh{\Phi}_p^\top\wh{\Phi}_p(\w_i^{\ot p}), \mF_p(\bu)\ra\bigg|^2 \,(\w_i)_a(\w_i)_b \;,
    \end{align*}
   where we used that $\mF_p(\bar{\bu})=\overline{\mF_p(\bu)}$ and $(\w_i)_a$ is the $a$-th entry of the vector $\w_i$.  On the other hand, we have 
    $$I_4 = \bigg\la\sum_{i=1}\beta_i\w_i^{\ot (2p+2)}, \mF_p(\bu)\ot\mF_p(\bar{\bu})\ot\e_a\ot\e_b\bigg\ra = \sum_{i=1}^k \beta_i \Big|\la\w_i^{\ot p}, \mF_p(\bu)\ra\Big|^2(\w_i)_a(\w_i)_b.$$
    Using the complex inequality $||x|^2-|y|^2|\leq |x-y|(|x|+|y|)$, the difference between these terms equals:
    \begin{align*}
        |I_3 - I_4| &\leq \sum_{i=1}^k \beta_i \big|\la \wh{\Phi}_p^\top\wh{\Phi}_p(\w_i^{\ot p})-\w_i^{\ot p}, \mF_p(\bu)\ra\big| \Big(\big|\la \wh{\Phi}_p^\top\wh{\Phi}_p(\w_i^{\ot p}),\mF_p(\bu)\ra\big|+\big|\la\w_i^{\ot p},\mF_p(\bu) \ra\big| \Big) \,|(\w_i)_a||(\w_i)_b|\;.
    \end{align*}
    To bound the variance, we use \CS and the crude bound  $(X^2+Y^2)^2\leq 2(X^4+Y^4)$ to get 
    \begin{align*}
        \E_\bu[|I_3-I_4|^2]
        &\leq 2k\sum_{i=1}\beta_i^2\sqrt{\E_{\bu}\Big[\Big|\Big\la\wh{\Phi}_p^\top\wh{\Phi}_p(\w_i^{\ot p}) - \w_i^{\ot p}, \mF_p(\bu)\Big\ra\Big|^4\Big]\,\E_{\bu}\Big[\Big|\Big\la \wh{\Phi}_p^\top\wh{\Phi}_p(\w_i^{\ot p}), \mF_p(\bar{\bu})\Big\ra\Big|^4\Big]|}(\w_i)_a(\w_i)_b|^2\\
        &\quad \quad + 2k\sum_{i=1}^k \beta_i^2 \sqrt{\E_{\bu}\Big[\Big|\Big\la\w_i^{\ot p}, \mF_p(\bu)\Big\ra\Big|^4\Big]\,\E_{\bu}\Big[\Big|\Big\la \wh{\Phi}_p^\top\wh{\Phi}_p(\w_i^{\ot p}) - \w_i^{\ot p}, \mF_p(\bar{\bu})\Big\ra\Big|^4\Big]} \,|(\w_i)_a(\w_i)_b|^2.
    \end{align*}
    Using \Cref{lem:moment-bound-query-tensor-after-projection}, since $\|\wh{\Phi}_p^\top\wh{\Phi}_p(\w_i^{\ot p})\|_F\leq \|\w_i^{\ot p}\|_F = 1$ according to \Cref{fact:pseudo-projection-tensor-facts}, we further obtain:
    \begin{align*}
        \E_\bu[|I_3-I_4|^2]&\leq 2k\sum_{i=1}^k \beta_i^2 \bigg(\sqrt{2^{O(p)}\|\wh{\Phi}_p^\top\wh{\Phi}_p(\w_i^{\ot p}) - \w_i^{\ot p}\|_F^4\times 2^{O(p)}} \\
        &\quad \quad \quad + \sqrt{2^{O(p)}\times 2^{O(p)}\|\wh{\Phi}_p^\top\wh{\Phi}_p(\w_i^{\ot p}) - \w_i^{\ot p}\|_F^4}\bigg) |(\w_i)_a(\w_i)_b|^2\\
        &\leq 4k\sum_{i=1}^k 2^{O(p)} \beta_i^2   \|\wh{\Phi}_p^\top\wh{\Phi}_p(\w_i^{\ot p}) - \w_i^{\ot p}\|_F^2 .
    \end{align*}
    As we have conditioned on the event $\evt_0$,
we know that for every $i\in[k]$, it holds $\|\wh{\Phi}_p^\top\wh{\Phi}_p(\w_i^{\ot p}) - \w_i^{\ot p}\|_F\leq p\sqrt{\zeta/\beta_i}$, where $\zeta = O({pdkV_{\sfS}}/{\sqrt{N_0\tau_0}})$ and $V_{\sfS} \leq k^2 2^{O(p)}/\wh{\sigma}_{2p+2}^2$. 
Thus, as $\beta_i\in(0,1)$, we finally get that conditioned on $\evt_0$, it holds
\begin{align*}
         \E_\bu[|I_3-I_4|^2]&\leq 4k\sum_{i=1}^k 2^{O(p)} \beta_i^2 p^2\frac{\zeta}{\beta_i} \leq k^2 p^2 2^{O(p)}\zeta \lesssim \frac{d k^5 p^3 2^{O(p)}}{\wh{\sigma}_{2p+2}^2 \sqrt{N_0\tau_0}}. \end{align*}
    Then, applying Markov's inequality, with probability at least $1 - \tau_\bu$ we have:
    \begin{align*}
        |I_3-I_4|\leq \sqrt{\frac{d k^5 p^3 2^{O(p)}}{\wh{\sigma}_{2p+2}^2 \tau_\bu\sqrt{N_0 \tau_0}}} \;.
    \end{align*}
    \end{proof}

    In summary, after a union bound on these events, we have that 
with probability at least $1 - \tau_{2p+2}$ over the fresh samples from $\wh{\D}_{\mathrm{new}}$ and with probability at least $1 - 3\tau_\bu$ over the random probing vector $\bu$, we have
    \begin{align*}
        |(\wh{\mM}_{2p+2}(\bu))_{ab} - (\mM_{2p+2}(\bu))_{ab}|&\leq |I_1|+|I_2|+|I_3-I_4|\\
        &\leq \frac{d k^2 2^{O(p)}}{|\wh{\sigma}_{2p+2}|\sqrt{\tau_{2p+2}\tau_\bu N_{0}}}  +  \sqrt{\frac{d k^5 p^3 2^{O(p)}}{\wh{\sigma}_{2p+2}^2 \tau_\bu\sqrt{N_0 \tau_0}}}. \end{align*}
    After rearranging the terms, we have:
    \begin{align*}
        |(\wh{\mM}_{2p+2}(\bu))_{ab} - (\mM_{2p+2}(\bu))_{ab}|\lesssim \frac{d k^3 p^2 2^{O(p)}}{|\wh{\sigma}_{2p+2}|(\tau_{2p+2}\tau_\bu\tau_0)^{1/2}}\frac{1}{N_0^{1/4}},
    \end{align*}
    with probability at least $(1 - \tau_{2p+2})(1-3\tau_\bu)\geq 1 - 3\tau_\bu - \tau_{2p+2}$ (by the independence between $\D$ and $\cU$), conditioned on the event $\evt_0$. Letting $\tau_0 = \tau_{2p+2} = \tau_\bu$ completes the proof of \Cref{lem:implicit-tensor-empirical-last-two-slices-close}.
    \end{proof}

By \Cref{lem:implicit-tensor-empirical-last-two-slices-close}, using a union bound over all $d^2$ pairs of indices $a,b$, and recalling from \Cref{fct:hermite-relu} that $|\wh{\sigma}_{2p+2}|\gtrsim 1/p^3$, conditioned on the event $\evt_0$, we have that  with probability at least $1 - 4d^2\tau_0$ over the fresh samples $\wh{\D}_{\mathrm{new}}(\cI)$ and the probing vector $\bu$, it holds:
    \begin{align*}
        \|\wh{\mM}_{2p+2}(\bu) - \mM_{2p+2}(\bu)\|_F\leq \frac{d\,\poly(k,1/\tau_0)2^{O(p)}}{N_0^{1/4}}. \end{align*}
    Then, using another union bound over all $N_\bu$ probing vectors from $\wh{\cU}$, we have that with probability at least $1 - d^2 N_\bu \tau_0$ (conditioned on $\evt_0$), it holds 
    \begin{align}\label{eq:max-u-matrix-close}
        \max_{\bu\in\wh{\cU}}\|\wh{\mM}_{2p+2}(\bu) - \mM_{2p+2}(\bu)\|_F\leq {d^2\,\poly(k,p,1/\tau_0)2^{O(p)}}/{N_0^{1/4}}.
    \end{align}
    Recall that $\evt_0$ happens with probability at least $1 - \tau_0$, thus, \eqref{eq:max-u-matrix-close} holds with probability at least $1 - (d^2 N_\bu + 1)\tau_0$ over all $N = (p+1)N_0$ samples and the probing vectors set $\wh{\cU}$.
    Recall that we have chosen $\tau_0 \lesssim \delta/(d^2 N_\bu)$, $p \lesssim \log(k/\lambda)$, $N_\bu = \poly(k/\lambda)\log(1/\delta)$ in \nameref{alg:implicit-tensor-approximate-matrix}. Hence, with $N_0 = \poly(d,k,N_\bu/\delta) 2^{O(p)} = \poly(d,k/\lambda,1/\delta)$, \eqref{eq:max-u-matrix-close} yields that with probability at least $1 - \delta/2$, 
    \begin{equation*}
        \max_{\bu\in\wh{\cU}}\|\wh{\mM}_{2p+2}(\bu) - \mM_{2p+2}(\bu)\|_2\leq \max_{\bu\in\wh{\cU}}\|\wh{\mM}_{2p+2}(\bu) - \mM_{2p+2}(\bu)\|_F \leq\frac{d^2\,\poly(k,p,1/\tau_0)2^{O(p)}}{N_0^{1/4}}\leq\sqrt{\xi}\lambda,
    \end{equation*}
    completing the proof of \Cref{prop:estimate-matrix-via-implicit-tensor}. 
\end{proof}

\subsection{Proof of the Main Result of the Section: 
Polynomial Complexity $\init$}\label{sec:proof-poly-initialization}
We are ready to prove \Cref{prop:poly-time-initialization-via-implicit-tensor}, which is restated below.

\ImplicitTensorPolyTimeInitialization*

\begin{proof}[Proof of \Cref{prop:poly-time-initialization-via-implicit-tensor}]
\nameref{alg:implicit-tensor-approximate-matrix} uses a total of $(p+1)N_0 = \poly(d,k/\lambda,1/\delta)$ samples. 
According to \Cref{thm:implict tensor adapted}, 
line~\ref{line:obtain-recursive pseudo-projection} takes $\poly(d,N_0)$ time. 
Further, by \Cref{cor:efficiently-computing-tensor-computation}, 
lines~\ref{line:projected-monent} and~\ref{line:projected-filter} (algorithms \nameref{alg:CompressEmpiricalMomentTensor} and \nameref{alg:CompressFilterTensor}) take runtime only $\poly(d,k,p)$. Thus, we conclude that the runtime of \nameref{alg:implicit-tensor-approximate-matrix} is $\poly(dN_0)$.

It remains to argue the correctness of \nameref{alg:implicit-tensor-approximate-matrix}.
Recall that by \Cref{lem:angle-separation-random}, fixing any $i^*\in[k]$ such that $\beta_{i^*}\geq\lambda$, the ``separation'' and ``correlation'' events 
    \begin{align*}
    \evt_{i^*}^{\mathrm{Sep}}\cap\evt_{i^*}^{\mathrm{Cor}} = \{\bu: (1+\xi)|\w_j\cdot\bu|\leq |\w_{i^*}\cdot\bu|,\,\forall j\in \cA_{i^*}(\xi)\}\cap\{\bu: |\w_{i^*}\cdot\bu|\geq (Z_p(\cU))^{1/(2p)}\}
\end{align*}
    happen with probability at least $\poly(\lambda/k)$ 
    over the choice of the random probing vector $\bu = \bp + \im \bq$, $\bp,\bq\sim\calN(\vec 0,\mI_d)$.
Since the number of probing vectors $N_\bu$ is chosen to be $N_\bu = \poly(k/\lambda)\log(k/\delta)$,
by the Chernoff bound, we know that drawing a set $\wh{\cU}$ of $N_\bu$ independent probing vectors $\bu$, with probability at least $1 - \delta/2$, it holds that: 
\begin{equation}\label{eq:matrix-approx-0}
    \text{for every $i^*\in[k]$ satisfying $\beta_{i^*}\geq \lambda$, }\exists\bu\in\wh{\cU},\,s.t.\,\bu\in\evt_{i^*}^{\mathrm{Sep}}\cap\evt_{i^*}^{\mathrm{Cor}}\;.
\end{equation}
On the other hand, by \Cref{prop:estimate-matrix-via-implicit-tensor} we know that under the choice of parameters specified in \nameref{alg:implicit-tensor-approximate-matrix},
    with probability at least $1 - \delta/2$ over all the samples from $\D$ 
    and the random vectors $\bu\in\wh{\cU}$, it holds
    \begin{align}\label{eq:matrix-approx-1}
        \max_{\bu\in\wh{\cU}}\|\wh{\mM}_{2p+2}(\bu) - \mM_{2p+2}(\bu)\|_2\leq \sqrt{\xi}\lambda.
    \end{align}
    Thus, the joint probability that these two events, \eqref{eq:matrix-approx-0} and \eqref{eq:matrix-approx-1} hold at the same time is at least $1 - \delta$. 
    Invoking \Cref{prop:low-degree-separates-direction}, we obtain that with probability at least 
    $1 - \delta$, for every $i^*\in[k]$ satisfying $\beta_{i^*}\geq\lambda$, there exists a vector 
    $\br_{i^*}$ in the list returned by \Cref{alg:implicit-tensor-approximate-matrix} that 
    satisfies $|\br_{i^*}\cdot\w_{i^*}|\geq 0.99$. Note that the size of the list is 
    $2N_\bu = \poly(k/\lambda)\log(1/\delta)$. 
    This completes the proof of \Cref{prop:poly-time-initialization-via-implicit-tensor}. 
\end{proof}

\section{Putting Everything Together: Proof of~\Cref{thm:main-informal}} \label{sec:main-alg}

By putting everything together, we obtain the following theorem, 
which is the formal version of \Cref{thm:main-informal}.

\begin{theorem}\label{thm:ultimate-main}
Let $d,k\in\mathbb{Z}_+$ and $\eps>0$. 
Let $(\x,y)\sim\D$ be a distribution supported on
$\R^d\times\R$ such that $\x\sim\calN(\vec 0,\mI_d)$ and
$y=f(\x)$ for an unknown $f\in\mathcal{C}_{d,k}$.
\nameref{alg:main}, using subroutine \nameref{alg:implicit-tensor-approximate-matrix} with parameters $\lambda = \poly(\eps/k)$, $\delta = \poly(\eps/k)$, has the following guarantee:
    with probability at least $2/3$, it outputs a hypothesis $\wh{y}(\x)$ satisfying $\E_{(\x,y)\sim\D}[(y - \wh{y}(\x))^2]\leq \eps$.
    \nameref{alg:main} uses $\poly(d,k,1/\eps)$ samples from $\D$, and runs in $\poly(d,k,1/\eps)$ time. 
\end{theorem}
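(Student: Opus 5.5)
The proof is a composition of two results already established: \Cref{prop:optimization}, which reduces learning to oracle access to a $\init$ procedure satisfying \Cref{assum:initialization-algorithm}, and \Cref{prop:poly-time-initialization-via-implicit-tensor}, which shows that \nameref{alg:implicit-tensor-approximate-matrix} is such a procedure with polynomial parameters. I will first check that \nameref{alg:implicit-tensor-approximate-matrix} meets \Cref{assum:initialization-algorithm} on every distribution it is called on. I will then substitute its complexity bounds $N_\init,T_\init,L_\init$ into the bounds of \Cref{prop:optimization}.

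\textbf{Step 1: instantiate the oracle.} Given any Gaussian sum of $k$ ReLUs $\D'$ and parameters $\lambda,\delta'$, \Cref{prop:poly-time-initialization-via-implicit-tensor} gives the following. The list has size $L_\init=\poly(k/\lambda)\log(1/\delta')$. The sample count is $N_\init=\poly(d,k,1/\lambda,1/\delta')$ and the running time is $T_\init=\poly(d,N_\init)$. With probability $1-\delta'$, every $\lambda$-heavy $\w_i$ has a list element $\br_i$ with $|\br_i\cdot\w_i|\ge 0.99$, which is exactly \Cref{assum:initialization-algorithm}.

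The point that needs care is that \nameref{alg:cover-and-prune} calls $\init$ not only on $\D$ but also on the transformed distributions $\D^{\gamma,\bv}$ inside \nameref{alg:refine}. By \Cref{fact:relu-preserved-under-centered-rejection}, these have standard Gaussian marginals and conditional mean $\sum_i\widetilde\beta_i\sigma(\widetilde\w_i\cdot\bz)$, with $\widetilde\beta_i\le\beta_i\le1$ and unit $\widetilde\w_i$. The label $y_\gamma$ is not a deterministic function of $\bz$: it equals the conditional mean plus mean-zero noise independent of $\bz$ with bounded second and fourth moments, since $y$ is a degree-1 homogeneous function of a Gaussian. I would therefore observe that \Cref{prop:poly-time-initialization-via-implicit-tensor} only uses the labels through moment identities and variance bounds. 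Specifically, \Cref{lem:stcs} needs $\E[\mS_t]=\sum\widetilde\beta_i\widetilde\w_i^{\ot t}$, which depends only on $\E[y\mid\bz]$. \Cref{lem:tensor-variance-S} needs $\E[y^4]\lesssim k^4$, which still holds for $y_\gamma$ because $y_\gamma$ is a function of the accepted Gaussian $\x_\gamma$ with covariance $\preceq\mI$. So the guarantee carries over unchanged to $\D^{\gamma,\bv}$. This check is the main potential gap, and I expect it to be routine.

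\textbf{Step 2: parameters and complexity.} \nameref{alg:main} sets $\eps_0=\sqrt\eps/(4k)$, so all calls use $\lambda=\lambda_\init=\eps_0\sin(\chi^3\eps_0)=\poly(\eps/k)$. They also use $\delta'=O(\delta/(TkL_\init^3))$ with $T=O(\log(k/\eps))$. This definition is circular, since $L_\init$ depends on $\delta'$, but only logarithmically. A fixed point therefore exists with $\delta'=\poly(\eps/k)$ up to polylog factors, giving $L_\init=\poly(k/\eps)$, $N_\init=\poly(d,k/\eps)$ and $T_\init=\poly(d,k/\eps)$.

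Plugging these into \Cref{prop:optimization}, the sample count is
\[
\widetilde O\big(d\,\poly(k/\eps)+L_\init^3(N_\init+L_\init^3)\poly(k/\eps)\big)=\poly(d,k,1/\eps),
\]
and the running time is $\poly(d,N,T_\init)=\poly(d,k,1/\eps)$. The success probability $2/3$ and the error guarantee $\E[(y-\wh y)^2]\le\eps$ are then read off directly from \Cref{prop:optimization}, which completes the proof.
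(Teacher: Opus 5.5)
Your proposal is correct and follows the paper's proof: compose \Cref{prop:optimization} with \Cref{prop:poly-time-initialization-via-implicit-tensor} and substitute the resulting $N_\init,T_\init,L_\init$. Your extra bookkeeping ($\lambda_\init=\poly(\eps/k)$, and resolving the logarithmic circularity between $\delta'$ and $L_\init$) is sound and makes explicit what the paper leaves implicit.

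One small correction concerns the rejection-sampled instances. Since $\mA_{\gamma,\bv}$ is invertible, $y_\gamma=f(\mA_{\gamma,\bv}^{1/2}\bz)$ is a deterministic function of $\bz$. Hence $\D^{\gamma,\bv}$ is exactly a Gaussian sum of $k$ ReLUs, with coefficients $\widetilde\beta_i\le 1$ and unit $\widetilde\w_i$. Your noise-robustness detour is therefore harmless but unnecessary.
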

\begin{proof}[Proof of \Cref{thm:ultimate-main}]
By \Cref{prop:optimization}, using a Coarse Direction Recovery procedure ($\init$) procedure 
satisfying \Cref{assum:initialization-algorithm} with parameters $\lambda = \poly(\eps/k)$ 
and $\delta = \poly(\eps/k)$, the algorithm \nameref{alg:main} outputs 
a hypothesis $\wh{y}(\x)$ with error $\Exy[(y - \wh{y}(\x))^2]\leq \eps$ with probability at least $2/3$. 
In addition, \nameref{alg:main} uses 
$$N = \widetilde{O}(d\,{\poly(k/\eps)} \, + \,L_\init^3(N_\init + L_\init^3)\poly(k/\eps))$$
samples from $\D$ and runs in time {$\poly(d, N, T_\init)$}, where $N_\init$, $T_\init$ and $L_\init$ are the sample complexity, runtime and the output list length of the $\init$ procedure respectively.

    On the other hand, in \Cref{prop:poly-time-initialization-via-implicit-tensor}, 
    we proved that \nameref{alg:implicit-tensor-approximate-matrix} is a $\init$ procedure 
    satisfying \Cref{assum:initialization-algorithm}, with sample complexity 
    $N_\init = \poly(d,k,1/\lambda,1/\delta)$, runtime 
    $T_\init = \poly(d,k,1/\lambda,1/\delta)$, 
    and list size $L_\init\leq \poly(k/\lambda)\log(1/\delta)$.

Thus, combining the results from \Cref{prop:optimization} 
and \Cref{prop:poly-time-initialization-via-implicit-tensor}, 
we obtain that our algorithm \nameref{alg:main}, 
using a $\init$ procedure \nameref{alg:implicit-tensor-approximate-matrix} 
with parameters $\lambda = \poly(\eps/k)$ and $\delta = \poly(\eps/k)$, 
with probability at least $2/3$ outputs 
a hypothesis $\wh{y}(\x)$ with $L_2^2$ error 
$\Exy[(y - \wh{y}(\x))^2]\leq \eps$.
In addition, considering the sample complexity, runtime, 
and the output list length of the $\init$ procedure 
\nameref{alg:implicit-tensor-approximate-matrix} 
under the choice of parameters $\lambda = \poly(\eps/k)$ and $\delta = \poly(\eps/k)$, 
it follows that \nameref{alg:main} uses $\poly(d,k,1/\eps)$ samples from $\D$, 
and runs in $\poly(d,k,1/\eps)$ time. 
This completes the proof of \Cref{thm:ultimate-main}.
\end{proof}

\appendix

\bibliographystyle{alphaabbr}
\bibliography{mydb}

\end{document}